\definecolor{darkred}{rgb}{0.5,0,0}
\definecolor{darkgreen}{rgb}{0,0.5,0}
\definecolor{darkblue}{rgb}{0,0,0.5}
\numberwithin{equation}{section}
\setlist{nosep}
\setlist{noitemsep}
\setlist{leftmargin=*}
\DeclareMathOperator{\Div}{\mathrm{div}}
\newtheorem{proposition}{Proposition}[section]
\newtheorem{lemma}[proposition]{Lemma}
\newtheorem{corollary}[proposition]{Corollary}
\newtheorem{claim}[proposition]{Claim}
\theoremstyle{definition}
\newtheorem{theorem}{Theorem}
\newtheorem*{theorem*}{Theorem}
\newtheorem{definition}[proposition]{Definition}
\newtheorem{remark}[proposition]{Remark}
\newcommand{\Z}{\mathbb{Z}}
\renewcommand{\epsilon}{\varepsilon}
\newcommand{\hal}{\frac{1}{2}}
\newcommand{\R}{\mathbb{R}}
\newcommand{\dd}{\mathtt{d}}
\newcommand{\sq}{\square}
\newcommand{\CC}{\mathrm{C}}
\newcommand{\Nn}{\mathcal{N}}
\newcommand{\nn}{\mathsf{n}}
\newcommand{\Dd}{\mathsf{D}}
\newcommand{\Vv}{\mathrm{V}}
\newcommand{\La}{\Lambda}
\newcommand{\ind}{\mathtt{1}}
\newcommand{\kk}{\mathsf{k}}
\newcommand{\ff}{\mathsf{f}}
\newcommand{\tphi}{\widetilde{\varphi}}
\DeclareMathOperator{\supp}{supp}
\DeclareMathOperator{\Id}{Id}
\DeclareMathOperator{\dist}{dist}
\DeclareMathOperator{\dive}{div}
\newcommand{\0}{\mathsf{0}}
\newcommand{\1}{\mathsf{1}}
\newcommand{\2}{\mathsf{2}}
\newcommand{\3}{\mathsf{3}}
\newcommand{\Conf}{\mathrm{Conf}}
\newcommand{\Fluct}{\mathrm{Fluct}}
\newcommand{\muc}{\mm_0}
\newcommand{\mus}{\mm_s}
\newcommand{\LN}{\Sigma_N}
\newcommand{\XN}{\mathrm{X}_N}
\newcommand{\bXN}{\mathbf{X}_N}
\newcommand{\X}{\mathrm{X}}
\newcommand{\bX}{\mathbf{X}}
\newcommand{\Di}{\mathsf{Dis}}
\newcommand{\Points}{\mathsf{Pts}}
\newcommand{\Elec}{\mathrm{E}}
\newcommand{\Elece}{\Elec_{\veta}}
\newcommand{\Lploc}{\mathrm{L}_{\mathrm{loc}}^p}
\newcommand{\feta}{\mathsf{f}_{\eta}}
\newcommand{\HH}{\mathfrak{h}}
\newcommand{\nHH}{\nabla \HH}
\newcommand{\mm}{\mathbf{m}}
\newcommand{\f}{\mathbf{f}}
\newcommand{\fN}{\f_N}
\newcommand{\KNbeta}{\mathrm{K}^\beta_{N}}
\newcommand{\PNbeta}{\mathbb{P}^\beta_{N}}
\newcommand{\E}{\mathbb{E}}
\newcommand{\EN}{\mathbb{E}^{\beta}_N}
\renewcommand{\P}{\mathbb{P}}
\newcommand{\F}{\mathsf{F}}
\newcommand{\FN}{\F_N}
\newcommand{\Fs}{F_s}
\newcommand{\Phis}{\Phi_s}
\newcommand{\vPhis}{\vec{\Phi}_s}
\newcommand{\psis}{\psi_s}
\newcommand{\rmax}{r_\mathrm{max}}
\newcommand{\Hone}{\mathrm{H}^1}
\newcommand{\Remain}{\mathrm{Rem}}
\newcommand{\G}{\mathsf{G}}
\newcommand{\rr}{\mathsf{r}}
\newcommand{\veta}{\vec{\eta}}
\newcommand{\vr}{\vec{\rr}}
\newcommand{\bCc}{\bar{\Cc}}
\newcommand{\Cond}{\Big |}
\newcommand{\loc}{\mathrm{loc}}
\newcommand{\EE}{\mathcal{E}}
\renewcommand{\AA}{\mathcal{A}}
\newcommand{\Lai}{\La_{i}}
\newcommand{\Laj}{\La_{j}}
\newcommand{\ext}{\mathrm{Ext}}
\newcommand{\ErrorCI}{\mathrm{ErrorCI}}
\newcommand{\Int}{\mathrm{Int}}
\newcommand{\tInt}{\widetilde{\Int}}
\newcommand{\Vext}{\Vv^\mathrm{ext}}
\newcommand{\Ext}{\mathrm{Ext}}
\newcommand{\Eext}{\mathcal{E}_\ext}
\newcommand{\Xext}{\mathrm{X}_\ext}
\newcommand{\nnext}{\nn_{\ext}}
\newcommand{\tn}{\nn}
\newcommand{\tni}{\nn_i}
\newcommand{\bXext}{\bX_{\ext}}
\newcommand{\bXi}{\bX_i}
\let\cXi\Xi
\renewcommand{\Xi}{\X_i}
\newcommand{\fLi}{\f_{\Lai}}
\newcommand{\fExt}{\f_{\Ext}}
\newcommand{\KNbetai}{\mathrm{K}^\beta_{\tni, \Lai, \Vext}}
\newcommand{\vu}{\vec{u}}
\newcommand{\dij}{\mathrm{d}_{ij}}
\newcommand{\Lab}{\Lambda^{\mathrm{bulk}}}
\newcommand{\Cc}{\mathtt{C}}
\newcommand{\CLL}{\hyperref[prop:local_laws]{\mathtt{C}_{\mathtt{LL}}}} 
\newcommand{\Cins}{\mathtt{C}_{\ref{prop:chase_the_goat}}}
\newcommand{\bchi}{\overline{\chi}}
\newcommand{\Vvi}{\Vv_{i}}
\newcommand{\tVi}{\widetilde{\Vv}_i}
\newcommand{\Dii}{\mathsf{D}_{i}}
\newcommand{\Djj}{\mathsf{D}_{j}}
\newcommand{\chir}{\chi_{\mathrm{uv}}}
\newcommand{\h}{\mathfrak{h}}
\renewcommand{\ss}{\mathsf{s}}
\newcommand{\Vback}{\Vv_{\mathrm{back}}}
\newcommand{\tFN}{\widetilde{\mathsf{F}}_N}
\newcommand{\EnerPoints}{\mathsf{EnerPts}}
\newcommand{\Fta}{\mathcal{F}_{\tau}}
\newcommand{\Gsi}{\mathcal{G}_{\sigma}}
\newcommand{\ErrorQI}{\mathsf{ErrorQI}}
\newcommand{\hT}{\hat{T}}
\newcommand{\eR}{\epsilon_R}
\newcommand{\eRR}{\eR \cdot R}
\newcommand{\Ener}{\mathrm{Ener}}
\newcommand{\ErrAv}{\mathrm{ErrAve}}
\newcommand{\Psimt}{\psi_{-t}}
\newcommand{\Ani}{\mathsf{Ani}}
\newcommand{\WS}{\mathsf{WellSpread}}
\newcommand{\PhiX}{\Phi \cdot}
\newcommand{\PhiS}{\Phi \# }
\newcommand{\ErrorScr}{\mathrm{ErrorScr}}
\newcommand{\bY}{\mathbf{Y}}
\newcommand{\Y}{\mathrm{Y}}
\newcommand{\heta}{\h_{\veta}}
\newcommand{\hetaX}{\heta^\X}
\newcommand{\hetaY}{\heta^\Y}
\newcommand{\nhetaX}{\nabla \hetaX}
\newcommand{\nhetaY}{\nabla \hetaY}
\newcommand{\nhh}{\nabla \hat{h}}
\newcommand{\Main}{\mathrm{Main}}
\newcommand{\Rem}{\mathrm{Rem}}
\newcommand{\dyi}{\delta_{y_i}^{(\eta_i)}}
\newcommand{\dhi}{\hat{\delta}_{y_i}}
\newcommand{\dyj}{\delta_{y_j}^{(\eta_j)}}
\newcommand{\dhj}{\hat{\delta}_{y_j}}
\newcommand{\tell}{\tilde{\ell}}
\newcommand{\hci}{\tilde{\h}_i^\Y}
\newcommand{\hciX}{\tilde{\h}_i^\X}
\renewcommand{\O}{\mathcal{O}}
\def\Xinttt#1{\mathchoice
{\XXinttt\displaystyle\textstyle{#1}}%
{\XXinttt\textstyle\scriptstyle{#1}}%
{\XXinttt\scriptstyle\scriptscriptstyle{#1}}%
{\XXinttt\scriptscriptstyle\scriptscriptstyle{#1}}%
\!\int}
\def\XXinttt#1#2#3{{\setbox0=\hbox{$#1{#2#3}{\int}$}
\vcenter{\hbox{$#2#3$}}\kern-.5\wd0}}
\def\dashint{\Xinttt-}
\renewcommand{\Int}{\mathrm{int}}
\renewcommand{\ext}{\mathrm{ext}}
\newcommand{\hrho}{\hat{\rho}}
\renewcommand{\Xext}{\X^{\ext}}
\newcommand{\VV}{\mathrm{V}}
\newcommand{\PP}{\mathcal{P}(\R^2)}
\newcommand{\muV}{\mu_{\VV}}
\newcommand{\SV}{S_\VV}
\newcommand{\SSV}{S^*_\VV}
\newcommand{\cV}{c_\VV}
\newcommand{\hnu}{\h^\nu}
\newcommand{\WW}{\mathrm{W}}
\newcommand{\nuloc}{|\nu|_{\mathrm{loc}}}
\newcommand{\muW}{\mu_\WW}
\newcommand{\IV}{\mathcal{I}_\VV}
\newcommand{\IW}{\mathcal{I}_\WW}
\newcommand{\hmuV}{\mathfrak{h}^{\muV}}
\newcommand{\uV}{u_{\VV}}
\newcommand{\Xint}{\X^{\mathrm{int}}}
\newcommand{\bXint}{\bX^{\mathrm{int}}}
\renewcommand{\bXext}{\bX^{\mathrm{ext}}}
\renewcommand{\Int}{\mathsf{Int}}
\renewcommand{\Ext}{\mathsf{Ext}}
\newcommand{\Var}{\mathrm{Var}}
\pgfplotsset{compat=1.16}
\newcommand{\DD}{\mathfrak{D}}
\newcommand{\boxs}{\mathfrak{B}}
\newcommand{\cor}{}
\newcommand{\corT}{}
\begin{document} 
\renewcommand{\stctitle}{}
\author{\large{Thomas Leblé}}
\affil{\small{Université de Paris-Cité, CNRS, MAP5 UMR 8145, F-75006 Paris, France}}
\title{\LARGE{The two-dimensional one-component plasma is hyperuniform}}
\date{\small{\today}}
\maketitle

\vspace{-1cm}

\begin{abstract}
We prove that at all positive temperatures in the bulk of a classical two-dimensional one-component plasma (also called Coulomb or log-gas, or jellium) the variance of the number of particles in large disks grows (strictly) more slowly than the area. In other words the system is hyperuniform. 
\end{abstract}

\section{Introduction}
\label{sec:intro}
Let $N \geq 1$ be an integer, let $\LN$ be the disk of center $0$ and radius $\sqrt{\frac{N}{\pi}}$, let $\XN := (x_1, \dots, x_N)$ denote a $N$-tuple of points in $\LN$ and let $\bXN := \sum_{i=1}^N \delta_{x_i}$ be the associated atomic measure.  We let $\fN$ be the signed “fluctuation” measure on $\LN$ defined by $\fN := \bXN - \mm_0$, where $\mm_0$ is the Lebesgue measure on $\LN$. We think of $\bXN$ as a collection of \emph{point particles} in $\LN$ and of $\mm_0$ as the \emph{background measure}. We define the \textit{logarithmic interaction energy} $\FN(\bXN)$ as:
\begin{equation}
\label{def:FN}
\FN(\bXN) := \hal \iint_{(x,y) \in \LN \times \LN, x \neq y} - \log |x-y| \dd \fN(x) \dd \fN(y),
\end{equation}
\cor{if the points are distinct, and $+\infty$ otherwise.}

Let $\beta$ be a positive real number that will be fixed throughout the paper. We define a probability density $\PNbeta$ on the space of $N$-tuples of points in $\LN$ by setting:
\begin{equation}
\label{def:PNbeta}
\dd \PNbeta(\XN) := \frac{1}{\KNbeta} \exp\left( - \beta \FN(\bXN) \right) \dd \XN,
\end{equation}
with a normalizing constant $\KNbeta := \int_{\LN \times \dots \times \LN}  \exp\left( - \beta \FN(\bXN) \right) \dd \XN$ called the \emph{partition function}.
Here and below, $\dd \XN$ denotes the Lebesgue measure on the Cartesian product $\LN \times \dots \times \LN \subset (\R^2)^N$. The probability measure $\PNbeta$ is the \textit{canonical Gibbs measure of the two-dimensional one-component plasma} (2DOCP) at \textit{inverse temperature} $\beta$. We denote the expectation under $\PNbeta$ by $\EN$.

Let $\Omega$ some (Borel measurable) subset of $\LN$. We denote by $\Points(\bXN, \Omega)$ the number of points of $\bXN$ in $\Omega$ and by $\Di(\bXN, \Omega)$ the \textit{discrepancy} (of $\bXN$) in~$\Omega$, defined by:
\begin{equation}
\label{def:Discrepancy_Omega}
\Di(\bXN, \Omega) := \Points(\bXN, \Omega) - \mm_0(\Omega).
\end{equation}
By construction we have $\Di(\bXN, \LN) = 0$, which corresponds to the fact that the system is globally neutral. However it cannot be perfectly locally neutral, and $\Di(\bXN, \Omega)$ is meant to measure \textit{charge fluctuations} in $\Omega$. The \textit{number variance} in $\Omega$ is defined as the variance under $\PNbeta$ of $\Points(\bXN, \Omega)$, or equivalently of $\Di(\bXN, \Omega)$. 

\smallskip

For all $x$ in $\R^2$ and $R > 0$, we let $\DD(x, R)$ be the disk of center~$x$ and radius $R$.

\subsection{Main result}
Hyperuniformity of a system is defined by the fact that: 
\begin{quote}{\cite[Section 1]{torquato2018hyperuniform}}
\textit{(...) the number variance of particles within a spherical observation window of radius $R$ grows more slowly than the window volume in the large-$R$ limit.}
\end{quote}
Our main conclusion is the following:
\begin{theorem*}[The 2DOCP is hyperuniform]
\label{theo:main} For each $N \geq 1$, let $x = x(N)$ be a point in the bulk of $\LN$ and let $R = R(N)$ be such that $R(N) \to \infty$. Then the number variance in $\DD(x, R)$ is $o(R^2)$ as $N \to \infty$.
\end{theorem*}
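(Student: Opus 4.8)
The plan is to trade the combinatorial quantity $\Di(\bXN,\DD(x,R))$ for the \emph{electric field} generated by the fluctuation measure $\fN$, use Gauss's law to write that discrepancy as a boundary flux, and then run a multiscale argument improving on the trivial ``area law'' $\Var(\Di(\bXN,\DD(x,R)))=O(R^2)$. Concretely, I would first rewrite $\FN(\bXN)$ in terms of the potential $\heta$ generated by $\fN$ with each point charge smeared at a scale $\eta$, so that $\FN(\bXN)=\tfrac{1}{4\pi}\int_{\R^2}|\nabla\heta|^2+(\text{explicit self-interaction term})+o_\eta(1)$ with $-\Delta\heta=2\pi(\bXN^{(\eta)}-\mm_0)$ near $\LN$. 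Since the smearing only touches scales $\le\eta$, the divergence theorem gives, for a disk whose bounding circle stays at distance $\gg\eta$ from the points,
\[
\Di(\bXN,\DD(x,r))\;=\;-\frac{1}{2\pi}\oint_{\partial\DD(x,r)}\partial_n\heta\,\dd\sigma\;+\;o_\eta(1).
\]
By Cauchy--Schwarz $\Di(\bXN,\DD(x,r))^2\lesssim r\oint_{\partial\DD(x,r)}|\partial_n\heta|^2\,\dd\sigma$, and averaging $r$ over a window $[R,2R]$ turns this into an annular electric energy,
\[
\dashint_R^{2R}\Di(\bXN,\DD(x,r))^2\,\dd r\;\lesssim\;\int_{R\le|y-x|\le 2R}|\nabla\heta|^2\,\dd y .
\]
A local law for two-dimensional log-gases --- bounding the expected renormalised energy in a bulk region of diameter $D$ by $O(D^2)$ --- plus an a priori bound (again from a local law, on thin annuli) on how fast $r\mapsto\E[\Di(\bXN,\DD(x,r))^2]$ can move, then yields the extensive estimate $\Var(\Di(\bXN,\DD(x,R)))\le C R^2$, uniformly over bulk $x$ and large $N$. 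Let $V(\ell)$ be the best constant with $\Var(\Di(\bXN,\Lambda))\le V(\ell)\ell^2$ for every bulk disk or box $\Lambda$ of size $\ell$; so far $V(\ell)\le C$.

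Next I would localise. Fix a mesoscale $\ell$ with $1\ll\ell\ll R$, tile the bulk by boxes $\Lambda_i$ of side $\sim\ell$, and study, for each $i$, the law of the points inside $\Lambda_i$ conditioned on the exterior configuration $\bX^{\mathrm{ext}}$: it is again a log-gas on $\Lambda_i$, now with an external potential $\Vext$ created by $\bX^{\mathrm{ext}}$. I would prove two things about it. First, a local law for this conditional system and a bound on the center of mass of its internal charge, which give $\Var(\Di(\bXN,\Lambda_i)\mid\bX^{\mathrm{ext}})\le V(\ell)\ell^2+(\text{controlled error})$. Second --- and this is the central input --- a \emph{quantitative invariance}: the conditional law inside $\Lambda_i$ depends only weakly on $\bX^{\mathrm{ext}}$, in the sense that the conditional laws attached to two exterior configurations can be coupled up to a transport whose free-energy cost is controlled by the discrepancy of their difference in a thin collar around $\partial\Lambda_i$; propagating this from $\partial\Lambda_i$ inwards forces a \emph{fast} decay of $\mathrm{Cov}(\Di(\bXN,\Lambda_i),\Di(\bXN,\Lambda_j))$ as $\dist(\Lambda_i,\Lambda_j)$ grows.

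This decorrelation, combined with global neutrality, is what I would use to beat the area law. Write $\Var(\Di(\bXN,\DD(x,R)))=\sum_{\Lambda_i,\Lambda_j\subset\DD(x,R)}\mathrm{Cov}(\Di(\bXN,\Lambda_i),\Di(\bXN,\Lambda_j))$ plus boundary-layer terms of total size $O(R\ell)$. For a box $\Lambda_i$ lying deep inside $\DD(x,R)$, the identity $\sum_{\text{all }j}\Di(\bXN,\Lambda_j)=-\Di(\bXN,\text{collar of }\partial\LN)$ gives $\sum_{\text{all }j}\mathrm{Cov}(\Di(\bXN,\Lambda_i),\Di(\bXN,\Lambda_j))=-\mathrm{Cov}(\Di(\bXN,\Lambda_i),\Di(\bXN,\text{collar of }\partial\LN))$, which the fast decay renders negligible because $\Lambda_i$ sits at macroscopic distance from $\partial\LN$; removing the $j$ with $\Lambda_j\not\subset\DD(x,R)$ costs, after summation over $i$, only $O(R\ell\,V(\ell))$ thanks to the same decay. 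Summing over $i$ one is left with $\Var(\Di(\bXN,\DD(x,R)))=o(R^2)$ as soon as $\ell=\ell(N)\to\infty$ with $\ell/R\to0$ (e.g. $\ell=\sqrt R$), which is the assertion. (Equivalently this reads as a self-improving bound $V(R)\le V(\ell)+o_N(1)$ which, iterated up a geometric ladder of scales reaching $R(N)$, drives $V(R(N))\to0$.)

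\textbf{Main obstacle.} The electric reformulation, Gauss's law, radial averaging and the local laws are the established toolbox for two-dimensional Coulomb gases; the heart of the proof is the quantitative invariance of the conditional subsystem laws, with a decay rate of $\mathrm{Cov}(\Di(\bXN,\Lambda_i),\Di(\bXN,\Lambda_j))$ in $\dist(\Lambda_i,\Lambda_j)$ strong enough to be (more than) summable on the plane, so that both the coupling of a deep box to the faraway boundary collar of $\LN$ and the truncation near $\partial\DD(x,R)$ are genuinely negligible. Producing such a transport --- one that corrects the external potential near $\partial\Lambda_i$ while being almost measure-preserving, with a controlled free-energy cost --- together with the attendant control of the center of mass and low moments of the conditioned gas, is where essentially all of the work lies; organising the resulting error terms so that the multiscale argument closes uniformly over every sequence $R(N)\to\infty$, including very slowly growing ones, is the delicate point.
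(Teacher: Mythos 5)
Your outline takes a genuinely different route from the paper, and the place where it differs is exactly where I think it breaks down. The paper does not attempt a second-moment / covariance expansion. Instead it proves a \emph{tail} bound $\PNbeta\left(|\Di(\bXN,\DD(x,R))|\geq R(\log R)^{-0.3}\right)\leq\exp(-\log^{1.5}R)$ by: (i) using a smooth radial cut-off and the CLT-type bound of Proposition~\ref{prop:bound_fluct_radial} to push the discrepancy into a thin annulus near $\partial\DD(x,R)$; (ii) a pigeonhole argument producing a well-separated family of boxes of size $L=\log^{0.99}R$ carrying a fraction of the discrepancy; (iii) an approximate \emph{conditional} independence (Proposition~\ref{prop:CI}) of those boxes given the total number of points in each containing disk $\La_i$; (iv) the ``spin wave'' translation-invariance argument (Sections~\ref{sec:approx_QI} and~\ref{sec:SW}) to show each box's discrepancy is nearly centred; and (v) a Bernstein-type bound on the moment generating function. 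The variance estimate $o(R^2)$ is then a corollary of the tail plus the known exponential tails for $\Di\gtrsim R$.

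Your pivotal lemma is an \emph{unconditional} decay of $\mathrm{Cov}\big(\Di(\Lambda_i),\Di(\Lambda_j)\big)$ fast enough to be usefully summed against the global-neutrality identity. That statement is much stronger than what the paper's machinery yields, and there is a concrete reason to be wary. Proposition~\ref{prop:CI} gives approximate independence only \emph{after conditioning on} $\nn_i=\Points(\bXN,\La_i)$ for the enclosing disks; it therefore decorrelates the arrangement inside each $\La_i$ but says nothing about the joint law of the macroscopic charges $\nn_i$ themselves, which is where the long-range covariance lives. The translation-invariance construction (Propositions~\ref{prop:Quantitative_invariance} and~\ref{prop:mollification}) controls $\EnLL[\Di(\boxs_i)\,|\,\EE_\La]$ for a fixed sub-system, not the covariance between two far-apart subsystems under the full $\PNbeta$. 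In other words, neither tool you would want to invoke actually produces the two-point input your scheme needs. Moreover the covariance of discrepancies in the 2DOCP is \emph{not} absolutely summable: heuristically $|\mathrm{Cov}(\Lambda_i,\Lambda_j)|\sim\ell^4/\dij^2$, so $\sum_j|\mathrm{Cov}|$ diverges logarithmically, and the entire argument hinges on delicate \emph{signed} cancellations — i.e.\ on a quantitative sum rule. Making those cancellations rigorous without clustering assumptions is precisely the long-standing open problem discussed in the introduction (footnote on \cite{martin1988sum}); it is not a routine consequence of coupling the sub-system laws. Until you can produce that quantitative sum rule, the self-improving recursion $V(R)\leq V(\ell)+o_N(1)$ does not close, because the $o_N(1)$ term hides exactly the quantity you are trying to estimate.

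A secondary remark: even granting the sum rule, the inequality $V(R)\leq V(\ell)+o_N(1)$ as written is not self-improving. If $V(\ell)\equiv c>0$ is constant, the bound is satisfied with $V(R)=c$ for all $R$, so iterating the geometric ladder gains nothing. To make this close you would need a strict contraction, e.g.\ $V(R)\leq (1-\delta)V(\ell)+o_N(1)$ with a uniform $\delta>0$, and your sketch does not identify where such a gain would come from. The paper sidesteps this entirely by working directly with a tail estimate at a single pair of scales $(R,L)$, with $L$ polylogarithmic in $R$, rather than iterating across a ladder.
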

We give a more precise statement in Section \ref{sec:strategy}, explaining what we mean by “the bulk of $\LN$” and specifying a quantitative upper bound on the number variance of the form $\O\left(\frac{R^2}{\log^{c}(R)}  \right)$ for some $c > 0$.

\subsubsection*{\cor{Context.}}
Hyperuniformity of the 2DOCP \textit{at all positive temperatures} is a forty year old prediction\footnote{Those results are rigorous to the extent that authors make use of so-called “clustering assumptions”, i.e. they assume properties of the two-point correlation function at large distances in order to derive certain identities (called “sum rules”) which, among other things, imply hyperuniformity. As explained in \cite{martin1988sum}: “\textit{The results obtained in this way are exact (i.e. do not follow from approximations), but not all of them are rigorously proven, in so far as some reasonable properties (e.g. the type of decay of the correlations) are assumed to hold a priori.}”. Unfortunately, obtaining mathematically rigorous statements about the large-distance properties of the 2DOCP's two-point correlation function (for $\beta \neq 2$) is extremely challenging.}
from statistical physics, see \cite{martin1980charge,lebowitz1983charge,martin1988sum,jancovici1993large,levesque2000charge} (which use a different terminology, \cor{as the term “hyperuniformity” had not been coined yet}), or \cite{torquato2018hyperuniform} again: \textit{“OCP fluid phases at [all] temperatures must always be hyperuniform”}. 

The full physical prediction says not only that the number variance is negligible with respect to the area of the disk, but that it should even be \emph{comparable to the perimeter} ($\O(R)$ and not only $o(R^2)$). 

We give here the first mathematical proof of hyperuniformity at all temperatures, however our upper bound on the number variance remains far from the conjectured sharp estimate.

\subsection{The 2DOCP}
The two-dimensional \emph{one-component plasma} is a well-studied model of statistical physics, \cor{also known as  Coulomb or log-gas, or jellium - we discuss terminology in Appendix \ref{sec:discussion_model}}. Besides the papers cited above, we refer to \cite[Chapter 15]{forrester2010log} for a presentation of exact and approximate results.

When defining the energy $\FN$ in \eqref{def:FN} and the canonical Gibbs measure $\PNbeta$ in \eqref{def:PNbeta}, we think of $(x_1, \dots, x_N)$ as the positions of point particles in $\LN$ which all carry the same electric charge $+1$, and which are immersed in a uniform neutralizing background of constant density on $\LN$. The logarithmic potential \cor{through which the “charges” interact} can then \cor{be thought of as} the electrostatic interaction potential in dimension $2$. 

Instead of placing a background measure, one sometimes \cor{defines a similar model by imposing} a confining potential \cor{felt by each particle, the most common choice being a quadratic potential}. \cor{We discuss this further in Appendix \ref{sec:discussion_model} and explain why our analysis would also carry through in that case}. \cor{Finally, note that we have chosen the radius of $\LN$ in such a way that the neutralizing background is exactly the Lebesgue measure - with a different choice, it should be normalized.}

\subsubsection*{Mathematical aspects of Coulomb gases}
As mathematical objects, Coulomb gases (under various forms: one-, two- or three-dimensional, with one or two components...) have attracted much interest. Concerning the 2DOCP itself, topics that have been investigated in the last years alone are very diverse and include: lower bounds on the minimal distance between points \cite{ameur2018repulsion}, concentration inequalities for the empirical measure of the particles (\cite{chafai2018concentration}, \cor{see also \cite{padilla2023concentration} for a sharper bound}), upper bounds for the local density of points \cite{lieb2019local}, generalizations to Riemannian manifolds (\cite{berman2019sharp}, \cite{garcia2019concentration}), Wegner's and clustering estimates \cite{Thoma:2022aa}... to quote only a few among many others. We refer to \cor{\cite{serfaty2024lectures} for a thorough presentation including} an overview of motivations, ranging from constructive approximation to the study of the Quantum Hall Effect, via random matrix theory. The long-range and singular nature of the pairwise potential raises considerable analytic challenges.

In \cite{MR3788208,MR4063572,serfaty2020gaussian} it was shown that the 2DOCP exhibits strong forms of rigidity at all scales and for all values of the temperature as far as fluctuations of \emph{smooth} linear statistics are concerned. Regarding fluctuations of charges, i.e. the statistics of an indicator function, \cite{armstrong2019local} (see also \cite{MR3719060,bauerschmidt2017local} for weaker controls) imply that the discrepancy within a disk of radius $R$ is $\O(R)$ with high probability. Our goal here is \cor{to} improve such bounds to $o(R)$ and to thus prove hyperuniformity.

\subsubsection*{The Ginibre ensemble}
When the inverse temperature parameter $\beta$ is equal to $2$, the probability measure $\PNbeta$ admits an interpretation as the joint law of the complex eigenvalues of a $N \times N$ non-Hermitian random matrix, known as the (finite) Ginibre ensemble, after \cite{ginibre1965statistical}. The model belongs to the class of \textit{determinantal processes} (see e.g. \cite[Section 4.3.7]{hough2006determinantal}) and is amenable to exact computations. 

In that specific case, the standard deviation of the number of points in a disk of radius $R$ is known: it scales like $R^{\hal}$ (in accordance with predictions from physics), see \cite{shirai2006large} and \cite{osada2008variance}. Thus for $\beta =2$ our result is not new, and our bound is far from the optimal one.

\subsubsection*{The hierarchical model}
In \cite{chatterjee2019rigidity}, S. Chatterjee studied Coulomb gases in dimensions $\{1 ,2, 3\}$ (see \cite{ganguly2020ground} for an extension to higher dimensions), giving sharp estimates on the number variance up to logarithmic corrections. His model is a \textit{hierarchical Coulomb gas}, where the physical space is decomposed into a tree-like structure following an old recipe of Dyson \cor{(see \cite[Sec. 3]{dyson1969existence})}. The hierarchical model has two built-in properties: conditional independence of sub-systems and a self-similar nature. A large part of the present analysis is devoted to finding approximate analogues of those features for the “true” (non-hierarchical) model. \cor{We also mention the recent paper \cite{nishry2024large} which studies charge fluctuations for the hierarchical model, as discussed in Section \ref{sec:JLM} below.}

\subsection{Hyperuniformity}
The term “hyperuniform(ity)” has been coined in the theoretical chemistry literature by S. Torquato (see \cite{torquato2003local,torquato2018hyperuniform} for surveys), an alternative terminology due to J. Lebowitz is “superhomogeneous/ity”. \cor{A closely related notion is \emph{incompressibility}, which consists in finding \emph{upper bounds} on the local number of points, see e.g. \cite{lieb2019local} or \cite[Thm. 5]{Thoma:2022aa}.}

\cor{Hyperuniformity can (often) be phrased as the identity:
\begin{equation*}
\int_{\R^d} \left(\rho_2(x) - 1 \right) \dd x = -1,
\end{equation*}
where $\rho_2$ is the two-point correlation function of the process (provided it exists and the integral makes sense). This is an instance of what is called a “sum rule” in the physics literature, see e.g. \cite{gruber1980equilibrium,martin1988sum}. Another possible way to phrase this is $S(0) = 0$, where $S$ is the so-called “structure factor”.}

As we already mentioned, a system is hyperuniform when the number variance in a large ball is asymptotically negligible with respect to the volume of said ball. Of course the definition of hyperuniformity needs an adaptation for finite systems, as one should take both the size of the system and of the “spherical window” to infinity. We refer to \cite{Coste} for a mathematical presentation of hyperuniformity and a survey of related results.

\subsubsection*{Examples of hyperuniform systems}
For non-interacting random particles (forming a Bernoulli point process or a Poisson point process with constant intensity) the number variance grows exactly like the volume, hence hyperuniformity can be thought of as the property of systems that are “much more rigid” than independent particles regarding discrepancy at large scales. Examples of two-dimensional objects that are proven to be hyperuniform include: the Ginibre ensemble (as mentioned above) and its generalizations (see \cite{charles2019entanglement} for a recent result in connection with the Quantum Hall Effect), averaged and perturbed lattices (see \cite{gacs1975problem}), the zeroes of random polynomials with i.i.d Gaussian coefficients (see \cite{forrester1999exact,shiffman2008number})...

In some cases (see \cite[Section 5.3.2]{torquato2018hyperuniform}) the number variance grows as the perimeter (in fact that was initially chosen as the definition of hyperuniformity in \cite{torquato2003local} but is now called “class I hyperuniformity”), which is the slowest possible growth (see \cite{beck1987irregularities}). The 2DOCP is predicted to be class I hyperuniform \cor{in the bulk} at all temperatures, with good tails on the probability of large charge fluctuations, as we explain next.

\subsubsection*{The Jancovici-Lebowitz-Manificat law}
\label{sec:JLM}
In \cite{jancovici1993large}, Jancovici, Lebowitz and Manificat made precise predictions concerning the probability of observing large charge fluctuations within the 2DOCP (and its three-dimensional version). Their statement is significantly more precise\footnote{Tail estimates as in \eqref{JLM} readily imply the strongest type of hyperuniformity (Type I).} than hyperuniformity, as they argue that for all $\alpha > \hal$:
\begin{equation}
\label{JLM}
\P \left[ \text{The discrepancy \cor{in a disk of radius $R$ is larger than $R^\alpha$}} \right] \sim \exp(-R^{\varphi(\alpha)}) \quad \textbf{(“JLM law”)}
\end{equation}
where the rate $\varphi(\alpha) > 0$ is an explicit piecewise affine function of $\alpha$. This was later checked for $\beta =2$ through explicit computations, see \cite{shirai2006large,fenzl2020precise}, while the general case is open. Remarkably, although the original prediction of \cite{jancovici1993large} deals with Coulomb gases, it was first verified in \cite{NSV} for a different model, namely the zeros of the Gaussian Entire Function (see \cite{ghosh2018point} for a survey of related results).

\cor{Concerning the two-dimensional one-component Coulomb gas, partial results have recently been obtained: the full conjecture has been proven in \cite{nishry2024large} for the \emph{hierarchical} model, while \cite[Thm. 1]{Thoma:2022aa} confirms the JLM law (in the non-hierarchical model) but only for the “higher” values of $\alpha$.}

\subsubsection*{Fekete points and the low temperature regime}
A situation of interest that we do not consider here is the case of energy minimizers, which formally corresponds to taking $\beta = + \infty$ i.e. a zero temperature. A long-standing conjecture posits that minimizers of $\FN$ form a lattice as $N \to \infty$ and in that regard Fekete points should exhibit excellent rigidity properties at finite $N$. For studies of such questions we refer to \cite{ameur2012beurling,ameur2017density} and, in a different line of work, to \cite[Thm.~3]{nodari2015renormalized}. See also \cite{ameur2020planar,Marceca:2022aa} or the $\beta$-dependent statements of \cite{armstrong2019local} for a more general “low-temperature” regime where $\beta$ is large but not necessarily infinite.

\subsection{Open questions}
Besides the obvious natural goal of obtaining the sharp number variance estimate as well as proving the aforementioned “JLM laws”, let us mention three other open directions of research.

\subsubsection*{The 3DOCP.}
The claims of \cite{jancovici1993large} are made in dimension $2$ \emph{and} $3$, moreover sharp hyperuniformity estimates were obtained for the \emph{hierarchical} model in dimension $3$ in \cite{chatterjee2019rigidity}, it is thus natural to ask whether one can prove hyperuniformity for the 3DOCP.

Unfortunately, in dimension $3$ there is no (known) value of $\beta$ for which the model would be integrable and predictions could be tested. Some understanding of fluctuations of smooth linear statistics was recently obtained in \cite{serfaty2020gaussian} but with considerable new difficulties compared to the two-dimensional case.

\subsubsection*{Riesz interactions.} 
The 2DOCP can be seen as a member of one-parameter family of two-dimensional systems called \emph{Riesz gases} for which the interaction potential is taken as a certain power $|x-y|^{-\ss}$ of the distance (the case $\ss = 0$ corresponds by convention to the logarithmic interaction). Rigidity properties of those systems depend on the value of $\ss$ as it governs both the level of repulsiveness at $0$ and, most importantly, the long- or short-range nature of the interaction. 

\cor{Discrepancy bounds were given in \cite[Lemma 2.2]{petrache2017next} (finite-volume) and \cite[Lemma 3.2]{MR3735628} (infinite-volume), expressed purely in terms of the (average) Riesz energy of a (random) point configuration. In particular, if a stationary random point configuration $\bX$ has “finite Riesz energy”, then it is known that: 
\begin{equation*}
\E[ \Di(\bX, \DD(0,R))^2 ] = \O(R^{2 + \ss}) \text{ as } R \to \infty.
\end{equation*}
We will not define Riesz energy of an random infinite point configuration here, we refer instead to the articles cited above or to \cite{leble2016logarithmic} for an alternative take which does not use the “electric” formalism.}

\cor{A fairly wild guess, based on the empirical observation that the true size of fluctuations is often one order of magnitude smaller than energy-based bounds, would be that the number variance under the canonical Gibbs measure of a $2d$ Riesz gas is $\O\left(R^{1 + \ss}\right)$, but this is very speculative.}

\cor{The rigidity of one-dimensional Riesz systems was recently studied in \cite{boursier2021optimal}, which shows, in particular, that the number variance in an interval of length $R$ is $\O(R^{\ss})$. In other words, $1d$ Riesz gases are hyperuniform for $\ss \in [0, 1)$. This is consistent with the idea that when $\ss > 1$, one leaves the realm of “long-range” systems and that “short-range” systems cannot be hyperuniform, see \cite{dereudre2024non} for a recent result in that direction.}

\subsubsection*{High temperature regime.}
There has been interest in studying the “high-temperature” regime for the 2DOCP, i.e. when $\beta \to 0$ as $N \to \infty$.

\cor{The “local laws” paper \cite{armstrong2019local} allows to take the inverse temperature $\beta$ depending on $N$. Their results puts forward the fact that if $\beta = \beta(N) \to 0$, then there is a characteristic lengthscale $\rho_\beta \sim C \log (\beta^{-\hal}) \beta^{-\hal}$ tending to $+ \infty$ such that the “good behavior” of the system (expressed in terms of the local laws) can only be shown \emph{at scales larger than $\rho_\beta$}. In other words, one could expect the system to be “disordered” at scales $\ell \ll \rho_\beta$, and the good rigidity properties to only arise at scales $\ell \gg \rho_\beta$.}

\cor{Other results (see e.g. \cite{akemann2019high,padilla2020large} for the Coulomb gas, \cite{Lambert_2021} for general interacting systems at high temperature, \cite{hardy2021clt} for the $1d$ log-gas) indicate that the inverse temperature regime where $\beta$ is of order $N^{-1}$ is the threshold that should distinguish between “Ginibre-like” and “Poisson-like” properties.}

In view of those results, it is tempting to ask whether the 2DOCP stays hyperuniform as long as $\beta N \to \infty$,  and whether one can observe an interesting transition along $\beta = \frac{c}{N}$, as $c$ moves down from $+ \infty$ to $0$. \cor{Moreover, it seems plausible than any good discrepancy estimate should require the lengthscale $R$ to be larger than $\rho_\beta$ as introduced in \cite{armstrong2019local}, in other words there would be a lower bound on the speed at which we take $R = R(N)$ depending on $N$.}

\medskip

\subsection{Precise statement of the result, proof strategy and tools}
\label{sec:strategy}
\cor{Our goal is to prove hyperuniformity of the 2DOCP, which (by definition) is an asymptotic result on the number variance in large disks. The main result of the paper is in fact a \emph{non-asymptotic} tail bound on the probability of having a “large” discrepancy (here, specifically, we can handle discrepancies larger than $R (\log R)^{-0.3}$) in a disk $\DD(x,R)$ of radius $R$. The two important features of this bound are:
\begin{enumerate}
	\item It is valid for all $N$, for all $R$ and for all $x$ (in particular one can take $x$ and $R$ depending on $N$) \emph{provided} certain conditions are verified: $x$ must be at a certain distance from the edge (see \eqref{eq:assumption_distance} below) and $N, R$ must be large enough depending on $\beta$ and on that distance.
	\item The tail estimate is \emph{sub-algebraic} i.e. it decays faster than any polynomial in $R$. \\
	In particular, it can thus be used to control the second moment of $\Di(\bXN, \DD(x, R))$ via the usual “layer-cake” representation, \emph{yielding a $o(R^2)$ bound, which proves hyperuniformity}.
\end{enumerate}
}

\begin{theorem}[Non-asymptotic tail probability bound on discrepancies in large disks] 
\label{theo:main2}
Let $\delta > 0$ be fixed. For all $N$ and $R$ large enough (both depending on $\beta$ and $\delta$), for all $x$ in $\LN$ such that $x$ is “in the bulk” in the following sense:
\begin{equation}
\label{eq:assumption_distance}
\dist(\DD(x, R), \partial \LN) \geq \delta \sqrt{N}, 
\end{equation}
we have:
\begin{equation}
\label{eq:resultat}
\PNbeta\left(\left\lbrace \left|\Di(\bXN, \DD(x, R))\right| \geq R (\log R)^{-0.3} \right\rbrace   \right) \leq \exp\left( - \log^{1.5} R \right).
\end{equation}
\end{theorem}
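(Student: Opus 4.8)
The plan is to pass to the electrostatic (``renormalized energy'') formulation of $\FN$, in which the Gibbs weight $\exp(-\beta\FN(\bXN))$ is rewritten through a truncated logarithmic potential and can be localized to, and compared across, dyadic scales. The bottom input is the a priori \emph{local laws}: in the bulk, the discrepancy $\Di(\bXN,\DD(x,s))$ of a disk of radius $s$ is $\O(s)$ with overwhelming probability, together with companion bounds on local point densities, minimal interparticle distances, and discrepancies of thin annuli. The goal is precisely to improve the $\O(s)$ estimate, at scale $s=R$, to $\O\!\big(R(\log R)^{-0.3}\big)$, and to do so with a super-polynomially small tail.

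The core is a decoupling-and-rescaling step, iterated a controlled number of times. Given $\DD(x,R)$ in the bulk, I would tile all but a thin boundary corridor of $\DD(x,R)$ by $M\asymp(R/r)^2$ disjoint boxes $\La_1,\dots,\La_M$ of side $\asymp r$ for a well-chosen $r\ll R$, so that $\Di(\bXN,\DD(x,R))=\sum_i\Di(\bXN,\La_i)$ up to an $\O(\sqrt R)$ error absorbed by the annulus estimate. The two facts to establish are: (a) conditionally on the configuration outside $\bigcup_i\La_i$, the clouds inside the $\La_i$ are \emph{approximately independent}, and each is \emph{approximately a 2DOCP on $\La_i$} with background $\mm_0$ perturbed by a small, explicitly controlled amount coming from the exterior; and (b) the net charge $\sum_i\Di(\bXN,\La_i)$ itself carries an electrostatic cost. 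Granting (a) and (b), a concentration bound for sums of nearly-independent, essentially sub-Gaussian variables — superimposed with the Gaussian-type penalty on the total charge from (b) — yields $|\Di(\bXN,\DD(x,R))|\lesssim R(\log R)^{-c'}$ up to $\mathrm{polylog}$ factors and with the required tail, where $c'$ is strictly larger than the exponent available at scale $r$ (and is bounded away from $0$ already at the first pass, thanks to (b) alone, since the base input has exponent $0$). Iterating and optimizing the choices of $r$ and of the number of passes — while tracking how errors accumulate — produces the exponents $0.3$ and $1.5$ of \eqref{eq:resultat}. Radial averaging of the discrepancy over a window of radii, which replaces the indicator of the disk by a function smooth at an intermediate scale, is the natural device for making the comparisons in (a) and the smooth-linear-statistics inputs applicable.

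The main obstacle — and essentially all of the work — is making (a) quantitative for the ``true'', non-hierarchical model, which has neither exact conditional independence nor exact self-similarity. The mechanism is a screening construction: one rearranges the particles in a corridor around $\partial\La_i$ so as to insulate the boxes from each other, at a controlled cost in energy and entropy, on a high-probability event supplied by the local laws; one then (i) estimates the resulting conditional-independence error, and (ii) compares the screened box-system with a genuine 2DOCP on a box — a \emph{quantitative invariance} statement — which forces recentering each box (pinning its center of mass), because the induced background is neither exactly $\mm_0$ nor exactly neutral, and forces proving local laws for these sub-systems as well. Transporting the inductive discrepancy tail through this chain of comparisons, with error terms small enough not to swamp the gain in (b), is the delicate point; it is also why the improvement over $\O(R)$ is only by a slowly growing factor rather than the conjectured class-I estimate of order $R^{1/2}$.

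Finally, the hypothesis $\dist(\DD(x,R),\partial\LN)\ge\delta\sqrt N$ in \eqref{eq:assumption_distance} is what guarantees that every disk, box, and screening corridor arising in the iteration remains well inside $\LN$, so that the local laws, the quantitative-invariance comparisons, and the sub-system estimates apply uniformly; a union bound over the polylogarithmically many scales and boxes then assembles \eqref{eq:resultat}.
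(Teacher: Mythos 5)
Your proposal takes a fundamentally different decomposition from the paper's, and there is a gap at its center. You tile all of $\DD(x,R)$ (minus a thin corridor) by $\asymp(R/r)^2$ boxes of side $r$, write the discrepancy as $\sum_i\Di(\bXN,\La_i)$, and propose to combine approximate independence of the boxes with ``a Gaussian-type penalty on the total charge''. This does not close. With $(R/r)^2$ approximately independent, approximately centered variables each of typical size $\asymp r$ --- the best the local laws deliver at scale $r$ --- their sum has standard deviation $\asymp r\cdot\sqrt{(R/r)^2}=R$: no gain over the $\O(R)$ bound you already have at scale $R$. Your item (b), an electrostatic penalty on the total $\sum_i\Di(\bXN,\La_i)$, \emph{is} hyperuniformity, i.e., the statement to be proved; invoking it to overcome the CLT spread is circular. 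The hierarchical iteration your sketch alludes to does generate such a penalty scale by scale in Chatterjee's model, but only because exact conditional independence and exact self-similarity are built into the tree; the screening and recentering you propose as substitutes are precisely where the difficulty lives, and the claim that iterating and optimizing ``produces the exponents $0.3$ and $1.5$'' is not supported by the sketch.

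The paper instead follows Nazarov--Sodin--Volberg and works with a \emph{sparse} family along the boundary of $\DD(x,R)$, not a dense tiling. The first and crucial step (Section~\ref{sec:locating}) is to \emph{corner the discrepancy}: Proposition~\ref{prop:chase_the_goat} and Lemma~\ref{lem:finding_discr} show that a discrepancy of size $\eRR$ in $\DD(x,R)$ is, with high probability, already captured by a thin annulus of width $\O(L)$ near $\partial\DD(x,R)$ (using the radially symmetric fluctuation bound of Proposition~\ref{prop:bound_fluct_radial}). The annulus is split into $\O(R/L)$ boxes of size $L$, and a pigeonhole argument (Lemma~\ref{lem:MV1}) extracts a well-separated sub-family of $\Nn=\O(R/(ML))$ boxes that still carries a fraction $\eRR/(4M)$ of the discrepancy. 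For this sparse family, approximate conditional independence (Proposition~\ref{prop:CI}) rests on a Taylor expansion of $\log|x-y|$ (Lemma~\ref{lem:sizeCI}), not on a screening construction; sub-system local laws (Proposition~\ref{prop:local_law_SUBSYS}) give $\O(L^2)$ variance bounds; and the quantitative translation-invariance estimate (Corollary~\ref{coro:appli_exp}, built on the ``spin wave'' of Section~\ref{sec:approx_QI}) shows each box discrepancy has mean $\O(L^{0.67})$. Section~\ref{sec:conclusion} then closes with a single exponential Markov inequality at parameter $\omega=L^{-1.33}$: the cost $\omega\eRR/M$ dominates the $\Nn$ error terms of size $\O(L^{-0.66})$ precisely because $\Nn$ is of order $R/(ML)$, not $(R/L)^2$. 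There is no bootstrap over scales, and the gain never passes through an electrostatic penalty on the total charge of the large disk.
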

Since good exponential tails for $\Di$ at values higher than $R$ are already known (see e.g. \cite[Thm~1]{armstrong2019local}), this new sub-algebraic tail valid for values of the discrepancy between $R \log^{-0.3} R$ and $R$ \cor{provides the crucial improvement to go from $\O(R^2)$ to $o(R^2)$ in our controls on the number variance}, and it implies the desired asymptotic result:
\begin{corollary}[Hyperuniformity of the 2DOCP]
\cor{For each $N \geq 2$, let $x = x(N)$ be a point in $\LN$ and let $R = R(N)$ be a positive radius. Assume that we are taking “large disks” in the sense:
\begin{equation*}
\lim_{N \to \infty} R(N) = + \infty,
\end{equation*}
and that the disks that we are considering are located “in the bulk” in the sense:
\begin{equation*}
\liminf_{N \to \infty} \frac{1}{\sqrt{N}} \dist(\DD(x(N), R(N)), \partial \LN) > 0
\end{equation*}
Then we have, as $N \to \infty$:}
\begin{equation}
\label{eq:HU_main}
\Var[\Di(\bXN, \DD(x,R)) ] \leq \EN \left[ \Di^2(\bXN, \DD(x, R))\right] = \O\left( \frac{R^2}{\log^{0.6} R} \right) = o(R^2).
\end{equation}
\end{corollary}
The proof will be given in Section \ref{sec:conclusion}. \corT{As discussed there, with the current strategy, there is a serious technical obstruction to obtaining better than $\frac{R^2}{\log R}$ in the right-hand side, with this in mind the exponent $0.6$ is somewhat arbitrary and has not been optimized - we preferred to keep a lighter presentation of the proof of the main theorem by choosing some “nice” constants by hand.}

\subsubsection*{General strategy}
Our main source of inspiration is \cite[Section 4]{NSV}, in which Nazarov-Sodin-Volberg prove JLM-like estimates on the probability of having large “charge” fluctuations in the disk $\DD(0,R)$ \emph{for zeros of the Gaussian Entire Function}. Their argument can be roughly summarized as follows: fix $\alpha \in (\hal, 1)$ and assume there is a discrepancy of size $R^\alpha$ in the disk $\DD(0,R)$.
\begin{enumerate}[leftmargin=*]
	\item Show that the discrepancy can be captured along the boundary $\partial \DD(0,R)$.
	\item Split that boundary into $\approx R$ pieces of size $\approx 1$. Take $M$ large and apply a basic pigeonhole argument: there exists a family of $\approx \frac{R}{M}$ pieces that capture a discrepancy of size at least $\frac{R^{\alpha}}{M}$ and which are “well-separated” (distances between pieces are multiples of $M$).
	\item Then comes the main probabilistic work:
	\begin{enumerate}
		\item Show that these well-separated pieces are \emph{approximately independent}.
		\item Show that the discrepancy on each piece (of size $\approx 1$) is typically $\O(1)$.
		\item Show that the discrepancy on each piece is centered.
	\end{enumerate}
	\item Apply Bernstein's concentration inequality: if $\{D_i\}_i$ is a family of $\approx \frac{R}{M}$ independent centered random variables of size $\O(1)$ then:
	\begin{equation*}
	\P \left( \left\lbrace \sum_i D_i \geq R^{\alpha}  \right\rbrace \right) \lesssim \exp\left(- \frac{R^{2\alpha -1}}{M}\right).
	\end{equation*} 
\end{enumerate}
This gives them \emph{sharp} bounds on the number variance \emph{and} the correct tail probabilities. Whilst not aiming so high, we follow here a similar strategy, with much effort to translate it to the context of the 2DOCP (besides point (2) and (4), which are easy).

\subsubsection*{Electric approach, sub-systems}
The technical core is the “electric approach” to Coulomb gases as developed by S. Serfaty and co-authors. In that regard, our main imports are:
\begin{itemize}[leftmargin=*]
\item The general spirit of controlling fluctuations through the electric energy (see Section \ref{sec:logEnergy}).
\item Local laws (up to microscopic scale) as in \cite{armstrong2019local} (see Proposition \ref{prop:local_laws} and Proposition \ref{prop:local_law_SUBSYS}).
\item Optimal bounds for fluctuations of smooth linear statistics as in \cite{MR3788208,serfaty2020gaussian,MR4063572}. Point (1), for example follows fairly easily from such bounds.
\item “Smallness of the anisotropy”, for which we refer to \cite{MR3788208,serfaty2020gaussian} (see also \cite{MR4063572}).
\end{itemize}
In addition, we put forward the role of so-called “sub-systems”, which arise as restrictions of the full system to some region and which we view as two-dimensional Coulomb systems in their own right, possibly with a small global non-neutrality, and (most importantly) feeling the effect of some harmonic exterior potential. As such, this object is not new - it has been sometimes called a “conditional” or “local” measure in the literature (see e.g. \cite{MR4063572,MR3192527}). 

We provide here a thorough study of its behavior, through a global law \cor{(Proposition \ref{prop:global_law_SUBSYS})} and local laws \cor{(Proposition \ref{prop:local_law_SUBSYS})}, which show that with high probability the sub-systems, although under the influence of an external potential, retain most of the rigidity properties of the full system - as is the case “for free” in the hierarchical model studied in \cite{chatterjee2019rigidity}. This requires a precise study of the external potential “felt” by a sub-system \cor{(Proposition \ref{prop:VextIsOftenGood})} and \cor{(as a technical input)} of the perturbation of the equilibrium measure that it induces within the sub-system, for which we rely on the analysis of \cite[Section 2.]{bauerschmidt2017local}, with some modifications. \cor{To summarize informally our analysis of those “sub-systems”: fix some disk $\La$ in the bulk of $\LN$, and for a given configuration $\bXext$ outside $\La$, consider the law of $\bXN \cap \La$ knowing that $\bXN \cap \bar{\La} = \bXext$. Then:
\begin{itemize}
	\item This law can be seen as the Gibbs measure of a “generalized” 2ODCP, either with some external potential (Section \ref{sec:with_external}) or some perturbed background measure (Section \ref{sec:PnLV}).
	\item With high $\PNbeta$-probability (Proposition \ref{prop:VextIsOftenGood}), the exterior configuration $\bXext$ is such that the law of $\bXN \cap \La$ knowing that $\bXN \cap \bar{\La} = \bXext$ enjoys the same good properties (Sections \ref{sec:good_properties} and \ref{conseqSubSys}) as the original Gibbs measure $\PNbeta$. In particular, local laws hold.
\end{itemize}
}

The fact that the local laws \cite{armstrong2019local,MR3719060} extend to sub-systems allows one to treat Point (3) (b), which seem unsurprising, but requires to consider objects of size $\sim 1$, hence to control the system \emph{down to the smallest scales}. Such local laws are also crucial for the approximate translation-invariance argument which we present now.

\subsubsection*{Approximate translation-invariance}
Step (3) (c) is void in the context of \cite{NSV} because the underlying point process is infinite and translation-invariant (which implies that linear statistics are centered). However, it turns out to be a major roadblock when adapting the proof to a Coulomb gas context. Indeed, in sub-systems (or even in the full system if one does not impose artificial boundary conditions) there is absolutely no obvious translation-invariance. We introduce an “approximate translation-invariance” result, valid for both the full system and sub-systems, which is crucial for Step (3) (c). The point being that shifting a function (or more precisely, averaging over shifts) acts as a mollification and enables us to compare the expectation of a discrepancy, or of any non-smooth linear statistics, to that of a smoother one.

There is a series of results in mathematical statistical mechanics \emph{à la} H.-O. Georgii devoted to prove translation invariance of infinite volume Gibbs measures in contexts where stationarity is not built-in, see e.g. \cite{MR1886241,richthammer2007translation}, following earlier works by Fröhlich-Pfister \cite{frohlich1981absence,frohlich1986absence}. The basic idea is to construct suitable “localized translations” in the form of diffeomorphisms acting as a given translation in a large box while leaving the majority of the system unchanged, and to control the effect on the energy of such changes of variables. 

Following the wisdom found in a remark of \cite[Chap. 3, Sec.~III.7]{simon2014statistical}, we construct a localized translation that varies very slowly (in terms of its $H^1$ norm), and a careful revisit of a computation done in \cite{serfaty2020gaussian}, together with Serfaty's “smallness of anisotropy” trick, allows us to conclude that this localized translation can be chosen to have an arbitrarily small effect on the energy. 

\cor{Our most general result on approximate translation-invariance is given Proposition \ref{prop:Quantitative_invariance}. An easy-to-state consequence is the following:
\begin{corollary}
Let $\G$ be a bounded function on the space $\Conf$ of point configurations, which is local in the sense that $\G(\bX)$ depends only on the restriction of $\bX$ to some bounded set of $\R^2$, and let $\vec{v} \in \R^2$.
\begin{equation*}
\E \left[ \G(\bXN) \right] = \hal \left( \E\left[ \G(\bXN + \vec{v}) \right]  + \E\left[ \G(\bXN - \vec{v}) \right]  \right) + o_N(1).
\end{equation*}
\end{corollary}
In particular, if $\bXN$ converges in distribution (up to extraction) to some random infinite point configuration $\bX$, we have:
\begin{equation*}
\E \left[ \G(\bX) \right] = \hal \left( \E\left[ \G(\bX + \vec{v}) \right]  + \E\left[ \G(\bX - \vec{v}) \right]  \right),
\end{equation*}
which implies that $\bX$ and $\bX + \vec{v}$ have the same distribution, and since $\vec{v}$ is arbitrary we deduce that \emph{every limit point of the law of $\bXN$ is translation-invariant} (see \cite[Thm 3.]{leble2024dlr} for a slightly different proof).
}

\subsubsection*{Approximate independence}
Finally, one needs to find some kind of independence between the sub-systems. We introduce a new “approximate independence” argument (which we believe to be of interest on its own) for sub-systems that are well-separated, conditionally on the \emph{number of points} in each of them. The simple idea is that two domains $\Lai, \Laj$ with $\nn_i, \nn_j$ points contribute an interaction energy given to first order by: 
  \begin{equation*}
  -\left(\nn_i - |\Lai|\right)\left(\nn_j - |\Laj|\right) \log \dist(\Lai, \Laj),
  \end{equation*}
while the precise arrangements of the points inside each domain should matter only to a lower order. \cor{Our main result in that regard is given in Proposition \ref{prop:CI}.}

\cor{Despite our phrasing, one should not think of distant subsystems as being truly independent. What our analysis shows, however is that \emph{conditionally on the number of points in each subsystem}, events that depend on several distant subsystems and would be very unlikely for independent subsystems are also very unlikely for the true Gibbs measure.}

\medskip

\begin{remark}[On scaling conventions]
In the rest of the paper we often use results found in \cite{armstrong2019local} and \cite{serfaty2020gaussian}. It is worth noting that although both papers deal (among other things) with the same 2DOCP model, they do not use the same scaling convention. In \cite{armstrong2019local} the authors work with the so-called “blown-up scaling” (as in the present paper), which is more common in the physics literature and for which lengthscales range from $\sim 1$ (the nearest-neighbor scale) to $\sim N^{1/2}$ (the diameter of the system), whereas \cite{serfaty2020gaussian} uses the random matrix theory convention where the local scale is $\sim N^{-1/2}$ and the global scale is $\sim 1$. In particular, a length scale $\ell$ in the present paper translates into $N^{1/2} \ell$ in \cite{serfaty2020gaussian}, and the corresponding modifications must be applied when quoting results.
\end{remark}

\subsection*{Acknowledgements}
\textit{We thank Alon Nishry for discussions about the JLM law, out of which this project grew. We thank New York University, Tel Aviv University and the Institute for Advanced Study for their hospitality.}
\textit{Part of this work was done while the author was a member of the School of Mathematics at the IAS with the support of the Florence Gould foundation. We thank Prof. S. Torquato and Prof. T. Spencer for stimulating discussions.}

\textit{A previous version contained a serious error which invalidated the proof. We are grateful to Eric Thoma for quickly spotting the dubious passage and mentioning it to us.}

\textit{We thank Sylvia Serfaty for helpful discussions about various aspects of the proof, and Fréderic Marbach for his advice concerning the “spin wave” construction.}

\textit{We thank the anonymous referees for their careful reading and constructive criticism.}

\section{Preliminary results}
\label{sec:preliminary}
\cor{In this section, we gather notation, definitions, and important recent results on finite volume 2DOCP's, most of them drawn from \cite{armstrong2019local} (local laws) and \cite{serfaty2020gaussian} (control on smooth linear statistics). We will occasionally need small extensions of those results, whose proofs are given in the appendix.}

\subsection{Some notations.}
\label{sec:notation}
 We denote indicator functions by $\1$. We denote by $|\Omega|$ the Lebesgue measure of a measurable subset $\Omega$. For $x \in \R^2$ and $r > 0$ we let $\DD(x,r)$ be the (closed) disk of center $x$ and radius $r$. We let $\sq(x,r)$ be the square of center $x$ and sidelength $r$ (with sides parallel to the axes of $\R^2$). We recall the notation $\Di$ for discrepancies: if $\bX$ is a point configuration and $\Omega$ is a measurable subset, we let\footnote{\cor{We use $\mm_0(\Omega)$ when we want to emphasize that it is the mass given to $\Omega$ \emph{by the background measure}.}} $\Di(\bX, \Omega) := \Points(\bX, \Omega) - \mm_0(\Omega)$.

\subsubsection*{Size of derivatives.} If $M$ is a matrix we let $\|M\|$ be its Euclidean norm. If $\varphi$ is a bounded map we write $|\varphi|_\0$ for its sup-norm. If $\varphi$ is differentiable, we let $\Dd \varphi$ be its differential and introduce the following notation:
\begin{itemize}
     \item $|\varphi|_{\1, \star}(x)$ is the size of $\Dd \varphi$ at a given point $x$, $|\varphi|_{\1, \star}(x) := \|\Dd \varphi(x)\|$.
     \item $|\varphi|_{\1, \loc}(x)$ is the size of $\Dd \varphi$ around a given point $x$, $|\varphi|_{\1, \loc}(x) := \sup_{|x'-x| \leq 1} |\varphi|_{\1, \star}(x')$
     \item $|\varphi|_{\1, \Omega}$ is the size of $\Dd \varphi$ in a given domain $\Omega$, $|\varphi|_{\1, \Omega} := \sup_{x \in \Omega} |\varphi|_{\1, \star}(x)$.
     \item Finally, $|\varphi|_{\1}$ is the sup-norm of $\Dd \varphi$.
 \end{itemize} 
 We define similarly $|\varphi|_{\kk, \star}$, $|\varphi|_{\kk, \loc}$, $|\varphi|_{\kk, \Omega}$ and $|\varphi|_{\kk}$ for $\kk \geq 2$.

\subsubsection*{Point configurations.}
For all Borel subsets $\Lambda$ of $\R^2$, we let $\Conf(\Lambda)$ be the space of locally finite\footnote{In fact all of the point configurations considered in this paper are finite.} point configurations on $\Lambda$, endowed with the vague topology of Radon measures and the associated Borel $\sigma$-algebra. When $\Lambda$ is not specified, we use the notation $\Conf$ for point configurations on $\R^2$. We will denote by $\bX \mapsto \bX \cap \Lambda$ the natural projection $\Conf \twoheadrightarrow \Conf(\Lambda)$. We may write “$x \in \bX$” to express the fact that $\bX$ has an atom at a given point $x \in \R^2$. We say that a measurable function $\G$ on $\Conf$ is $\La$-local when for all $\bX$ in $\Conf$ we have $\G(\bX) = \G(\bX \cap \La)$. We say that a measurable subset (an “event”) $\EE$ of $\Conf$ is $\La$-local when its indicator function $\ind_{\EE}$ is $\La$-local.

\subsubsection*{Constants.} Unless specified otherwise, $\Cc$ denotes a universal constant, which may change from line to line, and $\Cc_\beta$ a constant that depends only on $\beta$. We will use $\bCc$ for constants that may depend on $\beta$ and the parameter $\delta$ (as in \eqref{eq:assumption_distance}). We write $A \preceq B$ or $A = \O(B)$ if $|A| \leq \Cc |B|$.

\cor{By “universal”, we mean that the constant does not depend on any of the other parameters ($N, \beta, R, \delta$...), but it might depend on the convention we chose when defining the model, for instance on the fact that we choose the support of the system to be a disk of radius $\sqrt{\frac{N}{\pi}}$ and not a square of sidelength $\sqrt{N}$ etc.}

\subsubsection*{Fluctuations.} If $\varphi$ is a piecewise continuous function on $\R^2$ we define the \emph{fluctuation of (the linear statistics associated to) $\varphi$} as the following random variable:
\begin{equation}
\label{def:Fluct}
\Fluct[\varphi] := \int_{\LN} \varphi(x) \dd \fN(x) = \cor{\int_{\LN} \varphi(x) \dd \left(\bXN - \mm_0\right)(x) = \sum_{i =1}^N \varphi(x_i) - \int_{\LN} \varphi(x) \dd x}.
\end{equation}

\subsection{Electric fields} 
\label{sec:ElectricFields}
We recall that $-\log$ satisfies $- \Delta (-\log) = 2\pi \delta_0$ on $\R^2$ in the sense of distributions. In particular for all smooth enough test functions $f$, the following identity holds: 
\begin{equation}
\label{eq:identiteLaplaceLog}
f(x) = \frac{-1}{2\pi} \int_{\R^2} -\log |x-y| \Delta f(y) \cor{\dd y}.
\end{equation}

Let $\La \subset \R^2$ and let $\bX$ be a point configuration in $\La$.
\begin{definition}[True electric potential and electric field]
\label{def:true}
We let\footnote{\cor{Technically speaking, the objects $\HH, \nHH$ depend on the choice of $\La$. In our case, $\La$ will always be either $\LN$ (i.e. the support of the finite-$N$ 2DOCP) or the domain of “definition” of our generalized 2DOCP's (see Section \ref{sec:CI}) and will never be ambiguous, so we refrain from mentioning it in the notation.}} $\HH^{\bX}$ (resp. $\nHH^{\bX}$) be the \textit{true electric potential} (resp. \textit{true electric field}) generated by $\bX$ (in $\La$), namely the map (resp. vector field) defined on $\R^2$ by:
\begin{equation*}
\HH^{\bX}(x) := \int_{\La} - \log |x-y| \dd \left(\bX - \mm_0 \right)(y),\quad  \text{resp. } \nHH^{\bX}(x) = \int_{\La} - \nabla \log |x-y| \dd \left(\bX - \mm_0 \right)(y).
\end{equation*}
It is easy to check that $\nHH^{\bX}$ is in $\cap_{p \in [1,2)} \Lploc(\R^2, \R^2)$ but fails to be in $\mathrm{L}^2$ around each point charge, and that the following identity is satisfied on $\La$ in the sense of distributions:
\begin{equation}
\label{DeltaHH}
- \Delta \HH^{\bX} = - \dive \nHH^\bX  = 2\pi \left(\bX - \mm_0 \right).
\end{equation}
\end{definition}

\begin{definition}[Compatible electric fields]
\label{def:compatible}
Let $\Elec$ be a vector field in $\cap_{p \in [1,2)} \Lploc(\R^2, \R^2)$. We say that $\Elec$ is an \textit{electric field compatible with} $\bX$ on $\Lambda$ whenever we have: $- \dive \Elec = 2\pi \left( \bX - \mm_0 \right)$ on $\La$ in the sense of distributions.
\end{definition}
Obviously the true electric field is a compatible electric field, however it is not the only one as can be seen by adding any divergence-free smooth vector field on $\La$ to $\nHH^\bX$. 

In order to “take care” of the singularities, one often proceeds to a truncation of the fields near each point charge.
\begin{definition}[Truncation and spreading out Dirac masses]
\label{def:trunc}
For $\eta > 0$ we let $\feta$ be the function:
\begin{equation*}
	\ff_\eta(x) := \max\left( -\log \frac{|x|}{\eta}, 0\right) = \begin{cases} -\log|x| + \log|\eta| & \text{if } x \leq \eta \\ 0 & \text{if } x \geq \eta \end{cases}.
\end{equation*}
For each point $x$ of $\bX$, let $\eta(x)$ be a non-negative real number. The data of $\vec{\eta} = \{\eta(x), x \in \bX\}$ is called a \textit{truncation vector}. If $\HH^\bX$ is the true electric potential, we let $\HH^\bX_{\veta}$ (resp. $\nHH^\bX_{\veta}$) be the (true) \textit{truncated} electric potential (resp. field) given by:
\begin{equation*}
\HH^\bX_{\veta} := \HH^\bX - \sum_{x \in \bX} \ff_{\eta(x)}(\cdot - x), \quad \text{resp. } \nHH^\bX_{\veta} = \nHH^\bX - \sum_{x \in \bX} \nabla \ff_{\eta(x)}(\cdot - x).
\end{equation*}
We are thus effectively replacing $-\log|x - \cdot|$ by $-\log \eta$ near each point charge.

Another way to think about $\ff_\eta$ is that we are truncating the singularities by \textit{smearing out} the point charge $\delta_x$ \textit{à la} Onsager. Indeed when computing the divergence of $\nHH^\bX_{\veta}$ one finds that the atom at each point $x \in \bX$ has been replaced by a measure $\delta_{x}^{(\eta(x))}$ of mass $1$ uniformly spread on the circle of center $x$ and radius $\eta(x)$. \cor{We now have (cf. \eqref{DeltaHH}):
\begin{equation}
\label{DeltaHHeta}
- \Delta \HH^{\bX_{\veta}} = - \dive \nHH^{\bX_{\veta}}  = 2\pi \left(\sum_{x \in \bX} \delta_{x}^{(\eta(x))} - \mm_0 \right).
\end{equation}
}We refer to \cite[Section 2.2 \& Appendix B.1]{armstrong2019local} or to \cite[Sec 3.1]{serfaty2020gaussian} for more details. 

The truncation procedure can be extended to define $\Elec_{\veta}$, where $\Elec$ is any electric field compatible with $\bX$, by setting:
\begin{equation*}
\Elec_{\veta} := \Elec - \sum_{x \in \bX} \nabla \ff_{\eta(x)}(\cdot - x).
\end{equation*}
\end{definition}

For every point $x$ of $\bX$ we define the “nearest-neighbor” distance $\rr(x)$ as:
\begin{equation}
\label{def:nn_distance}
\rr(x) := \frac{1}{4} \min \left( \min_{y \in \bX, y \neq x} |x-y|, 1 \right).
\end{equation}
In particular $\rr(x)$ is always smaller than $1/4$. We let $\vec{\rr} = (\rr(x), x \in \bX)$ be the associated truncation vector. We will sometimes use instead the vector $s \vec{\rr}$ with $s < 1$. 

\subsection{Logarithmic energy and electric fields}
\label{sec:logEnergy}
The logarithmic interaction energy $\FN$ defined in \eqref{def:FN} can be expressed in terms of the true electric field generated by $\bXN$ (Definition \ref{def:true}) as follows.
\begin{itemize}
	\item Taking a uniform truncation vector $\eta(x) = \eta > 0$ for all $x$ in $\bXN$, we have the following equality in the limit as $\eta \to 0$:
\begin{equation*}
\FN(\bXN) = \hal \lim_{\eta \to 0} \left( \frac{1}{2\pi} \int_{\R^2} |\nHH^{\bXN}_{\veta}|^2 + N \log \eta \right).
\end{equation*}
(This quantity is “almost” non-decreasing as $\eta \to 0$, see e.g. \cite[Lemma B.1.]{armstrong2019local}.)

\item On the other hand, taking for each $x$ a truncation $\eta(x) \leq \rr(x)$, we get a non-asymptotic identity:
\begin{equation}
\label{secondformulation}
\FN(\bXN) = \hal \left( \frac{1}{2\pi} \int_{\R^2} |\nHH^{\bXN}_{\veta}|^2 + \sum_{x \in \bXN} \log \eta(x)\right)  - \sum_{x \in \bXN} \int_{\DD(x,\eta(x))} \ff_{\eta(x)}(t-x) \dd t.
\end{equation} 
\end{itemize}
The first formulation can be found in \cite{sandier2012ginzburg,rougerie2016higher} and the second one in \cite{MR3788208} or \cite[Lemma 2.2.]{armstrong2019local}. Of course, expressing electrostatic interactions in terms of the electric field and “smearing out” point charges are both old ideas. One sees that in the small truncation limit there is a compensation between the explosion of $\int_{\R^2} |\nHH^{\bXN}_{\veta}|^2$ and the very negative terms $\sum_{x \in \bXN} \log \eta(x)$, hence the name “renormalized energy” given in \cite{MR3353821} (this renormalization procedure appeared in \cite{bethuel1994ginzburg} and the idea of using nearest-neighbor distances was borrowed from \cite{gunson1977two}). Despite their apparent mutual cancellation, in general the “positive” and “negative” parts can both be compared to $\FN(\bXN)$ as explained e.g. in \cite[Lemma B.2]{armstrong2019local}.

In the sequel we use the expression “electric energy” when referring to quantities of the type $\int_{\Omega} |\Elec_{\vec{\eta}}|^2$ where $\Elec$ is an electric field, $\vec{\eta}$ a truncation vector and $\Omega$ is some subset of $\R^2$, and we write:
\begin{equation}
\label{def_Ener}
\Ener(\bXN, \Omega) :=  \int_{\Omega} |\nHH^{\bXN}_{\vr}|^2, \quad \Ener_s(\bXN, \Omega) :=  \int_{\Omega} |\nHH^{\bXN}_{s \vr}|^2 \ (\text{for } s \in (0,1)).
\end{equation}
From the identity \eqref{secondformulation} we deduce that for $s < 1$:
\begin{equation}
\label{eq:EnersEner}
\Ener_s(\bXN, \Omega) = \Ener(\bXN, \Omega) - \Points(\bXN, \Omega) \log s,
\end{equation}
For convenience, we will sometimes write $\EnerPoints$ for the sum of $\Ener$ and $\Points$ in a given domain.

\subsection{Global and local laws}
The following properties of $\FN$ are well-known (see \cite[Lemma 3.7]{armstrong2019local}):
\begin{itemize}
	\item It is bounded below: there exists a constant $\Cc$ \cor{not depending on $N$} such that for all $N\geq 2$ and for all $\bXN$, we have $\FN(\bXN) \geq - \Cc N$.
	\item It is typically of order $N$ in the sense that for some constant $\Cc_\beta$ depending only on $\beta$:
	\begin{equation}
	\label{eq:global_law_FN}
\EN\left[ \exp\left(\frac{\beta}{2} \FN(\bXN)\right) \right] \leq \exp\left(\Cc_\beta N \right).
	\end{equation}
\end{itemize}
We refer to \eqref{eq:global_law_FN} as a “global law”, controlling the system at macroscopic scale. The next proposition expresses the fact that the system is also well-behaved down to some large microscopic scale. Let us first introduce two important distances that will play a role in Proposition \ref{prop:local_laws} and in the rest of the paper.

\smallskip
\textbf{The smallest microscopic scale $\rho_\beta$.} We refer to a certain \emph{minimal lengthscale} introduced in \cite{armstrong2019local} and denoted by $\rho_\beta$. \cor{The precise expression of $\rho_\beta$ will not be important for us, but we include it here for completeness:
\begin{equation*}
\rho_\beta = C \max\left(1, \beta^{-\hal} \left(1 + \max(- \log \beta, 0)\right)^\hal \right),
\end{equation*}
where $C$ is some constant not depending on $\beta$.
}The quantity $\rho_\beta$ corresponds to the length-scale above which good rigidity properties are proven.
\smallskip

\textbf{Distance to the boundary.} For technical (and possibly physical) reasons, properties of the full system are easier to understand when one looks “away from the edge”, namely at some non-trivial distance of the boundary $\partial \LN$. For $x$ a point in $\LN$ and $\ell > 0$, we will say that “$(x, \ell)$ satisfies \eqref{condi:LL}” when: 
\begin{equation}
\label{condi:LL}
\dist(\sq(x, \ell), \partial \LN) \geq \Cc_\beta N^{1/4},
\end{equation}
where $\Cc_\beta$ is some large enough constant introduced in \cite[(1.16)]{armstrong2019local}.

Let us note that in the statement of Theorem \ref{theo:main2} we assume that $\dist(\DD(x, R), \partial \LN) \geq \delta \sqrt{N}$ which is clearly a stronger assumption, at least for $N$ large enough depending only on $\delta, \beta$.

\begin{proposition}[Local laws]
\label{prop:local_laws}
There exists some universal constant $\Cc$, and a “local laws” constant $\CLL$ depending only on $\beta$ such that the following holds. Let $x$ be a point in $\LN$ and $\ell$ a lengthscale such that:
\begin{enumerate}
    \item  $\ell \geq \rho_\beta$ (the length-scale $\ell$ is larger than the “minimal” one)
    \item $(x, \ell)$ satisfy \eqref{condi:LL} (we are sufficiently “far from the edge”).
\end{enumerate}
Then we control the electric energy in $\sq(x, \ell)$ in exponential moments:
	\begin{equation}
    \label{eq:LocalLaws}
\log \EN \left(\exp\left(\frac{\beta}{2} \Ener\left(\bXN, \sq(x, \ell) \right) \right) \right) \leq \CLL \beta \ell^2.
	\end{equation}
Moreover, we have the following control on the number of points:
\begin{equation}
\label{eq:number_of_points_LL}
 \log \EN \left(\exp\left(\frac{\beta}{\Cc} \Points\left(\bXN, \sq(x, \ell)\right)\right) \right) \leq \CLL \beta \ell^2,
\end{equation}
together with a discrepancy bound:
\begin{equation}
    \label{discr_Di3}
\log \EN \left(\exp\left(\frac{\beta}{\Cc} \frac{\Di^2\left(\bXN, \sq(x, \ell) \right)}{\ell^2}  \right) \right)  \leq \CLL \beta.
	\end{equation}
\end{proposition}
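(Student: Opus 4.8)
The plan is to derive Proposition~\ref{prop:local_laws} as a consequence of the local laws already established in \cite{armstrong2019local}, after a careful translation of scaling conventions and a packaging of the three desired bounds into a single statement about exponential moments. The key observation is that \cite{armstrong2019local} proves (in the blown-up scaling used here) an exponential-moment control on the \emph{local energy} in boxes $\sq(x,\ell)$ with $\ell \geq \rho_\beta$ and with $(x,\ell)$ sufficiently far from $\partial \LN$; this is essentially \eqref{eq:LocalLaws}, up to possibly adjusting the constant $\CLL$ to absorb the renormalization terms appearing when one passes from the ``renormalized'' local energy of \cite{armstrong2019local} to the quantity $\Ener(\bXN,\sq(x,\ell)) = \int_{\sq(x,\ell)} |\nHH^{\bXN}_{\vr}|^2$ defined in \eqref{def_Ener}. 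So the first step is simply to quote the relevant result of \cite{armstrong2019local} and check that the constant $\Cc_\beta$ in \eqref{condi:LL} matches the one required there, and that the truncation at scale $\vr$ (nearest-neighbor distances, capped at $1/4$) is compatible with the truncation used in that reference (it differs at most by the almost-monotonicity correction from \cite[Lemma B.1]{armstrong2019local}, which contributes $O(\ell^2)$ after using the point-count bound).

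The second step is to deduce \eqref{eq:number_of_points_LL} from \eqref{eq:LocalLaws}. Here I would use the standard electric lower bound relating the number of points in a box to the local energy: on a box $\sq(x,\ell)$, one has a deterministic inequality of the form $\Points(\bXN,\sq(x,\ell)) \leq \Cc\,\Ener_s(\bXN,\sq(x,2\ell)) + \Cc \ell^2$ for a suitable $s$, obtained by testing the equation $-\dive \nHH^{\bXN}_{\vr} = 2\pi(\bXN - \mm_0)$ against a cutoff and applying Cauchy–Schwarz, together with the identity \eqref{eq:EnersEner} to control the $\log s$ correction (which again costs only a bounded multiple of $\Points$, reabsorbable by choosing $s$ of order one). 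Combining this with \eqref{eq:LocalLaws} on the slightly larger box $\sq(x,2\ell)$ — legitimate since doubling $\ell$ preserves both hypotheses, up to enlarging $\CLL$ — gives an exponential moment of $\Points$ at inverse temperature $\beta/\Cc$, with the stated right-hand side $\CLL \beta \ell^2$.

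The third step is the discrepancy bound \eqref{discr_Di3}. The point is that $\Di(\bXN,\sq(x,\ell)) = \Points(\bXN,\sq(x,\ell)) - |\sq(x,\ell)|$ is the flux of $\nHH^{\bXN}_{\vr}$ across $\partial \sq(x,\ell)$ up to the truncation error, and by Cauchy--Schwarz on an annulus $\sq(x,2\ell)\setminus\sq(x,\ell)$ one gets $\Di^2(\bXN,\sq(x,\ell)) \leq \Cc \ell^2 \big(\Ener(\bXN, \sq(x,2\ell)) + \Points(\bXN,\sq(x,2\ell))\big)$ — this is the standard ``discrepancy is controlled by energy'' estimate used throughout the electric approach, with the extra $\Points$ term accounting for the points that sit within distance $\vr$ of the annulus, whose truncated fields are not in $L^2$ there. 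Dividing by $\ell^2$ and taking exponential moments at inverse temperature $\beta/\Cc$, the right-hand side $\CLL\beta$ (with no $\ell^2$ factor) then follows from \eqref{eq:LocalLaws} and \eqref{eq:number_of_points_LL} applied on $\sq(x,2\ell)$, using $\exp(ab) \leq \hal(\exp(2a^2) + \exp(2b^2))$ or simply Hölder to split the two contributions. I expect the main obstacle to be bookkeeping rather than conceptual: one must verify that the various box-doublings, the choice of truncation scale, and the passage between the renormalized energy of \cite{armstrong2019local} and the quantity $\Ener$ defined here all only cost constants depending on $\beta$ that can be folded into a single $\CLL$, and that the ``far from the edge'' condition \eqref{condi:LL} is stable under all these manipulations for $\ell \geq \rho_\beta$.
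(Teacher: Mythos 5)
Your step~1 matches the paper's proof exactly: \eqref{eq:LocalLaws} is quoted from \cite[Theorem~1]{armstrong2019local}, after invoking their Lemma~B.2 to pass from the renormalized quantity $\F^{\square_R(x)}$ used there to the truncated electric energy used here. For the remaining two bounds the paper takes a shorter route than you do: it simply cites \cite[(1.18)]{armstrong2019local} for \eqref{discr_Di3} and observes that this, or \cite[(1.19)]{armstrong2019local} directly, gives \eqref{eq:number_of_points_LL}. Re-deriving these from \eqref{eq:LocalLaws}, as you propose, is in spirit what \cite{armstrong2019local} does internally, so the strategy is reasonable --- but there is a genuine gap in your step~3.

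The inequality you write, $\Di^2 \leq \Cc\,\ell^2\big(\Ener(\sq(x,2\ell)) + \Points(\sq(x,2\ell))\big)$, would only yield $\exp(\CLL\beta\ell^2)$ on the right of \eqref{discr_Di3} (one Hölder step, then apply \eqref{eq:LocalLaws} and \eqref{eq:number_of_points_LL}), not the claimed $\exp(\CLL\beta)$; the $\ell^2$ factor is spurious and that bound is just the trivial estimate $|\Di|\preceq\ell^2$. Even if you drop it, the Cauchy--Schwarz route via Lemma~\ref{lem:apriori} actually gives $|\Di| \preceq \sqrt{\Ener} + \Points/\ell$, i.e.\ $\Di^2 \preceq \Ener + \Points^2/\ell^2$, and the $\Points^2/\ell^2$ term is fatal: dividing by $\ell^2$ and taking exponential moments requires controlling $\EN[\exp(c\beta\,\Points^2/\ell^4)]$, but \eqref{eq:number_of_points_LL} provides only exponential --- not sub-Gaussian --- tails on $\Points$, so that moment is infinite. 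The correct deterministic input is \cite[Lemma~B.4]{armstrong2019local} (roughly: if $|\Di(\sq(x,\ell))| \geq \Cc\ell$ then the energy in $\sq(x,2\ell)\setminus\sq(x,\ell)$ is $\gtrsim \Di^2$, with \emph{no} $\Points$ correction), which tracks the flux of the field through nested level sets rather than integrating a single cutoff against it. With that inequality, $\Di^2/\ell^2 \preceq \Ener/\ell^2$ on $\{|\Di| \geq \Cc\ell\}$, and Jensen applied to $\EN[\exp(\frac{\beta}{2}\Ener)] \leq \exp(\CLL\beta\ell^2)$ at the reduced rate $\sim\beta/\ell^2$ gives the stated $\exp(\CLL\beta)$. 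Once \eqref{discr_Di3} is secured, \eqref{eq:number_of_points_LL} is immediate since $\Points = \ell^2 + \Di$, so your step~2 becomes redundant --- which is just as well, since the deterministic inequality claimed there also hides a self-referential $\Points$ term on the right after the truncation correction.
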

\begin{proof}[Proof of Proposition \ref{prop:local_laws}]
This is a subset of the statements in \cite[Theorem 1]{armstrong2019local}, see Section \ref{sec:proof_locAS} for a technical discussion. 
\end{proof}
As can be seen from Proposition \ref{prop:local_laws}, $\Ener(\bXN, \sq(x, \ell)), \Points(\bXN, \sq(x, \ell))$ and $\EnerPoints(\bXN, \sq(x, \ell))$ are all expected to be of the same order as the area $\ell^2$.

\begin{remark}
\label{rem:LocalLaws}
For our purposes, we will repeatedly use the local laws under the following form: for $\Cc_\beta$ large enough, the probability of having more than $\Cc_\beta \ell^2$ points (or an energy higher than $\Cc_\beta \ell^2$) in a given square $\sq(x, \ell)$ is smaller than $\exp\left( - \ell^2 / \Cc_\beta \right)$.

One can replace squares by disks (or by any “reasonable” shape) in the previous statement, however it is worth observing that local laws do not directly yield interesting controls of the energy (or number of points) on very thin strips, rectangles with diverging aspect ratio, boundaries of squares, thin annuli etc. In those situations, one must resort to splitting the region into squares, applying the local laws to each square and then using a union bound. 
\end{remark}

\subsection{The electric energy controls fluctuations}
\label{sec:apriori}
The next lemma expresses how the electric energy controls linear statistics of Lipschitz functions.
\begin{lemma}[Bounds on fluctuations - Lipschitz case]
\label{lem:apriori} Let $\bX$ be a point configuration in $\R^2$, and let $\varphi$ be a function in $\CC^1(\R^2)$ with compact support. Let $\Elec$ be any electric field compatible with $\bX$ on $\supp \varphi$ in the sense of Definition \ref{def:compatible}. Let $\Omega$ be a domain containing a $1$-neighborhood of $\supp \nabla \varphi$. We have:
	\begin{equation}
	\label{eq:apriori_one_var}
	\left| \int \varphi(x) \dd \left(\bX - \mm_0 \right)(x) \right| \leq \left(\int_{\R^2} |\nabla \varphi|^2 \right)^\hal \left( \int_{\Omega} |\Elec_{\vr}|^2\right)^\hal + |\varphi|_{\1, \Omega} \Points(\bX, \Omega)
	\end{equation}

\subsubsection*{\textbf{Localized case}}	
Assume that $\tilde{\Omega}_1, \dots, \tilde{\Omega}_m$ cover $\supp \nabla \varphi$, and that for each $i$ the domain $\Omega_i$ contains a $1$-neighborhood of $\tilde{\Omega}_i$, then we can replace the right-hand side of \eqref{eq:apriori_one_var} by:
\begin{equation}
	\label{eq:apriori_one_var_localized}
\sum_{k=1}^m |\varphi|_{\1, \Omega_i} \times \left( \left(\int_{\Omega_i} |\Elec_{\vr}|^2\right)^\hal \times \left|\Omega_i \right|^\hal  + \Points(\bX, \Omega_i) \right) 
\end{equation}
\end{lemma}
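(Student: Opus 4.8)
The plan is to integrate against the electric field and use Cauchy--Schwarz, which is the standard "stress-energy"/"electric" argument. First I would write, using \eqref{eq:identiteLaplaceLog} applied to $\varphi$ (smooth, compactly supported), that $\int \varphi \, \dd(\bX - \mm_0) = \frac{1}{2\pi} \int \nabla \varphi \cdot \Elec$ for any compatible field $\Elec$, at least formally; the correct way to say this is to integrate by parts against $-\dive \Elec = 2\pi(\bX - \mm_0)$. The subtlety is that $\Elec$ is singular at the points of $\bX$ (only $\Lploc^p$ for $p<2$), so one cannot directly pair it with $\nabla\varphi$ in $L^2$ and expect the integration by parts to be valid without care. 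The fix, which is exactly why the truncated field $\Elecr$ appears in the statement, is to replace $\Elec$ by $\Elecr$: since $\Elecr = \Elec - \sum_{x \in \bX} \nabla \ff_{\rr(x)}(\cdot - x)$ and each $\ff_{\rr(x)}$ is supported in $\DD(x, \rr(x))$ with $\rr(x) \le 1/4$, the difference $\Elec - \Elecr$ is supported on the union of these small disks, which has small total measure and — crucially — when $\nabla\varphi$ is also evaluated there we can bound the error by $|\varphi|_{\1}$ times a term controlled by the number of points.

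Concretely, I would proceed as follows. Write $\int \varphi \, \dd(\bX - \mm_0) = \frac{1}{2\pi}\int \nabla\varphi \cdot \Elecr + \frac{1}{2\pi}\int \nabla\varphi \cdot (\Elec - \Elecr)$. For the main term, restrict the integral to $\Omega$ (legitimate since $\supp\nabla\varphi \subset \tilde\Omega \subset \Omega$ up to the $1$-neighborhood hypothesis — actually $\supp\nabla\varphi$ itself suffices, but the $1$-neighborhood is needed so that $\Omega$ contains all the small disks $\DD(x,\rr(x))$ around points near $\supp\nabla\varphi$, since $\rr(x) \le 1/4 < 1$), then apply Cauchy--Schwarz: $|\frac{1}{2\pi}\int_\Omega \nabla\varphi\cdot\Elecr| \le \frac{1}{2\pi}\left(\int |\nabla\varphi|^2\right)^{1/2}\left(\int_\Omega |\Elecr|^2\right)^{1/2}$. (I would absorb the $\frac{1}{2\pi}$ into constants or check it drops out with the right normalization — the stated inequality has coefficient $1$, so one uses that the Green's function normalization $-\Delta(-\log) = 2\pi\delta_0$ produces a matching $2\pi$; this is a routine bookkeeping point.) For the error term $\int \nabla\varphi \cdot (\Elec - \Elecr) = \sum_{x\in\bX} \int \nabla\varphi(t) \cdot \nabla\ff_{\rr(x)}(t-x)\,\dd t$: integrate by parts back, $\int \nabla\varphi \cdot \nabla\ff_{\rr(x)}(\cdot - x) = -\int \varphi \,\Delta \ff_{\rr(x)}(\cdot - x)$, and $-\Delta\ff_{\rr(x)}(\cdot-x) = 2\pi(\delta_x - \text{uniform prob.\ on }\partial\DD(x,\rr(x)))$ — a signed measure of total variation $\le 4\pi$ supported in $\DD(x,\rr(x))$, so the integral is bounded by $\Cc |\varphi|_{\1,\Omega}\,\rr(x) \le \Cc |\varphi|_{\1,\Omega}$ per point; more carefully, because $\varphi(x)$ and the average of $\varphi$ over the tiny circle differ by $\le |\varphi|_{\1,\loc}(x)\rr(x)$, we actually gain a factor $\rr(x)$. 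Summing over $x \in \bX \cap \Omega$ gives $\le \Cc |\varphi|_{\1,\Omega}\Points(\bX,\Omega)$, matching the stated second term (and the constant can be absorbed into the $\preceq$ convention, though the statement writes coefficient $1$ — one uses the $\rr(x)\le 1/4$ saving to get this).

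For the localized case, the argument is identical but organized piece by piece: cover $\supp\nabla\varphi$ by $\tilde\Omega_1,\dots,\tilde\Omega_m$ and for each $i$ take $\Omega_i$ a $1$-neighborhood of $\tilde\Omega_i$. On each $\Omega_i$ one writes the same decomposition; applying Cauchy--Schwarz locally, $\left|\int_{\tilde\Omega_i}\nabla\varphi\cdot\Elecr\right| \le |\varphi|_{\1,\Omega_i}\left(\int_{\Omega_i}|\Elecr|^2\right)^{1/2}|\Omega_i|^{1/2}$ (using $\left(\int_{\Omega_i}|\nabla\varphi|^2\right)^{1/2} \le |\varphi|_{\1,\Omega_i}|\Omega_i|^{1/2}$ crudely, which is what produces the $|\Omega_i|^{1/2}$ rather than the global $\left(\int|\nabla\varphi|^2\right)^{1/2}$), and the error term on $\Omega_i$ contributes $|\varphi|_{\1,\Omega_i}\Points(\bX,\Omega_i)$; summing over $i$ yields \eqref{eq:apriori_one_var_localized}. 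One should check that the pieces overlap only in a controlled way or simply note that the bound is stated as a sum, so double-counting of points in overlaps only helps (it is an upper bound).

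The main obstacle is purely the careful handling of the singular field near the points — i.e.\ justifying the integration by parts rigorously given that $\Elec \notin L^2_{\loc}$, and tracking that the truncation error is genuinely of order $|\varphi|_{\1}\,\Points$ and not worse. Everything else (Cauchy--Schwarz, the Green's function identity, organizing the localized version) is routine. I expect the cleanest route is to cite \cite[Section 2.2 \& Appendix B.1]{armstrong2019local} or \cite[Sec 3.1]{serfaty2020gaussian} for the distributional computation $-\Delta \ff_\eta = 2\pi(\delta_0 - \text{uniform on }\partial\DD(0,\eta))$ and for the fact that $\Elecr$ is the legitimate object to pair with $H^1$ test functions, and then the rest is two lines of Cauchy--Schwarz.
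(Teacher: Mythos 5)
Your proposal is correct and follows essentially the same route as the paper: smear the point charges at the nearest-neighbor scale, integrate by parts against the truncated field, apply Cauchy--Schwarz for the main term, and bound the smearing error by a count of points in $\Omega$ using $|\varphi|_{\1,\Omega}$ and the fact that $\rr(x)\le 1/4 < 1$. The only bookkeeping difference is that you split the field as $\Elec = \Elecr + (\Elec - \Elecr)$ and integrate the remainder by parts back to a measure, whereas the paper sets the truncation to zero outside $\Omega$ and splits at the measure level $\bX = \bX_{\veta} + (\bX - \bX_{\veta})$ before integrating by parts once; these are equivalent and both observations (that truncations at points far from $\supp\nabla\varphi$ contribute nothing, and that the per-point smearing error is $\O(\rr(x)|\varphi|_{\1,\Omega})$) are exactly what the paper uses.
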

\begin{remark}
\label{rem:commentsApriori}
Controls of the type \eqref{eq:apriori_one_var} have appeared under various forms in previous works see e.g. \cite[Lemma 5.1]{MR3353821}, \cite[Proposition 2.5]{MR3788208} or \cite[Lemma B.5]{armstrong2019local}, they are usually phrased as: “the electric energy controls the fluctuations”. The electric energy in a given domain $\Omega$ is typically of order $|\Omega|$ and so is the number of points in $\Omega$, cf. Proposition \ref{prop:local_laws}, thus \eqref{eq:apriori_one_var} bounds the typical fluctuations of $\varphi$ by $|\varphi|_{\1} |\supp \nabla \varphi|$ whereas a naive $\mathrm{L}^{\infty}$ bound would rather give $|\varphi|_{\0} \times |\supp \varphi|$. Since our test functions often live on some large lengthscale $\ell$ with $|\varphi|_{\1}$ comparable to $\ell^{-1} |\varphi|_{\0}$, there is indeed an improvement. 
\end{remark}

We give the proof of Lemma \ref{lem:apriori} in Section \ref{sec:proof_apriori}. Compared to existing results, here we simply emphasize the role played by the support of the gradient (instead of the whole support of the test function), which yields more accurate estimates when $\varphi$ is a sharp cut-off function. We will repeatedly use its “localized version” \eqref{eq:apriori_one_var_localized}, whose proof is a straightforward \cor{extension of } \eqref{eq:apriori_one_var}.

\subsection{Finer bound on fluctuations for smooth test functions}
\label{sec:FinerBound}
For test functions with higher regularity the bound of Lemma \ref{lem:apriori} on fluctuations of linear statistics can be improved (see \cite{MR3788208,MR4063572,serfaty2020gaussian} as well as \cite{rider2007noise,ameur2011fluctuations} for the $\beta =2$ case). In particular if $\varphi_\ell(x) := \bar{\varphi}(x/\ell)$ for a fixed reference smooth test function $\bar{\varphi}$, then $\varphi_\ell$ lives at scale $\ell \gg 1$ yet its fluctuations remain \textit{bounded} as $\ell \to \infty$ (with high probability). One can refer e.g. to \cite[Thm.~1]{serfaty2020gaussian} and \cite[Cor.~2.1]{serfaty2020gaussian} for such results. 

In this paper we will occasionally need a more specific statement valid in the radially symmetric case.

\begin{proposition}[Finer bound, the $C^2$ radially symmetric case]
\label{prop:bound_fluct_radial} 
There exists a constant $\Cc_\beta$ depending only on $\beta$ \cor{(and on our model)} such that the following holds. Let $x$ be a point in $\LN$ and let $\varphi$ be a test function which is radially symmetric around $x$, with compact support. Assume that $\varphi$ is in $\CC^2(\R^2)$ and let $\AA$ be an annulus containing (a $1$-neighborhood of) the support of $\Delta \varphi$. Let $s$ be a real number satisfying:
\begin{equation}
\label{eq:condition_on_s}
|s| \leq \frac{\pi \beta}{4 |\varphi|_{\2}}.
\end{equation}
Then the exponential moments of the fluctuations of $\varphi$ satisfy:
\begin{equation}
\label{eq:bound_fluct_radial}
\log \EN \left[  \exp\left( s \Fluct[\varphi]  \right) \right] = \frac{s^2}{4 \pi \beta} \int_{\R^2} |\nabla \varphi|^2 + \log \EN \left[\exp \left(s |\varphi|_{\2} \Cc_\beta \left(\EnerPoints(\bX_N, \AA) \right) \right) \right].
\end{equation}
\end{proposition}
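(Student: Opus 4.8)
The plan is to rewrite the linear statistic $\Fluct[\varphi]$ in terms of the truncated electric field, complete a square, and control the remainder using the local laws; this is a direct adaptation of the computation in \cite{serfaty2020gaussian} to the radially symmetric setting. The starting point is the standard change of variables: using a transport map $\vv$ whose divergence matches $\varphi - \dashint \varphi$ (here the radial symmetry around $x$ makes such a map canonical, namely the radial field $\vv(y) = (y-x) |y-x|^{-2} \int_0^{|y-x|} (\varphi - c) \, r \, dr$ up to normalization), one performs the change of variables $x_i \mapsto x_i + t \vv(x_i)$ in the Gibbs measure. The key algebraic fact is that the leading-order change in $\beta \FN$ under this transport is exactly $-s\, \beta \Fluct[\varphi]$ plus a quadratic term in $s$ coming from $\int |\nabla \varphi|^2$ (the "anisotropy is small" principle of Serfaty), and the Jacobian contributes lower-order terms. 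Integrating the resulting identity against $\PNbeta$ and using that the change of variables preserves the integral gives an exact relation of the form $\EN[\exp(s \Fluct[\varphi])] = \exp\left(\frac{s^2}{4\pi\beta}\int|\nabla\varphi|^2\right) \cdot \EN[\exp(s \cdot \text{Rem})]$, where $\text{Rem}$ collects the error terms.

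\textbf{Key steps.} First I would set up the transport map and verify, using radial symmetry, that $\vv$ is supported in the annulus $\AA$ (since $\nabla\varphi$ and hence the relevant part of $\vv$ vanishes outside a neighborhood of $\supp\Delta\varphi$), with $|\vv|$ and $|\nabla\vv|$ controlled by $|\varphi|_{\2}$. Second, I would carry out the change of variables $\XN \mapsto \XN + t\vv(\XN)$ at $t = s/\beta$ (or an appropriate normalization matching \eqref{eq:condition_on_s}, which guarantees the map is a diffeomorphism: $|s| |\varphi|_{\2}$ small ensures $\|t\nabla\vv\| < 1$), and expand $\beta\FN(\XN + t\vv) - \beta\FN(\XN)$. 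The precise bookkeeping here — separating the term proportional to $\Fluct[\varphi]$, the Gaussian-type term $\frac{s^2}{4\pi\beta}\int|\nabla\varphi|^2$, and everything else — is exactly the content of \cite[Section 4 / the Gaussian computation]{serfaty2020gaussian}, which I would invoke with the scaling dictionary from the Remark. Third, I would bound the remainder terms: these involve $\int_\AA |\nHH^{\bXN}_{\vr}|^2$ (the electric energy in $\AA$), the number of points $\Points(\bXN, \AA)$ (from the Jacobian and from truncation-error terms of the type $\sum_{x\in\bXN}\int_{\DD(x,\rr(x))}(\cdots)$), each weighted by a power of $|\varphi|_{\2}$; collecting these gives the $\Cc_\beta |\varphi|_{\2} \EnerPoints(\bXN, \AA)$ bound. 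Finally I would note that the remainder is deterministically bounded (up to the constant $\Cc_\beta |\varphi|_{\2}$) by $\EnerPoints(\bXN, \AA)$, so the last expectation makes sense, and the identity \eqref{eq:bound_fluct_radial} follows; condition \eqref{eq:condition_on_s} is precisely what is needed to keep all expansions convergent and the change of variables valid.

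\textbf{Main obstacle.} The hard part will be the careful expansion of the energy difference under the transport and isolating the exact coefficient $\frac{1}{4\pi\beta}$ in front of $\int|\nabla\varphi|^2$ — this requires the "anisotropy" estimate showing that the symmetric-gradient contribution to the energy change is negligible to the order we care about, and getting the arithmetic exactly right (rather than up to constants) is delicate. A secondary subtlety is checking that all error terms genuinely localize to the annulus $\AA$ rather than spreading over $\supp\varphi$: this is where radial symmetry is used crucially, because it lets one choose $\vv$ supported near $\supp\Delta\varphi$ (an annulus) rather than on the whole disk, which would otherwise ruin the locality of the remainder. I expect the rest — the diffeomorphism property, the Jacobian bounds, the convergence of the change of variables integral — to be routine given the tools already assembled in Section~\ref{sec:logEnergy} and the local laws of Proposition~\ref{prop:local_laws}.
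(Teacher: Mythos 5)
Your choice of transport vector field is the wrong one, and this is not a cosmetic issue: it breaks the localization to the annulus $\AA$ that the proposition crucially requires. You define $\vv$ by $\dive\vv = \varphi - \dashint\varphi$ and then claim that $\vv$ is supported in $\AA$ because $\nabla\varphi$ vanishes there. That reasoning conflates the support of $\nabla\varphi$ (which is indeed the annulus) with the support of $\vv$ (which is not). Your own radial formula $\vv(y) = (y-x)|y-x|^{-2}\int_0^{|y-x|}(\varphi - c)\,\rho\,d\rho$ makes this explicit: for a test function equal to a constant near $1$ on the inner disk, the integrand $\varphi - c$ is bounded away from zero there, so $\int_0^{r}(\varphi - c)\rho\,d\rho \neq 0$ and $\vv\neq 0$ throughout the interior of $\supp\varphi$, not just in $\AA$. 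Consequently the remainder terms in your expansion would involve $\EnerPoints(\bX_N, \supp\varphi)$, not $\EnerPoints(\bX_N, \AA)$, and in the intended application (Section~\ref{sec:proof_chase_the_goat}) $\supp\varphi$ is a disk of area $\sim R^2$ while $\AA$ has area $\sim RL$; the estimate would be off by a factor that destroys the conclusion.

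The paper avoids this precisely by \emph{first} completing the square (Claim in Section~\ref{sec:proofCLT}): it rewrites $\EN[\exp(s\Fluct[\varphi])]$ exactly as $\exp\bigl(\tfrac{s^2}{4\pi\beta}\int|\nabla\varphi|^2\bigr) \cdot \mathrm{K}(\mm_s)/\mathrm{K}(\mm_0)$ where $\mm_s - \mm_0 = -\tfrac{s}{2\pi\beta}\Delta\varphi$, and \emph{then} introduces the radial monotone rearrangement $\vec{\Phi}_s$ pushing $\mm_0$ onto $\mm_s$. Because $\mm_s - \mm_0$ is a multiple of $\Delta\varphi$, which is supported in $\AA$ \emph{and} has total integral zero, the cumulative radial mass $\int_0^r\Delta\varphi\,\rho\,d\rho$ vanishes both inside the inner circle of $\AA$ and outside the outer one, hence $\psi_s := \vec{\Phi}_s - \Id$ is genuinely supported in $\AA$ with $|\psi_s|_{\1} \preceq s|\varphi|_{\2}$. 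Your direct change of variables skips the square completion; as a result you do not get the exact Gaussian prefactor $\tfrac{s^2}{4\pi\beta}\int|\nabla\varphi|^2$ from the transport expansion (your claimed ``key algebraic fact'' that the leading-order change in $\beta\FN$ under $\Id + t\vv$ is $-s\beta\Fluct[\varphi]$ plus a quadratic term is not correct --- the first variation of $\FN$ along a transport of the points only, with the background held fixed, is an interaction quantity like $\int\nabla\HH^{\bX}\cdot\vv\,d\bX$, not a linear statistic). Also note the condition \eqref{eq:condition_on_s} is stated in terms of $|\varphi|_{\2}$, which is consistent with a displacement of size $\sim s|\varphi|_{\2}$ as in the paper, but your $\vv = \nabla\Delta^{-1}(\varphi - c)$ has $|\vv|_{\1}\sim |\varphi|_{\0}$, a quantity not controlled by \eqref{eq:condition_on_s}. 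Finally, you invoke ``the anisotropy is small'': that probabilistic trick is used elsewhere in the paper (Section~\ref{sec:smallani}, for the translation-invariance argument) but plays no role here --- the proof of this proposition uses only the deterministic transport estimate of \cite[Prop.~4.2]{serfaty2020gaussian} to control the full energy change by $s|\varphi|_{\2}\EnerPoints(\bX_N,\AA)$.
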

Although it does not appear as such in the literature, Proposition \ref{prop:bound_fluct_radial} can be easily deduced from the tools of \cite{serfaty2020gaussian}, we give a proof in Section \ref{sec:proofCLT}.	

\subsection{Wegner's estimates}
The recent paper \cite{Thoma:2022aa} provides upper bounds on the $k$-point correlation functions of Coulomb gases (in dimension $2$ and higher). Among many other things, it states so-called “Wegner's estimates” i.e. uniform controls on the correlation functions, \cor{and it provides “overcrowding estimates”, namely upper bounds on the probability of having way too many points in a disk, which is what we will need here.}

We will use \cite[Theorem \cor{3}]{Thoma:2022aa}, which is valid even for $r \ll 1$ (sub-microscopic scale):
\begin{lemma}
\label{lem:cluster_bounds}
There exists $\Cc_\beta$ such that for all $x$ in $\LN$ and \emph{for all $r > 0$}, if $(x,r)$ satisfies \eqref{condi:LL} we have:
\begin{equation}
\label{eq:cluster_bound}
\PNbeta \left( \left\lbrace \Points(\DD(x, r)) \geq 1  \right\rbrace \right) \leq \Cc_\beta r^2. 
\end{equation}
\end{lemma}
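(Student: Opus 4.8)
The plan is to reduce the estimate to a pointwise bound on the one-point correlation function and then quote the Wegner estimate of \cite{Thoma:2022aa}. Since $\Points(\bXN, \DD(x,r))$ is a non-negative integer-valued random variable, Markov's inequality gives
\[
\PNbeta\left(\left\lbrace \Points(\bXN, \DD(x,r)) \geq 1 \right\rbrace\right) \leq \EN\left[ \Points(\bXN, \DD(x,r)) \right] = \int_{\DD(x,r)} \rho_1^{(N)}(y)\, \dd y,
\]
where $\rho_1^{(N)}$ denotes the first correlation function (one-point intensity) of the point process $\bXN$ under $\PNbeta$. Hence it suffices to bound $\rho_1^{(N)}$ pointwise by a $\beta$-dependent constant, uniformly over $\DD(x, r)$.

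First I would dispose of a trivial range of parameters: we may assume that the eventual constant $\Cc_\beta$ in \eqref{eq:cluster_bound} satisfies $\Cc_\beta \geq 1$, and then for $r \geq N^{1/4}$ one has $\Cc_\beta r^2 \geq \Cc_\beta \sqrt{N} \geq 1$, so \eqref{eq:cluster_bound} holds automatically. Thus I may restrict to $r < N^{1/4}$. In that regime the hypothesis that $(x,r)$ satisfies \eqref{condi:LL}, i.e. $\dist(\sq(x,r), \partial\LN) \geq \Cc_\beta N^{1/4}$, forces every $y \in \DD(x,r) \subset \sq(x, 2r)$ to satisfy $\dist(y, \partial\LN) \geq \hal \Cc_\beta N^{1/4}$, so that each such $y$ lies well inside the bulk region to which \cite[Theorem 1.6]{Thoma:2022aa} applies. (Here one uses $\DD(x,r) \subset \sq(x,2r)$ and $r < N^{1/4}$ to absorb the difference between $\sq(x,r)$ and $\DD(x,r)$ into the $N^{1/4}$ margin; the bulk constant appearing in \eqref{condi:LL} can be taken large enough for this.)

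Then I would invoke the Wegner estimate itself: \cite[Theorem 1.6]{Thoma:2022aa} supplies a constant $\Cc_\beta$, depending only on $\beta$, with $\rho_1^{(N)}(y) \leq \Cc_\beta$ for every $y$ at distance $\gtrsim N^{1/4}$ from $\partial\LN$, once the statement is translated into the blown-up scaling convention used in the present paper. Substituting this into the integral above gives
\[
\PNbeta\left(\left\lbrace \Points(\bXN, \DD(x,r)) \geq 1 \right\rbrace\right) \leq \Cc_\beta\, |\DD(x,r)| = \Cc_\beta \pi r^2,
\]
and absorbing the factor $\pi$ into $\Cc_\beta$ (and enlarging it to be $\geq 1$) yields \eqref{eq:cluster_bound}.

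I do not expect any genuine obstacle here: the analytic content is carried entirely by \cite[Theorem 1.6]{Thoma:2022aa}, and the only work is the bookkeeping above — passing from the $\rho_1$-bound to the disk-probability bound via Markov, checking that \eqref{condi:LL} implies the bulk hypothesis of the cited theorem in the non-trivial regime $r < N^{1/4}$, and handling the change of normalization between \cite{Thoma:2022aa} and the conventions here. If the version of the Wegner estimate one quotes is already phrased directly as a bound on $\PNbeta(\Points(\DD(x,r)) \geq 1)$, the first step becomes unnecessary and the lemma is immediate.
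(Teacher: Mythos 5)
Your proposal is correct and follows essentially the same route as the paper, which gives no proof at all: Lemma \ref{lem:cluster_bounds} is stated as a direct import of \cite[Theorem 1.6]{Thoma:2022aa}, the surrounding text making clear that what is being used is precisely the Wegner estimate $\rho_1 \leq \Cc_\beta$ in the bulk. Your write-up supplies the (entirely standard) bookkeeping that the paper leaves implicit: the reduction $\PNbeta(\Points(\DD(x,r))\geq 1) \leq \int_{\DD(x,r)}\rho_1$ via Markov, the trivial regime $r\gtrsim N^{1/4}$, and the geometric check that \eqref{condi:LL} keeps all of $\DD(x,r)$ in the bulk when $r$ is small. Nothing to object to.
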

\begin{remark}
\label{rem:Thoma}
\cor{The results of \cite{Thoma:2022aa} are stated for a 2DOCP with soft confinement, which is not our choice here, and one could wonder if the presence of a “hard wall” might affect the conclusions. An inspection of the proof shows that it is not the case - at least as long as one remains, say, at positive macroscopic distance from the edge (which is our definition of “bulk” in \eqref{eq:assumption_distance}) and that indeed, as written in \cite[Rem. 1.9]{Thoma:2022aa}: \textit{“one can apply [the] arguments to other Coulomb-type systems like finite volume jelliums provided our arguments do not “run into” domain boundaries”}.}
\end{remark}

\section{Locating discrepancies near the boundary}
\label{sec:locating}
\newcommand{\DisIns}{\mathsf{DisIns}}
\newcommand{\DisInsP}{\mathsf{DisIns}^+}
\newcommand{\DisInsM}{\mathsf{DisIns}^-}
Let $0 \leq \eR \leq 1$ (to be chosen later). The goal of this section \cor{is} to show that any “large” discrepancy $\epsilon_R R$ within a disk of radius $R$ can be found (with high probability) \emph{near the boundary of the disk}.

\subsection{Cornering the discrepancy in an annulus close to the boundary}
Let $L \geq 100$ be a lengthscale to be chosen later.  Let $z$ be a point and $R \geq \rho_\beta$ such that $\DD(z, R)$ satisfies \eqref{condi:LL}. In this section we sometimes simply write $\DD_r$ (for $r > 0$) instead of $\DD(z, r)$. 

Assume that the disk $\DD(z,R)$ contains $\eRR$ too many\footnote{We mean $\Points(\bXN, \DD_R) \geq \pi R^2 + \eRR$. points} points (excesses and deficits can be handled similarly). Compare two idealized situations: in the first one, the excess of charges is spread uniformly over $\DD_R$, in the sense that $\Di(\bXN, \DD_r)$ behaves like $\frac{r^2}{R^2} \times \eRR$ for $r \leq R$. In the second situation, the excess is concentrated in a thin strip near $\partial \DD_R$ and immediately compensated by a default on the other side of the boundary. 

According to \cite{jancovici1993large}, the physically realistic picture is the second one: \textit{“Macroscopic electrostatics of conductors implies that, for a given value of [the discrepancy] $Q$, the dominant configurations are such that $Q$ is concentrated in a layer on the inner side of the boundary of the disk, while a charge $-Q$ accumulates in a layer on the outer side.”}. 

For a mathematical statement, \cor{we introduce the event $\DisIns := \DisInsP \cup \DisInsM$, where:
\begin{equation*}
\DisInsP := \bigcap_{R - 2L \leq r \leq R-L} \left\lbrace \Di(\bXN, \DD(z, r)) \geq \hal \eRR \right\rbrace,
\end{equation*}
\begin{equation*}
\DisInsM := \bigcap_{R - 2L \leq r \leq R-L} \left\lbrace \Di(\bXN, \DD(z, r)) \leq - \hal \eRR \right\rbrace, \end{equation*}
The event $\DisInsP$ (resp. $\DisInsM$) represents the situation in which  an excess (resp. deficit) of charges is found in all the disks $\DD_r$ for a fairly wide region $r \in [R-2L, R-L]$ - in which case the discrepancy is thus \emph{not} “concentrated in a layer on the inner side of the boundary of the disk”}. The goal of the next proposition is to rule out this possibility.

\begin{proposition}
\label{prop:chase_the_goat}
There exists a constant $\Cins(\beta)$ depending only on $\beta$ such that the following holds. Let $s$ be a real parameter. Assume that $R, L, \eR, s$ satisfy:
\begin{equation}
\label{eq:s_and_L}
\begin{cases}
& \Cins \frac{1}{\eR} \leq L \leq \frac{R}{10} \\
& 0 \leq s \leq \frac{1}{\Cins} \min\left( \frac{L^3}{R}, L \eR \right).
\end{cases}
\end{equation}
Then we have:
\begin{equation}
\label{eq:chase_the_goat}
\PNbeta \left[ \DisIns \right] \leq \exp\left(- \frac{s \eRR}{4}\right).
\end{equation}
\end{proposition}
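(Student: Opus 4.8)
The plan is to deduce \eqref{eq:chase_the_goat} from a Chernoff bound applied to the linear statistic of a carefully chosen radial cut-off function, whose exponential moments are then estimated via the fine radial fluctuation bound of Proposition~\ref{prop:bound_fluct_radial}. We may assume $s>0$ (the inequality being trivial otherwise), and note that \eqref{eq:s_and_L} forces $\eR>0$. The heuristic: if the disks $\DD_r$ all carry an excess $\tfrac12\eRR$ as $r$ ranges over the \emph{whole} window $[R-2L,R-L]$, then a radial function interpolating from $1$ on $\DD_{R-2L}$ to $0$ outside $\DD_{R-L}$ has a fluctuation of size $\geq\tfrac12\eRR$; since such a function can be chosen to vary over the lengthscale $L$ — hence with $|\varphi|_{\2}\leq\Cc/L^2$ small — and since the constraints on $s$ are very restrictive, the exponential moments of its fluctuation will be controllable.

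First I would fix a smooth bump $b\geq 0$ supported in $(R-2L,R-L)$ with $\int b=1$, $\sup|b|\leq\Cc/L$, $\sup|b'|\leq\Cc/L^2$, and set $g(u):=\int_u^{+\infty}b$, $\varphi(x):=g(|x-z|)$. Then $\varphi$ is radial around $z$, of class $\CC^2$, equal to $1$ on $\DD_{R-2L}$ and supported in $\DD_{R-L}$; using $L\leq R/10$ (so that $|x-z|\geq R-2L\geq\tfrac45 R\geq 8L$ on $\supp\nabla\varphi$) one checks $|\varphi|_{\2}\leq\Cc/L^2$, $\int_{\R^2}|\nabla\varphi|^2\leq\Cc R/L$, and that the annulus $\AA:=\{R-2L-1\leq|x-z|\leq R-L+1\}$ has area $\leq\Cc RL$ and contains a $1$-neighbourhood of $\supp\Delta\varphi$. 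The layer-cake identity $\varphi(x)=\int_0^{+\infty}b(r)\,\ind_{|x-z|<r}\,dr$ and Fubini give the a.s.\ identity $\Fluct[\varphi]=\int_0^{+\infty}b(r)\,\Di(\bXN,\DD_r)\,dr$ (for a.e.\ $r$ no point of $\bXN$ lies on $\partial\DD_r$); hence on the event in \eqref{eq:chase_the_goat} we get $\Fluct[\varphi]\geq\tfrac12\eRR\int b=\tfrac12\eRR$.

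By the Chernoff bound, the probability in \eqref{eq:chase_the_goat} is at most $e^{-\frac12 s\eRR}\,\EN[\exp(s\Fluct[\varphi])]$, so it then remains to show $\log\EN[\exp(s\Fluct[\varphi])]\leq\tfrac14 s\eRR$. Condition \eqref{eq:condition_on_s} holds since $s\leq\Cins^{-1}L^3/R\leq\Cins^{-1}L^2/10$ while $\tfrac{\pi\beta}{4|\varphi|_{\2}}\geq\tfrac{\pi\beta L^2}{4\Cc}$, so Proposition~\ref{prop:bound_fluct_radial} applies and gives
\begin{equation*}
\log\EN[\exp(s\Fluct[\varphi])]=\frac{s^2}{4\pi\beta}\int_{\R^2}|\nabla\varphi|^2+\log\EN\!\left[\exp\!\left(s\,|\varphi|_{\2}\,\Cc_\beta\,\EnerPoints(\bXN,\AA)\right)\right].
\end{equation*}
The first term is $\leq\Cc s^2 R/(\beta L)$, which is $\leq\tfrac18 s\eRR$ as soon as $s\leq\Cc_\beta^{-1}L\eR$ (part of \eqref{eq:s_and_L}). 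For the second term, set $t:=s|\varphi|_{\2}\Cc_\beta\leq\Cc_\beta s/L^2$, cover $\AA$ by $m\leq\Cc R/L$ squares of sidelength $\asymp L$ satisfying the hypotheses of Proposition~\ref{prop:local_laws}, use subadditivity of $\EnerPoints$ over this cover and Hölder's inequality, then the local laws \eqref{eq:LocalLaws}--\eqref{eq:number_of_points_LL} together with the convexity (and vanishing at $0$) of the cumulant generating function: since $s\leq\Cins^{-1}L^3/R$ forces $tm\leq\beta/\Cc_0$, each square is in the linear-response regime, $\log\EN[\exp(tm\,\EnerPoints(\sq_i))]\leq\Cc_\beta\,tmL^2$, whence $\log\EN[\exp(t\,\EnerPoints(\bXN,\AA))]\leq\Cc_\beta\,tRL\leq\Cc_\beta' sR/L$, and this is $\leq\tfrac18 s\eRR$ precisely because $L\geq\Cins\eR^{-1}$. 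Summing the two contributions gives $\log\EN[\exp(s\Fluct[\varphi])]\leq\tfrac14 s\eRR$ and closes the argument; $\Cins$ is taken as the maximum of the finitely many $\beta$-dependent constants entering these inequalities (including $\rho_\beta$, $\CLL$, the covering constant, and the constants of Proposition~\ref{prop:bound_fluct_radial}).

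The step I expect to be the main obstacle is the control of $\EnerPoints(\bXN,\AA)$ over the thin annulus: applying the local laws crudely would produce a term of order $\CLL\beta L^2$ in the exponent, which is fatal when $s$ is small — and the conclusion must hold for all admissible $s$, including $s$ near $0$. The resolution is exactly the constraint $s\leq\Cins^{-1}L^3/R$ in \eqref{eq:s_and_L}, which makes the coupling constant $t$ so small that the exponential moment of $\EnerPoints(\bXN,\AA)$ stays in its linear regime — contributing only a term bounded by $\Cc_\beta\,t\,RL\leq\Cc_\beta' sR/L$ — and this in turn is absorbed by the width condition $L\geq\Cins\eR^{-1}$. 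It is also essential that the hypothesis controls the discrepancy on the whole window $r\in[R-2L,R-L]$, rather than at a single radius, so that $\varphi$ can genuinely vary over the lengthscale $L$ and have the small second derivatives needed above.
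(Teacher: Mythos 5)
Your proposal is correct and follows essentially the same path as the paper's proof in Section~\ref{sec:proof_chase_the_goat}: you use the same radially symmetric $C^2$ cut-off with $|\varphi|_\2\preceq L^{-2}$, the same observation that $\Fluct[\varphi]=\int b(r)\,\Di(\bXN,\DD_r)\,\dd r\ge\tfrac12\eRR$ (your layer-cake/Fubini step is the same as the paper's integration by parts with $b=-\varphi'$), and the same estimate of the exponential moment via Proposition~\ref{prop:bound_fluct_radial} followed by covering the thin annulus with $\O(R/L)$ squares of side $\asymp L$ and exploiting the smallness of the coupling $sR/L^3$ to keep the local laws in their linear regime. The only cosmetic difference is Hölder in place of the paper's Jensen step, which yields the identical $\O_\beta(sR/L)$ bound.
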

We postpone the proof of Proposition \ref{prop:chase_the_goat} to Section \ref{sec:proof_chase_the_goat}. It relies on Proposition \ref{prop:bound_fluct_radial}.

Next, we argue that if the complementary event occurs, then one can find an annulus of width $\approx L$ near the boundary of $\DD_R$ that carries a large discrepancy.
\begin{lemma}[Cornering the discrepancy]
\label{lem:finding_discr}
If $\Di(\bXN, \DD_R) \geq \eRR$ \cor{(resp. $\Di(\bXN, \DD_R) \leq -\eRR$)} and if, in contrast to the event considered in \eqref{eq:chase_the_goat}, there exists a radius $r$ with $R-2L \leq r \leq R-L$ such that:
\begin{equation}
\label{bonr}
\Di(\bXN, \DD_r) < \hal \eRR \quad \cor{\text{(resp. $\Di(\bXN, \DD_r) > - \hal \eRR$)}}
\end{equation}
then there exists an integer $k$ with $0 \leq k \leq R^{2}$ such that the discrepancy in the annulus $\DD_R \setminus \DD_{R - 2L + \frac{k L}{R^2}}$ is larger than $\frac{1}{4} \eRR$ \cor{(resp. smaller than $-\frac{1}{4} \eRR$)}.
\end{lemma}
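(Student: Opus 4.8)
The plan is a pigeonhole argument over a dyadic-like family of radii. By hypothesis there is some $r \in [R-2L, R-L]$ with $\Di(\bXN, \DD_r) < \hal \eRR$, while $\Di(\bXN, \DD_R) \geq \eRR$. Subtracting, the discrepancy carried by the annulus $\DD_R \setminus \DD_r$ is at least $\hal \eRR$. Now I would subdivide the radial interval $[r, R]$ — or more conveniently the fixed interval $[R-2L, R]$, which contains $[r,R]$ — into $\approx R^2$ sub-intervals of the form $[R-2L + \tfrac{(k-1)L}{R^2}, R - 2L + \tfrac{kL}{R^2}]$, indexed by integers $k$ with $0 \le k \le R^2$ (each sub-annulus has width $\tfrac{L}{R^2}$, hence area of order $\tfrac{L}{R^2}\cdot R \le \tfrac{1}{R}$, which is tiny). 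The discrepancy of $\DD_R \setminus \DD_r$ is the sum of the discrepancies of those thin annuli lying between $r$ and $R$, of which there are at most $R^2$.

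Here is the only subtlety: discrepancy is a signed quantity, so one cannot immediately say that if a sum of $\le R^2$ terms is $\ge \hal\eRR$ then one term is $\ge \tfrac{\eRR}{2R^2}$ — that would be much weaker than claimed, and in any case the claim asks for discrepancy $\ge \tfrac14 \eRR$ in a single annulus $\DD_R \setminus \DD_{R-2L+kL/R^2}$, not in a thin annulus. The right move is therefore to look at the partial sums directly. Consider the function $k \mapsto \Di(\bXN, \DD_R) - \Di(\bXN, \DD_{R-2L+kL/R^2}) = \Di(\bXN, \DD_R \setminus \DD_{R-2L+kL/R^2})$, i.e. the discrepancy of the annulus between radius $R-2L+kL/R^2$ and $R$. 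At $k=0$ this equals $\Di(\bXN,\DD_R\setminus\DD_{R-2L}) \ge \hal\eRR$ provided $\Di(\bXN,\DD_{R-2L}) \le \hal\eRR$; and since $r \ge R-2L$, monotonicity of the point count plus the bound $\Di(\bXN,\DD_r)<\hal\eRR$ gives $\Points(\bXN, \DD_{R-2L}) \le \Points(\bXN,\DD_r) < \pi r^2 + \hal\eRR$, whence $\Di(\bXN,\DD_{R-2L}) < \pi r^2 - \pi(R-2L)^2 + \hal\eRR$. One checks that the geometric term $\pi r^2 - \pi(R-2L)^2 \le \pi\cdot 2L\cdot L = 2\pi L^2$ is dominated by $\tfrac14\eRR$ under the hypotheses relating $L$, $R$, $\eR$ (one has $L \le \tfrac{R}{10}$ and $L \ge \Cins/\eR$, so $L^2 \eR \le \tfrac{R}{10}L$, giving $2\pi L^2 \le \tfrac14 \eRR$ for $\Cins$ large enough); hence the annulus $\DD_R\setminus\DD_{R-2L}$ already carries discrepancy $\ge \tfrac14\eRR$, which is the desired conclusion with $k=0$.

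So in fact the statement is established by taking $k=0$ once one verifies the geometric estimate, and the "there exists $k$" formulation is just a convenient weakening. I would present it that way: fix $k=0$, write $\Di(\bXN, \DD_R\setminus\DD_{R-2L}) = \Di(\bXN,\DD_R) - \Di(\bXN, \DD_{R-2L})$, bound the second term using \eqref{bonr}, $r \ge R-2L$, and the elementary inequality $\pi r^2 - \pi(R-2L)^2 \le 4\pi L^2$, and then absorb $4\pi L^2$ into $\tfrac14\eRR$ using \eqref{eq:s_and_L}. The main (very minor) obstacle is simply keeping the bookkeeping of the geometric correction terms straight and checking the constants are compatible with $\Cins$; there is no probabilistic content in this lemma, it is purely deterministic arithmetic of discrepancies.
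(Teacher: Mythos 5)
Your proposal contains a genuine computational error, and the error is precisely the reason the lemma is stated with ``there exists $k$'' rather than fixing $k=0$.

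You claim $\pi r^2 - \pi(R-2L)^2 \leq 2\pi L^2$, but this is wrong: writing $r^2 - (R-2L)^2 = \bigl(r - (R-2L)\bigr)\bigl(r + (R-2L)\bigr)$ and using $r \leq R-L$ gives at most $L \cdot (2R - 3L)$, which is of order $RL$, not $L^2$. For instance with $r = R-L$ one computes exactly $r^2 - (R-2L)^2 = 2RL - 3L^2$. And $RL$ is not dominated by $\tfrac14 \eRR$: on the contrary, the hypothesis $L \geq \Cins / \eR$ forces $RL \geq \Cins R / \eR \gg \eRR$, so the geometric correction swamps the discrepancy you are trying to capture. (The chain of inequalities you invoke at the end, ``$L^2\eR \leq \tfrac{R}{10}L$, giving $2\pi L^2 \leq \tfrac14 \eRR$,'' does not follow from $L \leq R/10$ and $L \geq \Cins/\eR$ even if the $2\pi L^2$ figure were right.) In short: $k=0$ does not work, and the statement is not a ``convenient weakening.''

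This is exactly what the freedom in $k$ is for. The paper chooses $k$ to be the largest integer with $r_k := R - 2L + \tfrac{kL}{R^2} \leq r$, so that $r_k$ sits immediately below $r$ and the annulus $\DD_r \setminus \DD_{r_k}$ has area $\leq \pi \cdot 2R \cdot \tfrac{L}{R^2} = \O(L/R) \leq \O(1)$. Then $\Di(\bXN, \DD_{r_k}) \leq \Di(\bXN, \DD_r) + |\DD_r \setminus \DD_{r_k}| < \tfrac12 \eRR + \O(1) \leq \tfrac34 \eRR$ (using that $\eRR$ is large), and the annulus $\DD_R \setminus \DD_{r_k}$ carries discrepancy at least $\eRR - \tfrac34\eRR = \tfrac14 \eRR$. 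Your initial observations about the signed nature of discrepancy and about looking at the map $k \mapsto \Di(\bXN, \DD_R \setminus \DD_{r_k})$ were the right instincts; the fix is simply to evaluate that map at the right $k$ rather than at $k=0$.
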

\begin{proof}[Proof of Lemma \ref{lem:finding_discr}]
\cor{Let us first treat the case of a positive discrepancy.} Let $r$ be such that \eqref{bonr} holds and let $0 \leq k \leq R^2$ be the integer such that:
\begin{equation*}
r_k := R - 2L + \frac{k L}{R^2} \leq r < R - 2L + \frac{(k+1) L}{R^2}.
\end{equation*}
It is easy to see that $\Di(\bXN, \DD_{r_k}) \leq \frac{3}{4} \eRR$ (a rough bound), indeed we have:
\begin{equation*}
\Di(\bXN, \DD_{r_k}) \leq \Di(\bXN, \DD_r) + |\DD_r \setminus \DD_{r_k}|,
\end{equation*}
with equality if and only if there is no point in the annulus $\DD_r \setminus \DD_{r_k}$, whose area is of order $ \frac{RL}{R^{2}} \leq 1$.

Since the total discrepancy in $\DD_R$ is at least $\eRR$ and the one inside $\DD_{r_k}$ is at most $\frac{3}{4} \eRR$ then the annulus $\DD_R \setminus \DD_{r_k}$ must carry a discrepancy of size at least $\frac{1}{4} \eRR$.

\cor{In the case of a negative discrepancy, instead of taking $r_k$ a bit smaller than $r$ we take it a bit larger and observe that the discrepancy cannot go down too much between $\DD_r$ and $\DD_{r_k}$. Hence we must have $\Di(\bXN, \DD_{r_k}) \geq \frac{- 3\eRR}{4}$, so the annulus $\DD_{R} \setminus \DD_{r_k}$ carries a default of points at least $\frac{\eRR}{4}$ (in absolute value).}
\end{proof}

\subsection{A well-separated family of boxes carrying the discrepancy} 
\label{sec:well-separated}
Let $r$ be in $[R-2L, R-L]$.

\subsubsection*{Decomposition in boxes.}
We now split the annulus $\DD_R \setminus \DD_{r}$ into smaller parts that we call \emph{boxes}. 

\begin{definition}[Decomposition in boxes]
\label{def:boxes}
For $i \in \{0, \dots, \frac{R}{L} -1 \}$, we let the $i$-th “box” $\boxs_i$ be the intersection of the annulus $\DD_R \setminus \DD_{r}$ with a certain angular sector of center $z$.

\cor{Let us denote by $z' \mapsto (\rho, \theta) \in [0, + \infty) \times [0, 2\pi)$ the polar coordinates of a point $z' \in \R^2$, with the origin placed at $z$. The box $\boxs_i$ is then defined as:
\begin{equation*}
\boxs_i := \left\lbrace z' \in \R^2, \quad r \leq \rho \leq R, \quad \frac{\theta}{2\pi} \in \left[\frac{L}{R}i, \frac{L}{R}(i+1)\right]  \right\rbrace.
\end{equation*}
Note that in polar coordinates, such a box is simply a rectangle.} 

The boundary of each such box is made of two line segments of equal length in $(L, 2L)$ and two concentric circular arcs which subtend the same angle at the center and whose arclengths are different but both in $\left(\frac{L}{2}, L\right)$. The shape is symmetric with respect to the straight line joining the midpoints of both arcs. \cor{Here is a quick representation of those “boxes”, paving the annulus $\DD_R \setminus \DD_{r}$:}
\begin{center}
\begin{tikzpicture}[scale=0.1]
    \def\R{4}  
    \def\r{3.2}  
    \def\n{8}  

    \draw[thick] (0,0) circle (\R); 
    \draw[thick] (0,0) circle (\r); 

    \foreach \i in {0,...,\n} {
        \draw[thick] ({\r*cos(360/\n*\i)},{\r*sin(360/\n*\i)}) -- ({\R*cos(360/\n*\i)},{\R*sin(360/\n*\i)});
    }
\end{tikzpicture}
 \end{center}
We sometimes call “a box of size $L$” any domain $\boxs$ that fits the previous description. We let $\omega_i$ be the center of mass of $\boxs_i$, which only serves as a convenient reference point.
\end{definition}

\subsubsection*{Introducing the parameters $M$ and $T$}
Let $T$ be a lengthscale, and $M$ be an integer, both to be chosen at the end, such that:
\begin{equation}
\label{eq:condi_T2logTML}
T \geq 10 L, \quad 100 \leq M \leq \frac{R}{L}, \quad  T \leq \frac{ML}{100}, \quad T^2 \log T \leq ML.
\end{equation}
(The fourth condition (which could be weakened), implies the third one but for clarity we write them all down.)

\begin{lemma}[Some well-separated boxes carry the discrepancy]
\label{lem:MV1}
Assume that the discrepancy in $\DD_R \setminus \DD_{r}$ is larger than $\frac{1}{4} \eRR$ \cor{(resp. smaller than $-\frac{1}{4} \eRR$)}. Then there exists $l \in \{0, \dots, M-1\}$ such that:
\begin{equation}
\label{eq:box_carry}
\sum_{i \equiv l \text{ mod } M} \Di(\bXN, \boxs_i) \geq \frac{\eRR}{4 M} \ \cor{\text{ \Big(resp. } \sum_{i \equiv l \text{ mod } M} \Di(\bXN, \boxs_i) \leq - \frac{\eRR}{4 M} \text{\Big)}.}
\end{equation}
\end{lemma}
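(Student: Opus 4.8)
The plan is to use a simple averaging (pigeonhole) argument over the residue classes modulo $M$. First I would note that the boxes $\boxs_0, \dots, \boxs_{R/L - 1}$ partition the annulus $\DD_R \setminus \DD_r$, so that summing the discrepancies over all boxes recovers the total discrepancy in the annulus:
\begin{equation*}
\sum_{i=0}^{R/L - 1} \Di(\bXN, \boxs_i) = \Di(\bXN, \DD_R \setminus \DD_r) \geq \frac{\eRR}{4}.
\end{equation*}

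Next I would partition the index set $\{0, \dots, R/L - 1\}$ into the $M$ residue classes modulo $M$ (here one uses $M \leq R/L$ from \eqref{eq:condi_T2logTML}, so every class is nonempty). Writing the left-hand side above as a double sum over residue classes $l \in \{0, \dots, M-1\}$ and over $i \equiv l \bmod M$, we get
\begin{equation*}
\sum_{l=0}^{M-1} \left( \sum_{i \equiv l \text{ mod } M} \Di(\bXN, \boxs_i) \right) \geq \frac{\eRR}{4}.
\end{equation*}
A sum of $M$ real numbers being at least $\frac{\eRR}{4}$ forces at least one of the summands to be at least $\frac{\eRR}{4M}$; picking such an $l$ gives exactly \eqref{eq:box_carry}.

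There is essentially no obstacle here: the only points to check are that the boxes genuinely partition the annulus (up to the shared boundary segments and arcs, which have measure zero and so do not affect the point counts nor the areas in $\Di$) and that the residue classes are nonempty, which is guaranteed by the hypothesis $M \leq R/L$ in \eqref{eq:condi_T2logTML}. The "well-separated" terminology in the statement refers to the fact that any two distinct indices $i, i'$ in the same residue class satisfy $|i - i'| \geq M$, so the corresponding boxes are separated (in angular index, hence in physical distance once multiplied by the box size $L$) by at least $\approx ML$ — but this separation property is immediate from the construction and is recorded for later use rather than being something that needs proof at this stage.
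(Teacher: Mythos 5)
Your proof is correct and is exactly the "straightforward pigeonhole argument" that the paper invokes without spelling out: the boxes partition the annulus (modulo a measure-zero boundary set), so the residue-class sums over $l \in \{0,\dots,M-1\}$ add up to the annulus discrepancy $\geq \frac{1}{4}\eRR$, and one of the $M$ summands must be at least the average $\frac{\eRR}{4M}$. Nothing to add.
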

\begin{proof}[Proof of Lemma \ref{lem:MV1}]
It follows from a straightforward pigeonhole argument.
\end{proof}

\medskip
From now on, we assume to have chosen such a $l \in \{0, \dots, M-1\}$ and we write the corresponding boxes as $\boxs_1, \dots, \boxs_\Nn$, where $\Nn = \O\left( \frac{R}{ML} \right)$ is the number of box in each “well-separated” family. Moreover, we see each box $\boxs_i$ as being contained in a large disk $\La_i := \DD(\omega_i, T)$, where $T$ is as above.
\medskip

Let $\dij$ be the distance between $\La_i$ and $\La_j$. 
\begin{lemma}
We have for each fixed $i$:
\begin{equation}
\label{sumdij}
\sum_{j, j \neq i} \frac{1}{\dij} = \O\left(\frac{\log R}{ML}\right), \quad \sum_{j, j \neq i} \frac{1}{\dij^2} = \O\left(\frac{1}{M^2L^2}\right)
\end{equation}
\end{lemma}
\begin{proof}
Let us observe that between two “consecutive” boxes in the family $\{i \equiv l \text{ mod } M\}$ considered in \eqref{eq:box_carry}, there is a distance of order $ML$ (since $ML \geq 100 T$ by assumption, this is also comparable to the distance between two consecutive \emph{disks}). We can compare the sum to an harmonic sum (in the first case) or a converging Riemann series (in the second case).
\end{proof}

\subsubsection*{Plan for the next two sections}
We now want to treat each box $\boxs_i$ as living in its own smaller version of a 2DOCP contained in $\La_i$, which leads us to the next two sections. 

Think of a “sub-system” as the random collection of particles contained in a given sub-domain $\La \subset \LN$ with a reasonable shape (e.g. a square or a disk). These particles feel the influence of each other, but also of the full system in $\LN$ because the logarithmic interaction is long-range. Hence sub-systems are typically not isolated and not independent from each other. 
\begin{enumerate}
    \item In Section \ref{sec:approx_CI} we observe that if we condition on the values of the discrepancies (or equivalently of the number of particles) in domains that are well-separated, then the corresponding sub-systems acquire a form of independence.
    \item In Section \ref{sec:CI} we show that “typical” sub-systems, seen as slight generalizations of the 2DOCP model introduced earlier (in \eqref{def:FN}, \eqref{def:PNbeta}), retain most of the properties of the \emph{full} system mentioned in Section~\ref{sec:preliminary}. 
\end{enumerate}

\section{Approximate conditional independence for sub-systems}
\label{sec:approx_CI}
In this section, we consider a family  $\{\Lai, 1 \leq i \leq \Nn\}$ of $\Nn \geq 2$ disjoint disks in $\LN$ (forming our “sub-systems”). We will eventually apply the  results below to the $\Lai$'s chosen in Section \ref{sec:well-separated}, but the statements in the present section are  general.

Let $\Ext$ be the complement $\Ext := \LN \setminus \cup_{i =1}^\Nn \Lai$. Let $\Vext$ be the logarithmic potential generated by the system in $\Ext$, namely:
\begin{equation}
\label{def:Vext} \Vext(x) := \int_{\Ext} - \log |x-y| \dd \fN(y),
\end{equation}
\cor{where we recall that $\fN$ is the fluctuation measure $\fN := \bXN - \mm_0$.}

The potential $\Ext$ is harmonic on all the $\Lai$'s and depends only on the configuration in $\Ext$.

\subsection{Decomposition of the interaction, conditional independence error}
\label{sec:decompo}
We let $\Int[\La_1, \dots, \La_\Nn]$ be the true logarithmic interaction between the sub-systems, namely:
\begin{equation}
\label{def:IntSS}
\Int[\La_1, \dots, \La_\Nn] := \hal \sum_{1 \leq i \neq j \leq \Nn} \iint_{\Lai \times \Laj} - \log |x-y| \dd \fN(x) \dd \fN(y).
\end{equation}
By expanding the double integral defining $\FN$, we may write:
\begin{equation}
\label{decompo_inter}
\FN(\bXN) = \F_{\Ext} (\bXN \cap \Ext) + \Int[\La_1, \dots, \La_\Nn] + \sum_{i=1}^\Nn\left( \F_{\Lai}(\bXN \cap \Lai) + \int_{\Lai}  \Vext(x) \dd \fN(x) \right)
\end{equation}
with $\F_{\Ext}, \F_{\Lai}$ \cor{defined by:
\begin{multline*}
\F_{\Ext}(\bXN) := \hal \iint_{(x,y) \in \Ext \times \Ext, x \neq y} - \log |x-y| \dd \fN(x) \dd \fN(y), \\ \F_{\Lai}(\bXN) := \hal \iint_{(x,y) \in \Lai \times \Lai, x \neq y} - \log |x-y| \dd \fN(x) \dd \fN(y).
\end{multline*}
}

On the other hand, for each $1 \leq i \leq \Nn$, let $\Dii$ be the discrepancy of $\bXN$ in $\Lai$, and let $\tInt[\Dd_1, \dots, \Dd_\Nn]$ be the approximation of $\Int[\La_1, \dots, \La_\Nn]$ given by:
\begin{equation}
\label{def:tIntSS}
\tInt[\Dd_1, \dots, \Dd_\Nn] := \hal \sum_{1 \leq i \neq j \leq \Nn} - \Dii \Djj \log |\omega_i - \omega_j|,
\end{equation}
where $\omega_i$ denotes the center of $\La_i$. We define the quantity $\ErrorCI$ as follows:
\begin{equation}
\label{def:ErrorCI}
\ErrorCI\left[\bXN | \left(\La_1, \dots, \La_\Nn\right) \right] := \left| \Int[\La_1, \dots, \La_\Nn] - \tInt[\Dd_1, \dots, \Dd_\Nn] \right|,
\end{equation}
and we use it below in order to measure a “conditional independence error”. \cor{The domains $\La_1, \dots, \La_\Nn$ being always unambiguous, in the sequel we will often shorten the notation and simply write $\ErrorCI\left[\bXN \right]$.}

\subsection{Bounds on the conditional independence error}
For $1 \leq i \neq j \leq \Nn$, define the distance $\dij$ as $\dij := \dist(\Lai, \Laj)$ \cor{and  $\nn_i := \Points(\bXN, \Lai)$}. Assume that for all $1 \leq i \leq \Nn$ the disk $\Lai$ has radius $T$, and that:
\begin{equation}
\label{eq:boundnni}
\max_{1 \leq i \leq \Nn} \nn_i \leq 10 T^2, \quad \min_{1 \leq i \neq j \leq \Nn} \dij \geq 10 T.
\end{equation}
\begin{lemma}[The size of $\ErrorCI$]
\label{lem:sizeCI}
We have, if \eqref{eq:boundnni} holds:
\begin{equation}
\label{sizeCI}
\left|\ErrorCI\left[\bXN | \left(\La_1, \dots, \La_\Nn\right) \right] \right| = \O(T^5) \sum_{1 \leq i \neq j \leq \Nn} \frac{1}{\dij}.
\end{equation} 
\end{lemma}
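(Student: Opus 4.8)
The quantity to estimate is the difference between the true pairwise interaction $\Int[\La_1,\dots,\La_\Nn]$ and its multipole approximation $\tInt[\Dd_1,\dots,\Dd_\Nn]$ obtained by collapsing each sub-system to a point charge of mass $\Dii$ located at the center $\omega_i$. The natural approach is to bound the $(i,j)$ term separately and then sum. For fixed $i \neq j$, write $\kk_{ij}(x,y) := -\log|x-y|$ and compare the double integral $\iint_{\Lai\times\Laj} \kk_{ij}(x,y)\,\dd\fN(x)\,\dd\fN(y)$ against $-\Dii\Djj\log|\omega_i-\omega_j| = \Dii\Djj\,\kk_{ij}(\omega_i,\omega_j)$. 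Since $\Dii = \int_{\Lai}\dd\fN$ and $\Djj = \int_{\Laj}\dd\fN$, the error in the $(i,j)$ term is $\iint_{\Lai\times\Laj}\big(\kk_{ij}(x,y) - \kk_{ij}(\omega_i,\omega_j)\big)\,\dd\fN(x)\,\dd\fN(y)$, and the plan is to control the \emph{oscillation} of $\kk_{ij}$ over $\Lai\times\Laj$ by a Taylor/mean-value argument: since both disks have radius $T$ and are at distance $\dij \geq 10T$, the gradient of $-\log|x-y|$ on $\Lai\times\Laj$ is $\O(1/\dij)$, so $|\kk_{ij}(x,y) - \kk_{ij}(\omega_i,\omega_j)| = \O(T/\dij)$ uniformly on $\Lai\times\Laj$.

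The second ingredient is a bound on the total variation of $\fN$ restricted to each $\Lai$: we have $|\fN|(\Lai) \leq \Points(\bXN,\Lai) + |\Lai| = \nn_i + \pi T^2 = \O(T^2)$ using the assumption $\nn_i \leq 10T^2$ from \eqref{eq:boundnni}. Combining, the $(i,j)$ error is at most $|\fN|(\Lai)\cdot|\fN|(\Laj)\cdot \O(T/\dij) = \O(T^2)\cdot\O(T^2)\cdot\O(T/\dij) = \O(T^5/\dij)$. Summing over all ordered pairs $i \neq j$ and using the symmetry factor $\hal$ gives precisely \eqref{sizeCI}. One technical point to be careful about: the true interaction $\Int$ involves $\fN$, which has a singular (atomic) part, so $\kk_{ij}(x,y) = -\log|x-y|$ must be genuinely bounded and Lipschitz on $\Lai\times\Laj$ — this is exactly what the separation $\dij \geq 10T$ guarantees, since then $|x-y| \geq \dij > 0$ is bounded away from zero for all $(x,y)\in\Lai\times\Laj$, so there is no issue of integrability against the Dirac masses and the mean-value estimate applies cleanly.

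The main (very mild) obstacle here is just bookkeeping: making sure the gradient bound $|\nabla_{x,y}(-\log|x-y|)| \leq C/|x-y| \leq C/\dij$ is applied on the right domain, that the displacement $|x-\omega_i| + |y-\omega_j| \leq 2 \cdot 2T = \O(T)$ (diameter of each disk is $2T$) feeds into the first-order Taylor remainder correctly, and that one uses $|\fN|(\Lai) = \O(T^2)$ rather than just $|\fN|(\Lai) = \O(\nn_i)$ — the background contributes $\pi T^2$ which is of the same order, so nothing is lost. There is no genuine analytic difficulty; the whole point of the separation hypothesis \eqref{eq:boundnni} is to make the kernel smooth on the relevant product sets, reducing the estimate to a one-line Taylor expansion per pair. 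The $T^5$ power is simply $T^2$ (variation of $\fN$ on $\Lai$) times $T^2$ (on $\Laj$) times $T$ (oscillation of the log-kernel across a disk of radius $T$ at distance $\dij$), and the sum $\sum_{i\neq j} 1/\dij$ is carried along untouched, to be estimated later (as in \eqref{sumdij}) when the $\Lai$ are the specific disks from Section~\ref{sec:well-separated}.
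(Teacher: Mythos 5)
Your proposal is correct and follows essentially the same route as the paper: Taylor-expand $\log|x-y|$ around $(\omega_i,\omega_j)$ to get an oscillation bound of $\O(T/\dij)$ uniformly on $\Lai\times\Laj$, bound the total variation of $\fN$ on each sub-disk by $\nn_i + |\Lai| = \O(T^2)$ via \eqref{eq:boundnni}, multiply, and sum over pairs. The only cosmetic difference is that you spell out the rewriting of $\Dii\Djj\log|\omega_i-\omega_j|$ as $\iint_{\Lai\times\Laj}\log|\omega_i-\omega_j|\,\dd\fN(x)\,\dd\fN(y)$ explicitly, which the paper leaves implicit.
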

\begin{proof}[Proof of Lemma \ref{lem:sizeCI}]
For $x \in \Lai, y \in \Laj$ (with $i \neq j$) since the diameter of the \cor{disks} is $\O(T)$ and the mutual distances satisfy \eqref{eq:boundnni}, a Taylor's expansion yields:
\begin{equation}
\label{expansion_logxy}
 \log |x - y | =  \log |\omega_i - \omega_j| + \O\left( \frac{T}{|\omega_i-\omega_j|}\right),
\end{equation}
with a universal implicit constant. Integrating \eqref{expansion_logxy} against the fluctuation measures in $\Lai$ and $\Laj$ yields:
\begin{equation*}
\left| \iint_{\Lai \times \Laj} \log |x-y| \dd\fN(x) \dd\fN(y) - \Dii \Djj \log |\omega_i - \omega_{j}|  \right| \leq (|\Lai| + \nn_i) \cdot (|\Laj| + \nn_j) \cdot \O \left( \frac{T}{\dij}\right).
\end{equation*}
Using the bound on $\nn_i, \nn_j$ given by \eqref{eq:boundnni} and summing over $i \neq j$, we get \eqref{sizeCI}.
\end{proof}

\begin{remark}
\label{rem_sharper_ERRORCI}
It is possible to reduce the order of magnitude of $\ErrorCI$ by expanding the interaction in a more precise way. \cor{While this might be useful for other purposes, here it would not help improving our variance bounds.}
\end{remark}

\newcommand{\VextL}{\mathrm{V}^{\mathrm{ext}}}
\renewcommand{\Vv}{\VextL}
\newcommand{\FL}{\F_{\Lambda}}
\newcommand{\fL}{\f_{\Lambda}}
\newcommand{\XL}{\X_{(\Lambda)}}
\newcommand{\xL}{x^{(\Lambda)}}
\newcommand{\Xn}{\X_\nn}
\newcommand{\bXn}{\bX_\nn}
\newcommand{\mL}{\mm_{\Lambda}}
\newcommand{\PnLV}{\P^\beta_{\nn, \Lambda, \Vv}}
\newcommand{\KnLV}{\mathrm{K}^\beta_{\nn, \Lambda, \Vv}}
\newcommand{\EnLV}{\E^\beta_{\nn, \Lambda, \Vv}}
\newcommand{\DT}{\mathsf{D}}
\newcommand{\wVextL}{\tilde{\mathrm{V}}^{\mathrm{ext}}}
\newcommand{\AUN}{\mathsf{A}_\mathrm{1}}
\newcommand{\mut}{\mu^{(t)}}
\newcommand{\muz}{\mu^{(0)}}
\newcommand{\ErrorEnergy}{\mathrm{ErrEnerTrans}}
\newcommand{\zetaW}{\zeta_\WW}
\newcommand{\PnL}{\mathbb{P}_{\nn, \La}^\beta}
\newcommand{\KnL}{\mathrm{K}_{\nn, \La}^\beta}
\newcommand{\Cgood}{\hyperref[prop:VextIsOftenGood]{\mathtt{C}_{\mathtt{Good}}}} 
\newcommand{\VextLi}{\mathrm{V}^{\mathrm{ext}}_{\La_i}}
\newcommand{\PnLL}{\mathbb{P}_{\nn, \La}^\beta}
\newcommand{\EnLL}{\mathbb{E}_{\nn, \La}^\beta}
\newcommand{\KnLL}{\mathrm{K}_{\nn, \La}^\beta}

\subsection{2DOCP's with harmonic external field}
\label{sec:with_external}
In the following we consider two-dimensional one-component plasmas whose energy takes into account the effect of an external field $\VextL$ on each particle.

\begin{itemize}
	\item Let $\VextL$ (the \textit{external field} on $\La$) be a lower semi-continuous function on $\R^2$ that is harmonic on (the interior of) $\Lambda$. Let us emphasize that $\VextL$ is harmonic hence very regular in the interior of $\La$ but we do not \emph{a priori} control $\VextL$ or its derivatives near $\partial \La$. The situation is thus different from the choice of an external weight/potential as frequently found in the literature on log-gases.  

	\item Let $\Xn$ denote a $\nn$-tuple of points $\Xn = (x_1, \dots, x_\nn)$ in $\Lambda$, let $\bXn := \sum_{i=1}^\nn \delta_{x_{i}}$ be the associated atomic measure, and let $\fL := \bXn - \mm_0 \ind_{\La}$ be the signed fluctuation measure on $\La$. Let $\FL(\bXn)$ be the logarithmic interaction energy:
	\begin{equation}
	\label{def:FL}
	\FL(\bXn) := \hal \iint_{(x,y) \in \La \times \La, x \neq y} - \log |x-y| \dd \fL(x) \dd \fL(y).
	\end{equation}
\end{itemize}
We define a probability density $\PnLV$ on the space of $\nn$-tuples of points in $\Lambda$ by setting:
\begin{equation}
\label{def:PnLV}
\dd \PnLV(\Xn) := \frac{\exp\left( - \beta \left( \FL(\bXn) + \int_{\La} \VextL(x) \dd \fL(x) \right) \right)}{\KnLV} \dd \Xn,
\end{equation}
where $\KnLV$ is the partition function, namely the normalizing constant:
\begin{equation}
\label{def:KnLV}
\KnLV := \int_{\Lambda^\nn} \exp\left( - \beta \left( \FL(\bXn) + \int_{\La} \VextL(x) \dd \fL(x)  \right) \right) \dd \Xn,
\end{equation}
and $\dd \Xn$ is the Lebesgue measure on $\Lambda^\nn$. We may now state the main result of this section.

\renewcommand{\Vv}{\mathrm{V}}

\subsection{Approximate conditional independence}
\label{sec:PropCI}
\begin{proposition}[Approximate conditional independence]
\label{prop:CI} \ 
		\begin{itemize}
		 	\item For $1 \leq i \leq \Nn$, let $\G_i$ be a measurable function on $\Conf$ with \emph{non-negative} real values and let $\EE_i$ be a measurable subset of $\Conf$. Assume that $\G_i$, $\EE_i$ are $\Lai$-local in the sense of Section \ref{sec:notation}, \cor{i.e. they depend only on the configuration within $\Lai$.}
		\item Let $\Eext$ be a measurable subset of $\Conf$, assume that $\Eext$ is $\Ext$-local.
		\item Denote by $\EE_N$ the following event: $\EE_N := \Eext \cap \bigcap_{i = 1}^\Nn \EE_i.$	
		\item Finally, we say that a family $\{\tn_i\}_{1 \leq i \leq \Nn}$ of integers is “admissible” (we write “$\{\tn_i\}\ \mathrm{adm.}$” below) when there exists $\bXN \in \EE_N$ such that $\Points(\bXN, \Lai) = \nn_i$ for all $1 \leq i \leq \Nn$.
	\end{itemize}
	We have:
\begin{multline}
\label{eq:CI}
\EN \left[ \prod_{i=1}^\Nn \G_i(\bXN) \ind_{\EE_N} \right] \leq \exp\left(2 \beta \sup_{\bXN \in \EE_N} \ErrorCI(\bXN) \right) \\ \times \sup_{\bXext \in \Eext, \{\tn_i\} \ \mathrm{adm.}} \prod_{i=1}^\Nn \E^\beta_{\tn_i, \Lai, \Vext} \left[ \G_i(\bX) \Cond \EE_i \right].
\end{multline}
Moreover, with the same assumptions, the following lower bound holds:
\begin{multline}
\label{eq:CI_LB}
\EN \left[ \prod_{i=1}^\Nn \G_i(\bXN) \ind_{\EE_N} \right] \geq \exp\left(- 2 \beta \sup_{\bXN \in \EE_N} \ErrorCI(\bXN) \right) \times \PNbeta(\EE_N) \\
\times  \inf_{\bXext \in \Eext, \{\tn_i\} \ \mathrm{adm.}} \prod_{i=1}^\Nn \E^\beta_{\tn_i, \Lai, \Vext} \left[ \G_i(\bX) \Cond \EE_i \right].
\end{multline}
Notice that one $\sup$ is now an $\inf$, the error term $\sup \ErrorCI$ now appears in the exponent with a minus sign, and there is an extra factor $\PNbeta(\EE_N)$.
\end{proposition}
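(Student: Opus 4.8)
\textbf{Proof strategy for Proposition \ref{prop:CI}.}
The plan is to start from the decomposition \eqref{decompo_inter} of the energy and write the left-hand side of \eqref{eq:CI} by disintegrating $\EN$ according to the configuration in $\Ext$ and, inside each $\Lai$, according to both the number of points $\nn_i$ and the positions. Concretely, I would first fix $\bXext \in \Ext$ (a configuration in the exterior), which determines $\F_\Ext(\bXN \cap \Ext)$ and the harmonic potential $\Vext$ on each $\Lai$ via \eqref{def:Vext}; conditionally on $\bXext$, the energy \eqref{decompo_inter} splits as a constant (in the interior variables) term $\F_\Ext$, a "pure interaction" term $\Int[\La_1,\dots,\La_\Nn]$ coupling the different disks, and a sum over $i$ of $\F_{\Lai}(\bXN \cap \Lai) + \int_{\Lai} \Vext \dd \fN$, each of which depends only on the configuration in $\Lai$ and is exactly the energy appearing in \eqref{def:PnLV} for the sub-system measure $\E^\beta_{\nn_i, \Lai, \Vext}$.

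The one genuinely coupling term is $\Int[\La_1,\dots,\La_\Nn]$, and this is where $\ErrorCI$ enters: by \eqref{def:ErrorCI} we may replace $\Int$ by $\tInt[\Dd_1,\dots,\Dd_\Nn]$, which depends on the interior configurations \emph{only through the discrepancies} $\Dii = \nn_i - |\Lai|$, at the cost of a multiplicative factor $\exp(\pm\beta\, \ErrorCI)$ in the Gibbs weight. On the event $\EE_N$ we bound $\ErrorCI$ by its supremum over $\EE_N$, producing the prefactor $\exp(2\beta \sup \ErrorCI)$ (a factor $2$ because the same error term appears in the numerator of the conditional expectations and, with opposite sign, when we bound the local partition functions from below to form the normalization; a cleaner way to see the factor $2$: replacing $\Int$ by $\tInt$ distorts both the measure and the normalizing constant $\KNbeta$, each by $e^{\pm\beta\,\ErrorCI}$). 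Once $\Int$ has been replaced by $\tInt$, the term $\exp(-\beta\,\tInt[\Dd_1,\dots,\Dd_\Nn])$ is a function of $(\nn_1,\dots,\nn_\Nn)$ only, so for \emph{fixed} point counts the interior integral factorizes over the disks: one obtains, up to the $\ErrorCI$ factor and a ratio of partition functions, a product over $i$ of $\E^\beta_{\nn_i,\Lai,\Vext}[\G_i \ind_{\EE_i}]$. Summing over admissible $\{\nn_i\}$ and over the exterior configuration, then bounding each summand by the supremum over $\bXext \in \Eext$ and admissible $\{\nn_i\}$ of the product of \emph{conditional} expectations $\E^\beta_{\nn_i,\Lai,\Vext}[\G_i \mid \EE_i]$ (using $\G_i \geq 0$ and $\ind_{\EE_i}\leq 1$ to turn the unnormalized integrals into conditional ones, absorbing the probabilities $\E^\beta_{\nn_i,\Lai,\Vext}[\EE_i]$ appropriately), and noting that the exterior weight times $\ind_{\Eext}$ integrates to at most $1$, yields \eqref{eq:CI}. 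The lower bound \eqref{eq:CI_LB} is proved the same way, reversing every inequality: the $\ErrorCI$ factor now appears with a minus sign, each $\sup$ over $\bXext$ and $\{\nn_i\}$ becomes an $\inf$, and the exterior integral restricted to $\Eext$ together with the sum over admissible $\{\nn_i\}$ is now bounded \emph{below} by $\PNbeta(\EE_N)$ times the infimum of the interior conditional expectations — this is exactly the extra factor $\PNbeta(\EE_N)$ in the statement.

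The main obstacle, and the point requiring the most care, is the bookkeeping of normalizing constants: $\KNbeta$, the exterior partition function, and the sub-system partition functions $\KnLV$ do not split cleanly because $\Vext$ itself depends on $\bXext$ and because the replacement $\Int \rightsquigarrow \tInt$ perturbs all of them simultaneously. The clean way to handle this is to never write the constants explicitly but to argue ratio-wise: fix $\bXext$, write the conditional law of the interior given $\bXext$, observe that after the $\tInt$ substitution this conditional law is a product (over $i$, and over the choice of $\nn_i$) of the sub-system measures reweighted by $e^{-\beta\tInt}$, and conclude that the conditional expectation of $\prod_i \G_i \ind_{\EE_i}$ is, up to the $e^{\pm\beta\,\ErrorCI}$ factor, a convex combination (over admissible $\{\nn_i\}$, with weights summing to at most $1$) of products $\prod_i \E^\beta_{\nn_i,\Lai,\Vext}[\G_i\mid\EE_i]$. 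Taking expectation over $\bXext\in\Eext$ and bounding the convex combination by its largest (resp.\ smallest) term gives the upper (resp.\ lower) bound. I would also double-check the admissibility condition is used correctly: it guarantees the conditional expectations $\E^\beta_{\nn_i,\Lai,\Vext}[\cdot\mid\EE_i]$ are well-defined (nonzero denominator) exactly for those $\{\nn_i\}$ that can actually occur on $\EE_N$, so the $\sup$ and $\inf$ range only over meaningful quantities.
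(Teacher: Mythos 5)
Your proposal is correct and follows essentially the same route as the paper: decompose the energy via \eqref{decompo_inter}, fix the exterior configuration and the point counts $\{\tn_i\}$, replace $\Int$ by $\tInt$ at the cost of $e^{\pm\beta\,\ErrorCI}$ both in the numerator and in the lower bound on $\KNbeta$ (hence the factor $2$), and then pull out the products of conditional sub-system expectations by a $\sup$/$\inf$ argument, with the extra $\PNbeta(\EE_N)$ in the lower bound coming from restricting $\KNbeta$ to the event $\EE_N$. The only cosmetic difference is that you phrase the bookkeeping as a conditional-law/ratio argument while the paper writes out the multinomial sums explicitly, but the substance is identical.
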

\begin{proof}[Proof of Proposition \ref{prop:CI}] 
Let us start by using the definition \eqref{def:PNbeta} of the Gibbs measure and write:
\begin{equation*}
\EN \left[ \prod_{i=1}^\Nn \G_i(\bXN) \ind_{\EE_N} \right] = \frac{1}{\KNbeta} \int_{(\LN)^N} \exp\left( - \beta \FN(\bXN) \right) \times \prod_{i=1}^\Nn \G_i(\bXN) \ind_{\EE_N}(\bXN) \dd \XN.
\end{equation*}

\subsubsection*{\cor{Step 1. Using the decomposition \eqref{decompo_inter}.}}
Using the decomposition \eqref{decompo_inter} of the logarithmic interaction $\FN(\XN)$, we obtain:
\begin{multline*}
\EN \left[ \prod_{i=1}^\Nn \G_i(\bXN) \ind_{\EE_N} \right] = \frac{1}{\KNbeta} \int_{(\LN)^N} \exp\left( - \beta \F_{\Ext} (\bXN \cap \Ext)\right)  \exp\left(- \beta \Int[\La_1, \dots, \La_\Nn] \right) \\
\times \left[ \prod_{i=1}^\Nn \exp \left( - \beta \left( \F_{\Lai}(\bXN \cap \Lai) + \int_{\Lai} \Vext(x) \dd \fN(x) \right) \right) \times \G_i(\bXN) \right] \ind_{\EE_N}(\bXN) \dd \XN.
\end{multline*}

\subsubsection*{\cor{Step 2. Introducing the number of points and the conditional independence error.}}
Next:
\begin{itemize}
 	\item We use a complete system of events by fixing the number $\tni$ of points in each $\Lai$ (and thus also the number $\nnext$ of points in $\Ext$). The knowledge of $\tni$ is equivalent to fixing the discrepancy $\Dii$ in $\Lai$. Since we are working under the event $\EE_N$, the $\{\tn_i\}$ must be admissible as defined in the statement of Proposition \ref{prop:CI}.
 	\item Up to a combinatorial factor, we may then decompose the $N$-tuple $\XN$ into a $\nnext$-tuple $\Xext$ of points in $\Ext$, and $\X_i$ ($1 \leq i \leq \Nn$), where each $\X_i$ is a $\tni$-tuple of points in $\Lai$.
 	\item We decompose the Lebesgue measure $\dd \XN$ accordingly.
 	\item We write $\bXext$ and $\bXi$ ($1 \leq i \leq \Nn$) for the associated atomic measures. We have the identities $\bXext = \bXN \cap \Ext$ and $\bXi = \bXN \cap \Lai$ ($1 \leq i \leq \Nn$).
	\item By our locality assumptions \cor{and by definition of $\EE_N$}: $\ind_{\EE_N}(\bXN) = \ind_{\Eext}(\bXext) \times \prod_{i=1}^\Nn \1_{\EE_i}(\bXi)$.
 	\item By our locality assumption, $\G_i(\bXN) = \G_i(\bXi)$.
 	\item We introduce the \cor{“fluctuation”} measures $\fExt := \bXext - \mm_0 \ind_{\Ext}$ and $\fLi := \bXi - \mm_0 \ind_{\Lai}$ (for $i = 1, \dots, \Nn$) \cor{relative to each part of the decomposition}.
 	\item Finally, using the definition \eqref{def:ErrorCI} we replace $\Int[\La_1, \dots, \La_\Nn]$ by $\tInt[\Dd_1,\dots, \Dd_\Nn]$, up to an error quantified by $\ErrorCI$. The quantity $\exp\left(-\beta \tInt[\Dd_1, \dots, \Dd_\Nn]\right)$ can be taken outside the integrals because it only depends on the data of $\{\tni\}_{1 \leq i \leq \Nn}$.
 \end{itemize} 
We obtain the following upper bound:
\begin{multline}
\label{CI_UB_1}
\EN \left[ \prod_{i=1}^\Nn \G_i(\bXN) \ind_{\EE_N} \right] \leq \exp\left(\beta \sup_{\bXN \in \EE_N}  \ErrorCI(\bXN)\right)  \\
\times \frac{1}{\KNbeta} \sum_{  \{\tn_i\} \ \mathrm{adm.} } \binom{N}{\tn_1 \dots \tn_\Nn} \exp\left(-\beta \tInt[\Dd_1, \dots, \Dd_\Nn] \right) \times   \int_{(\Ext)^{\nnext}} \exp\left(-\beta \F_{\Ext}(\bXext) \right) \ind_{\Eext}(\bXext) \\
 \times \left[ \prod_{i=1}^\Nn \int_{(\Lai)^{\tni}} \exp \left( - \beta \left( \F_{\Lai}(\bXi) + \int_{\Lai} \Vext(x) \dd \fLi(x) \right) \right) \times  \G_i(\bXi) \ind_{\EE_i}(\bXi) \dd \Xi \right] \dd \Xext,
\end{multline}
\cor{and the corresponding lower bound (recall that, by assumption, the $\G_i$'s have non-negative values):
\begin{multline}
\label{CI_LB_1}
\EN \left[ \prod_{i=1}^\Nn \G_i(\bXN) \ind_{\EE_N} \right] \geq \exp\left(-\beta \sup_{\bXN \in \EE_N}  \ErrorCI(\bXN)\right)  \\
\times \frac{1}{\KNbeta} \sum_{  \{\tn_i\} \ \mathrm{adm.} } \binom{N}{\tn_1 \dots \tn_\Nn} \exp\left(-\beta \tInt[\Dd_1, \dots, \Dd_\Nn] \right) \times   \int_{(\Ext)^{\nnext}} \exp\left(-\beta \F_{\Ext}(\bXext) \right) \ind_{\Eext}(\bXext) \\
 \times \left[ \prod_{i=1}^\Nn \int_{(\Lai)^{\tni}} \exp \left( - \beta \left( \F_{\Lai}(\bXi) + \int_{\Lai} \Vext(x) \dd \fLi(x) \right) \right) \times  \G_i(\bXi) \ind_{\EE_i}(\bXi) \dd \Xi \right] \dd \Xext.
\end{multline}}

\cor{\subsubsection*{Step 3. Using our new notation for 2DOCP's with external field.}}
We may conveniently condense \eqref{CI_UB_1}, \cor{\eqref{CI_LB_1}} using the notation \cor{introduced in Section \ref{sec:with_external}}. In view of the definitions \eqref{def:PnLV}, \eqref{def:KnLV}, for each $1 \leq i \leq \Nn$,  we write:
\begin{multline*}
\int_{(\Lai)^{\tni}} \exp \left( - \beta \left( \F_{\Lai}(\bXi) + \int_{\Lai} \Vext(x) \dd \fLi(x) \right) \right) \G_i(\bXi) \ind_{\EE_i}(\bXi) \dd \Xi \\
 = \E^\beta_{\tni, \Lai, \Vext} \left[ \G_i(\bX) \Cond \EE_i \right] \times \P^\beta_{\tni, \Lai, \Vext} \left[ \EE_i \right] \times \KNbetai,
\end{multline*}
and we may thus re-write \eqref{CI_UB_1} as:
\begin{multline}
\label{CI_UB_2}
\EN \left[ \prod_{i=1}^\Nn \G_i(\bXN) \ind_{\EE_N} \right] \leq \exp\left(\beta \sup_{\bXN \in \EE_N} \ErrorCI(\bXN)\right)  \\
\times \frac{1}{\KNbeta} \sum_{  \{\tn_i\} \ \mathrm{adm.} } \binom{N}{\tn_1 \dots \tn_\Nn} \exp\left(-\beta \tInt[\Dd_1, \dots, \Dd_\Nn] \right)  \times \int_{(\Ext)^{\nnext}} \exp\left(-\beta \F_{\Ext}(\bXext) \right) \ind_{\Eext}(\bXext)  \\
 \times \left[ \prod_{i=1}^\Nn \E^\beta_{\tni, \Lai, \Vext}\left[ \G_i(\bX) \Cond \EE_i \right] \times \P^\beta_{\tni, \Lai, \Vext} \left[ \EE_i \right] \times \KNbetai \right] \dd \Xext,
\end{multline}
\cor{and \eqref{CI_LB_1} as:
\begin{multline}
\label{CI_LB_2}
\EN \left[ \prod_{i=1}^\Nn \G_i(\bXN) \ind_{\EE_N} \right] \geq \exp\left(- \beta \sup_{\bXN \in \EE_N} \ErrorCI(\bXN)\right)  \\
\times \frac{1}{\KNbeta} \sum_{  \{\tn_i\} \ \mathrm{adm.} } \binom{N}{\tn_1 \dots \tn_\Nn} \exp\left(-\beta \tInt[\Dd_1, \dots, \Dd_\Nn] \right)  \times \int_{(\Ext)^{\nnext}} \exp\left(-\beta \F_{\Ext}(\bXext) \right) \ind_{\Eext}(\bXext)  \\
 \times \left[ \prod_{i=1}^\Nn \E^\beta_{\tni, \Lai, \Vext}\left[ \G_i(\bX) \Cond \EE_i \right] \times \P^\beta_{\tni, \Lai, \Vext} \left[ \EE_i \right] \times \KNbetai \right] \dd \Xext.
\end{multline}}

\subsubsection*{\cor{Step 4. Comparing the local expectations.}}
\cor{Note that the expectations $\E^\beta_{\tni, \Lai, \Vext}$ in the integrands of \eqref{CI_UB_2}, \eqref{CI_LB_2} are quantities that, for each $i$, still depend on the configuration $\bXext$ through the expression \eqref{def:Vext} of $\Vext$. We now replace them (through a lower and an upper bound) by quantities which do not depend on the precise arrangement of points in $\Ext$, but simply on the fact that $\bXext$ is in $\Eext$.} To do so, for all $\bXext \in \Eext$ and all admissible $\{\tn_i\}$, we use the simple upper bound:
\begin{equation*}
\prod_{i=1}^\Nn \E^\beta_{\tni, \Lai, \Vext}\left[ \G_i(\bX) \Cond \EE_i \right] \leq \sup_{\bXext \in \Eext \{\tn_i\} \ \mathrm{adm.}} \prod_{i=1}^\Nn \E^\beta_{\tn_i, \Lai, \Vext} \left[ \G_i(\bX) \Cond \EE_i \right],
\end{equation*}
and the lower bound
\begin{equation*}
\prod_{i=1}^\Nn \E^\beta_{\tni, \Lai, \Vext}\left[ \G_i(\bX) \Cond \EE_i \right] \geq \inf_{\bXext \in \Eext \{\tn_i\} \ \mathrm{adm.}} \prod_{i=1}^\Nn \E^\beta_{\tn_i, \Lai, \Vext} \left[ \G_i(\bX) \Cond \EE_i \right].
\end{equation*}
We get:
\begin{multline}
\label{after_sup}
\EN \left[ \prod_{i=1}^\Nn \G_i(\bXN) \ind_{\EE_N} \right] \leq \exp\left(\beta \sup_{\bXN \in \EE_N} \ErrorCI(\bXN) \right) \times \sup_{\bXext \in \Eext, \{\tn_i\} \mathrm{adm.}} \prod_{i=1}^\Nn \E^\beta_{\tn_i, \Lai, \Vext} \left[ \G_i(\bX) \Cond \EE_i \right]  \\
\times \frac{1}{\KNbeta}  \sum_{  \{\tn_i\} \ \mathrm{adm.} } \binom{N}{\tn_1 \dots \tn_\Nn} \exp\left(-\beta \tInt[\Dd_1, \dots, \Dd_\Nn] \right)  \int_{(\Ext)^{\nnext}} \exp\left(-\beta \F_{\Ext}(\bXext) \right) \ind_{\Eext}(\bXext)
\\ \times \left[ \prod_{i=1}^\Nn  \P^\beta_{\tni, \Lai, \Vext} \left[ \EE_i \right] \times \KNbetai \right] \dd \Xext,
\end{multline}
\cor{as well as the corresponding lower bound (notice the minus sign in front of $\ErrorCI$ in the exponent and the $\inf$ instead of $\sup$):}
\begin{multline}
\label{after_sup_2}
\EN \left[ \prod_{i=1}^\Nn \G_i(\bXN) \ind_{\EE_N} \right] \geq \exp\left(- \beta \sup_{\bXN \in \EE_N} \ErrorCI(\bXN) \right) \times \inf_{\bXext \in \Eext, \{\tn_i\} \mathrm{adm.}} \prod_{i=1}^\Nn \E^\beta_{\tn_i, \Lai, \Vext} \left[ \G_i(\bX) \Cond \EE_i \right]  \\
\times \frac{1}{\KNbeta}  \sum_{  \{\tn_i\} \ \mathrm{adm.} } \binom{N}{\tn_1 \dots \tn_\Nn} \exp\left(-\beta \tInt[\Dd_1, \dots, \Dd_\Nn] \right)  \int_{(\Ext)^{\nnext}} \exp\left(-\beta \F_{\Ext}(\bXext) \right) \ind_{\Eext}(\bXext)
\\ \times \left[ \prod_{i=1}^\Nn \P^\beta_{\tni, \Lai, \Vext} \left[ \EE_i \right] \times \KNbetai \right] \dd \Xext.
\end{multline}

\subsubsection*{Step 5. Treating the partition function.}
\cor{Let us start with a simple observation: by definition, we have:
\begin{equation*}
\KNbeta = \int_{(\LN)^N} \exp\left( - \beta \FN(\bXN) \right) \dd \XN \geq \int_{(\LN)^N} \ind_{\EE_N} \exp\left( - \beta \FN(\bXN) \right) \dd \XN,
\end{equation*}
and on the other hand we have $\PNbeta(\EE_N) = \frac{1}{\KNbeta} \int_{(\LN)^N} \ind_{\EE_N} \exp\left( - \beta \FN(\bXN) \right) \dd \XN$, which yields:}
\begin{multline}
\label{KNbetaAN}
\cor{\left( \int_{(\LN)^N} \ind_{\EE_N} \exp\left( - \beta \FN(\bXN) \right) \dd \XN\right)^{-1} \times \PNbeta(\EE_N) = \frac{1}{\KNbeta}} \\
\cor{\leq \left( \int_{(\LN)^N} \ind_{\EE_N} \exp\left( - \beta \FN(\bXN) \right) \dd \XN\right)^{-1}}.
\end{multline}

\cor{We now re-express the integral appearing in \eqref{KNbetaAN} using our setup. In order do that,} we take all the $\G_i$'s equal to the constant function $1$ in \eqref{after_sup} and \eqref{after_sup_2}, and we simplify by $\frac{1}{\KNbeta}$, which gives:
\begin{multline}
\label{again}
\int_{(\LN)^N} \ind_{\EE_N} \exp\left( - \beta \FN(\bXN) \right) \dd \XN  \\
\leq \exp\left(\beta \sup_{\bXN \in \EE_N} \ErrorCI(\bXN) \right) \times \sum_{ \{\tn_i\} \ \mathrm{adm.} } \binom{N}{\tn_1 \dots \tn_\Nn} \exp\left(-\beta \tInt[\Dd_1, \dots, \Dd_\Nn] \right) \\
 \times  \int_{\Ext^{\nnext}} \exp\left(-\beta \F_{\Ext}(\bXext) \right) \ind_{\Eext}(\bXext)  \left[ \prod_{i=1}^\Nn \P^\beta_{\tni, \Lai, \Vext} \left[ \EE_i \right] \times \KNbetai \right] \dd \Xext,
\end{multline}
\cor{as well as a lower bound:
\begin{multline}
\label{again2}
\int_{(\LN)^N} \ind_{\EE_N} \exp\left( - \beta \FN(\bXN) \right) \dd \XN  \\
\geq \exp\left(- \beta \sup_{\bXN \in \EE_N} \ErrorCI(\bXN) \right) \times \sum_{ \{\tn_i\} \ \mathrm{adm.} } \binom{N}{\tn_1 \dots \tn_\Nn} \exp\left(-\beta \tInt[\Dd_1, \dots, \Dd_\Nn] \right) \\
 \times  \int_{\Ext^{\nnext}} \exp\left(-\beta \F_{\Ext}(\bXext) \right) \ind_{\Eext}(\bXext)  \left[ \prod_{i=1}^\Nn  \P^\beta_{\tni, \Lai, \Vext} \left[ \EE_i \right] \times \KNbetai \right] \dd \Xext,
\end{multline}}

\subsubsection*{Step 6. Conclusion}
\cor{Combining \eqref{after_sup}, the second inequality in \eqref{KNbetaAN} and \eqref{again} we obtain \eqref{eq:CI} (note that many terms cancel out in the ratio). Combining \eqref{after_sup_2}, the first equality in \eqref{KNbetaAN} and \eqref{again2} we obtain the “converse” inequality \eqref{eq:CI_LB}.}
\end{proof}

\section{Generalized 2DOCP's arising as sub-systems}
\label{sec:CI}
Let $\Lambda$ be a disk of center $\omega \in \R^2$ and radius $T$, and let $\nn \geq 1$ be an integer, corresponding to the number of points in $\La$. In general we may have $\nn \neq |\La|$.

\renewcommand{\Vv}{\VextL}
\newcommand{\RR}{\mathfrak{R}}
\subsection{Good external potentials, good sub-systems}
\label{sec:good_good}
Let $\VextL$ be an external field as in Section \ref{sec:with_external} (we will eventually use this for the particular choice \eqref{def:Vext}). We introduce two definitions to pinpoint “good situations” for the generalized 2DOCP measure $\PnLV$ introduced in \eqref{def:PnLV}.

\cor{We start by describing conditions on the external potential $\VextL$. Ideally, we would like it to be smooth, with a derivative of order $1$, and to be generated by charges (see \eqref{def:Vext}) that are at distance $\sim 1$ from the boundary $\partial \La$ and nicely spread out around it. However, later on in “real” situations i.e. when we consider $\VextL$ as in \eqref{def:Vext} where $\bXN$ is sampled from the Gibbs measure,  we will not quite be able to guarantee that this is often enough the case. So we need to relax our constraints a bit and we end up with the following definition, which might seem somewhat artificial, but is designed to ensure that a generalized 2DOCP with a “good” external potential retains some key good properties, see Section \ref{sec:good_properties}.}
\begin{definition}[Good potential]
\label{def:GoofPotential}
We say that $\VextL$ is a “\emph{good external potential} on $\La$ with constant $\bCc$” when the following holds:
\begin{description}
	\item[1. Control up to the edge.] There exists a function $\wVextL$ on $\La$ satisfying:
\begin{equation}
\label{wVextLVextL}
\wVextL(x) = \VextL(x) \text{ if } \dist(x, \partial \La) \geq 1, \quad  \wVextL(x) \leq \VextL(x) + 100 \text{ for all } x \in \La,
\end{equation}
such that:
\begin{equation}
\label{wV_up_to_the_edge}
\left|\wVextL(x) - \wVextL(\omega)\right| \leq \bCc \times T \times \log^{3} T.
\end{equation}
\cor{This can be thought of as a substitute for the naive wish of $\VextL$ having a derivative of order $1$ on $\La$.}
\item[2. A technical decomposition.] $\VextL$ can be decomposed as the sum 
\begin{equation}
\label{Vextassum}
\VextL = \h^\nu + \RR,
\end{equation}
 where $\h^\nu$ is the logarithmic potential generated by some positive measure supported on \cor{the annulus of width $\hT := \log T$ defined by:
\begin{equation*}
\DD(\omega, T + \hT) \setminus \DD(\omega, T),
\end{equation*}}
and $\RR$ is harmonic in $\La$. We ask for the following control on the derivative of $\RR$ up to the edge:
\begin{equation}
|\RR|_{1, \La} \leq \bCc \times \log^2 T,
\end{equation}
and that moreover we control the mass of $\nu$ locally at scale $\hT = \log T$:
\begin{equation}
\label{assum_control_nuloc}
\sup_{x \in \partial \La} \nu(\DD(x,\log T)) \leq \bCc \log^2 T.
\end{equation}
\cor{These conditions can be thought of as expressing the fact that $\VextL$ should come from an underlying point configuration which is nicely spread near $\partial \La$ - but by “near” we mean here “in a $\log T$ neighbourhood” and by “nicely spread” we only mean that points cannot concentrate too much according to \eqref{assum_control_nuloc}.}
\end{description}
\end{definition}

\cor{We now describe some conditions on the configuration \emph{inside} $\La$. Ideally, we would the system within $\La$ to be globally neutral, and that all points be at distance $\sim 1$ from the boundary $\partial \La$. Again, it is not quite possible to guarantee that, so we place weaker convenient assumptions.}
\begin{definition}[Good sub-system]
\label{def:goodSS}
\label{def:EELA}
We consider the event “$\La$ is a good-system” defined as the subset of all point configurations $\bX \in \Conf$ such that:
\begin{enumerate}
\item The discrepancy $\Di(\bX, \La)$ satisfies: $|\Di(\bX, \La)| \leq T \log^2 T$.
\item $\bX$ belongs to the event $\EE_\La$ defined by:
\begin{enumerate}
    \item  There is absolutely no point in $\La$ at distance $\leq e^{- \log^2 T}$ from $\partial \La$.
    \item  There is no more than $T \log T$ points in $\La$  at distance $\leq 1$ from $\partial \La$.
\end{enumerate}
	\end{enumerate}
This event is of course $\La$-local and even $\{x \in \La, \dist(x, \partial \La) \leq 1\}$-local \cor{i.e it only depends on the configuration in a $1$-neighborhood of $\partial \La$.}
\end{definition}

\subsubsection*{The effective external potential is often good.}
Let $\La_i := \DD(\omega_i, T)$ ($1 \leq i \leq \Nn$) be the disks introduced at the end of Section \ref{sec:well-separated}. We define a common external potential for all $\La_i$'s by setting (cf. \eqref{def:Vext}):
\begin{equation}
\label{def:VextLi}
\VextL(x) := \int_{\LN \setminus \cup_{i=1}^\Nn \Lai} - \log |x-y| \dd \fN(y).
\end{equation}

\begin{proposition}[The effective external potential is often good]
\label{prop:VextIsOftenGood}
There exists a constant $\bCc$ depending only on $\beta$ (and on the choice of $\delta$ as in Assumption \ref{eq:assumption_distance}) such that the following holds.
With $\PNbeta$-probability greater than $1 - \Nn \exp\left(- \frac{\log^2 T}{\bCc} \right)$, for all $i = 1, \dots, \Nn$  the external potential $\VextL$ is a good external potential on $\La_i$ with constant $\bCc$ (in the sense of Definition \ref{def:GoofPotential}).
\end{proposition}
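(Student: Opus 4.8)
The plan is to exhibit, on a high–probability event, an explicit splitting of $\VextL$ near each $\La_i$ and to read off the two clauses of Definition~\ref{def:GoofPotential} from it. Fix $i$, write $\La=\La_i$, $\omega=\omega_i$, and let $A:=\{y\notin\La:\dist(y,\La)\le\log T\}$ be the annulus of width $\hT=\log T$ just outside $\La$. Splitting the integral in \eqref{def:VextLi} over $A$ and over $\Ext\setminus A$ gives, on $\La$, the identity $\VextL=\h^{\nu}+R$ with
\[
\nu:=\bXN\ind_{A}\ \ge\ 0,\qquad R:=\h^{\fN\ind_{\Ext\setminus A}}-\h^{\mm_0\ind_{A}},
\]
where $\h^{\mu}(x):=\int -\log|x-y|\,\dd\mu(y)$. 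The measure $\nu$ is positive and supported in the required annulus; $R$ is harmonic on $\La$ because $\fN\ind_{\Ext\setminus A}$ charges no mass within distance $\log T$ of $\La$ and $\mm_0\ind_{A}$ charges no mass in $\La$. It remains to verify: (a) $\sup_{x\in\partial\La}\nu(\DD(x,\log T))\le\bCc\log^{2}T$; (b) $|\nabla R|_{\La}\le\bCc\log^{2}T$; (c) the ``control up to the edge'' clause. The probabilistic inputs are the local laws in the form of Remark~\ref{rem:LocalLaws}, the discrepancy bound \eqref{discr_Di3}, the optimal fluctuation bounds for smooth linear statistics of \cite{MR3788208,serfaty2020gaussian,MR4063572}, and the potential–theoretic estimates of \cite[Section~2]{bauerschmidt2017local} (with the modifications announced in the introduction). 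Since $\La$ lies deep inside $\LN$ (by \eqref{eq:assumption_distance}, which gives \eqref{condi:LL} with vast room for $\La$ and a neighbourhood of it), all of these apply at every scale $\ge\rho_\beta$, in particular at scale $\log T$.

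The two ``easy'' pieces are (a) and the deterministic term of $R$. For (a): cover $\DD(x,\log T)\cap A$ by $\O(1)$ squares of side $\log T$ and apply Remark~\ref{rem:LocalLaws}; covering all of $A$ by $\O(T/\log T)$ such squares and taking a union bound also yields $\Points(\bXN,A)\le\bCc\,T\log T$. Both hold off an event of probability $\le\exp(-\log^{2}T/\bCc)$. For the deterministic term, $|\h^{\mm_0\ind_{A}}|_{1,\La}\le\sup_{x\in\La}\int_{A}|x-y|^{-1}\dd y\le\bCc\log^{2}T$ by a direct computation (the part of $A$ within $\log T$ of $x$ contributes $\O(\log T)$, the rest contributes $\O(\log^{2}T)$ by integrating in arc-length along the annulus of width $\log T$). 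Hence $|\nabla R|_{\La}\le|\nabla\h^{\fN\ind_{\Ext\setminus A}}|_{\La}+\bCc\log^{2}T$, and everything reduces to the far part.

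The heart of the matter is $|\nabla\h^{\fN\ind_{\Ext\setminus A}}|_{\La}\le\bCc\log^{2}T$. As $\h^{\fN\ind_{\Ext\setminus A}}$ is harmonic in the $\log T$-neighbourhood of $\La$, an interior gradient estimate for harmonic functions reduces this to bounding the oscillation of $\h^{\fN\ind_{\Ext\setminus A}}$ on every disk $\DD(x_{0},\log T)$, $x_{0}\in\La$, by $\bCc\log^{3}T$; equivalently, to showing $\big|\int_{\Ext\setminus A}g\,\dd\fN\big|\le\bCc\log^{3}T$ uniformly over $g(y)=\log|x_{0}'-y|-\log|x_{0}-y|$ with $x_{0},x_{0}'\in\La$, $|x_{0}-x_{0}'|\le\log T$. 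For fixed such $g$: it is smooth on a neighbourhood of $\Ext\setminus A$ (its singularities $x_{0},x_{0}'$ lie inside $\La$). Choose a cutoff $\psi$ equal to $1$ on $\Ext\setminus A$ and $0$ on $\{y:\dist(y,\La)\le\tfrac12\log T\}$, so that $g\psi$ is smooth; a short computation — using that on the transition annulus $g$ and $\nabla g$ decay across and along the annulus and that its width is $\log T$ — gives $\|\nabla(g\psi)\|_{L^{2}}\le\bCc$, which is exactly the estimate adapted from \cite[Section~2]{bauerschmidt2017local}. The smooth-statistics bounds then give $|\Fluct[g\psi]|\le\bCc\log^{1.5}T$ off an event of probability $\le\exp(-c\log^{3}T)$. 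Writing $\int_{\Ext\setminus A}g\,\dd\fN=\Fluct[g\psi]-\int_{\mathrm{transition}}g\psi\,\dd\fN-\sum_{j\neq i}\int_{\La_{j}}g\,\dd\fN$, one bounds the transition integral by $\bCc\log^{2}T$ with Lemma~\ref{lem:apriori} applied square by square (using $|g\psi|_{1,\sq}\le\bCc/\max(\log T,\text{arc-distance})$ and $\Ener,\Points\le\bCc\log^{2}T$ per square, then summing a harmonic-type series), and the ``hole'' contributions by $\bCc\log T$ with Lemma~\ref{lem:apriori} together with the second bound in \eqref{sumdij} and $T\le\tfrac1{100}ML$ (since $|g|_{1,\La_{j}}\le\bCc\log T/\dij^{2}$). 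A union bound over an $\O(T^{2}\log^{2}T)$-net of pairs $(x_{0},x_{0}')$, with the trivial Lipschitz dependence of $\int_{\Ext\setminus A}g\,\dd\fN$ on $(x_{0},x_{0}')$ at scale $1$, upgrades this to the uniform bound $\bCc\log^{3}T$ off an event of probability $\le\exp(-\log^{2}T/\bCc)$; hence $|\nabla\h^{\fN\ind_{\Ext\setminus A}}|_{\La}\le\bCc\log^{2}T$ and also $\sup_{\La}|\h^{\fN\ind_{\Ext\setminus A}}-\h^{\fN\ind_{\Ext\setminus A}}(\omega)|\le\bCc\,T\log^{2}T$.

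Clause~2 of Definition~\ref{def:GoofPotential} is then met. For clause~1: the bounds above give, for every $x\in\La$ with $\dist(x,\partial\La)\ge1$, $|\VextL(x)-\VextL(\omega)|\le|\h^{\nu}(x)|+|\h^{\nu}(\omega)|+|R(x)-R(\omega)|\le\bCc\,T\log^{2}T$ (on $\{\dist\ge1\}$ one has $1\le|x-y|\le3T$ for $y\in\supp\nu$, so $|\h^{\nu}(x)|\le\Points(\bXN,A)\log(3T)\le\bCc\,T\log^{2}T$, while $\h^{\nu}(\omega)\le0$ and $|R(x)-R(\omega)|\le2T|\nabla R|_{\La}$); moreover $\h^{\nu}(x)\ge-\Points(\bXN,A)\log(3T)\ge-\bCc\,T\log^{2}T$ on all of $\La$, whence $\VextL\ge\VextL(\omega)-\bCc\,T\log^{3}T$ throughout $\La$. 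Setting $\wVextL:=\min\!\big(\VextL,\ \VextL(\omega)+\bCc\,T\log^{3}T\big)$ produces a function that coincides with $\VextL$ where $\dist(\cdot,\partial\La)\ge1$, satisfies $\wVextL\le\VextL\le\VextL+100$, and obeys $|\wVextL(x)-\wVextL(\omega)|\le\bCc\,T\log^{3}T$ for all $x\in\La$. All exceptional events above have probability $\le\exp(-\log^{2}T/\bCc)$, so a union bound over $i=1,\dots,\Nn$ gives the stated probability $1-\Nn\exp(-\log^{2}T/\bCc)$. The main obstacle is the third paragraph — controlling the far-field potential uniformly up to $\partial\La_i$: one is forced to integrate a non-smooth (dipole-type) test function against $\fN$ and to exploit simultaneously the width $\log T$ of the buffer annulus and the near-neutrality of $\fN$, which is precisely where the fine fluctuation estimates and the analysis of \cite[Section~2]{bauerschmidt2017local} are needed.
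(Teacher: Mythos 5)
Your decomposition $\VextL=\h^{\nu}+R$ with $\nu:=\bXN\ind_{A}$ is in substance the same as the paper's (which inserts a smooth radial cutoff $\chi_i$, a cosmetic difference, and observes via Newton's theorem that the radial background piece generated on the buffer annulus is actually constant on $\La$, so it contributes nothing to $\nabla R$). Your handling of the local mass bound \eqref{assum_control_nuloc}, the ``hole'' contributions from the other $\La_j$'s via \eqref{sumdij}, and the final construction of $\wVextL$ are all sound. The genuine divergence is in the route to $|R|_{1,\La}\leq\bCc\log^2 T$: you pass directly to an oscillation bound on $\log T$-disks via the cut-off dipole $g\psi$, $g(y)=\log|x_0'-y|-\log|x_0-y|$, and the interior gradient estimate for harmonic functions, whereas the paper bounds $\nabla R_i$ only \emph{at the center} via a dyadic family of \emph{radially symmetric} test functions (Lemma~\ref{lem:nablaRiwi}, through Proposition~\ref{prop:bound_fluct_radial}), controls the Hessian $\Dd^2 R_i$ up to $\partial\La_i$ separately with the rough a priori bound of Lemma~\ref{lem:apriori} (Lemma~\ref{lem:2nablaRiwi}), and integrates. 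Your route would be slicker.

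However, the step ``the smooth-statistics bounds then give $|\Fluct[g\psi]|\le\bCc\log^{1.5}T$ off an event of probability $\le\exp(-c\log^{3}T)$'' does not follow from the tools at hand. Your $H^1$ computation correctly suggests $\Var[\Fluct[g\psi]]=\O(1)$, but a Gaussian-type tail at deviation $\log^{1.5}T$ requires exponential moments up to $s\sim\log^{1.5}T$, and the only bound of this form in the paper, Proposition~\ref{prop:bound_fluct_radial}, (a) applies only to radially symmetric test functions (a dipole is the opposite of radial) and (b) carries an error $\log\EN\bigl[\exp\bigl(s|\varphi|_{\2}\Cc_\beta\EnerPoints(\bXN,\AA)\bigr)\bigr]$ with $\AA$ a neighborhood of $\supp\Delta\varphi$. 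For $\varphi=g\psi$, that support is the whole transition annulus, of area $\asymp T\log T$, while $|g\psi|_{\2}\asymp(\log T)^{-2}$; the error contribution is then $\Omega(T\cdot(\log T)^{-1/2})$ rather than $\O(1)$, and the Chernoff bound collapses, yielding nothing at deviations $\ll T/\log T$. The general CLT error terms of \cite{MR3788208,serfaty2020gaussian,MR4063572} suffer from the same problem, because $g\psi$ is a \emph{multiscale} function supported on all of $\LN$, including its edge. The purpose of the paper's dyadic decomposition is precisely to avoid this: each piece $\phi_k$ is radially symmetric with $|\phi_k|_{\2}\cdot|\AA_k'|=\O(2^{-k})\to 0$, so the per-scale error term is negligible, and the tails sum geometrically; the far field (distance $\gtrsim\delta\sqrt N$) is handled by Lemma~\ref{lem:apriori} together with the global law. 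To repair your third paragraph you would need to decompose $g\psi$ dyadically in the radial variable around $\omega_i$ and then remove the angular dependence (e.g.~by a further Fourier-mode or one-sided argument) so that Proposition~\ref{prop:bound_fluct_radial} applies piece by piece --- at which point the computation becomes essentially the paper's, although the idea of estimating the oscillation directly rather than $\nabla$-at-center plus $\Dd^2$-up-to-the-edge is still a pleasant streamlining.
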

The proof of Proposition \ref{prop:VextIsOftenGood} is elementary but cumbersome. We postpone it to Section \ref{sec:proofVext}. Since the constant $\bCc$ given by Proposition \ref{prop:VextIsOftenGood} depends only on $\beta, \delta$, let us keep in mind that if we say that some constant depends on the “good external potential constant” $\bCc$ then in fact it itself only depends on $\beta, \delta$.

\subsubsection*{The sub-systems are often good}
\begin{lemma}
\label{lemma:oftenGood}
For $\Cc_\beta$ large enough, if $T \geq \Cc_\beta$, then with $\PNbeta$-probability $\geq 1 - \Nn \exp\left(- \frac{\log^2 T}{\Cc_\beta} \right)$, for all $i = 1, \dots, \Nn$, the conditions of Definition \ref{def:goodSS} (expressing the fact that “$\La_i$ is a good sub-system”) are satisfied.
\end{lemma}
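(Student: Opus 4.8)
The plan is to verify the three conditions of Definition~\ref{def:goodSS} for each disk $\La_i = \DD(\omega_i, T)$ separately, in each case on an event of $\PNbeta$-probability at least $1 - \exp(-\log^2 T/\Cc_\beta)$, and then to take a union bound over $i \in \{1,\dots,\Nn\}$. Throughout, one uses that the $\La_i$ — and any box or disk in a bounded neighbourhood of them — sit at distance $\gg N^{1/4}$ from $\partial\LN$, which follows from the \emph{bulk} assumption \eqref{eq:assumption_distance} together with $T$ being much smaller than $N^{1/4}$ for the eventual choice of parameters; hence the local laws of Proposition~\ref{prop:local_laws} (in their disk version, cf.\ Remark~\ref{rem:LocalLaws}) and Wegner's estimate (Lemma~\ref{lem:cluster_bounds}) are available on all the scales that appear below. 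The hypothesis $T \geq \Cc_\beta$ is used only to absorb $\beta$-dependent constants into powers of $\log T$.

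For the discrepancy bound (condition~(1)), apply \eqref{discr_Di3} with $\ell = T$ to $\La_i$: this gives $\log\EN[\exp(\frac{\beta}{\Cc}\,\Di^2(\bXN,\La_i)/T^2)] \leq \CLL\beta$, so a Chernoff bound yields $\PNbeta(|\Di(\bXN,\La_i)| \geq T\log^2 T) \leq \exp(-\frac{\beta}{\Cc}\log^4 T + \CLL\beta) \leq \exp(-\log^2 T/\Cc_\beta)$ once $T \geq \Cc_\beta$. For the exclusion condition~(2)(a), note that $\{x \in \La_i : \dist(x,\partial\La_i) \leq e^{-\log^2 T}\}$ is an annulus of circumference $\O(T)$ and width $e^{-\log^2 T}$, hence is covered by $\O(T e^{\log^2 T})$ disks of radius $e^{-\log^2 T}$; by Lemma~\ref{lem:cluster_bounds} each such disk is non-empty with probability at most $\Cc_\beta e^{-2\log^2 T}$, and a union bound shows the annulus is empty except with probability $\O(\Cc_\beta T e^{-\log^2 T}) \leq \exp(-\log^2 T/\Cc_\beta)$.

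Condition~(2)(b) is the most laborious, because the unit-width strip $\{x \in \La_i : \dist(x,\partial\La_i) \leq 1\}$ is a thin annulus to which the local laws do not apply directly (Remark~\ref{rem:LocalLaws}); I would cover it by $\O(T/\log T)$ boxes of sidelength $\log T$ (allowed to protrude slightly from $\La_i$) and apply \eqref{eq:number_of_points_LL} with $\ell = \log T$ to each. For $\Cc_\beta$ large enough, each box then contains at most $\Cc_\beta\log^2 T$ points except with probability $\leq \exp(-\log^2 T/\Cc_\beta)$, so summing over the boxes the strip carries $\O(T\log T)$ points except on an event of the same probability, which is condition~(2)(b). (If the constant in front of $T\log T$ mattered, one could instead feed the $k$-point correlation bounds of \cite{Thoma:2022aa} into a Poissonian concentration estimate, but that is not needed here.)

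Finally, intersecting the three events above and adding up the three failure probabilities shows that $\La_i$ is a good sub-system except on an event of probability $\leq \exp(-\log^2 T/\Cc_\beta)$, and a union bound over $i = 1,\dots,\Nn$ gives the stated probability $1 - \Nn\exp(-\log^2 T/\Cc_\beta)$. I expect the only genuine subtlety to be the one already flagged: the local laws give nothing useful on the thin annuli and strips of conditions~(2)(a)--(b), so each such region must first be cut into disks or boxes whose scale is chosen to balance the number of pieces against the deviation one can afford per piece — routine, but exactly the ``elementary but cumbersome'' bookkeeping that such estimates require.
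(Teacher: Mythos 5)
Your proof is correct and follows the same three-step structure as the paper: the discrepancy bound from \eqref{discr_Di3} at scale $T$, the thin-annulus exclusion from Wegner's bound (Lemma~\ref{lem:cluster_bounds}), the strip count from local laws with a covering argument, and a final union bound over $i$ — plus the same preliminary observation that the bulk assumption keeps everything far from $\partial\LN$ so the local laws apply. The only place where you diverge from the paper is condition~(2)(b). The paper covers the unit-width strip by $\O(T)$ squares of \emph{fixed} (unit or $\rho_\beta$) sidelength, and then — this is the point one might miss — uses the \emph{quadratic} moment bound \eqref{discr_Di3} at that fixed scale, which gives a Gaussian tail $\exp(-\Cc_\beta\log^2 T)$ for the event of more than $\log T/(100\Cc)$ points in a unit square; using \eqref{eq:number_of_points_LL} at unit scale, as one might be tempted to, would only give $\exp(-\Cc_\beta\log T)$, which is too weak. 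You instead cover by boxes of sidelength $\log T$ and apply \eqref{eq:number_of_points_LL} at that scale, which also produces the needed $\exp(-\Cc_\beta\log^2 T)$ tail; this is a perfectly valid alternative, and in fact the cleaner one, since it keeps the covering scale and the moment bound's $\ell$ aligned. The price, which you already noticed, is that your strip count comes out as $\Cc_\beta T\log T$ rather than the definition's $T\log T$; this constant mismatch is harmless for the downstream uses of $\EE_\La$ (the factor $T\log T$ only enters a crude $\O(T^2)$ bound in the proof of Proposition~\ref{prop:global_law_SUBSYS}), and the paper's fixed-scale covering dodges it anyway.
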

\begin{proof}
Using the “discrepancy” part \eqref{discr_Di3} of the local laws (Proposition \ref{prop:local_laws}) we see that if $T$ is large enough (depending on $\beta$) then for any fixed $i$ we have:
\begin{equation*}
\PNbeta\left( |\Di(\bXN, \Lai)| \geq T \log^{\cor{2}} T \right) \leq \exp\left( - \frac{\log^{\cor{4}} T}{\Cc_\beta} \right),
\end{equation*}
\cor{which yields the first item in Definition \ref{def:goodSS}.} Checking the conditions \cor{defining the event} $\EE_\La$ requires more care.

\begin{claim}[The conditions of $\EE_\La$ are often met]
\label{claim:EELA_often}
For any fixed $i$ we have:
\begin{equation*}
\PNbeta \left( \EE_{\La_i} \right) \geq 1 - \exp\left(- \frac{\log^2 T}{\Cc_\beta} \right).
\end{equation*}
\end{claim}
\begin{proof}[Proof of Claim \ref{claim:EELA_often}]
\cor{Recall that for $\bX$ to be in $\EE_{\La_i}$, we require two things: absolutely no point in a $\leq e^{- \log^2 T}$-neighborhood of $\partial \La_i$, and no more than $T \log T$ points in a $1$-neighborhood of $\partial \La_i$.}

From the local laws, it is easy to see that the second condition is often satisfied. Indeed we can cover the $1$-neighborhood of $\partial \La_i$ by $\O(T)$ squares of sidelength $1$, each of which contains at most $\frac{\log T}{100 \Cc}$ points with probability $\geq 1 - \exp\left(- \frac{\log^2 T}{\Cc_\beta} \right)$ (for $T$ large enough). We conclude with a union bound \cor{on those $\O(T) = e^{\log T + \O(1)}$ squares}.

The first condition is more subtle. We rely on the “one-particle cluster” bound from \cite{Thoma:2022aa} recalled in Lemma \ref{lem:cluster_bounds}. 
\begin{itemize}
	\item First, cover the region $\Gamma_i := \{z \in \La, \dist(z, \partial \La_i) \leq e^{- \log^2 T}\}$ by $\O(T e^{\log^2 T})$ disks of radius $r = 10 e^{- \log^2 T}$.
	\item For each disk, we know from \eqref{eq:cluster_bound} that the probability of it being occupied by at least one particle is smaller than $\Cc_\beta r^2$.
	\item Then an union bound shows that the probability of at least one point falling anywhere in $\Gamma_i$ is bounded by $\Cc_\beta T e^{- \log^2 T}$, which concludes the proof of the claim.
\end{itemize}
\end{proof}
\cor{We have thus proven that the probability of $\bX$ to be in $\EE_{\La_i}$ for a given $i$ is larger than $1 - \exp\left(- \frac{\log^2 T}{\Cc_\beta} \right)$.} Finally, we use a union bound over $\Nn$ such events to handle all the $\La_i$'s at once, with probability larger than $1 - \Nn \exp\left(- \frac{\log^2 T}{\Cc_\beta} \right)$.
\end{proof}

Until the end of Section \ref{sec:CI}, we consider a “good external potential” $\VextL"$ with constant $\bCc$ as in Definition \ref{def:GoofPotential} and we assume that $|\nn - |\La|| \leq T \log T$. 

\subsection{2DOCP's with non-uniform neutralizing background}
\label{sec:PnLV}
Instead of adding $\VextL$, let us consider 2DOCP's in which the “neutralizing background” is no longer the uniform one, but a perturbation thereof.

\begin{itemize}
		\item Let $\mm$ be a (non-negative) measure on $\La$ and let $\zeta$ be some non-negative function on $\R^2$. Assume that $\mm$ is supported in $\La$ and that $\zeta$ vanishes on the support of $\mm$. 
		
		\item Assume that the measure $\mm$ can be written as the sum of a measure which has a bounded density with respect to the Lebesgue measure $\mm_0$ on $\La$ and of a singular measure which has a bounded density with respect to the arc-length measure $\dd s$ on $\partial \La$. 

		This assumption will be justified later in Section \ref{sec:applications_to_generalized}. In particular, it implies that the logarithmic self-interaction energy of $\mm$ given by $\iint - \log|x-y| \dd \mm(x) \dd \mm(y)$ is finite.

		\item Let $\Xn$ be a $\nn$-tuple of points in $\La$ and $\bXn$ be the associated point configuration, we let $\FL(\bXn, \mm)$ be the logarithmic interaction energy \emph{computed with respect to} $\mm$, namely:
		\begin{equation}
		\label{def:FLnuW}
			\FL(\bXn, \mm) := \hal \iint_{(x,y) \in \LN \times \LN, x \neq y} - \log |x-y| \dd (\bX_\nn - \mm)(x) \dd (\bX_\nn - \mm)(y).
		\end{equation}
	\end{itemize}

We then define a probability density $\PnL(\cdot, \mm, \zeta)$ on the space of $\nn$-tuples of points in $\Lambda$ by setting:
\begin{equation}
\label{def:PnLV2}
\dd \PnL(\Xn, \mm, \zeta) := \frac{\exp\left( - \beta \left( \FL(\bXn, \mm) + \nn \sum_{i=1}^\nn \zeta(x_i) \right) \right)}{\KnL(\mm, \zeta)} \dd \Xn,
\end{equation}
where $\KnL(\mm, \zeta)$ is the partition function, namely the normalizing constant:
\begin{equation}
\label{def:KnLV2}
\KnLV(\mm, \zeta) := \int_{\Lambda^\nn} \exp\left( - \beta \left( \FL(\bXn, \mm) + \nn \sum_{i=1}^\nn \zeta(x_i) \right) \right) \dd \Xn.
\end{equation}

\subsubsection*{\cor{Connection between the two notions of generalized 2DOCP's.}}
We have just introduced two ways to extend the definition of a 2DOCP:
\begin{enumerate}
	\item By adding the influence of an external potential $\VextL$ to $\FL(\bXn)$, as in Section \ref{sec:with_external}. 
	\item By changing the background measure $\mm$ and considering $\FL(\bXn, \mm)$ (and adding $\zeta$) as in Section \ref{sec:PnLV}.
\end{enumerate}
In fact, these two points of view are, to a large extent, equivalent. Indeed, there is a way to pass from an external potential $\VextL$ to the corresponding background measure, which is often called the associated “equilibrium measure”, and vice-versa. In other words, there is a correspondence: 
\begin{equation}
\label{informalcorrespondance}
\text{ $\VextL$ } \leftrightarrow \text{  $\mm$ and $\zeta$, such that $\PnLV = \PnL(\cdot, \mm, \zeta)$}.
\end{equation}
  Moreover, for external potentials that are “nice” enough, the determination of the corresponding equilibrium/background measure is \corT{well understood. We give here a brief explanation, and refer e.g. to \cite[Chap. 2]{serfaty2024lectures}, or to the exhaustive reference \cite{saff2013logarithmic}, for more details. The starting point is to realize that the energy $\FL(\bXn) = \FL(\bXn, \mm_0)$ defined in \eqref{def:FL} can be equivalently written as:
  \begin{equation*}
\FL(\bXn, \mm_0) = \hal \iint_{(x,y) \in \La \times \La, x \neq y} - \log|x-y| \dd \bXn(x) \dd\bXn(y) + \int_{\La} \VV(x) \dd \fL(x),
  \end{equation*} 
  where $\VV(x) := \int_{\La} \log|x-y| \dd \mm_0(y)$ is the logarithmic potential generated by the Lebesgue measure $\mm_0$ on $\La$ (a computation shows that $\VV$ is quadratic on $\La$) and $\fL := \bXn - \mm_0$. 
   To be consistent with our convention, if $x \notin \La$, we need to set $\VV(x) = + \infty$. We think of $\VV$ as a “reference” potential, to which we may add perturbations $\VextL$. So for the moment, we have the correspondence 
   $$
   \VextL = 0 \leftrightarrow \mm = \mm_0, \zeta = 0.
   $$
   Adding $\int_{\La} \VextL(x) \dd \fL(x)$ to $\FL(\bXn)$ as in Section \ref{sec:with_external}, the energy becomes
\begin{equation}
\label{energybecomes}
\hal \iint_{(x,y) \in \La \times \La, x \neq y} - \log|x-y| \dd \bXn(x) \dd\bXn(y) + \int_{\La} \left(\VV + \VextL\right)(x) \dd \fL(x).
\end{equation}
 \newcommand{\mueq}{\mu_{\mathrm{eq}}}
We can now use a classical result, going back essentially to \cite{frostman1935potentiel}: let $\Phi : \R^2 \to \R \cup \{\infty\}$ be a lower semi-continuous potential growing fast enough at infinity, there is a unique probability measure $\mueq$ that minimizes the functional
$$
\mu \mapsto \hal \iint_{\R^2} - \log|x-y| \dd \mu(x) \dd \mu(y) + \int_{\R^2} \Phi(x) \dd \mu(x),
$$
the measure $\mueq$ is compactly supported on $\{ \Phi < + \infty \}$, with a density proportional to $\Delta \Phi$ on its support (provided $\Phi$ is smooth enough), and there is a function $\zeta_\Phi$ which vanishes on the support of $\mueq$, is $\geq 0$ outside, such that the following identity holds for all probability measures $\mu$:
\begin{multline}
\label{mueqrewrite}
\hal \iint_{\R^2} - \log|x-y| \dd \mu(x) \dd \mu(y) + \int_{\R^2} \Phi(x) \dd \mu(x) \\ = \hal \iint_{\R^2} - \log|x-y| \dd \left(\mu - \mueq \right)(x) \dd \left(\mu-\mueq \right)(y) + \int_{\R^2} \zeta_\Phi(x) \dd \mu(x) + \mathrm{const.}
\end{multline}
Note that the left-hand side is of the form “energy of $\mu$, plus external potential” while the right-hand side is of the form “energy of $\mu - \mueq$, plus effect of $\zeta$”. With some rescaling and changing of variables, we can use the same techniques to find a measure $\mm$ (which plays the role of $\mueq$) and a function $\zeta$, both associated to $\VV + \VextL$ (which plays the role of $\Phi$) such that the following identity holds (cf. \eqref{mueqrewrite}, keeping in mind that additive constants are irrelevant when defining energies):
\begin{multline*}
\hal \iint_{(x,y) \in \La \times \La, x \neq y} - \log|x-y| \dd \bXn(x) \dd\bXn(y) + \int_{\La} \left(\VV + \VextL\right)(x) \dd \fL(x) \\
=  \FL(\bXn, \mm) + \nn \int_\La \zeta(x) \dd \bXn(x) + \mathrm{const.}
\end{multline*}
which connects the formulation “$\FL(\cdot, \mm_0)$ and $\VextL$” to another one of the type “$\FL(\cdot, \mm)$ and $\zeta$”. 
}

\cor{In this paper, there is an issue due to the difficulty in controlling the behavior of $\VextL$ near $\partial \La$, which might give rise to singular equilibrium measures. We postpone the necessary discussion (inspired by similar concerns in \cite{bauerschmidt2017local}) to Section \ref{sec:effect-perturb-eq}. An instance of the potential / equilibrium measure correspondence in this context is stated in Proposition \ref{prop:correspondence} below.}

\subsubsection*{Electric formalism for $\FL(\bXn, \mm)$}
We extend here some of the formalism from Sections \ref{sec:ElectricFields} and \ref{sec:logEnergy} in a fairly straightforward way. Define the “true electric potential/field” associated to a point configuration $\bXn$ and the background measure $\mm$ as (cf. Definition \ref{def:true}):
\begin{equation}
\label{true_mm}
\h^{\bXn, \mm} := - \log \ast (\bXn - \mm), \quad \nabla \h^{\bXn, \mm} = - \nabla \log \ast (\bXn- \mm),
\end{equation}  
their truncated versions being defined as in Definition \ref{def:trunc}. Then we have the following identity, which extends \eqref{secondformulation}:

\begin{lemma} 
\label{lem:extendSecond}
Assume that the total mass of $\mm$ is equal to $\nn$. Let $\{\eta(x), x \in \bXn \}$ be a truncation vector with $\eta(x) \leq \rr(x)$ (the nearest-neighbor distance introduced in \eqref{def:nn_distance}) for all $x \in \bXn$. We have:
\begin{equation}
\label{extendSecondFormulation}
\FL(\bXn, \mm) = \hal \left( \frac{1}{2\pi} \int_{\R^2} |\nabla \h^{\bXn, \mm}_{\veta}|^2 + \sum_{x \in \bXn} \log \eta(x)\right)  - \sum_{x \in \bXn} \int_{\DD(x,\eta(x))} \ff_{\eta(x)}(t-x) \dd \mm(t),
\end{equation}
provided the disks $\DD(x, \eta(x))$ do not intersect $\partial \La$.
\end{lemma}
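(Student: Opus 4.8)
The plan is to repeat, \emph{mutatis mutandis}, the derivation of the non-asymptotic identity \eqref{secondformulation}, with the uniform background $\mm_0$ systematically replaced by $\mm$; the hypotheses of the lemma are precisely what makes that argument go through verbatim. First I would introduce the ``smeared'' configuration $\bXn^{\veta} := \sum_{x \in \bXn} \delta_x^{(\eta(x))}$, where, as in Definition \ref{def:trunc}, $\delta_x^{(\eta)}$ denotes the uniform probability measure on the circle $\partial \DD(x, \eta)$. The identity $-\log \ast \delta_x^{(\eta)} = -\log|\cdot-x| - \ff_\eta(\cdot-x)$ on $\R^2$ (Newton's theorem for the logarithmic kernel) shows that subtracting $\sum_x \ff_{\eta(x)}(\cdot-x)$ from $\h^{\bXn,\mm} = -\log \ast (\bXn - \mm)$ is the same as smearing out each atom, so that
\begin{equation*}
\h^{\bXn,\mm}_{\veta} = -\log \ast \left( \bXn^{\veta} - \mm \right), \qquad -\Delta \h^{\bXn,\mm}_{\veta} = 2\pi \left( \bXn^{\veta} - \mm \right) \ \text{ on } \R^2 .
\end{equation*}
The assumptions $\eta(x) \leq \rr(x)$ and ``$\DD(x,\eta(x)) \cap \partial\La = \emptyset$'' ensure that the disks $\DD(x,\eta(x))$ are pairwise disjoint and contained in $\La$, hence $\bXn^{\veta} - \mm$ is a compactly supported signed measure carried by $\La$; and since the $|\cdot-x|^{-1}$ singularities of $\nabla \h^{\bXn,\mm}$ at the points of $\bXn$ are exactly cancelled by those of $\sum_x \nabla\ff_{\eta(x)}(\cdot-x)$, the field $\nabla\h^{\bXn,\mm}_{\veta}$ is locally bounded near each former point charge; and because the total mass of $\mm$ is $\nn$ (the number of atoms of $\bXn$), the measure $\bXn^{\veta} - \mm$ is neutral, so $\h^{\bXn,\mm}_{\veta} = \O(|x|^{-1})$ and $\nabla\h^{\bXn,\mm}_{\veta} = \O(|x|^{-2})$ at infinity, and in particular $\nabla\h^{\bXn,\mm}_{\veta} \in \mathrm{L}^2$.

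Next I would run the ``energy equals double integral'' step exactly as for \eqref{secondformulation}: integrating by parts (the boundary term at infinity vanishing by neutrality, the interior regularity issue handled by the standard mollification argument of \cite[Lemma 2.2]{armstrong2019local}, which is where the finiteness of $\iint -\log|x-y| \dd\mm(x)\dd\mm(y)$ is used) one gets
\begin{equation*}
\frac{1}{2\pi} \int_{\R^2} \left|\nabla\h^{\bXn,\mm}_{\veta}\right|^2 = \iint_{\R^2 \times \R^2} -\log|x-y| \dd\left(\bXn^{\veta}-\mm\right)(x) \dd\left(\bXn^{\veta}-\mm\right)(y),
\end{equation*}
all the double integrals being absolutely convergent since $\bXn^{\veta}$ is atomless. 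It then remains to expand this against the expansion of $2\FN(\bXn,\mm) = \iint_{x\neq y} -\log|x-y| \dd(\bXn-\mm)(x)\dd(\bXn-\mm)(y)$ (see \eqref{def:FLnuW}) and compare the two term by term. Using again that the potential of $\delta_x^{(\eta)}$ at $y$ equals $-\log|y-x| - \ff_\eta(y-x)$ and that the self-energy of $\delta_x^{(\eta)}$ is $-\log\eta$, and using that $\mm$ is atomless so the $x\neq y$ restriction is irrelevant for the point--background and background--background contributions: the background--background terms cancel; the point--point contribution of the smeared integral is $\sum_{x\neq x'} -\log|x-x'| + \sum_x \log\eta(x)$, the disjointness of the disks keeping its off-diagonal part equal to the one appearing in $2\FN(\bXn,\mm)$; and the point--background contributions differ by $\sum_x \int \ff_{\eta(x)}(y-x) \dd\mm(y) = \sum_x \int_{\DD(x,\eta(x))} \ff_{\eta(x)}(y-x)\dd\mm(y)$, since $\ff_{\eta(x)}(\cdot-x)$ is supported in $\DD(x,\eta(x))$. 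Subtracting, everything cancels except
\begin{equation*}
2\FN(\bXn,\mm) - \frac{1}{2\pi}\int_{\R^2}\left|\nabla\h^{\bXn,\mm}_{\veta}\right|^2 = \sum_{x\in\bXn}\log\eta(x) - 2\sum_{x\in\bXn}\int_{\DD(x,\eta(x))}\ff_{\eta(x)}(y-x)\dd\mm(y),
\end{equation*}
which is \eqref{extendSecondFormulation} after dividing by $2$.

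I do not expect a genuine obstacle here: the computation is an exact, Taylor-free rearrangement, identical in structure to the one recorded for \eqref{secondformulation} in \cite[Lemma 2.2]{armstrong2019local} or \cite{MR3788208}. The two points deserving a line of care are: first, the justification that $\nabla\h^{\bXn,\mm}_{\veta} \in \mathrm{L}^2$ and that the integration by parts is licit --- both flowing from the neutrality $\mm(\La)=\nn$ exactly as in the uniform-background case, the finiteness of the logarithmic self-energy of $\mm$ being what lets one pass to the limit in the mollification; and second, keeping track of the atomless-ness of $\mm$ so that the diagonal may be discarded in all but the point--point term of the various double integrals.
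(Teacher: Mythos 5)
Your proof is correct and is in essence the argument of \cite[Lemma 2.2]{armstrong2019local} — which the paper's proof merely cites in one sentence — carried out in full with $\mm_0$ replaced by $\mm$: smear the atoms, identify $\h^{\bXn,\mm}_{\veta}$ with $-\log\ast(\bXn^{\veta}-\mm)$, integrate by parts using neutrality, and compare term by term against \eqref{def:FLnuW}.

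One small reattribution worth noting: the pairwise disjointness of the truncation disks already follows from $\eta(x)\le\rr(x)$ alone (four times $\rr$ is the nearest-neighbour distance), so the hypothesis that $\DD(x,\eta(x))$ avoid $\partial\La$ is not needed for that. Its actual role — the one the paper's one-line proof highlights — is to keep the singular boundary component of $\mm = \nn\muW$ (cf. Claim~\ref{claim:Effect_KDeta}) out of the truncation regions, so that each $\DD(x,\eta(x))$ sees only the absolutely continuous part of $\mm$; this is what makes the bounded-density computation of \cite[Lemma 2.2]{armstrong2019local} apply verbatim and what ultimately ensures that the correction terms $\int_{\DD(x,\eta(x))}\ff_{\eta(x)}(t-x)\,\dd\mm(t)$ are $\O(\eta(x)^2)$, as in the uniform case. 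Your argument uses this only implicitly, through the standing assumption that $\iint -\log|x-y|\,\dd\mm(x)\,\dd\mm(y)$ is finite, which is enough for the identity but slightly obscures why the hypothesis is stated as it is.
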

\begin{proof}[Proof of Lemma \ref{lem:extendSecond}]
Since we place ourselves away from the singular part of $\mm$, the proof works exactly as when $\mm$ has a bounded density, see \cite[Lemma 2.2]{armstrong2019local}.
\end{proof}
We also extend the notation $\Ener$ from \eqref{def_Ener} by setting: 
\begin{equation}
\label{eq:EnermmOmega}
\Ener(\bXn, \mm, \Omega) := \int_{\Omega} |\nabla \h_{\veta}^{\bXn, \mm}|^2.
\end{equation}

\subsection{Good properties of sub-systems with good external potentials}
\label{sec:good_properties}
\subsubsection*{\cor{The background measure associated to a “good” external potential}}
\begin{proposition}
\label{prop:correspondence}
If $\VextL$ is a good external potential \cor{in the sense of Definition \ref{def:GoofPotential}}, there exists a probability measure $\muW$ on $\La$ and a function $\zetaW$ such that (cf. \eqref{informalcorrespondance}):
\begin{equation}
\label{identity_entre_points_de_vue}
\PnLV(\cdot) = \PnL\left(\cdot , \nn \muW, \zetaW \right).
\end{equation}
The key features of $\muW$ are that:
\begin{enumerate}
	\item It might be singular on $\partial \La$.
	\item It might have “holes” near $\partial \La$ (and $\zetaW > 0$ on these holes).
	\item It has a constant, positive density $\frac{1}{\nn}$ as soon as one looks at distance $\geq \bCc' \log T$ from $\partial \La$ (with $\bCc'$ depending on the “good external potential” constant $\bCc$). On that region we have $\zetaW \equiv 0$.
\end{enumerate}
\end{proposition}
We refer to Section \ref{sec:applications_to_generalized} for a definition and precise study of $\muW$ and to Section \ref{sec:applications_to_generalized} for a proof of \eqref{identity_entre_points_de_vue}. 

For good external potentials, it is thus equivalent to consider the “2DOCP with external background” $\PnLV$ (as in \eqref{def:PnLV}) or the “2DOCP with background measure” $\PnL(\cdot, \mm, \zeta)$ (with $\mm := \nn \muW$) and we will simply write $\PnLL$ for the corresponding Gibbs measure (and $\EnLL$ for expectations under $\PnLL$). 

\medskip

We now compare the properties of this “generalized 2DOCP” to the ones of the original Gibbs measure $\PNbeta$. The “good properties” are easier to obtain away from the boundary of $\La$ and for simplicity we will often work in “the bulk” $\Lab$ defined as:
\begin{equation*}
\Lab := \DD(\omega, T/2) \subset \DD(\omega, T) = \La.
\end{equation*}
\renewcommand{\Vv}{\mathrm{V}}

Recall that the event $\EE_\La$ was introduced in Definition \ref{def:goodSS}. \cor{In the following (Proposition \ref{prop:global_law_SUBSYS}, Proposition \ref{prop:local_law_SUBSYS}, and Section \ref{conseqSubSys}) we are always working \emph{conditionally on $\EE_\La$.} In particular, by the first item defining $\EE_\La$, we know that the number $\nn$ of points is “close” to the size $|\La| = \pi T^2$ of $\La$, in the sense that:
\begin{equation*}
\left| \nn - \pi T^2 \right| \leq T \log^2 T.
\end{equation*}
For this reason, there is no dependency in $\nn$ in the statements, as everything can be expressed in terms of~$T$. 
}

\subsubsection*{Global law}
\newcommand{\Cglob}{\hyperref[prop:global_law_SUBSYS]{\mathtt{C}_{\mathtt{Global}}}} 
\begin{proposition}[Global law for sub-systems with good external potential]
\label{prop:global_law_SUBSYS}
There exists a constant $\Cglob$ depending only on $\beta$ and the “good potential” constant $\bCc$ such that:
\begin{equation}
\label{global_law_SS_Statement}
\log \EnLL \left[  \exp\left( \frac{\beta}{2} \FL(\bX, \nn \muW) \right) \Big| \EE_\La \right] \leq \Cglob T^2 \log^5 T. 
\end{equation}
\end{proposition}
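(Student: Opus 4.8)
The plan is to run the classical partition-function-comparison argument behind the global law \eqref{eq:global_law_FN} of the full system, now for the generalized Gibbs measure $\PnLL=\PnL(\cdot,\nn\muW,\zetaW)$ and with the extra conditioning on $\EE_\La$. Spelling out $\PnLL$ from \eqref{def:PnLV2}--\eqref{def:KnLV2}, the partition function $\KnLL$ cancels between numerator and denominator of the conditional expectation, and since $\zetaW\geq 0$ one gets
\begin{equation*}
\EnLL\left[\exp\left(\tfrac{\beta}{2}\FL(\bX,\nn\muW)\right)\Big|\,\EE_\La\right]\;\leq\;\frac{\displaystyle\int_{\La^\nn}\ind_{\EE_\La}\exp\left(-\tfrac{\beta}{2}\FL(\bXn,\nn\muW)\right)\dd\Xn}{\displaystyle\int_{\La^\nn}\ind_{\EE_\La}\exp\left(-\beta\FL(\bXn,\nn\muW)-\beta\nn\sum_i\zetaW(x_i)\right)\dd\Xn}.
\end{equation*}
It then suffices to bound the numerator from above and the denominator from below, each with exponential rate $\O(T^2\log^5 T)$, so there is ample room for polylogarithmic losses coming from $\partial\La$.

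For the numerator I would first prove the \emph{deterministic} bound $\FL(\bXn,\nn\muW)\geq-\Cc_\beta T^2$ valid for every $\bXn\in\EE_\La$. Use the electric identity \eqref{extendSecondFormulation} with the modified truncation $\eta(x):=\min\bigl(\rr(x),\tfrac12\dist(x,\partial\La)\bigr)$, so that each disk $\DD(x,\eta(x))$ stays strictly inside $\La$ and thus avoids the (at most arclength-singular) part of $\nn\muW$ carried by $\partial\La$. Since $\int_{\R^2}|\nabla\h^{\bXn,\nn\muW}_{\veta}|^2\geq 0$, and since lowering the radii from $\rr(x)$ to $\eta(x)\leq\rr(x)$ only \emph{increases} the ``renormalized energy'' $\tfrac12\bigl(\tfrac{1}{2\pi}\int|\nabla\h_{\veta}|^2+\sum_x\log\eta(x)\bigr)$, the lower bound $\geq-\Cc\nn$ for the latter — a near-diagonal statement, proved in \cite[Lemma~3.7 and Appendix~B]{armstrong2019local}, which is insensitive to the background being a perturbation of the uniform one — still applies. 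It remains to absorb the self-corrections $\int_{\DD(x,\eta(x))}\ff_{\eta(x)}(t-x)\dd(\nn\muW)(t)$, which are $\O(1)$ per point (up to a polylogarithmic factor near $\partial\La$), and the truncation loss $\tfrac12\max\bigl(0,\log\rr(x)-\log\eta(x)\bigr)$, which is nonzero only at the $\leq T\log T$ points lying within distance $1$ of $\partial\La$ and is there at most $\tfrac12\log\bigl(\tfrac14\big/\tfrac12 e^{-\log^2 T}\bigr)=\O(\log^2 T)$ per point, using the two defining conditions of $\EE_\La$ in Definition \ref{def:goodSS}; these amount to $\O(T\log^3 T)=o(T^2)$. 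Hence the numerator is $\leq|\La|^\nn e^{\frac{\beta}{2}\Cc_\beta T^2}$, whose logarithm is $\O(T^2\log T)$ since $|\La|=\pi T^2$ and $\nn=\O(T^2)$.

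For the denominator a single family of near-lattice test configurations suffices. By the third property of $\muW$ recalled after \eqref{identity_entre_points_de_vue}, the measure $\nn\muW$ has constant density $1$ and $\zetaW\equiv 0$ on $\La':=\{x\in\La:\dist(x,\partial\La)\geq\bCc'\log T\}$; since $|\nn-|\La||\leq T\log T$, the area of $\La'$ is $|\La|-\O(T\log T)$, so I would place the $\nn$ points on a regular grid inside $\La'$ with all nearest-neighbour distances of order $1$ (in particular $\rr(x)\geq\tfrac18$), a configuration which trivially lies in $\EE_\La$. For it, \eqref{extendSecondFormulation} gives $\FL(\bXn,\nn\muW)=\O(T^2\log^2 T)$: the $\log\rr(x)$ terms and the self-corrections are $\O(\nn)$, while $\tfrac{1}{4\pi}\int|\nabla\h^{\bXn,\nn\muW}_{\vr}|^2=\O(T^2\log^2 T)$ because $\bXn-\nn\muW$ carries charge $\O(T\log T)$ confined to an annulus of width $\O(\log T)$ near $\partial\La$ and is elsewhere a density perturbation of size $\O(\log T/T)$, so elliptic estimates control the field's $\mathrm{L}^2$ energy. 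Letting each of the $\nn$ points range independently over a disk of a fixed radius $\rho_0$ preserves all these features and perturbs $\FL$ by $\O(\nn)$, so the denominator is $\geq(\pi\rho_0^2)^\nn e^{-\beta\Cc_\beta T^2\log^2 T}$, with logarithm $\geq-\O(T^2\log^2 T)$.

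Combining the two estimates yields $\log\EnLL[\exp(\tfrac{\beta}{2}\FL(\bX,\nn\muW))\mid\EE_\La]=\O(T^2\log^2 T)\leq\Cglob T^2\log^5 T$, with the generous power $\log^5 T$ leaving slack for whatever polylogarithmic refinements the boundary bookkeeping actually requires. The hard part will be exactly this bookkeeping near $\partial\La$: controlling the electric energy of the test configuration against the possibly singular, hole-ridden background $\nn\muW$, and keeping the truncation radii strictly off $\partial\La$ in the deterministic lower bound. It is precisely here that the hypothesis $\bXn\in\EE_\La$ (no point extremely close to, and few points within distance $1$ of, the boundary) together with the ``good external potential'' structure of $\nn\muW$ enters essentially.
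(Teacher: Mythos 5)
Your argument is sound and establishes the bound, but it takes a genuinely different route from the paper for the lower bound on the constrained partition function (the denominator). The opening reduction is identical: write the conditional expectation as a ratio, use $\zetaW\geq 0$ to drop $\zetaW$ from the numerator, and bound the numerator by a \emph{configuration-wise} lower bound $\FL(\bXn,\nn\muW)\geq -\O(T^2)$ valid on $\EE_\La$ via the electric identity with a modified truncation vector that avoids $\partial\La$; this matches the paper's Step~1 word for word (the paper chooses $\eta(x_i)=\tfrac14\min(1,\dist(x_i,\partial\La))$, you choose $\min(\rr(x),\tfrac12\dist(x,\partial\La))$, both serve the same purpose). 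The divergence is in the denominator. The paper borrows the ``Jensen's trick'' from \cite{garcia2019large}: it lower-bounds $\log\KnLL(\cdot\,|\,\EE_\La)$ by $\log|\EE_\La|-\beta\,\E_{\mathrm{Leb}}[\FL+\nn\int\zetaW\,\dd\bX\,|\,\EE_\La]$ and then evaluates the Lebesgue average of the energy \emph{exactly} via potential theory (Claim~\ref{claim:sansEELA}), which requires an extra step (Lemma~\ref{lem:virer_le_conditionnement}) to show the conditioning on $\EE_\La$ does not spoil the average. You instead exhibit a single family of near-lattice test configurations placed deep inside $\La$ and bound the partition function from below by the volume of that family times $\exp(-\beta\sup\FL)$. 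Your route avoids the paper's Lemma~\ref{lem:virer_le_conditionnement} entirely, since the test configurations are manifestly in $\EE_\La$; it also automatically kills the $\zetaW$ contribution by keeping all points in the bulk. What the paper's route buys is an \emph{explicit} computation with no need to justify the deterministic energy of a particular configuration; what yours buys is a shorter, more elementary story.

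Two points in your write-up are genuinely loose and would need repair before this is a proof. First, the safe zone: from \eqref{choix_kappa} the measure $\nn\muW$ has constant density $1$ only on $\{\dist(\cdot,\partial\La)\geq\kappa\}$ with $\kappa=\Cc_\mu\log^2 T$, not $\O(\log T)$; your $\La'$ must be taken at distance $\O(\log^2 T)$, which pushes your boundary-deficit charge $Q$ to $\O(T\log^2 T)$ and your macroscopic energy contribution $Q^2$ to $\O(T^2\log^4 T)$ — still within the $T^2\log^5 T$ budget, but you should not claim $T^2\log^2 T$. Second, the sentence ``elliptic estimates control the field's $\mathrm{L}^2$ energy'' is not a proof. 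Bounding $\int_{\R^2}|\nabla\h^{\bXn,\nn\muW}_{\vr}|^2$ for a concrete lattice against the possibly singular $\nn\muW$ requires a real argument: split $\bXn-\nn\muW$ into a quasi-periodic part (lattice minus constant density $1$ in the bulk, giving $\O(\nn)$ by translation-invariance of the unit-cell energy), a smooth bulk density surplus of size $\O(\log^2 T/T)$ (giving $\O(T^2\log^4 T)$), and a neutral dipole concentrated in the $\O(\log^2 T)$-wide boundary layer (giving $\O(Q^2)$ after exploiting neutrality to cancel the $\log T$-divergent self-terms). Each piece is manageable, but none follows from a bare ``elliptic estimate'' and you must also verify that the smearing correction $\sum_x\int_{\DD(x,\eta(x))}\ff_{\eta(x)}(t-x)\dd(\nn\muW)(t)$ remains $\O(1)$ per point, which is true precisely because your points lie where $\nn\muW$ has density $1$. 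With these two repairs the proposal is a correct, self-contained alternative proof.
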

We prove Proposition \ref{prop:global_law_SUBSYS} in Section \ref{sec:global_law}. The bound \eqref{global_law_SS_Statement} should of course be compared with \eqref{eq:global_law_FN} which is valid for the full system. When considering sub-systems with good external potentials we are (only) losing some power of $\log T$, which we have not tried to optimize.

\newcommand{\Cloc}{\hyperref[prop:local_law_SUBSYS]{\mathtt{C}_{\mathtt{Local}}}} 

\subsubsection*{Local laws in the bulk}

\begin{proposition}[Local laws for sub-systems with good external potential]
\label{prop:local_law_SUBSYS}
There exists a universal constant $\Cc$ and a constant $\Cloc$ (depending only on $\beta$ and the “good potential” constant $\bCc$) such that if $T$ is large enough (depending on $\beta, \bCc$) then for all $\ell \geq \rho_\beta$, for all $x$ in $\La$, provided that the square $\sq(x, \ell)$ is included in $\Lab$ we have:
\begin{equation}
\label{eq:LocalLawSS}
\log \EnLL \left[ \exp\left(\frac{\beta}{2} \Ener\left(\bX_\nn, \nn \muW, \sq(x, \ell)\right)\right) \Big| \EE_\La \right] \leq \Cloc \beta \ell^2,
\end{equation}
where the “electric energy” is computed with respect to the background measure $\nn \muW$, \cor{see \eqref{eq:EnermmOmega},}  and also:
\begin{equation*}
\log \EnLL \left[ \exp\left(\frac{\beta}{\Cc} \Points\left(\bX_\nn, \sq(x, \ell)\right)\right) \Big| \EE_\La \right] \leq \Cloc \beta \ell^2.
\end{equation*}
\end{proposition}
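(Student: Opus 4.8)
The plan is to transport the multiscale argument behind the full-system local laws (Proposition~\ref{prop:local_laws}, following \cite{armstrong2019local} and \cite{MR3719060}; see Section~\ref{sec:proof_locAS} for the discussion in the unperturbed case) to the generalized Gibbs measure $\PnLL$, the point being that the ``good external potential'' hypotheses are tailored precisely so that this transfer goes through. The key structural input, recorded in Section~\ref{sec:good_properties}, is that the background $\nn\muW$ has constant density one — hence coincides with $\mm_0$ — on $\{x\in\La : \dist(x,\partial\La)\geq\bCc'\log T\}$, that the companion potential $\zetaW$ vanishes there, and that $\nn\muW$ has total mass $\nn$ with $\bigl|\,\nn-|\La|\,\bigr|\leq T\log T$. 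Consequently, inside the bulk $\Lab=\DD(\omega,T/2)$ the measure $\PnLL$ is locally indistinguishable from a standard 2DOCP with $\nn$ particles, and every construction in \cite{armstrong2019local} that involves only a bounded dilate of a square $\sq(x,\ell)$ applies verbatim, since the distance from $\Lab$ to $\partial\La$ is $\geq T/2$, far larger than all the relevant scales ($\rho_\beta$, $\log T$, and the $\nn^{1/4}\sim\sqrt{T}$ from \eqref{condi:LL}).

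First I would establish the \textbf{base case} at the top scale $\ell_0\simeq T$. Combining the global law (Proposition~\ref{prop:global_law_SUBSYS}) with the electric representation of $\FL(\bX_\nn,\nn\muW)$ from Lemma~\ref{lem:extendSecond} — which, exactly as in the unperturbed setting (cf.\ \cite[Lemma B.2]{armstrong2019local}), gives $\int_{\R^2}|\nabla\h^{\bX_\nn,\nn\muW}_{\vr}|^2 \leq \Cc\bigl(\FL(\bX_\nn,\nn\muW) + \Points(\bX_\nn,\La)\bigr)$ — one obtains $\log\EnLL\!\bigl[\exp\bigl(c_\beta\,\Ener(\bX_\nn,\nn\muW,\Lab)\bigr)\,\big|\,\EE_\La\bigr]\leq\Cc_\beta\,T^2\log^5 T$, i.e.\ the local law at scale $\simeq T$, with a constant of the expected shape up to the harmless $\log^5 T$ factor.

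Then I would \textbf{descend the scale}, along the lines of the bootstrap of \cite{armstrong2019local}: for dyadic $\ell$ with $\rho_\beta\leq\ell\lesssim T$, one controls $\PnLL\bigl[\Ener(\bX_\nn,\sq(x,\ell))\geq K\ell^2\,\big|\,\EE_\La\bigr]$ by the screening-and-reconstruction procedure, performed inside a bounded dilate of $\sq(x,\ell)$ — hence inside $\Lab$, and \emph{not} touching the $1$-neighbourhood of $\partial\La$, so that the change of variables preserves the event $\EE_\La$ one conditions on. As in \cite{armstrong2019local}, this produces a competitor configuration whose energy is smaller by $\gtrsim K\ell^2$ at total cost (Jacobian plus energy variation) $\exp(\Cc_\beta K\ell^2)$, whence, for $K$ large, $\PnLL[\,\cdot\,|\,\EE_\La]\leq\exp(-c_\beta K\ell^2)$; integrating this tail against the scale-$2\ell$ bound propagates the local law down to scale $\ell$ with a constant that remains bounded over the $\O(\log T)$ steps. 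The bound on the number of points then follows from the energy bound by the usual potential-theoretic comparison (an excess of points in $\sq(x,\ell)$ forces an energy $\gtrsim\Points^2$ in a slightly larger square), or is carried inside the same induction; the resulting constant $\Cloc$ depends only on $\beta$ and on the ``good potential'' constant $\bCc$, these being the only parameters entering the analysis near $\partial\La$.

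The main obstacle is exactly this interplay with the boundary layer $\La\setminus\Lab$: one must verify that none of the local ingredients of \cite{armstrong2019local} (screening annuli, ball-construction lower bounds on the energy, uniformity of the effective equilibrium measure) ever ``sees'' the region where $\nn\muW$ is non-uniform or singular on $\partial\La$, and that the entire effect of that region is absorbed, once and for all, into the conditional global law of Proposition~\ref{prop:global_law_SUBSYS}; a secondary but necessary point is to check that conditioning on $\EE_\La$ and the mild non-neutrality $\bigl|\,\nn-|\La|\,\bigr|\leq T\log T$ degrade no estimate, and that the multiscale bookkeeping closes with a constant independent of $T$.
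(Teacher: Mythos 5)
Your overall strategy matches the paper's: use Proposition~\ref{prop:global_law_SUBSYS} as a base case at macroscopic scale, exploit the fact that $\nn\muW$ has constant density $1$ at distance $\geq\bCc'\log T$ from $\partial\La$ (so that the screening/reconstruction constructions of \cite{armstrong2019local,MR3719060} never see the boundary layer), and then bootstrap down in scale. You also correctly identify that the conditioning on $\EE_\La$ is preserved because the change of variables acts only inside a bounded dilate of $\sq(x,\ell)\subset\Lab$, far from the region where $\EE_\La$ constrains the configuration. This is indeed the paper's route.

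However, you gloss over the one genuinely delicate point, namely \emph{why} the $\log^5 T$ factor in the global law can be removed. You call it ``harmless'' and say the constant ``remains bounded over the $\O(\log T)$ steps,'' but this would be exactly the wrong conclusion if each bootstrap step merely preserved the constant: propagating a $T^2\log^5 T$ bound to smaller scales would give $\ell^2\log^5 T$, not $\ell^2$, and the extra $\log^5 T$ would then reappear in the final statement. The paper's argument is more precise: the dominant error term in the screening step of \cite[Prop.~4.5]{armstrong2019local} has the form $\O\bigl(E(2\Rr)/\Rr\bigr)^{4/3}$ where $E(2\Rr)$ is the (assumed) energy bound at the larger scale $2\Rr$, so even starting from $E(2\Rr)\sim(2\Rr)^2\log^5\Rr$ this error is $o(\Rr^2)$, and one recovers the clean $\Cc\Rr^2$ bound in essentially \emph{one} step — the bootstrap does not merely propagate, it \emph{improves}. (As the paper notes, this works whenever the initial estimate is $o(T^3)$, far weaker than what Proposition~\ref{prop:global_law_SUBSYS} provides.) Your sketch, as written, would leave a reader unsure why the logarithmic correction disappears rather than accumulating; this is the one substantive step you need to supply to close the argument.
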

This should be compared to \cor{the local laws for the full system recalled in} Proposition \ref{prop:local_laws} .

\begin{proof}[Proof of Proposition \ref{prop:local_law_SUBSYS}]
This follows from \cor{the analysis of} \cite{armstrong2019local}, but it deserves some explanation, which we provide in Section \ref{sec:proofLL}.
\end{proof}

\subsection{Sub-systems with good external potentials: consequences}
\label{conseqSubSys}
\subsubsection*{Discrepancy bounds}
Once local laws hold (in the bulk), we retrieve all statements that rely purely on energy considerations. In particular the analogous of \cite[(1.18)]{armstrong2019local} is valid, namely:
\begin{lemma}[Discrepancy bounds in sub-systems]
\label{claim:DiSS}
If $\sq(x, \ell) \subset \Lab$ then:
\begin{equation}
\label{discr_bounds_in_SS}
\log \EnLL \left[ \exp\left(\frac{\beta}{\Cc} \frac{\Di^2(\bXn, \sq(x, \ell))}{\ell^2}  \right) \Big| \EE_\La  \right] \leq \Cloc.
\end{equation}
In particular, we have an a priori Poisson-like bound on the number variance:
\begin{equation*}
\EnLL \left[ \Di^2(\bXn, \sq(x, \ell)) \Big| \EE_\La  \right] \leq \Cc \Cloc \ell^2,
\end{equation*}
together with a tail estimate: if $\ell$ is larger than some constant depending on $\beta, \bCc$ 
\begin{equation*}
\PnLL \left[ |\Di(\bXn, \sq(x, \ell))| \geq \ell \log \ell \Big| \EE_\La  \right] \leq \exp\left( - \frac{\log^2 \ell}{\Cc_\beta} \right).
\end{equation*}
\end{lemma}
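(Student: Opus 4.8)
The plan is to derive \eqref{discr_bounds_in_SS} from the sub-system local laws (Proposition \ref{prop:local_law_SUBSYS}) by replaying the argument that produces the full-system discrepancy bound \eqref{discr_Di3} from the energy local law \eqref{eq:LocalLaws}, as in \cite{armstrong2019local}. That argument is purely energetic and deterministic: on any region where the background has uniform density $1$, the squared discrepancy of a square is controlled by the electric energy (plus the number of points) of a concentric square of comparable size. Concretely one uses the bound, valid whenever $\nn\muW\equiv1$ on $\sq(x,2\ell)$,
\begin{equation*}
\Di(\bXn,\sq(x,\ell))^2 \;\leq\; \Cc\,\EnerPoints\bigl(\bXn,\sq(x,2\ell)\bigr),
\end{equation*}
where $\EnerPoints=\Ener+\Points$ and the electric energy is computed with respect to the background $\nn\muW$; this follows from the divergence identity $-\dive\nabla\h^{\bXn,\nn\muW}=2\pi(\bXn-\nn\muW)$, Cauchy--Schwarz on circles $\partial\DD(x,r)$ and an averaging argument in $r\in[\ell,2\ell]$, exactly as in the derivation of \cite[(1.18)]{armstrong2019local} (switching between disks and squares costs only a universal factor, cf.\ Remark \ref{rem:LocalLaws}). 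Crucially, the only structural inputs are this PDE and the uniformity of the background near the relevant circles, so the external potential $\VextL$ is irrelevant here; and on $\Lab$ the measure $\nn\muW$ does have constant density $1$ (Section \ref{sec:good_properties}), so the inequality applies to sub-systems unchanged.

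Given this, I would feed in the exponential moments from Proposition \ref{prop:local_law_SUBSYS}. Combining its energy and number-of-points bounds by H\"older's inequality inside the conditional probability $\EnLL[\,\cdot\mid\EE_\La]$ (which is itself a probability measure) yields, for a suitable universal $\Cc$,
\begin{equation*}
\log\EnLL\Bigl[\exp\bigl(\tfrac{\beta}{\Cc}\EnerPoints(\bXn,\sq(x,2\ell))\bigr)\,\Big|\,\EE_\La\Bigr]\;\leq\;\Cc\,\Cloc\,\ell^2 .
\end{equation*}
Raising this to the power $\ell^{-2}\leq1$ (the power-mean form of Jensen's inequality) gives $\log\EnLL[\exp(\tfrac{\beta}{\Cc\ell^2}\EnerPoints(\bXn,\sq(x,2\ell)))\mid\EE_\La]\leq\Cc\,\Cloc$, and combining with the deterministic inequality and readjusting constants yields \eqref{discr_bounds_in_SS}. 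The two stated corollaries then follow immediately: the Poisson-type variance bound from the elementary estimate $u\leq\Cc\beta^{-1}\exp(\tfrac{\beta}{\Cc}u)$ applied with $u=\Di^2(\bXn,\sq(x,\ell))/\ell^2$ and integrating, and the tail bound from Markov's inequality applied to $\exp(\tfrac{\beta}{\Cc}\Di^2/\ell^2)$ at the level $\log^2\ell$, which gives $\exp(\Cloc-\tfrac{\beta}{\Cc}\log^2\ell)\leq\exp(-\log^2\ell/\Cc_\beta)$ once $\ell$ exceeds a constant depending on $\beta,\bCc$.

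I expect the only point requiring care --- as opposed to a genuine difficulty --- to be the mismatch between the hypothesis ``$\sq(x,\ell)\subset\Lab$'' and the fact that the two displays above are used on the larger square $\sq(x,2\ell)$, where one moreover needs $\nn\muW\equiv1$ and the validity of Proposition \ref{prop:local_law_SUBSYS}. This is harmless in the regime where the lemma is applied --- $\ell$ of order the box-size $L$ (or the microscopic scale $\rho_\beta$), while $\La$ has radius $T\geq10L$ --- since then $2\ell\ll T$ and every such enlarged square lies comfortably inside $\Lab$ and inside the region $\{\dist(\cdot,\partial\La)\geq\bCc'\log T\}$ on which $\muW$ is uniform. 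It thus suffices to prove the lemma for such $\ell$ and absorb the factor $2$ into the constants (equivalently, to read the statement for $\ell$ with $\sq(x,2\ell)\subset\Lab$, which is all that is used downstream).
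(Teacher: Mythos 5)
Your proof takes essentially the same route as the paper's, which simply cites \cite[Lemma B.4]{armstrong2019local} and notes that the argument of \cite{armstrong2019local} carries over — precisely the combination you spell out, namely the deterministic discrepancy-versus-energy inequality applied on the concentric square and then fed into the sub-system local laws of Proposition \ref{prop:local_law_SUBSYS} via Jensen/Markov. The only nitpick is that the precise form of the inequality from \cite[Lemma B.4]{armstrong2019local} has an additive $\O(\ell^2)$ term on the right-hand side rather than the $\Points$ term you wrote, but this is absorbed into the constants and changes nothing downstream.
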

\begin{proof}[Proof of Lemma \ref{claim:DiSS}]
The proof is as in \cite{armstrong2019local}, using an inequality that relates the presence of discrepancy to a certain energy cost, e.g. \cite[Lemma B.4]{armstrong2019local}.
\end{proof}
\begin{remark}
\label{rem:discr_shape}
As an inspection of the (short) proof of \cite[Lemma B.4]{armstrong2019local} quickly reveals, there is nothing specific to a square in the previous claim, and it also applies to a disk of radius $\ell$, or to any “reasonable” shape like the boxes introduced in Definition \ref{def:boxes}.
\end{remark}

\subsubsection*{Treating smooth linear statistics}
Following exactly the same proof as in \cite{MR3788208,serfaty2020gaussian} (or alternatively as in \cite{MR4063572}), one would obtain a control on linear statistics of smooth enough test functions supported in the bulk $\Lab$. We do not need it here, however it will be crucial for us to retrieve a specific property (the “smallness of the anisotropy”), but since this only serves as a tool for another result (the quantitative translation-invariance estimate presented in Section \ref{sec:approx_QI}) we postpone the corresponding discussion to an appendix (see Section \ref{sec:smallani}). The only result that we will quote directly is one about \emph{expectations} for fluctuations of linear statistics.

\begin{lemma}[Expectation of linear statistics in the bulk of subsystems]
\label{lem:bound_fluct_smooth_SS}
There exists a constant $\bCc$ depending only on~$\beta$, and on the “local laws” constant $\Cloc$ (thus on the “good potential” constant) such that the following holds.

Let $\varphi$ be a function of class $C^2$, compactly supported on a disk of radius $\ell \geq \rho_\beta$ included in $\Lab$. Then:
\begin{equation}
\label{borne_espereance}
\left| \EnLL \left[ \Fluct[\varphi] \Big| \EE_\La \right] \right| \leq \bCc |\varphi|_{\2} \ell^2. 
\end{equation}
\end{lemma}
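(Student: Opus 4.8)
The plan is to transcribe to the sub-system setting the transport (``loop equation'') computation that underlies the fluctuation estimates of \cite{MR3788208,serfaty2020gaussian}. The first thing I would stress is a localization: $\supp\varphi$ lies in a disk of radius $\ell$ inside $\Lab=\DD(\omega,T/2)$, and for $T$ large in terms of $\beta,\bCc$ the background $\nn\muW$ coincides with $\mm_0$ and $\zetaW\equiv0$ on a $1$-neighbourhood of $\Lab$ (Section \ref{sec:good_properties}); so on the region where everything below takes place $\PnLL$ behaves like an ordinary neutral 2DOCP, and the only genuine novelty of the sub-system --- the behaviour of $\nn\muW,\zetaW$ near $\partial\La$ --- never intervenes. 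Then I would build the loop equation: choose $\psi\in C^1(\R^2,\R^2)$ with $\supp\psi\subset\Lab$ and $\dive\psi=\varphi$ on $\supp\varphi$, by taking a primitive of $\varphi$ (say $\psi_0=\nabla\big((-\tfrac1{2\pi}\log|\cdot|)\ast\varphi\big)$, corrected by a smooth radial bump carrying the mass $\int\varphi$ so that $\psi_0$ decays) and multiplying by a cutoff equal to $1$ near $\supp\varphi$ and supported in $\Lab$; the mismatch $\dive\psi-\varphi$ is then supported in $\Lab\setminus\supp\varphi$, and its conditional expectation under $\EnLL[\,\cdot\mid\EE_\La]$ is $\O(\bCc|\varphi|_{\2}\ell^2)$ by Lemma \ref{lem:apriori} together with Lemma \ref{claim:DiSS} and Proposition \ref{prop:local_law_SUBSYS} (and, for the radial-bump piece, by differentiating the sub-system analogue of Proposition \ref{prop:bound_fluct_radial} at $s=0$, which directly gives a $\bCc|\varphi|_{\2}\ell^2$ bound). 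With $\Phi_t(x):=x+t\psi(x)$ --- a diffeomorphism equal to the identity off $\Lab$, hence preserving the $\{x\in\La:\dist(x,\partial\La)\leq1\}$-local event $\EE_\La$ --- I would change variables in the integral defining $\EnLL[\,\cdot\mid\EE_\La]$ and differentiate at $t=0$; using that $\dive\psi$ has zero $\mm_0$-mean on $\Lab$ and that $\zetaW$ is constant on $\supp\psi$, this yields
\[
\EnLL\!\left[\Fluct[\dive\psi]\,\big|\,\EE_\La\right]=\beta\,\EnLL\!\left[\tfrac{\dd}{\dd t}\big|_{t=0}\FL(\Phi_t\bXn,\nn\muW)\,\big|\,\EE_\La\right].
\]

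Next I would insert Serfaty's stress-tensor formula for the energy derivative: $\tfrac{\dd}{\dd t}|_{t=0}\FL(\Phi_t\bXn,\nn\muW)=\tfrac1{2\pi}\int\mathsf{A}[\psi]:\big(\nabla\h^{\bXn,\nn\muW}_{\vr}\otimes\nabla\h^{\bXn,\nn\muW}_{\vr}\big)+\Remain$, where $\mathsf{A}[\psi]:=\Dd\psi+(\Dd\psi)^{\mathsf T}-(\dive\psi)\Id$ is traceless and $\Remain$ collects the truncation corrections (sums over $x\in\bXn$ of terms bounded by $|\psi|_{\1,\loc}(x)\,\rr(x)$ and by $\rr(x)^2$, times local energies and point counts). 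Since $\sum_{x\in\bXn}\rr(x)^2\leq\bCc\ell^2$ deterministically, the local laws of Proposition \ref{prop:local_law_SUBSYS} give $|\EnLL[\Remain\mid\EE_\La]|\leq\bCc|\varphi|_{\2}\ell^2$. It then remains to estimate the quadratic ``anisotropy'' term. A crude bound by $\|\mathsf{A}[\psi]\|_\infty\,\EnLL[\Ener(\bXn,\nn\muW,\supp\psi)\mid\EE_\La]$ is too lossy (it is off by a power of $\ell$), so instead I would exploit that $\mathsf{A}[\psi]$ is traceless and invoke the sub-system version of Serfaty's ``smallness of the anisotropy'' established in Section \ref{sec:smallani}, which is precisely what delivers $\big|\EnLL[\int\mathsf{A}[\psi]:(\nabla\h_{\vr}\otimes\nabla\h_{\vr})\mid\EE_\La]\big|\leq\bCc|\varphi|_{\2}\ell^2$. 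Combining the three pieces with $\Fluct[\varphi]=\Fluct[\dive\psi]+\Fluct[\varphi-\dive\psi]$ then gives \eqref{borne_espereance}.

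I expect Step 3 --- the anisotropy estimate --- to be the main obstacle: it is the one genuinely probabilistic input (all the energy bounds that can be obtained ``for free'' from the local laws are a power of $\ell$ too weak), and it requires porting Serfaty's anisotropy-smallness trick to a sub-system feeling an external potential, which is exactly the work deferred to Section \ref{sec:smallani}. The secondary difficulties are the construction of a transport field $\psi$ supported in $\Lab$ despite $\int\varphi\neq0$ (peeled off as a radial bump and absorbed by Proposition \ref{prop:bound_fluct_radial}) and the verification that every cutoff and nearest-neighbour-truncation error is genuinely $\O(\bCc|\varphi|_{\2}\ell^2)$; these are routine but demand some care.
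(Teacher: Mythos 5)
Your plan diverges from the paper at the crucial step, and the divergence creates a gap.

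You set up a loop equation with a vector field $\psi$ chosen so that $\dive\psi=\varphi$ on $\supp\varphi$. Such a $\psi$ is one derivative \emph{smoother} than $\varphi$, hence one power of $\ell$ \emph{bigger}: $|\psi|_\1\sim 1$, not $\sim|\varphi|_\2\sim\ell^{-2}$. You then observe, correctly, that the crude bound $\|\mathsf{A}[\psi]\|_\infty\cdot\Ener\sim 1\cdot\ell^2$ overshoots by $\ell^2$, and propose to close the gap with the smallness-of-anisotropy machinery of Section~\ref{sec:smallani}. That does not work here. Lemma~\ref{lemma:Ani_is_Small} is proved under the normalization $|\psi|_\kk\leq\Cpsi\ell^{-\kk-1}$, so for your $\psi$ one is forced to take $\Cpsi\sim\ell^2$. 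But then the range of the Chernoff parameter $\tau$ allowed by \eqref{condi_sur_tau} shrinks to $\tau\leq\ell^2/(\Cc\Cpsi\log\ell)=\O(1/\log\ell)$, so the choice $\tau=\ell^{3/2}$ on which \eqref{eq:momentExpAni} rests is unavailable; and even formally plugging $\Cpsi=\ell^2$ into \eqref{eq:momentExpAni} gives $\log\EnLL[\exp(\tau\Ani)]\sim\Cpsi^2\ell\log\ell=\ell^5\log\ell$, i.e. $\EnLL[\Ani]\lesssim\ell^{7/2}\log\ell$ — worse than the crude $\O(\ell^2)$, not better. The ``smallness of anisotropy'' trick cannot rescue a transport field of this magnitude.

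The paper's proof avoids this entirely by choosing a much smaller transport field. It first rewrites the Laplace transform $\EnLL[e^{t\Fluct[\varphi]}\mid\EE_\La]$ as the ratio $K(\mm_s)/K(\mm)$ with $\mm_s=\mm-s\Delta\varphi$ (the Claim~\ref{claim:rewriting_Laplace_I} mechanism, which trades $\varphi$ for $\Delta\varphi$), then transports $\mm$ to $\tilde\mm_s=(\Id+s\nabla\varphi)\#\mm$. The transport field is $\nabla\varphi$, with $|\nabla\varphi|_\1=|\varphi|_\2\sim\ell^{-2}$ — two powers of $\ell$ smaller than yours. At this scale the \emph{crude} anisotropy bound from \cite[Prop.~4.2]{serfaty2020gaussian}, namely $|\AUN|\preceq|\psi|_\1\cdot\EnerPoints$, together with the local laws $\EnerPoints(\supp\nabla\varphi)\lesssim\ell^2$, already delivers exactly $\O(|\varphi|_\2\ell^2)$. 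No smallness-of-anisotropy input is needed, and none is used; the heavy machinery of Section~\ref{sec:smallani} is reserved in the paper for the spin-wave field $\gamma_{\pm t}$ of Proposition~\ref{prop:effectPhiEnergy}, where the field really is of the right size for that lemma. If you want to repair your proof, the fix is to swap your $\psi$ for $\nabla\varphi$ and route through the ratio-of-partition-functions identity rather than the raw loop equation.
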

\begin{proof}[Proof of Lemma \ref{lem:bound_fluct_smooth_SS}]
This follows from the proofs of \cite{MR3788208,serfaty2020gaussian} but is not explicitly written as such. Let $t$ be a small parameter, we know from \cite[Prop. 2.10]{MR3788208} that:
\begin{equation*}
\EnLL  \left[ e^{t \Fluct[\varphi]} \Big| \EE_\La \right] = \frac{\mathrm{K}_{\nn, \La}(\mm_s)}{\mathrm{K}_{\nn, \La}(\mm)} e^{\O(t^2)},
\end{equation*}
where $\mm_s$ is the measure $\mm_s := \mm - s \Delta \varphi$, where $s = \frac{t}{2\pi \beta}$. This is valid because $\varphi$ is assumed to be supported in $\Lab$, where $\mm$ has density $1$. By e.g. \cite[Lemma 3.6]{MR3788208} we know that we can replace $\mm_s$ by the approximate measure $\tilde{\mm}_s := (\Id + s \nabla \varphi) \# \mm$ up to quadratic terms, i.e.:
\begin{equation*}
\EnLL  \left[ e^{t \Fluct[\varphi]} \Big| \EE_\La \right] = \frac{\mathrm{K}_{\nn, \La}(\tilde{\mm}_s)}{\mathrm{K}_{\nn, \La}(\mm)} e^{\O(t^2)}.
\end{equation*}
Next, a consequence of \cite[Prop. 4.2]{serfaty2020gaussian} is that the ratio of partition functions can be written as:
\begin{equation*}
\frac{\mathrm{K}_{\nn, \La}(\tilde{\mm}_s)}{\mathrm{K}_{\nn, \La}(\mm)} = \exp\left(s \O\left(|\varphi|_\2 \EnerPoints(\bX_\nn, \supp \nabla \varphi)\right) + \O(s^2)\right),
\end{equation*}
and thus taking the limit $t\to 0$ (or equivalently $s \to 0$) and using the local laws we get \eqref{borne_espereance} after identifying the first order terms.
\end{proof}

\section{Quantitative estimate on translation-invariance}
\label{sec:approx_QI}

\newcommand{\Cfep}{\Cc}
\newcommand{\fep}{f^{(\epsilon)}}
\newcommand{\fbep}{\bar{f}^{(\epsilon)}}
\newcommand{\uep}{1/\epsilon}
\newcommand{\eep}{e^{\uep}}
\newcommand{\Wep}{\mathtt{W}^{(\epsilon)}}
\newcommand{\Wel}{\mathtt{W}^{(\epsilon, \ell)}}
\newcommand{\Phit}{\Phi_t^{\epsilon, \ell}}
\newcommand{\Phimt}{\Phi_{-t}^{\epsilon, \ell}}
\newcommand{\Psit}{\psi_t^{\epsilon, \ell}}

\cor{Roughly speaking, the goal in this section is to quantify how much translation-invariance there is within a finite 2DOCP by estimating how much the expectation of a test function $\G : \Conf \to \R$ changes if one applies a translation on the configuration, i.e. by comparing $\EN\left[\G(\bXN)\right]$ and $\EN\left[\G(\bXN + t \vu)\right]$ where $t \in \R$ and $\vu$ is some fixed vector - we take $\vu := (0,1)$ in $\R^2$ in the sequel. 
\begin{itemize}
	\item Due to the finiteness of the system, this error will likely be large if $\G$ depends on the entire configuration or if $|t|$ is too big. This is why below we will always make the assumption that $\bX \mapsto \G(\bX)$ depends only on the configuration $\bX$ within some finite box $\La \subset \LN$, and that $|t|$ is not too big - so that, in particular, the translated box $\La + t$ remains well within $\LN$.
	\item For our purposes, it is in fact enough to compare $\EN\left[\G(\bXN)\right]$ with the average effect of two opposite translations $\hal \left( \EN\left[\G(\bXN + t \vu)\right] + \EN\left[\G(\bXN - t \vu)\right] \right)$. Although seemingly innocent, this old trick is extremely useful - though the analysis remains difficult.
	\item We actually want to study translation-invariance not only for the canonical Gibbs measure $\PNbeta$ of the 2DOCP, but for generalized 2DOCP's as introduced in the previous sections.
\end{itemize}
As mentioned in the introduction (see Section \ref{sec:strategy}), the usual strategy consists in:
\begin{enumerate}
	\item Constructing “localized translations”, which act as translations on some large region and as the identity outside some larger region.
	\item Estimating the effect of such localized translations on the energy of the system.
	\item Using localized translations (item 1.) as a change of variables within the Gibbs measure and inserting the energy estimates (item 2.) in order to derive conclusions on the expectations.
\end{enumerate}
We need here to construct localized translations which are “dampened” very slowly in order to reduce their effect on the energy. In Section \ref{sec:SW} we prepare the construction by introducing a certain vector field which we call “spin wave” by analogy with classical constructions in the realm of spin systems. In Section \ref{sec:slowvartrans} we define the localized translations. In Section \ref{sec:EFFECTOF} we state a crucial bound on their effect on the energy, whose proof is very delicate and postponed to Section \ref{sec:EffetTranslaEnerPreuve}. We use this to derive actual “approximate translation-invariance” estimates in the last two sections.}

\subsection{The “spin wave” and its properties}
\label{sec:SW}
\subsubsection*{An auxiliary function.}
For $\epsilon \in (0,1)$, let $\fep : x = (x_1, x_2) \in \R^2 \to \R$ be defined as:
\begin{equation*}
\fep(x) := \begin{cases}
x_1 & \text{ if } |x| \leq 1 \\
x_1 (1 - \epsilon \log |x|) & \text{ if } 1 \leq |x| \leq e^{1/\epsilon} \\
0 & \text{ if } |x| \geq e^{1/\epsilon}.
\end{cases}
\end{equation*}
The function $\fep$ is continuous, piecewise $C^2$ and compactly supported on the disk of radius $e^{1/\epsilon}$. We have, by direct computations:
\begin{itemize}
\item On (the interior of) the unit disk, $\partial_1 \fep \equiv 1$, $\partial_2 \fep \equiv 0$ and the second (and third) derivatives of $\fep$ vanish.
\item For $1 \leq |x| \leq  e^{1/\epsilon}$, we have : 
\begin{itemize}
\item $\partial_1 \fep(x) = 1 - \epsilon \log |x| - \epsilon \frac{x_1^2}{|x|^2}$, $\partial_2 \fep(x) = - \epsilon \frac{x_1x_2}{|x|^2}$.
\item The second derivatives of $\fep$ satisfy the pointwise bound $|\fep|_{\2, \star}(x) \preceq \frac{\epsilon}{|x|}$.
\item The third derivatives of $\fep$ satisfy the pointwise bound $|\fep|_{\3, \star}(x) \preceq \frac{\epsilon}{|x|^2}$.
\end{itemize}
\end{itemize}
In particular, observe that the first partial derivatives of $\fep$ are bounded by $1$ with a jump of size $\O(\epsilon)$ along both $\partial \DD(0, 1)$ and $\partial \DD(0, \eep)$, while the second partial derivatives have a jump of size $\O(\epsilon)$ along $\partial \DD(0, 1)$ and of size $\O(\epsilon / \eep)$ along $\partial \DD(0, \eep)$.

Thus after applying a mollification to $\fep$ at scale $\frac{1}{2}$ near $\partial \DD(0,1)$ and at scale $\hal \eep$ near $\partial \DD(0, \eep)$ we may consider a function $\fbep$ which is smooth, compactly supported in $\DD(0, 2 \eep)$ and such that (for some universal constant $\Cfep$):
\begin{itemize}
	\item $\fbep(x) = x_1$ for $|x| \leq \hal$.
	\item The first derivatives of $\fbep$ are bounded by $2$ on $\R^2$.
	\item $|\fbep|_{\2, \star}(x) \leq \Cfep \epsilon$ for $|x| \leq 2$ and $|\fbep|_{\2, \star}(x) \leq \Cfep \frac{\epsilon}{|x|}$ for $2 \leq |x| \leq 2 \eep$.
	\item $|\fbep|_{\3, \star}(x) \leq \Cfep \epsilon$ for $|x| \leq 2$ and $|\fbep|_{\3, \star}(x) \leq \Cfep \frac{\epsilon}{|x|^2}$ for $2 \leq |x| \leq 2 \eep$.
\end{itemize}

\subsubsection*{Definition of the “spin wave”.}
Next, we define our “spin wave”\footnote{This terminology alludes to similar constructions used in the theory of continuous spin systems to prove so-called “Mermin-Wagner” theorems, see e.g. \cite[Sec.~9.2]{friedli2017statistical} or \cite[Chapter 3]{simon2014statistical}.} $\Wep$ as the vector field
\begin{equation*}
\Wep := \nabla^\perp \fbep = \left( - \partial_2 \fbep, \partial_1 \fbep \right).
\end{equation*} 
\begin{lemma}
\label{lem:spinwave}
The following properties of $\Wep$ are straightforward:
\begin{enumerate}
	\item $\Wep$ is smooth, compactly supported on $\DD(0, 2 e^{1/\epsilon})$ (because so is $\fbep$).
	\item $\Wep(x) = \vu = (0, 1)$ for $|x| \leq \hal$ (because then $\fbep(x_1,x_2) = x_1$).
	\item $\Div \Wep = 0$ on $\R^2$ (by definition of $\Wep$ as the perpendicular gradient of a smooth function).
	\item $|\Wep|_\0 \leq 2$ (it is bounded by the first derivative of $\fbep$), $|\Wep|_\1 \leq \Cfep \epsilon$ and more precisely:
\begin{equation}
\label{sizeDWep}
\left| \Wep \right|_{\1, \star}(x) \leq \Cfep \times \begin{cases}  \epsilon & \text{ for } |x| \leq 2 \\
 \frac{\epsilon}{|x|} & \text{ for } 2 \leq |x| \leq 2 \eep.
\end{cases}
\end{equation}
\item $|\Wep|_{\2, \star}(x) \leq \Cfep \epsilon$ for $|x| \leq 2$ and $|\Wep|_{\2, \star}(x) \leq \Cfep \frac{\epsilon}{|x|^2}$ for $2 \leq |x| \leq 2 \eep$.
\end{enumerate}
\end{lemma}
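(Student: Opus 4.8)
The plan is a direct verification: each of the five properties is read off from the construction of $\fbep$ recorded immediately above the statement. First I would observe that $\Wep = \nabla^\perp \fbep = (-\partial_2 \fbep, \partial_1 \fbep)$ is assembled from first partial derivatives of a function that is smooth and compactly supported in $\DD(0, 2\eep)$, which gives (1) at once. For (2), on the disk $\{|x| \le \hal\}$ we have $\fbep(x_1,x_2) = x_1$ by construction, so $\partial_1 \fbep \equiv 1$ and $\partial_2 \fbep \equiv 0$ there, whence $\Wep \equiv (0,1) = \vu$ on that disk. Property (3) is the elementary identity $\Div \nabla^\perp f \equiv 0$, valid for any $C^2$ function and applied here to the smooth function $\fbep$ on all of $\R^2$.

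For (4) and (5) the key remark is that $\nabla^\perp = R\,\nabla$, where $R$ is the rotation of $\R^2$ by $\pi/2$, so for every $k$ one has $\Dd^k \Wep(x) = R \cdot \Dd^{k+1} \fbep(x)$; since the Euclidean (Frobenius) norm of a matrix is unchanged by left multiplication by an orthogonal matrix, this yields the pointwise identities $|\Wep(x)| = |\fbep|_{\1,\star}(x)$, $|\Wep|_{\1,\star}(x) = |\fbep|_{\2,\star}(x)$ and $|\Wep|_{\2,\star}(x) = |\fbep|_{\3,\star}(x)$ for all $x$. The bounds already recorded for $\fbep$, namely $|\fbep|_\1 \le 2$, $|\fbep|_{\2,\star}(x) \le \Cfep \epsilon$ on $\{|x| \le 2\}$ and $|\fbep|_{\2,\star}(x) \le \Cfep \epsilon/|x|$ on $\{2 \le |x| \le 2\eep\}$, together with the analogous estimates on $|\fbep|_{\3,\star}$, then transfer verbatim (with the same constant) to $\Wep$, which is exactly what (4), \eqref{sizeDWep} and (5) assert.

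There is no real obstacle in this lemma: all the analysis — the explicit choice of $\fep$, the computation of its first, second and third derivatives and of the sizes of the jumps of $\partial_i\fep$, $\partial_{ij}\fep$ across $\partial\DD(0,1)$ and $\partial\DD(0,\eep)$, and the mollification producing the smooth $\fbep$ with the stated derivative bounds — was carried out before the statement. The present lemma merely repackages those facts for the divergence-free vector field $\Wep$, which is the object that will be transported along the Gibbs measure in the quantitative translation-invariance argument. I would therefore present the proof as the short sequence of observations above and move on.
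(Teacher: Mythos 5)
Your proof is correct and follows the same route the paper takes, which dispatches each item with the parenthetical remarks embedded in the statement; the only thing you add is to spell out the clean identity $\Dd^k \Wep = R\,\Dd^{k+1}\fbep$ with $R$ the rotation by $\pi/2$, which makes the transfer of the derivative bounds on $\fbep$ to $\Wep$ immediate and constant-preserving. Nothing is missing.
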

We have thus constructed a smooth, \emph{divergence-free} vector field which is constant near the origin \emph{and has an arbitrary small $\Hone$ norm} (of order $\epsilon$). The downside is that the size of the support of $\Wep$ is \emph{exponential} with respect to the parameter $\uep$.

\subsection{Slowly varying localized translations}
\label{sec:slowvartrans}
For $\ell > 0$, let the vector field $\Wel$ be defined for $x \in \R^2$ as $\Wel(x) := \Wep(x/\ell)$. Since $\Wel$ is continuous and compactly supported, it generates a global flow, \cor{namely a one-parameter group} $\{\Phit\}_{t \in (-\infty, \infty)}$ \cor{of maps $\R^2 \to \R^2$, such that $(t, x) \mapsto \Phit(x)$ solves the following ODE on $(-\infty, + \infty) \times \R^2$:
\begin{equation}
\label{eq:ODEPhit}
\partial_t \Phit = \Wel \circ \Phit.
\end{equation}
} The maps $\Phit$ have the following properties:
\begin{lemma} \label{lemma_propPhi}
For all $|t| \leq \frac{\ell}{10}$ we have:
\begin{enumerate}
	\item $\Phit$ is an area-preserving diffeomorphism of $\R^2$.
	\item We have $\Phit(x) = x + t \vu$ for $|x| \leq \ell/4$.
	\item We have $\Phit(x) = x$ for $|x| \geq 2 \ell \eep $.
\end{enumerate}
\end{lemma}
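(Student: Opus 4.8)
The plan is to regard $\Phit$ as the flow of the vector field $\Wel$ and to extract the three properties from elementary ODE theory together with the explicit description of $\Wep$ (hence of $\Wel$) recorded in Lemma~\ref{lem:spinwave}. First I would observe that $\Wel(x) = \Wep(x/\ell)$ is smooth, globally bounded (by $2$, since $|\Wep|_\0 \leq 2$) and compactly supported in $\DD(0, 2\ell\eep)$, so Cauchy--Lipschitz provides a unique global flow $\{\Phit\}_{t \in \R}$, each $\Phit$ being a $C^\infty$ diffeomorphism of $\R^2$ whose inverse is the time-$(-t)$ map of the same flow; this gives the diffeomorphism part of~(1). For the area-preserving part I would use Liouville's formula: since $\Div \Wel(x) = \ell^{-1}(\Div \Wep)(x/\ell) = 0$ by item~(3) of Lemma~\ref{lem:spinwave}, the Jacobian $J(t,x) := \det D\Phit(x)$ satisfies $\partial_t J = (\Div \Wel)(\Phit(x))\,J = 0$, hence $J \equiv J(0,\cdot) \equiv 1$.

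For~(2), fix $x$ with $|x| \leq \ell/4$ and consider the straight curve $\gamma(s) := x + s\vu$. For $|s| \leq \ell/10$ we have $|\gamma(s)| \leq \ell/4 + \ell/10 < \ell/2$, and on $\DD(0,\ell/2)$ one has $\Wel \equiv \vu$ because $\Wep \equiv \vu$ on $\DD(0,\hal)$ (item~(2) of Lemma~\ref{lem:spinwave}); therefore $\dot\gamma(s) = \vu = \Wel(\gamma(s))$ with $\gamma(0) = x$, so $\gamma$ is \emph{the} integral curve of $\Wel$ through $x$ on $[-\ell/10, \ell/10]$, whence $\Phit(x) = \gamma(t) = x + t\vu$ for all $|t| \leq \ell/10$. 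For~(3), if $|x| \geq 2\ell\eep$ then $x$ lies outside the support of $\Wel$, so the constant curve $s \mapsto x$ is the integral curve of $\Wel$ through $x$, and hence $\Phit(x) = x$ for every $t$.

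There is no substantive obstacle here: the statement is the textbook correspondence between divergence-free vector fields and volume-preserving flows. The only two points worth a line of checking are that the linear trajectory in~(2) never leaves the region $\{\Wel \equiv \vu\}$ --- which is exactly what the hypotheses $|x|\le \ell/4$ and $|t|\le \ell/10$ ensure, as $\tfrac14 + \tfrac1{10} < \tfrac12$ --- and that $\supp \Wel$ is strictly contained in $\{|x| < 2\ell\eep\}$, which follows by tracking the mollification scales in the construction of $\fbep$ ($\fep$ vanishes for $|x| \geq \eep$ and is mollified only at scale $\hal\eep$ near $\partial \DD(0,\eep)$, so $\supp \fbep \subset \DD(0, \tfrac32 \eep)$).
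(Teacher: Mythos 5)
Your proof is correct and follows essentially the same route as the paper: area-preservation via Liouville's theorem since $\Div \Wel = 0$, the translation identity on $\DD(0,\ell/4)$ from $\Wel \equiv \vu$ on $\DD(0,\ell/2)$ together with $\tfrac14 + \tfrac1{10} < \tfrac12$, and the identity map outside $\supp \Wel$. You spell out a few more details (Cauchy--Lipschitz for existence of the global flow, the explicit ODE $\partial_t J = (\Div\Wel)J$), but the argument is the same.
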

Thus for fixed $|t| \leq \frac{\ell}{10}$ the diffeomorphism $\Phit$ coincides with the translation by $t \vu$ on the disk $\DD(0, \ell/4)$ and with the identity outside $\DD(0, 2 \eep)$, we call it a \emph{localized translation} as in \cite{MR1886241}. The main difference with the construction of \cite{MR1886241} is that we have $\int |\Phit - \Id|_{\1, \star}^2 = \O(\epsilon)$ instead of $\O(1)$. 

Interestingly enough, a \emph{bounded (but not small)} $H^1$ norm for $\Phit - \Id$ (which induces a bounded, but not small energy cost, as we will show in Section \ref{sec:EffetTranslaEnerPreuve}) is enough to prove translation-invariance \emph{in the infinite-volume setting} \cor{(this is done in \cite{leble2024dlr} for the 2DOCP)}, but fails to give anything valuable in finite-volume. However, according to a remark in \cite[Sec. III.7]{simon2014statistical} \emph{“it appears that any model in which this weaker property is valid, the [possibility of finding a construction with arbitrarily small energy cost exists]”}. This remains very intriguing to us. 

\begin{proof}[Proof of Lemma \ref{lemma_propPhi}]
The fact that $\Phit$ is area-preserving follows from Liouville's theorem, since the vector field $\Wep$ (and thus $\Wel$) is divergence-free by construction. Moreover, since we ensured that $\Wep \equiv \vu$ on $\DD(0, \hal)$, the rescaled vector field $\Wel$ coincides with $\vu$ on the disk $\DD(0, \ell/2)$ hence we have $\Phit(x) = x + t \vu$ as long as $x + t \vu$ remains in $\DD(0, \ell/2)$. In particular, if $|x| \leq \ell/4$ and since $|t| \leq \frac{\ell}{10}$ by assumption, we have $\Phit(x) = x + t \vu$. On the other hand, $\Wel$ vanishes identically outside $\DD(0, 2\ell \eep)$ and thus the flow there coincides with the identity map. 
\end{proof}
We will study $\Phit$ a bit more closely in Section \ref{study_of_phit} for technical purposes.

\subsection{Effect of localized translations on the energy}
\label{sec:EFFECTOF}
\newcommand{\HL}{\mathsf{H}_\La}
We  now apply the “spin wave” / “localized translation” construction to a sub-system with good external potential. Let $\La$ (a disk of radius $T$), $\nn$, $\VextL$ be as in Section \ref{sec:CI}, with $\VextL$ a good external potential, let $\mm = \nn \muW$ be the corresponding non-uniform background and assume that the properties listed in Section \ref{sec:good_properties} hold. Let \cor{$L \geq 100$} and let us choose $\epsilon, \ell$ in such a way that:
\begin{equation}
\label{eq:ellrhoepsilon}
e^{\sqrt{\log \ell}} \leq L \leq \frac{\ell}{10}, \quad  5 \ell \eep \leq T \leq 10 \ell \eep, \quad \log \ell \leq \epsilon^{-1}, \quad  \epsilon^{-1} \leq \ell^2 \leq \epsilon^{-3}.
\end{equation}
Let $\ErrAv(t, \epsilon, \ell, \bX)$ denote the following “averaging error”:
\begin{equation}
\label{def:HL}
\FL(\bX, \mm)  = \hal \left( \FL(\Phit \bX, \mm)  + \FL(\Phimt \bX, \mm) \right) + \ErrAv(t\vu, \epsilon, \ell, \bX),
\end{equation}
where $\Phit \bX$ denotes the configuration obtained after applying $\Phit$ to all the points of $\bX$.
\begin{proposition}
\label{prop:effectPhiEnergy}
There exists a constant $\Cc$ depending on $\beta$ on the “local laws” constant $\Cloc$ (and thus on the good external potential constant) such that:
\begin{equation}
\label{eq:control_ErrAv}
\PnLL \left[ \sup_{|t| \leq \frac{\ell}{10}} |\ErrAv(t \vu, \epsilon, \ell, \bXn)| \leq \Cc t^2 \epsilon \log \epsilon \Big| \EE_\La \right] \geq 1 - \exp\left( - \ell^2 \right).
\end{equation}
\end{proposition}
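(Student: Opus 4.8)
The plan is to reduce the statement to a uniform‑in‑$\sigma$ bound on the second derivative of $\sigma\mapsto\FL(\Phi_\sigma\bXn,\mm)$, and then to estimate that second variation via the electric formalism of \cite{serfaty2020gaussian}, invoking the ``smallness of the anisotropy'' for the one genuinely dangerous term.

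\emph{Reduction and the formula for $g''$.} Set $g(\sigma):=\FL(\Phi_\sigma\bXn,\mm)$. Since $\Wel$ is autonomous, $\Phi_{\sigma+\tau}=\Phi_\tau\circ\Phi_\sigma$ and $\partial_\sigma\Phi_\sigma=\Wel\circ\Phi_\sigma$, so Taylor's formula with integral remainder gives $\ErrAv(t\vu,\epsilon,\ell,\bXn)=g(0)-\hal(g(t)+g(-t))=-\hal\int_0^t(t-\sigma)\bigl(g''(\sigma)+g''(-\sigma)\bigr)\dd\sigma$, whence $|\ErrAv(t\vu,\epsilon,\ell,\bXn)|\le\tfrac{t^2}{2}\sup_{|\sigma|\le|t|}|g''(\sigma)|$. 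It thus suffices to show $\sup_{|\sigma|\le\ell/10}|g''(\sigma)|\le\Cc\,\epsilon|\log\epsilon|$ off an event of $\PnLL(\cdot\mid\EE_\La)$-probability $\le\exp(-\ell^2)$. The geometry cooperates: by \eqref{eq:ellrhoepsilon}, $\supp\Wel\subset\DD(\omega,2\ell\eep)\subset\DD(\omega,2T/5)\subset\Lab$, on $\Lab$ the background $\mm=\nn\muW$ has constant density $1$, and $\Phi_\sigma$ is area‑preserving (Lemma~\ref{lemma_propPhi}) and equals the identity off $\DD(\omega,2\ell\eep)$, so $\Phi_\sigma\#\mm=\mm$. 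Changing variables $x=\Phi_\sigma(u)$ in $\FL(\Phi_\sigma\bXn,\mm)$ moves the deformation onto the logarithmic kernel, and differentiating twice (using $\partial_\sigma^2\Phi_\sigma|_{\sigma=0}=\Wel\!\cdot\!\nabla\Wel$) gives, for any configuration $\bY$,
\begin{equation*}
\partial_\sigma^2\big|_{\sigma=0}\FL(\Phi_\sigma\bY,\mm)=\hal\iint_{u\neq v}\mathcal K(u,v)\,\dd(\bY-\mm)(u)\,\dd(\bY-\mm)(v),
\end{equation*}
\begin{equation*}
\mathcal K(u,v)=\frac{2\langle\hat w,\Wel(u)-\Wel(v)\rangle^2-|\Wel(u)-\Wel(v)|^2}{|w|^2}-\frac{\bigl\langle\hat w,\,(\Wel\!\cdot\!\nabla\Wel)(u)-(\Wel\!\cdot\!\nabla\Wel)(v)\bigr\rangle}{|w|},
\end{equation*}
with $w=u-v$, $\hat w=w/|w|$; evaluating at $\bY=\Phi_\sigma\bXn$ identifies $g''(\sigma)$. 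The kernel $\mathcal K$ is supported in $\DD(\omega,2\ell\eep)^2$, is bounded (no diagonal singularity), and vanishes where $\Wel$ is locally constant, in particular on $\DD(\omega,\ell/2)^2$.

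\emph{The two terms.} Rewriting as in \cite{serfaty2020gaussian} (integrate by parts in the inner variable to produce the truncated field $\nabla\h^{(\sigma)}_{\veta}$, $\h^{(\sigma)}:=\h^{\Phi_\sigma\bXn,\mm}$, then once more), $g''(\sigma)$ equals, up to truncation errors localised near each point of $\bXn$, a sum of a term $\preceq\int|\Wel|_{1,\star}^2|\nabla\h^{(\sigma)}_{\veta}|^2$ and a term $\int\bigl(\Wel\!\cdot\!\nabla D\Wel\bigr):\bigl(\nabla\h^{(\sigma)}_{\veta}\otimes\nabla\h^{(\sigma)}_{\veta}\bigr)$. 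Split $\DD(\omega,2\ell\eep)$ into dyadic annuli $A_k$ of radius $\sim D_k=2^k\ell$, $0\le k\le K:=\O(1/\epsilon)$. For the first term, Lemma~\ref{lem:spinwave} gives $\sup_{A_k}|\Wel|_{1,\star}^2\preceq\epsilon^2/D_k^2$, while the local laws for sub‑systems (Proposition~\ref{prop:local_law_SUBSYS}, applied on $\Lab$ to $\bXn$ and, through a $(1+\O(\epsilon))$‑bi‑Lipschitz area‑preserving comparison, to $\Phi_\sigma\bXn$) give $\Ener(\Phi_\sigma\bXn,\mm,A_k)\preceq D_k^2$; summing over $k$ yields $\O(\epsilon)$. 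For the second term the structural point is $\Div\Wel=0$ (Lemma~\ref{lem:spinwave}): pairing $\Wel\!\cdot\!\nabla D\Wel$ with the isotropic part $\tfrac12|\nabla\h^{(\sigma)}_{\veta}|^2\Id$ of $\nabla\h^{(\sigma)}_{\veta}\otimes\nabla\h^{(\sigma)}_{\veta}$ contributes $\tfrac12\int|\nabla\h^{(\sigma)}_{\veta}|^2\,\Wel_k\partial_k(\Div\Wel)=0$, so only the \emph{anisotropic} part survives, and it is controlled by the quantitative ``smallness of the anisotropy'' (Section~\ref{sec:smallani}), bounding $\int_{A_k}|\mathrm{aniso}[\nabla\h^{(\sigma)}_{\veta}]|\preceq D_k^2/\log D_k$; since $\sup_{A_k}|\Wel\!\cdot\!\nabla D\Wel|\preceq\epsilon/D_k^2$, this term is $\preceq\sum_{k\le K}\epsilon/\log D_k\asymp\epsilon|\log\epsilon|$, the logarithm arising because $\sum_{k=0}^{K}(\log\ell+k)^{-1}\asymp|\log\epsilon|$ when $K\asymp\epsilon^{-1}$ and $\log\ell\asymp|\log\epsilon|$ (by \eqref{eq:ellrhoepsilon}). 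The truncation errors are $\preceq\sup|\Wel|_0\cdot\Points(\bXn,\cdot)$ weighted dyadically, again $\O(\epsilon)$. Altogether $\sup_{|\sigma|\le\ell/10}|g''(\sigma)|\preceq\epsilon|\log\epsilon|$ on the relevant event.

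\emph{Probability, and the main obstacle.} The relevant event asks that the energy, point and anisotropy controls above hold on the whole dyadic family, i.e.\ on a covering of $\DD(\omega,2\ell\eep)$ by $\O(e^{2/\epsilon})$ boxes of side $\asymp\ell$. A naive union bound fails, so one argues through the exponential‑moment forms of the local laws (Remark~\ref{rem:LocalLaws}, Proposition~\ref{prop:local_law_SUBSYS}) and of the anisotropy estimate together with a Chernoff bound applied to $\sum_k(\epsilon^2/D_k^2)\Ener(\Phi_\sigma\bXn,\mm,A_k)$ and its anisotropic analogue: a tuning parameter of order $\ell^2/\epsilon$ keeps the exponential moments finite (as $(\epsilon^2/\ell^2)\cdot(\ell^2/\epsilon)=\epsilon\ll\beta$) and yields a deviation bound $\exp(-\Cc\ell^2)\le\exp(-\ell^2)$, the $e^{2/\epsilon}$ boxes being absorbed using $e^{1/\epsilon}\le e^{\ell^2}$ (valid since $\epsilon^{-1}\le\ell^2$). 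Uniformity over $|\sigma|\le\ell/10$ follows from an $\exp(-\ell^2)$‑fine net in $\sigma$ plus the deterministic, polynomially controlled Lipschitzness of $\sigma\mapsto g''(\sigma)$. The main obstacle is exactly the second term: a priori $\int|\Wel\!\cdot\!\nabla D\Wel|\,|\nabla\h_{\veta}|^2=\O(1)$, far from small, and only the conjunction of divergence‑freeness of the spin wave — which removes the ``large'' isotropic part — with the quantitative smallness of the anisotropy of the sub‑system's electric field brings it down to $\O(\epsilon|\log\epsilon|)$; a secondary nuisance is transporting the local‑law/anisotropy controls from $\bXn$ to the distorted configurations $\Phi_\sigma\bXn$ uniformly in $\sigma$, and arranging the probabilistic accounting so that the enormous ($\sim\ell\,e^{1/\epsilon}$) support of $\Wel$ does not overwhelm the $\exp(-\ell^2)$ bound.
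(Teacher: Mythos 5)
Your reduction to a second‑derivative bound is a genuinely different route from the paper's. The paper (Proposition~\ref{prop:meas_pres}) carries out a \emph{first}-order expansion $\FL(\Phi_t\bX,\mm)=\FL(\bX,\mm)+\AUN[\bX,\psi_t]+\O(K^2t^2\epsilon\log\epsilon)$ on the well‑spread event, in which all terms are deliberately re‑expressed at the \emph{original} configuration $\bX$. The averaging $\hal(g(t)+g(-t))-g(0)$ then kills the part of $\AUN[\bX,\psi_t]$ linear in $t$ (because $\psi_t+\psi_{-t}=\gamma_t+\gamma_{-t}$), and the remaining anisotropy $\AUN[\bX,\gamma_t+\gamma_{-t}]$ is controlled by Lemma~\ref{lemma:Ani_is_Small}, again evaluated at $\bX$. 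By contrast, you compute $g''(\sigma)$ directly, so you need electric‑energy and anisotropy controls at the transported configurations $\Phi_\sigma\bXn$ \emph{uniformly in $\sigma\in[-\ell/10,\ell/10]$}. This is a real gap, not just a nuisance: the anisotropy‑smallness estimate is proved via the Serfaty partition‑function comparison under the Gibbs measure $\PnLL$ (Claims~\ref{claim:compa_transport} and~\ref{claim:comparison_two}), and it is a statement about $\Ani(\psi,\bXn,\mm)$ with $\bXn\sim\PnLL$. Pushing it to $\Phi_\sigma\bXn$ requires a Radon–Nikodym comparison between $\PnLL$ and $\Phi_\sigma\#\PnLL$, and that Radon–Nikodym derivative is $\exp(\beta(\FL(\Phi_{-\sigma}\cdot)-\FL(\cdot)))$ — bounding it is essentially the statement being proved, so the argument is circular unless one first does exactly the re‑expression work of Proposition~\ref{prop:meas_pres}. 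The paper's structure exists precisely to avoid this. (A crude $e^{\O(\beta)}$ bound on the density would require an \emph{a priori} $\O(1)$ control on the energy change, which is the ``merely bounded'' spin‑wave cost that the whole construction is trying to improve upon.)

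A secondary issue is that the bound you attribute to the anisotropy smallness, $\int_{A_k}|\mathrm{aniso}[\nabla\h^{(\sigma)}_{\veta}]|\preceq D_k^2/\log D_k$, does not correspond to anything in the paper and has the wrong scaling: Lemma~\ref{lemma:Ani_is_Small} bounds a \emph{functional} $|\Ani(\psi,\bXn,\mm)|$ by $\log^2\hat\ell/\hat\ell^{1/2}$ for an appropriately normalized $\psi$ at scale $\hat\ell$, not a pointwise anisotropy density growing like $\hat\ell^2/\log\hat\ell$. The correct accounting (apply Lemma~\ref{lemma:Ani_is_Small} to $\chi_k(\Wel\cdot\nabla\Wel)$, normalized by $\epsilon$, on each dyadic annulus $A_k$) gives a contribution $\O(\epsilon)$ at the large scales where the lemma applies; the factor $\log\epsilon$ in the final estimate comes from the small scales ($i\lesssim\log^2\ell$) where one must fall back on the rough bound \eqref{eq:control_AUN}. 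Your dyadic arithmetic $\sum_k\epsilon/\log D_k\asymp\epsilon|\log\epsilon|$ happens to land on the right order of magnitude, but not for the right reason. Finally, your remark that the trace $\Wel_k\partial_k\Div\Wel=0$ kills the isotropic part is correct as far as it goes, but the full divergence $\Div(\Wel\cdot\nabla\Wel)=\mathrm{Tr}((\Dd\Wel)^2)$ is not zero; it is merely quadratic in $\Dd\Wel$ and must be absorbed into your ``first term''. This should be said explicitly, as it is exactly the difference between $\AUN$ and $\Ani$ in \eqref{def:Ani}.
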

We prove Proposition \ref{prop:effectPhiEnergy} in Section \ref{sec:EffetTranslaEnerPreuve} and use it next to “mollify” observables before taking expectations.

\renewcommand{\Phit}{\Phi_t}
\renewcommand{\Phimt}{\Phi_{-t}}

\subsection{Effect of localized translations on expectations}
\cor{We recall that $\G : \Conf \to \R$ is said to be $\La$-local if $\G$ depends only on the configuration in $\La$, and that an event $\EE$ is said to be $\La$-local if $\1_{\EE}$ is $\La$-local.}

\begin{proposition}
\label{prop:Quantitative_invariance}
Let $\G$ be a measurable function on $\Conf(\La)$, let $\EE \subset \Conf(\La)$ be an event. Assume that the function $\G$ is $\DD(0, \ell/10)$-local and that $\EE$ is $\La \setminus \Lab$-local. Recall that $|t| \leq \ell / 10$.

For all $\tau \in (0,1)$ and for all $\sigma >0$, we have:
\begin{equation}
\label{eq:quantitative_invariance}
\EnLL\left[ \G(\bX) \Cond \EE \right] = \EnLL\left[ \hal \left( \G(\bX + t \vu) + \G(\bX - t \vu) \right) \Cond \EE \right] + \ErrorQI(t\vu, \epsilon, \ell, \tau, \sigma, \G, \bX),
\end{equation}
with an error term $\ErrorQI$ bounded as follows:
\begin{multline}
\label{def:ErrorQI}
\left| \ErrorQI(t\vu, \epsilon, \ell, \tau, \sigma, \G, \bX) \right| \leq  2 \left( \EnLL \left[ \G^2(\bX \pm t \vu) \Big| \EE \right]  + \sigma^2 \right)^\hal \times \left(e^{\beta \tau} - 1 \right) \\
 + 2 e^{\beta \tau} \left( \EnLL\left[ \G^2(\bX \pm \vu) \Big| \EE \right]  + \sigma^2 \right)^\hal \times \left(\PnLL \left[ |\G(\bX \pm t \vu)| \geq \sigma \right]\right)^\hal 
 \\
  + \EnLL\left[ \G^2(\bX) \Big| \EE \right]  \times  \left(\PnLL \left[ \ErrAv(t \vu, \epsilon, \ell, \bX) \geq \tau \big| \EE \right]\right)^\hal
\end{multline}
\end{proposition}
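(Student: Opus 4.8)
The plan is to transport $\G$ along the localized translation $\Phi_t$ (and $\Phi_{-t}$) by a change of variables in the Gibbs integral, and to control the resulting change of energy via the symmetric defect $\ErrAv$ from \eqref{def:HL}. By Lemma~\ref{lemma_propPhi}, for $|t|\le\ell/10$ the flow map $\Phi_{\pm t}$ is an area-preserving diffeomorphism of $\R^2$ that agrees with $x\mapsto x\pm t\vu$ on $\DD(0,\ell/4)$ — so that $\G(\Phi_{\pm t}\bXn)=\G(\bXn\pm t\vu)$, using $|t|\le\ell/10$ and $\DD(0,\ell/10)$-locality of $\G$ — and is the identity outside $\DD(0,2\ell e^{1/\epsilon})$; by \eqref{eq:ellrhoepsilon} this disk sits inside $\Lab$, so $\Phi_{\pm t}$ is the identity on $\La\setminus\Lab$ (hence leaves $\EE$ untouched) and on the support of $\zetaW$ (which lies within $\bCc\log T$ of $\partial\La$). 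Substituting $\bXn\mapsto\Phi_{\pm t}\bXn$ in the integral $\int\G(\bXn)\ind_\EE(\bXn)\,e^{-\beta(\FL(\bXn,\mm)+\nn\sum_i\zetaW(x_i))}\,\dd\Xn$ leaves $\dd\Xn$, $\ind_\EE$ and $\sum_i\zetaW(x_i)$ unchanged and turns $\G(\bXn)$ into $\G(\bXn\pm t\vu)$, whence for each sign
\begin{equation*}
\EnLL[\G(\bXn)\ind_\EE]=\EnLL\bigl[\G(\bXn\pm t\vu)\,\ind_\EE\,e^{-\beta\Delta_{\pm}}\bigr],\qquad \Delta_\pm:=\FL(\Phi_{\pm t}\bXn,\mm)-\FL(\bXn,\mm);
\end{equation*}
by the definition \eqref{def:HL} of $\ErrAv$ one has $\Delta_++\Delta_-=-2\,\ErrAv(t\vu,\epsilon,\ell,\bXn)$. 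Averaging the $+$ and $-$ identities and subtracting $\EnLL[\hal(\G(\bXn+t\vu)+\G(\bXn-t\vu))\ind_\EE]$ gives the exact formula
\begin{equation*}
\ErrorQI\cdot\PnLL(\EE)=\hal\,\EnLL\Bigl[\ind_\EE\bigl(\G(\bXn+t\vu)(e^{-\beta\Delta_+}-1)+\G(\bXn-t\vu)(e^{-\beta\Delta_-}-1)\bigr)\Bigr].
\end{equation*}

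It then remains to estimate this right-hand side, which I would do by three successive reductions, each producing one term of \eqref{def:ErrorQI}. First, truncate $\G(\bXn\pm t\vu)$ at level $\sigma$: the discarded piece is handled by Cauchy--Schwarz on $\{|\G(\bXn\pm t\vu)|\ge\sigma\}$, using the change-of-variables identity — with $F=|\G|\ind_{|\G|\ge\sigma}$ one gets $\EnLL[|\G(\bXn\pm t\vu)|\,\ind_{|\G(\bXn\pm t\vu)|\ge\sigma}\,e^{-\beta\Delta_\pm}\ind_\EE]=\EnLL[|\G(\bXn)|\,\ind_{|\G(\bXn)|\ge\sigma}\ind_\EE]$ — to bound everything by second moments of $\G$ (and $\sigma$) times $(\PnLL[|\G(\bXn\pm t\vu)|\ge\sigma])^{1/2}$. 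Second, on the truncated observable split off $\{\ErrAv(t\vu,\epsilon,\ell,\bXn)\ge\tau\}$; one more Cauchy--Schwarz, with $\sigma$ absorbed into $\EnLL[\G^2(\bXn)|\EE]$, yields the last term $\EnLL[\G^2(\bXn)|\EE]\,(\PnLL[\ErrAv\ge\tau|\EE])^{1/2}$ (the smallness $\PnLL[\ErrAv\ge\tau|\EE]\le e^{-\ell^2}$ from Proposition~\ref{prop:effectPhiEnergy} is simply carried in the statement). Third, on the complementary good event $\ErrAv\le\tau$ forces $\Delta_++\Delta_-\ge-2\tau$, so the arithmetic--geometric mean inequality gives $\hal(e^{-\beta\Delta_+}+e^{-\beta\Delta_-})\ge e^{\beta\ErrAv}\ge e^{-\beta\tau}$; combining this with the identities of the first step (which pin down the $\ind_\EE$-weighted averages of the exponential factors) one trades the pair $e^{-\beta\Delta_\pm}$ for $1$ up to a multiplicative error $e^{\beta\tau}-1$, and a last Cauchy--Schwarz against the truncated observable turns this into the first error term $2(\EnLL[\G^2(\bXn\pm t\vu)|\EE]+\sigma^2)^{1/2}(e^{\beta\tau}-1)$.

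The hard part will be this last step. The localized translation $\Phi_t$ taken by itself has a non-negligible, and in fact fluctuating, energy cost $\Delta_+$ of first order in $t$; only the \emph{symmetric} combination $\hal(\Delta_++\Delta_-)=-\ErrAv$ is quantitatively small, which is exactly the content of Proposition~\ref{prop:effectPhiEnergy} (itself built on the divergence-free, small-$\Hone$ ``spin wave'' of Section~\ref{sec:SW} and on Serfaty's smallness-of-anisotropy estimate). One therefore cannot Taylor-expand $e^{-\beta\Delta_\pm}$ around $1$ directly; the exponential factors must be domesticated through the $\pm t$ averaging, and it is precisely this that forces the three-term shape of \eqref{def:ErrorQI} — a price for truncating $\G$, so that the rare configurations on which $e^{-\beta\Delta_\pm}$ is large (and which carry little Gibbs weight) contribute little; a price for discarding the rare event that the symmetric defect $\ErrAv$ fails to be small; and, on the good event, the honest $\mathrm{O}(\tau)$ gap between $1$ and $\hal(e^{-\beta\Delta_+}+e^{-\beta\Delta_-})$.
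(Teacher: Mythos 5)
Your change-of-variables identity and the exact formula
\begin{equation*}
\ErrorQI\cdot\PnLL(\EE)=\hal\,\EnLL\Bigl[\ind_\EE\bigl(\G(\bXn+t\vu)(e^{-\beta\Delta_+}-1)+\G(\bXn-t\vu)(e^{-\beta\Delta_-}-1)\bigr)\Bigr]
\end{equation*}
are both correct, and the three-term structure of \eqref{def:ErrorQI} is correctly anticipated: the first two reductions (truncate $\G$ at level $\sigma$, peel off the rare event $\{\ErrAv\ge\tau\}$ by Cauchy--Schwarz) are sound and essentially match the paper. But the third step, which you yourself flag as ``the hard part,'' does not work as stated. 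The AM--GM inequality $\hal(e^{-\beta\Delta_+}+e^{-\beta\Delta_-})\ge e^{\beta\ErrAv}$ is a \emph{lower} bound on the averaged exponential weight, while what you need to bound $\hal\bigl(\tilde\G_\sigma(\bXn+t\vu)(e^{-\beta\Delta_+}-1)+\tilde\G_\sigma(\bXn-t\vu)(e^{-\beta\Delta_-}-1)\bigr)$ on the good event is control on the size of each $e^{-\beta\Delta_\pm}$. On $\{\ErrAv\le\tau\}$ you only know $\Delta_++\Delta_-\ge-2\tau$; either factor $e^{-\beta\Delta_\pm}$ individually can still be arbitrarily large, so $\tilde\G_\sigma(\bXn\pm t\vu)(e^{-\beta\Delta_\pm}-1)$ is pointwise unbounded, and the knowledge that $\EnLL[e^{-\beta\Delta_\pm}|\EE]=1$ does not let you ``trade $e^{-\beta\Delta_\pm}$ for $1$'' because those factors are correlated with $\tilde\G_\sigma(\bXn\pm t\vu)$ and with the good-event indicator. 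There is no cancellation mechanism available in the formula you derived.

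The paper avoids this by \emph{not} changing variables first. Working with the Gibbs weight itself, for $\G\ge0$ and $\bX\in\Fta:=\{\ErrAv\le\tau\}$ it bounds
\begin{equation*}
\ind_{\Fta}(\bX)\,e^{-\beta\FL(\bX,\mm)}\le e^{\beta\tau}\,e^{-\beta\hal(\FL(\Phit\bX,\mm)+\FL(\Phimt\bX,\mm))}\le e^{\beta\tau}\,\hal\bigl(e^{-\beta\FL(\Phit\bX,\mm)}+e^{-\beta\FL(\Phimt\bX,\mm)}\bigr)
\end{equation*}
by one-sided smallness of $\ErrAv$ and \emph{convexity} of $x\mapsto e^{-\beta x}$. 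Multiplying by $\G(\bX)\ind_\EE(\bX)\ge0$, integrating, and \emph{then} changing variables on each term of the right-hand side gives $\EnLL[\G\ind_{\Fta}|\EE]\le e^{\beta\tau}\,\EnLL[\hal(\G(\bX+t\vu)+\G(\bX-t\vu))|\EE]$ with no leftover exponential factors, since the transformed weight $e^{-\beta\FL(\Phi_{\pm t}\bX)}$ becomes the \emph{reference} weight after substitution. The sign constraint $\G\ge0$ is then removed by passing to $(\G+\sigma)\ind_{\{\G+\sigma\ge0\}}$ and handling the tails with Cauchy--Schwarz, which is where the $\sigma$-dependent terms of \eqref{def:ErrorQI} enter. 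This convexity-before-substitution step is the missing ingredient in your scheme; inserting it in place of the AM--GM argument makes your plan go through and reproduces the paper's proof.
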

We postpone the proof of Proposition \ref{prop:Quantitative_invariance} to Section \ref{sec:proof_quanti}.

There is of course nothing special about the vector $\vu$, \cor{and we may replace it by any unit vector in Proposition \ref{prop:effectPhiEnergy} and in Proposition \ref{prop:Quantitative_invariance}.}

\subsection{Application: expectation of discrepancies in sub-systems}
\newcommand{\bt}{\bar{t}}
\begin{proposition}
\label{prop:mollification}
Let $\boxs$ be a box of size $L$ as introduced in Definition \ref{def:boxes}. Assume that it is contained in the disk $\DD(0, \ell/10)$. Then we have:
\begin{equation}
\label{eq:esperance_une_fois_mollifiee}
\left| \EnLL \left[ \Di(\bXn, \boxs) \Big| \EE_\La \right]  \right| \leq \bCc L \left(\epsilon \log \epsilon \log L\right)^{1/3}.
\end{equation}
\end{proposition}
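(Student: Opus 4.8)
The plan is to trade the sharp indicator of $\boxs$ for a mollified version at a scale $\bt$ to be optimized at the end, using Proposition~\ref{prop:Quantitative_invariance} to pay for the mollification and the (localized) Lemma~\ref{lem:bound_fluct_smooth_SS} to bound the expectation of the resulting smooth fluctuation. Concretely, I would fix a smooth even probability density $\rho$ supported on $[-1,1]$, set $\rho_{\bt}(\cdot):=\bt^{-1}\rho(\cdot/\bt)$, and apply \eqref{eq:quantitative_invariance} \emph{twice}: first with $\G=\Di(\cdot,\boxs)$, translations along $\vu$, and $\EE=\EE_\La$ (legitimate since $\Di(\cdot,\boxs)$ is $\DD(0,\ell/10)$-local because $\boxs\subset\DD(0,\ell/10)$, and $\EE_\La$ is $\La\setminus\Lab$-local), then a second time with $\G$ the once-mollified linear statistic and translations along $\vu^{\perp}$ (allowed by the remark after Proposition~\ref{prop:Quantitative_invariance}). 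Integrating each identity against $\rho_{\bt}(t)\,dt$ and using that the background $\mm=\nn\muW$ has density $1$ on all of $\DD(0,\ell/10)$ and its $O(\bt)$-enlargement — so that translating the configuration equals translating the test function, the Lebesgue contribution being translation invariant — the two averaging steps turn $\Di(\cdot,\boxs)=\Fluct[\1_{\boxs}]$ into $\Fluct[\psi]$ with $\psi:=\1_{\boxs}\ast\rho_{\bt}^{\vu}\ast\rho_{\bt}^{\vu^{\perp}}$ a $C^\infty$ function satisfying $|\psi|_{\0}\le 1$, $|\psi|_{\kk}\preceq\bt^{-\kk}$, and with $\nabla\psi$ supported in the $O(\bt)$-collar of $\partial\boxs$, whose area is $\O(L\bt)$ as soon as $\bt\lesssim L$.

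The smooth part is then bounded as follows: since this collar lies in a disk of radius $\O(L)$ contained in $\Lab$, and the proof of Lemma~\ref{lem:bound_fluct_smooth_SS} in fact controls $\EnLL[\Fluct[\psi]\mid\EE_\La]$ by $\bCc\,|\psi|_{\2}\,\EnLL[\EnerPoints(\bX_\nn,\supp\nabla\psi)\mid\EE_\La]$, which the local laws of Proposition~\ref{prop:local_law_SUBSYS} bound by $\O(L\bt)$, one gets $|\EnLL[\Fluct[\psi]\mid\EE_\La]|\preceq\bCc\,\bt^{-2}\cdot L\bt=\bCc\,L/\bt$. It remains to control the two error terms produced by the two invocations of \eqref{def:ErrorQI}, averaged over $|t|\le\bt$. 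I would take $\tau:=\Cc\,t^{2}\epsilon|\log\epsilon|$, so that by Proposition~\ref{prop:effectPhiEnergy} (that is, \eqref{eq:control_ErrAv}) one has $\PnLL[\ErrAv(t\vu,\epsilon,\ell,\bX)\ge\tau\mid\EE_\La]\le e^{-\ell^{2}}$ for all $|t|\le\bt\le\ell/10$, and $\sigma:=\Cc'L\sqrt{\log L}$. The second-moment factors in \eqref{def:ErrorQI} are all of the form $\EnLL[\Di^{2}(\bX,\boxs')\mid\EE_\La]$ for a box $\boxs'$ that is an $O(\bt)$- or $O(1)$-translate of $\boxs$ (or, in the second application, $\EnLL[\Fluct[\phi]^{2}\mid\EE_\La]$ for $\phi$ a translate of the once-mollified function), all supported in $\Lab$; by Lemma~\ref{claim:DiSS} with Remark~\ref{rem:discr_shape} (resp.\ Lemma~\ref{lem:apriori} with the local laws) these are $\O(L^{2})$, while $\PnLL[|\,\cdot\,|\ge\sigma]$ is super-polynomially small in $L$. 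Feeding this in: the last line of \eqref{def:ErrorQI} is $\O(L^{2}e^{-\ell^{2}/2})$, negligible since $\ell\ge 10L$; the middle line is $\O(\sigma e^{-c\sigma^{2}/L^{2}})$, negligible; and the first line is $\O\!\big((L\sqrt{\log L})(e^{\beta\tau}-1)\big)=\O\!\big(L\sqrt{\log L}\,\beta\,t^{2}\epsilon|\log\epsilon|\big)$, which after integrating against $\rho_{\bt}$ is $\O\!\big(\bCc\,L\bt^{2}\epsilon|\log\epsilon|\sqrt{\log L}\big)$; the same estimate holds for the second application.

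Combining, $|\EnLL[\Di(\bX_\nn,\boxs)\mid\EE_\La]|\preceq\bCc\big(L/\bt+L\bt^{2}\epsilon|\log\epsilon|\sqrt{\log L}\big)$; choosing $\bt\asymp(\epsilon|\log\epsilon|)^{-1/3}$, capped at a small multiple of $L$ (which is compatible with the conditions \eqref{eq:ellrhoepsilon}), gives $\O\big(\bCc\,L(\epsilon|\log\epsilon|)^{1/3}(\log L)^{1/6}\big)$, and since \eqref{eq:ellrhoepsilon} forces $\log(1/\epsilon)\le 2\log^{2}L$ the extra factor $(\log L)^{1/6}$ is absorbed into $(\log L)^{1/3}$, which is \eqref{eq:esperance_une_fois_mollifiee} (with $\epsilon|\log\epsilon|$ written $\epsilon\log\epsilon$ as in the statement). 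I expect the main obstacle to be the bookkeeping around the two successive applications of Proposition~\ref{prop:Quantitative_invariance}: one must keep the once-mollified linear statistic $\DD(0,\ell/10)$-local (which uses $\bt\ll\ell$), justify ``translate the points $=$ translate the test function'' from the fact that $\mm$ has density $1$ near $\boxs$, and gather the second-moment and tail inputs of \eqref{def:ErrorQI} for translates of $\boxs$ rather than $\boxs$ itself — all of which hinge on $\boxs$ together with its $O(\bt)$-neighborhood sitting well inside the bulk $\Lab$. A secondary but essential point is that Lemma~\ref{lem:bound_fluct_smooth_SS} must be used in its localized form (the area $|\supp\nabla\psi|$ in place of the squared support radius): this is what converts $L^{2}/\bt^{2}$ into $L/\bt$ and is precisely why the final exponent is $1/3$ rather than $1/2$.
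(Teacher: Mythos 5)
Your proposal reproduces the paper's argument in all essentials: mollify the discrepancy by averaging over localized translations (Proposition~\ref{prop:Quantitative_invariance}), pay for the averaging through $\ErrAv$ via Proposition~\ref{prop:effectPhiEnergy}, bound the expectation of the resulting smooth linear statistic by the localized output of Lemma~\ref{lem:bound_fluct_smooth_SS} (and you correctly stress that one must use $|\varphi|_{\2}\,|\supp\nabla\varphi|$ rather than the cruder $|\varphi|_{\2}\,\ell^{2}$ of the lemma's statement, which is exactly what the paper does), and then optimize $\bt$. The only cosmetic difference is that you perform two successive $1$D convolutions along $\vu$ and $\vu^{\perp}$, applying Proposition~\ref{prop:Quantitative_invariance} twice, while the paper averages once over $2$D vectors $v$ with $|v|\le\bt$ in the direction $v/|v|$; both produce a smooth cut-off with $|\varphi|_{\kk}\lesssim\bt^{-\kk}$ and $|\supp\nabla\varphi|=\O(L\bt)$.

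There is, however, one genuine (if small) slip. Your choice $\sigma=\Cc' L\sqrt{\log L}$ does \emph{not} make $\PnLL[\,|\Di(\cdot,\boxs)|\ge\sigma\mid\EE_\La]$ super-polynomially small in $L$: the sub-Gaussian moment \eqref{discr_bounds_in_SS} only yields
\begin{equation*}
\PnLL\bigl[\,|\Di(\bXn,\boxs)|\ge L\sqrt{\log L}\,\bigm|\,\EE_\La\bigr]\;\le\; e^{\Cloc}\,\exp\Bigl(-\tfrac{\beta}{\Cc}\log L\Bigr)\;=\;e^{\Cloc}\,L^{-\beta/\Cc},
\end{equation*}
which is merely polynomially small with a $\beta$-dependent exponent. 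For small $\beta$ the middle line of \eqref{def:ErrorQI}, of size $\O\bigl(L\sqrt{\log L}\cdot L^{-\beta/(2\Cc)}\bigr)$, is then no longer negligible compared to $L(\epsilon\log\epsilon\log L)^{1/3}$, and the argument breaks. The paper takes $\sigma=L\log L$, for which Lemma~\ref{claim:DiSS} gives a tail $\exp(-\log^{2}L/\Cc_\beta)$, genuinely super-polynomial uniformly in $\beta$. With this corrected $\sigma$ the first line of \eqref{def:ErrorQI} becomes $\O\bigl(L\log L\cdot\epsilon|\log\epsilon|\,\bt^{2}\bigr)$, and the same optimization $\bt\asymp(\epsilon|\log\epsilon|\log L)^{-1/3}$ yields exactly \eqref{eq:esperance_une_fois_mollifiee}; the apparent $(\log L)^{1/6}$ improvement in your final line does not survive the fix.
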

The point of \eqref{eq:esperance_une_fois_mollifiee} is that if $\epsilon \log \epsilon \log L$ is $\ll 1$ then the bound on the average discrepancy is $\ll L$, and thus \emph{much better than the crude estimate via standard deviation} using Lemma \ref{claim:DiSS}. This is crucial for us.

\begin{proof}[Proof of Proposition \ref{prop:mollification}]
Let us take $\G(\bX) := \Di(\bX, \boxs)$, which is clearly $\DD(0, \ell/10)$-local in view of our assumption on $\boxs$. Moreover $\EE_\La$ (by its Definition \ref{def:EELA}) is clearly $\La \setminus \Lab$-local. Fix the parameter $\bt$ as:
\begin{equation}
\label{eq:defbt}
\bt := \left( \epsilon \log \epsilon \log L\right)^{-1/3}.
\end{equation}
We first check that $\bt$ has the correct range (using \eqref{eq:ellrhoepsilon}):
\begin{equation*}
\bt \leq \epsilon^{-1/3} \leq \ell^{2/3} \leq \frac{\ell}{10}.
\end{equation*}
In particular, for all vector $v \in \R^2$ with $\|v\| \leq \bt$ we may apply the result of Proposition \ref{prop:Quantitative_invariance} to a localized translation in the direction $\frac{v}{\|v\|}$ instead of $\vu$.

Introduce a smooth cut-off function $\bchi$ equal to $1$ for $|x| \leq \frac{\bt}{2}$ and to $0$ for $|x| \geq \bt$, such that $|\bchi|_{\kk} \leq \Cc \bt^{-k}$ (for $\kk = 1, 2$). We impose that $\bchi$ be an even function. For all $|v| \leq \bt$ we have:
\begin{multline*}
\EnLL\left[ \Di(\bX, \boxs) \Cond \EE_\La \right] = \EnLL\left[ \hal \left( \Di(\bX + v, \boxs) + \Di(\bX - v, \boxs) \right) \Cond \EE_\La \right] \\
+ \ErrorQI(v, \epsilon, \ell, \tau, \sigma, \Di(\cdot, \boxs), \bX).
\end{multline*}
Integrating this against $\frac{\bchi}{|\bchi|_{L^1}}$ (which has mass $1$) and using the fact that $\bchi$ is even we obtain:
\begin{equation*}
\EnLL \left[ \Di(\bXn, \boxs) \Big| \EE_\La \right]  = \frac{1}{|\bchi|_{L^1}} \int \bchi(v) \left(\EnLL \left( \Di(\bX + v, \boxs) \Big| \EE_\La \right) + \ErrorQI(v) \right) \dd v.
\end{equation*}
 Observe that $\Di(\bX + v, \boxs) = \Fluct[\ind_{\boxs}(\cdot + v)](\bX)$, so that we may re-write the integral in the right-hand side as a convolution. Introducing the function $\varphi := \ind_{\boxs} \ast \frac{\bchi}{|\bchi|_{L^1}}$ we get:
\begin{equation}
\label{esp_esp_mollifiee}
\left| \EnLL \left[ \Di(\bXn, \boxs) \Big| \EE_\La \right] - \EnLL \left[ \Fluct[\varphi] \Big| \EE_\La \right] \right| \leq \sup_{|v| \leq \bt} \ErrorQI\left(v, \epsilon, \ell, \tau, \sigma, \Di(\cdot, \boxs), \bXn\right).
\end{equation}
By construction, the function $\varphi$ is now a smooth cut-off function, equal to $1$ on $\{x \in \boxs, \dist(x, \partial \boxs) \geq \bt\}$
 and to $0$ for $\{x \notin \boxs, \dist(x, \partial \boxs) \geq \bt\}$. In particular, the support of $\nabla \varphi$ has an area $\O(L \bt)$, and moreover we have $|\varphi|_{\kk} \leq \Cc \bt^{-\kk}$ for $\kk = 1, 2$ (by Young's convolution inequality). From Lemma~\ref{lem:bound_fluct_smooth_SS}, we thus know:
 \begin{equation}
 \label{esperancemollifiee}
\left| \EnLL \left[ \Fluct[\varphi] \right] \Big| \EE_\La \right| \preceq \frac{1}{\bt^2} \times L \bt = \O\left(\frac{L}{\bt}\right),
 \end{equation}
with multiplicative constants depending on $\beta, \delta$.

On the other hand, we know by Proposition \ref{prop:effectPhiEnergy} that choosing $\bCc$ large enough we can ensure:
\begin{equation*}
\sup_{|v| \leq \bt} \left(\PnLL \left( \left\lbrace \left| \ErrAv(v, \epsilon, \ell, \bX) \right\rbrace \right| \geq \bCc \epsilon \log \epsilon \bt^2 \big| \EE_\La \right]\right) \leq \exp\left(- \ell^2 / \bCc \right).
\end{equation*}
Moreover $\Di(\bX + v, \boxs) = \Di(\bX, \boxs - v)$ and we have by Lemma \ref{claim:DiSS}
\begin{equation*}
\sup_{|v| \leq \bt} \EnLL\left[ \Di^2(\bX, \boxs - v) \Big| \EE_\La \right] \leq \bCc L^2
\end{equation*}
because the translated object $\boxs - v$ remains a box within $\Lab$ and, still by Lemma \ref{claim:DiSS}, for $\bCc$ large enough we have:
\begin{equation*}
\sup_{|v| \leq \bt} \PnLL \left( \left\lbrace |\Di(\bX + v, \boxs)| \geq  L \log L  \right \rbrace \right) \leq e^{- \frac{1}{\bCc} \log^2 L}.
\end{equation*} 
(Remark that the $\sup$ is outside $\PnLL, \EnLL$ in those bounds - it would be significantly more challenging otherwise.)

Thus, choosing $\tau = \bCc \epsilon \log \epsilon \bt^2$, $\sigma = L \log L$ in the statement of Proposition \ref{prop:Quantitative_invariance}, the error term $\ErrorQI$ reduces to:
\begin{equation*}
\ErrorQI(v, \epsilon, \ell, \tau, \sigma, \Di(\bX, \boxs), \bX) \preceq \bCc L \log L \left(\epsilon \log \epsilon \bt^2 + \exp\left(- \ell^2 / \Cc \right) +  \exp\left( - \frac{1}{\Cc} \log^2 L \right) \right),
\end{equation*}
uniformly for $|v| \leq \bt$ and thus (using \eqref{eq:ellrhoepsilon} and keeping only the dominant term):
\begin{equation}
\label{ErrorQI_v}
\sup_{|v| \leq \bt} \ErrorQI \leq \bCc L \log L \epsilon \log \epsilon \bt^2.
\end{equation}
Combining \eqref{esp_esp_mollifiee} with \eqref{esperancemollifiee} and \eqref{ErrorQI_v}, and using our choice \eqref{eq:defbt} for $\bt$, we obtain \eqref{eq:esperance_une_fois_mollifiee}.
\end{proof}

\begin{remark}
One can use the same argument to study expectations of discrepancies \emph{in the full system}. Then one does not need to use $\EE_\La$ since local laws are know to hold unconditionally, and one can take $\epsilon$ as small as $\log^{-1} N$.
\end{remark}

\begin{corollary}[An application]
\label{coro:appli_exp}
Assume that $L, T$ satisfy the following relation:
\begin{equation}
\label{eq:T_L_additional}
T = 100 L \exp(10 L).
\end{equation}
Then we have, \cor{with $\La := \DD(\omega, T)$ and $\boxs$ a box of size $L$ as in Definition \ref{def:boxes},}
\begin{equation}
\label{eq:discrepance_appliquee}
\left| \EnLL \left[ \Di(\bX, \boxs) \right] \right| \leq \bCc L^{0.67}.
\end{equation}
\end{corollary}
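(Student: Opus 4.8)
Here is the plan for Corollary~\ref{coro:appli_exp}. The idea is to first record the bound of Proposition~\ref{prop:mollification} for a convenient choice of the auxiliary parameters, and then to strip off the conditioning on $\EE_\La$.

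\emph{Choosing $\epsilon,\ell$ and extracting $L^{0.67}$.} Given $L$ and $T$ related by~\eqref{eq:T_L_additional}, I would take $\epsilon$ and $\ell$ both of order $L$, precisely $\ell:=30L$ and $\epsilon^{-1}:=10L-\log 2$, so that $e^{1/\epsilon}=\tfrac12 e^{10L}$ and $5\ell e^{1/\epsilon}\le T=100L e^{10L}\le 10\ell e^{1/\epsilon}$; a direct check shows that the remaining constraints of~\eqref{eq:ellrhoepsilon}, namely $e^{\sqrt{\log\ell}}\le L\le\ell/10$, $\log\ell\le\epsilon^{-1}$ and $\epsilon^{-1}\le\ell^2\le\epsilon^{-3}$, hold once $L$ exceeds an absolute constant. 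With this choice the disk $\DD(0,\ell/10)=\DD(0,3L)$ contains the box $\boxs$ (of size $L$, centred at its centre of mass, i.e.\ at the origin in local coordinates), so Proposition~\ref{prop:mollification} applies. Plugging these values into~\eqref{eq:esperance_une_fois_mollifiee} and using $|\log\epsilon|\le\log(10L)\le\Cc\log L$ gives
\begin{equation*}
\left|\EnLL\left[\Di(\bXn,\boxs)\,\big|\,\EE_\La\right]\right|\;\le\;\bCc\,L\left(\frac{\log^2 L}{L}\right)^{1/3}=\bCc\,L^{2/3}(\log L)^{2/3},
\end{equation*}
and since $2/3<0.67$ one has $L^{2/3}(\log L)^{2/3}\le L^{0.67}$ for $L$ large, so $\left|\EnLL[\Di(\bXn,\boxs)\mid\EE_\La]\right|\le\bCc L^{0.67}$.

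\emph{Removing the conditioning.} Next I would write
\begin{equation*}
\EnLL\left[\Di(\bX,\boxs)\right]=\PnLL(\EE_\La)\,\EnLL\left[\Di(\bX,\boxs)\,\big|\,\EE_\La\right]+\EnLL\left[\Di(\bX,\boxs)\,\ind_{\EE_\La^c}\right].
\end{equation*}
The first term is $\le\bCc L^{0.67}$ by the previous paragraph (as $\PnLL(\EE_\La)\le1$). For the second, Cauchy--Schwarz gives $\left|\EnLL[\Di(\bX,\boxs)\ind_{\EE_\La^c}]\right|\le(\EnLL[\Di^2(\bX,\boxs)])^{1/2}\,\PnLL(\EE_\La^c)^{1/2}$; since under $\PnLL$ the number of points is the deterministic $\nn$ with $\nn\le|\La|+T\log T=\O(T^2)$ and $|\boxs|=\O(L^2)$, we have $|\Di(\bX,\boxs)|\le\O(T^2)$ pointwise, hence $\EnLL[\Di^2(\bX,\boxs)]=\O(T^4)$. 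It therefore suffices to prove
\begin{equation*}
\PnLL(\EE_\La^c)\;\le\;\exp\left(-\log^2 T/\bCc\right),
\end{equation*}
because then, using $\log T\ge 10L$, the second term is at most $\O(T^2)\exp(-\log^2 T/(2\bCc))=\exp(\O(L)-50L^2/\bCc)\to 0$, so it is $\le L^{0.67}$ for $L$ large; adding the two contributions yields~\eqref{eq:discrepance_appliquee}. (One could relax the needed bound on $\PnLL(\EE_\La^c)$ by using the conditional second-moment estimate $\EnLL[\Di^2(\bX,\boxs)\mid\EE_\La]\le\Cc\Cloc L^2$ of Lemma~\ref{claim:DiSS} instead of the crude $\O(T^4)$, but this is not necessary.)

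\emph{The main obstacle: bounding $\PnLL(\EE_\La^c)$.} By Definition~\ref{def:EELA}, $\EE_\La^c$ is the event that either some point of $\bX$ lies within distance $e^{-\log^2 T}$ of $\partial\La$, or more than $T\log T$ points lie within distance $1$ of $\partial\La$. The structure of the argument is exactly that of the proof of Claim~\ref{claim:EELA_often} — cover the $e^{-\log^2 T}$-neighbourhood of $\partial\La$ by $\O(Te^{\log^2 T})$ disks of radius $\O(e^{-\log^2 T})$, cover the $1$-neighbourhood by $\O(T)$ unit squares, bound the number of points in each piece, and take a union bound, all contributing probabilities decaying like $e^{-\log^2 T/\bCc}$ — but now under the generalized Gibbs measure $\PnLL$ rather than $\PNbeta$. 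The genuinely new and delicate ingredient, which I expect to be the hard part, is the per-piece control \emph{near $\partial\La$}: the local laws of Proposition~\ref{prop:local_law_SUBSYS} and the discrepancy bounds of Lemma~\ref{claim:DiSS} are proved only in the bulk $\Lab$, whereas here one needs a Wegner-type estimate (for ``no point too close to $\partial\La$'') and a ``not too many points per unit square'' estimate up to the boundary, for a background $\nn\muW$ that may be singular on $\partial\La$ and have holes nearby. I would handle this by exploiting the quantitative features of a \emph{good external potential} (Definition~\ref{def:GoofPotential}) — the decomposition $\VextL=\h^\nu+R$ with $|R|_{1,\La}\le\bCc\log^2 T$, the local mass control $\sup_{x\in\partial\La}\nu(\DD(x,\log T))\le\bCc\log^2 T$, and the edge estimate~\eqref{wV_up_to_the_edge} — which together limit how strongly $\VextL$ can pull points toward $\partial\La$, combined with a renormalized-energy lower bound for the cost of creating a localized charge excess near a curve in the spirit of~\cite[Lemma~B.4]{armstrong2019local}.
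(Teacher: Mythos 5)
Your first paragraph is essentially the paper's proof: pick $\epsilon,\ell$ so that \eqref{eq:ellrhoepsilon} and $T=100L e^{10L}$ are compatible, then plug into \eqref{eq:esperance_une_fois_mollifiee}. The paper takes $\ell=10L$, $\epsilon=1/(10L)$, giving $T = 10\ell e^{1/\epsilon}$ exactly; your $\ell=30L$, $\epsilon^{-1}=10L-\log 2$ is an equivalent choice. Both give $|\EnLL[\Di(\bXn,\boxs)\mid\EE_\La]| = \O(L^{2/3}(\log L)^{2/3})\le\bCc L^{0.67}$. So far so good.

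The rest of your proposal, however, is addressing a problem the paper does not actually solve — and, I would argue, does not intend to pose. The paper's proof of the corollary \emph{only} cites \eqref{eq:esperance_une_fois_mollifiee}, whose left-hand side is the \emph{conditional} expectation $\EnLL[\,\cdot\mid\EE_\La]$; it never removes the conditioning. Moreover, the way the corollary is used downstream (in the proof of the claim giving \eqref{eq:esperance_finale_SS} in Section~\ref{sec:conclusion}) is explicitly in the form ``Using Corollary~\ref{coro:appli_exp} to control $\EnLi[\Dd_i\mid\EE_{\Lai}]$''. So the ``$\mid\EE_\La$'' missing from the statement of Corollary~\ref{coro:appli_exp} is a notational slip, not a claim that the conditioning has been removed. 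You read the statement literally and set out to strip the conditioning.

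Within that (unnecessary) goal, your diagnosis of the obstacle is correct and important: the Cauchy–Schwarz step reduces matters to a tail estimate $\PnLL(\EE_\La^c)\le\exp(-\log^2 T/\bCc)$ under the \emph{sub-system} measure $\PnLL$, whereas the paper only proves the corresponding statement under $\PNbeta$ (Claim~\ref{claim:EELA_often}, which uses the full-system cluster bound from \cite{Thoma:2022aa} and full-system local laws). Transferring this to $\PnLL$ requires control near $\partial\La$ under a Gibbs measure whose background $\nn\muW$ may be singular on $\partial\La$; Proposition~\ref{prop:local_law_SUBSYS} and Lemma~\ref{claim:DiSS} are deliberately stated only in the bulk $\Lab$ and do not cover this regime, and there is no Wegner estimate for $\PnLL$ in the paper. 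Your last paragraph only sketches a possible route and does not carry it out, so as a proof of the unconditional statement your proposal has a genuine gap. But the right fix is simpler: reinstate the conditioning $|\,\EE_\La$ in \eqref{eq:discrepance_appliquee}, after which your first paragraph (equivalently, the paper's proof) is complete.
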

\begin{proof}
We can take $\ell = 10 L$ and $\epsilon = \ell^{-1} = \frac{1}{10 L}$, and $T = 100 L \exp(10 L) = 10 \ell e^{1/\epsilon}$. The conditions of \eqref{eq:ellrhoepsilon} are clearly satisfied. The right-hand side of \eqref{eq:esperance_une_fois_mollifiee} is then bounded by $\bCc L^{2/3} \log L = \O(L^{0.67})$ (for $L$ large enough).
\end{proof}

\section{Conclusion: proof of Theorem \ref{theo:main2}}
\newcommand{\AAR}{\mathcal{A}_R}
\label{sec:conclusion}

Let $\delta > 0$ be fixed. Let $x, R$ be such that $\dist(\DD(x, R), \partial \LN) \geq \delta \sqrt{N}$ as assumed in \eqref{eq:assumption_distance}. Let $\epsilon_R$ be chosen as:
\begin{equation}
\label{choix_alpha}
\epsilon_R := \log^{-0.3}(R).
\end{equation}

For simplicity, we will focus on the case of an \emph{excess} of points, i.e. a positive discrepancy, the other case being treated similarly. Define the event $\AAR$:
\begin{equation*}
\AAR := \{ \Di\left(\bXN, \DD(x,R) \right) \geq \epsilon_R R  \} 
\end{equation*}
The conclusion that we want to reach (as stated in \eqref{eq:resultat}) is that for $\cor{N}, R, \bCc$ large enough (depending on $\beta$ and on the parameter $\delta$, but not on $x$):
\begin{equation}
\label{tail_estimate_quonprouve}
\PNbeta \left( \AAR \right) \leq \exp\left( - \log^{1.5} R \right).
\end{equation}
\corT{We have not tried to optimize $\epsilon_R$ or the exponent in \eqref{tail_estimate_quonprouve}, what matters for us is that $\epsilon_R \to 0$ and that $1.5 > 1$ so our probabilistic tail is better than algebraic. With the methods of the present paper, there is a hard limit on the smallness of $\epsilon_R$ - it has to be at least $\log^{-1} R$.}

\medskip

\subsubsection*{\cor{The choice of parameters and constants.}}
\cor{In the steps below, we work with $\delta$ (as in \eqref{eq:assumption_distance}) fixed, which serves as a threshold for defining the “bulk” of the system (points at distance $\geq \delta \sqrt{N}$ from the edge). We then:
\begin{itemize}
	\item Take $N, x, R$ arbitrary, with the only constraint that $\dist(\DD(x, R), \partial \LN) \geq \delta \sqrt{N}$. 
	\item The choice of $R$ will determine the value of the parameters $L, T, M$, as well as some auxiliary parameters $s, \omega$. If $R$ is larger than some constant depending on $\beta$, a certain number of conditions will be satisfied: \eqref{eq:s_and_L}, \eqref{eq:condi_T2logTML}, which we use in Steps 1. and 2.
	\item In Step 3. we use Proposition \ref{prop:VextIsOftenGood}. It gives us a constant $\bCc_1$, which depends on $\beta, \delta$, and a condition on $R$ being large enough (depending again on $\beta, \delta$), such that the exterior potential $\Vext$ is a good external potential with constant $\bCc_1$ (up to a negligible event).
	\item We then rely on the results of Sections \ref{sec:good_properties} and \ref{conseqSubSys} about “good sub-systems with good external potentials”. The constants found in the statements there depend on how “good” the external potential is, so here on our constant $\bCc_1$. We will denote those by $\bCc_2$.
\end{itemize}}

\subsubsection*{Step 1. Choosing $L$ and cornering the discrepancy}
Let $L$ be chosen as:
\begin{equation}
\label{choix_L}
L := \log^{0.99} R.
\end{equation}
In particular, \emph{for $R$ greater than some constant} (depending only on $\beta$) we have:
\begin{equation}
\Cins \frac{1}{\epsilon_R} = \Cins \log^{0.3} R \leq \log^{0.99} R = L \leq R/10,
\end{equation}
(where $\Cins$ is the constant depending only on $\beta$ introduced in Proposition \ref{prop:chase_the_goat}) so the first condition of \eqref{eq:s_and_L} is satisfied. 
\newcommand{\BBR}{\mathcal{B}_R}

Moreover let $s$ be chosen as $\frac{1}{\Cins} \min \left( \frac{L^3}{R}, L \epsilon_R \right)$, namely (for $R$ large enough): $s := \frac{1}{\Cins} \frac{L^3}{R}$. By definition, the second condition of \eqref{eq:s_and_L} is then satisfied. 

Let us compute:
\begin{equation*}
\exp\left( - \frac{s \epsilon_R R}{4} \right) = \exp\left( - \frac{\epsilon_R L^3}{4 \Cins} \right) \leq \exp\left( - \frac{\log^{2.67} R}{\Cc_\beta}  \right),
\end{equation*}
\cor{where we replace $4 \Cins$ by $\Cc_\beta$.}

\smallskip

For each $k$ with $0 \leq k \leq R^{2}$, let $\BBR(k)$ be the event:
\begin{equation*}
\BBR(k) := \left\lbrace \text{The discrepancy in the annulus $\DD_R \setminus \DD_{R - 2L + \frac{k L}{R^2}}$ is larger than $\frac{1}{4} \eRR$} \right\rbrace
\end{equation*}
\cor{At this point, if we were dealing with a \emph{default} of points, we would introduce instead:
\begin{equation*}
\BBR(k) := \left\lbrace \text{The discrepancy in the annulus $\DD_R \setminus \DD_{R - 2L + \frac{k L}{R^2}}$ is smaller than $-\frac{1}{4} \eRR$} \right\rbrace.
\end{equation*}
}

\smallskip

Combining Proposition \ref{prop:chase_the_goat} and Lemma \ref{lem:finding_discr} we obtain:
\begin{equation}
\label{eq:AAR_BBR}
\PNbeta( \AAR ) \leq \sum_{k=1}^{R^2} \PNbeta( \BBR(k) ) + \exp\left( - \frac{\log^{2.67} R}{\Cc_\beta}  \right),
\end{equation}
and we now focus on bounding $\BBR(k)$ (the index $k$ plays no particular role, and for simplicity we forget about it). 

For each $k$, as explained in Section \ref{sec:well-separated}, we decompose the annulus $\DD_R \setminus \DD_{R - 2L + \frac{k L}{R^2}}$ into boxes of size $L$ as in Definition \ref{def:boxes}, that we denote by $\left\lbrace \boxs_i, i \in \{0, \dots, \frac{R}{L} -1 \} \right\rbrace$.

\subsubsection*{Step 2. Choosing $T, M$ and a well-separated family} 
\newcommand{\CCR}{\mathcal{C}}
Let $T$ be chosen (as in \cor{Corollary \ref{coro:appli_exp}}) as:
\begin{equation}
\label{eq:choix_de_T}
T := 100 \times \log^{0.99} R \times \exp\left(10 \log^{0.99} R \right) = 100 L \exp \left( 10 L \right),
\end{equation}
and let $M$ be chosen as:
\begin{equation}
\label{choix_M}
M := T^6. 
\end{equation} 
Since $L = \log^{0.99} R$ (according to \eqref{choix_L}), the conditions of \eqref{eq:condi_T2logTML},
\cor{namely:
\begin{equation*}
T \geq 10 L, \quad 100 \leq M \leq \frac{R}{L}, \quad  T \leq \frac{ML}{100}, \quad T^2 \log T \leq ML.
\end{equation*}
}are clearly satisfied for $R$ large enough. 

\smallskip

Now, for $l \in \{0, \dots, M-1\}$, let $\CCR(l)$ be the event:
\begin{equation}
\label{def:CCR}
\CCR(l) :=\left\lbrace  \sum_{i = l \textrm{ mod } M} \Di(\bXN, \boxs_i) \geq \frac{\epsilon_R R}{4 M} \right\rbrace.
\end{equation}
By Lemma \ref{lem:MV1} we know that:
\begin{equation}
\label{B_TO_C}
\PNbeta( \BBR ) \leq \sum_{l = 0}^{M-1} \PNbeta \left( \CCR(l) \right), 
\end{equation}
and we now focus on bounding $\CCR(l)$ (again, the index $l$ plays no role in the sequel). \cor{Note that here again, in the case of a \emph{default} of points, we would define instead:
\begin{equation*}
\CCR(l) :=\left\lbrace  \sum_{i = l \textrm{ mod } M} \Di(\bXN, \boxs_i) \leq - \frac{\epsilon_R R}{4 M} \right\rbrace,
\end{equation*}
and then proceed similarly.}

\smallskip

The index $l$ being fixed, we only consider the boxes $\boxs_i$ for $i \equiv l \mod M$ and forget about the other boxes. We relabel those boxes as $\boxs_i$ for $i \in \{1, \dots, \Nn\}$ where $\Nn$ is the cardinality of that family of boxes, with:
\begin{equation}
\label{eq:NNRML}
\Nn \leq \Cc \frac{R}{ML}.
\end{equation} 

As in Section \ref{sec:well-separated} we let $\Lai$ be the disk $\DD(\omega_i, T)$, where $\omega_i$ is the “center” of the box $\boxs_i$, and we recall that $\dij$ denotes the distance between $\Lai$ and $\Laj$. Using \eqref{sumdij} and summing over $i = 1, \dots, \Nn = \O(\frac{R}{ML})$, we get: 
\begin{equation}
\label{sumdij_applied}
\max_{1 \leq i \leq \Nn} \sum_{j \neq i} \frac{1}{\dij} = \O\left(\frac{\log R}{ML}\right), \quad \sum_{1 \leq i \neq j \leq \Nn} \frac{1}{\dij} = \O\left(\frac{R \log R}{(ML)^2}\right).
\end{equation}

\subsubsection*{\cor{Step 3. Inserting }a “good event”}
Let $\Vext$ be the logarithmic potential generated by the system outside the $\La_i$'s as in \eqref{def:VextLi}. For $\bCc > 0$, let $\Eext(\bCc)$ be the event:
\begin{equation}
\label{def:event_good}
\Eext(\bCc) := \left\lbrace \text{$\Vext$ is a “good external potential” on each $\La_i$ with constant $\bCc$.} \right\rbrace
\end{equation}
(See Definition \ref{def:GoofPotential}). 

By Proposition \ref{prop:VextIsOftenGood} we know that if $\bCc_1$ is chosen large enough and if $T$ is large enough (i.e. $R$ is large enough) depending only on $\beta$ and on the parameter $\delta$ from \eqref{eq:assumption_distance}, then we have:
\begin{equation*}
\PNbeta \left( \Eext(\bCc_1) \right) \geq 1 - \Nn \exp\left( - \log^2 T / \bCc_1 \right).
\end{equation*}
Since $\Nn$ is always smaller than $R = e^{\log R}$ (see \eqref{eq:NNRML}) and since, by our choice \eqref{eq:choix_de_T} of $T$ we have $\log T \geq \log^{0.99} R$ and thus:
\begin{equation*}
\exp\left( - \log^2 T / \bCc_1 \right) \leq \exp\left( - \log^{1.98} R / \bCc_1 \right) \ll \exp(- \log R ),
\end{equation*}
we deduce that (still for $R, \bCc_1$ large enough depending on $\beta, \delta$):
\begin{equation}
\label{eq:ProbaEExt}
\PNbeta \left( \Eext(\bCc_1) \right) \geq 1 - \exp\left( - \log^2 T  / \bCc_1 \right).
\end{equation}
On the other hand, for each $i = 1, \dots, \Nn$, let $\EE_i$ be the event: “$\La_i$ is a good sub-system”, as in Definition \ref{def:goodSS}. By Lemma \ref{lemma:oftenGood}, we know that:
\begin{equation*}
\PNbeta\left( \bigcap_{i=1}^\Nn \EE_i \right) \geq 1 - \Nn \exp\left( - \log^2 T / \Cc_\beta \right), 
\end{equation*}
and by the same parameter comparison as above we get: $\PNbeta \left( \bigcap_{i=1}^\Nn \EE_i  \right) \geq 1 - \exp\left( - \log^2 T  / \Cc_\beta \right)$. 

We thus have (for $R, \bCc_1$ large enough - \cor{taking in particular $\bCc_1$ larger than the constant $\Cc_\beta$ in the previous inequality}):
\begin{equation}
\label{CCR_condition}
\PNbeta\left( \CCR \right) \leq \PNbeta \left( \CCR \cap \Eext(\bCc_1) \cap \bigcap_{i=1}^\Nn \EE_i  \right) + \exp\left( - \log^2 T  / \bCc_1 \right).
\end{equation}

\subsubsection*{\cor{Step 4.} Using approximate conditional independence}
Let $\omega$ be chosen as:
\begin{equation}
\label{def_omega}
\omega := L^{-1.33}.
\end{equation}
For $1 \leq i \leq \Nn$, let $\G_i$ be the following function on $\Conf$, which is clearly non-negative and $\Lai$-local:
\begin{equation}
\label{def:Gi}
\G_i(\bX) := e^{\omega \Di(\bX, \boxs_i)}.
\end{equation}
\cor{(In the case of a default, we would choose $\G_i(\bX) := e^{-\omega \Di(\bX, \boxs_i)}$) instead.}

\smallskip 

If $\bXN$ is in $\CCR$ then by definition (see \eqref{def:CCR}) we have:
\begin{equation*}
\prod_{i=1}^\Nn \G_i(\bXN) = e^{\omega \sum_{i=1}^\Nn \Di(\bXN, \boxs_i)} \geq \exp\left( \frac{\omega \epsilon_R R}{4 M} \right),
\end{equation*}
and thus by Markov's inequality:
\begin{equation}
\label{eq:CCR_Markov}
\PNbeta \left( \CCR \cap \Eext(\bCc_1) \cap \bigcap_{i=1}^\Nn \EE_i  \right) \leq \exp\left(-\frac{\omega \epsilon_R R}{4 M} \right) \EN \left[ \prod_{i=1}^\Nn \G_i(\bXN) \ind_{ \Eext(\bCc_1) \cap \bigcap_{i=1}^\Nn \EE_i } \right].
\end{equation}

We are now in a position to use Proposition \ref{prop:CI} with $\EE_N := \Eext(\bCc_1) \cap \bigcap_{i=1}^\Nn \EE_i$. We obtain:
\begin{multline}
\label{application_CI}
\EN \left[ \prod_{i=1}^\Nn \G_i(\bXN) \ind_{ \Eext(\bCc_1) \cap  \bigcap_{i=1}^\Nn \EE_i } \right] \leq \exp\left(2 \beta \sup_{\bXN \in \EE_N} \ErrorCI(\bXN) \right) \\ \times \sup_{\bXext \in \Eext, \{\tn_i\} \ \mathrm{adm.}} \prod_{i=1}^\Nn \E^\beta_{\tn_i, \Lai, \Vext} \left[ \G_i(\bX) \Cond \EE_i \right].
\end{multline}

Let us note that the event $\bigcap_{i=1}^\Nn \EE_i$ implies that the “admissible” number of points in each $\La_i$ is bounded by $10 T^2$ (see Definition \ref{def:goodSS}) which checks the first condition of \eqref{eq:boundnni}, and that the second condition of \eqref{eq:boundnni} is implied by the third condition\footnote{The distance between two centers $\omega_i, \omega_j$ is bounded below by $ML$ and thus $\dist(\Lai, \Laj) \geq ML - 2T \geq 98 T$.} of \eqref{eq:condi_T2logTML}. 

We may thus use Lemma \ref{lem:sizeCI} which, together with \eqref{sumdij}, implies that the conditional independence error $\ErrorCI$ between the $\Lai$'s is bounded by:
\begin{equation}
\label{bound_Error_CI}
\sup_{\bXN \in \bigcap_{i=1}^\Nn \EE_i} \left|\ErrorCI[\bXN | \La_1, \dots, \La_\Nn]\right| \leq \Cc \left( \frac{T^5 R \log R}{(ML)^2} \right),
\end{equation}
which controls the first term in the right-hand side of \eqref{application_CI}, \cor{with some universal constant $\Cc$.} 

We now focus on the second term in the right-hand side of \eqref{application_CI}.

\subsubsection*{\cor{Step 5. } Controlling expectations in each “sub-system”}
\newcommand{\EnLi}{\mathbb{E}^\beta_{\nn_i, \La_i}}
\newcommand{\PnLi}{\mathbb{P}^\beta_{\nn_i, \La_i}}
Let us fix $1 \leq i \leq \Nn$ and work in $\Lai$. We can assume that $\Vext$ is $\bCc_1$-good on $\Lai$ because of \eqref{def:event_good}.

\begin{claim} Let $\Dd_i := \Di(\cor{\bX}, \boxs_i)$. We have:
\begin{equation}
\label{eq:esperance_finale_SS}
\EnLi\left[ e^{\omega \Dd_i} \big| \EE_{\Lai} \right] \leq e^{\cor{\bCc_2} L^{-0.66}},
\end{equation}
\cor{with some constant $\bCc_2$ depending on $\delta, \beta$ and on the constant $\bCc_1$ for $\Vext$.}
\end{claim}
\begin{proof}
We start by decomposing the expectation as:
\begin{equation}
\label{EnLi_decomp}
\EnLi\left[ e^{\omega \Dd_i} \big| \EE_{\Lai} \right] = \EnLi\left[ e^{\omega \Dd_i} \ind_{|\omega \Dd_i| \leq \hal} \big| \EE_{\Lai} \right] + \EnLi\left[ e^{\omega \Dd_i} \ind_{|\omega \Dd_i| > \hal} \big| \EE_{\Lai} \right].
\end{equation}

Using a Taylor's expansion, we may write the first term in the right-hand side of \eqref{EnLi_decomp} as:
\begin{multline}
\label{EnLi_lin}
\EnLi\left[ e^{\omega \Dd_i} \ind_{\omega |\Dd_i| \leq \hal} \big| \EE_{\Lai} \right] = \EnLi\left[ \left(1 + \omega \Dd_i + \O\left(\omega^2 \Dd_i^2 \right)\right) \ind_{\omega |\Dd_i| \leq \hal} \big| \EE_{\Lai} \right] \\
\leq \PnLi \left[\ind_{\omega |\Dd_i| \leq \hal} \big| \EE_{\Lai} \right] + \omega \EnLi[\Dd_i \big| \EE_{\Lai}] + \O\left(\omega^2 \EnLi[\Dd_i^2 \big| \EE_{\Lai}] \right) + \EnLi\left[ \omega |\Dd_i| \ind_{\omega |\Dd_i| > \hal} \big| \EE_{\Lai} \right], \\
\leq 1 +  \omega \EnLi[\Dd_i \big| \EE_{\Lai}] + \O\left(\omega^2 \right) \EnLi[\Dd_i^2 \big| \EE_{\Lai}]   \\
\cor{+ \EnLi\left[ \omega |\Dd_i| \ind_{\omega |\Dd_i| > \hal} \big| \EE_{\Lai} \right] - \PnLi \left[\ind_{\omega |\Dd_i| > \hal} \big| \EE_{\Lai} \right]}
\end{multline}
\cor{(we use $\omega \Dd_i \ind_{\omega |\Dd_i| \leq \hal} \leq \omega \Dd_i  + \omega |\Dd_i| \ind_{\omega |\Dd_i| > \hal}$).}

Using Corollary \ref{coro:appli_exp} to control $\EnLi[\Dd_i \big| \EE_{\Lai}]$ and Lemma \ref{claim:DiSS} to control $\EnLi[\Dd_i^2 \big| \EE_{\Lai}]$, and inserting into \eqref{EnLi_decomp} we obtain:
\begin{multline}
\label{eq:esperance_temporaire}
\EnLi\left[ e^{\omega \Dd_i} \big| \EE_{\Lai} \right] \leq 1 + \bCc_2 \omega L^{0.67} + \bCc_2 \omega^2 L^2 \\
+ \EnLi\left[ e^{\omega \Dd_i} \ind_{|\omega \Dd_i| > \hal} \big| \EE_{\Lai} \right] + \EnLi\left[ \omega |\Dd_i| \ind_{\omega |\Dd_i| > \hal} \right] - \PnLi \left[\ind_{\omega |\Dd_i| > \hal} \big| \EE_{\Lai} \right],
\end{multline}
and it remains to control the second line of \eqref{eq:esperance_temporaire}, using Lemma \ref{claim:DiSS}. 

We write:
\begin{multline*}
\EnLi\left[ e^{\omega \Dd_i} \ind_{|\omega \Dd_i| > \hal} \big| \EE_{\Lai} \right] \leq \EnLi\left[ e^{2 \omega^2 L^2 \frac{\Dd^2_i}{L^2} } \ind_{|\omega \Dd_i| > \hal} \big| \EE_{\Lai} \right] 
\\
\leq \EnLi\left[ e^{4 \omega^2 L^2 \frac{\Dd^2_i}{L^2} } \big| \EE_{\Lai} \right]^\hal \EnLi\left[ \ind_{|\omega \Dd_i| > \hal} \big| \EE_{\Lai} \right]^{\hal} 
\leq e^{\bCc_2 \omega^2 L^2} \PnLi \left[ \left\lbrace|\omega \Dd_i| > \hal\right\rbrace \big| \EE_{\Lai}  \right],
\end{multline*}
where we have used Cauchy-Schwarz's inequality, then the fact that $\omega^2 L^2 = L^{-0.66} \ll 1$ (see \eqref{def_omega}) in order to apply Hölder's inequality, and the exponential moment \eqref{discr_bounds_in_SS}. 

Using \eqref{discr_bounds_in_SS} again we get that:
\begin{equation*}
\PnLi \left[ \left\lbrace|\omega \Dd_i| > \hal\right\rbrace \big| \EE_{\Lai}  \right] \leq \exp\left( - L^{0.66} / \bCc_2 \right),
\end{equation*}
using Markov's inequality \cor{and} the fact that $\frac{1}{\cor{L^2 \omega^2}} = L^{0.66}$ (see \eqref{def_omega}). We can thus write:
\begin{equation*}
\EnLi\left[ e^{\omega \Dd_i} \ind_{|\omega \Dd_i| > \hal} \big| \EE_{\Lai} \right] = \O\left( e^{- L^{0.66} / \bCc_2 } \right).
\end{equation*}
The two other terms in the second line of \eqref{eq:esperance_temporaire} are handled the same way. 

We obtain:
\begin{equation*}
\EnLi\left[ e^{\omega \Dd_i} \big| \EE_{\Lai} \right] \leq 1 + \bCc_2 \omega L^{0.67} + \bCc_2 \omega^2 L^2 + \O\left( e^{- L^{0.66} / \bCc_2 } \right).
\end{equation*}

Inserting\footnote{There might be a confusion between the \emph{error} term $\O\left( e^{- L^{0.66} / \bCc } \right)$ and the fact that we write $1 + \O\left(L^{-0.66} \right) \leq e^{\O(L^{-0.66})}$, but $e^{- L^{0.66}}$ and $e^{L^{-0.66}}$ are two different terms, \cor{the former being much smaller than the latter}.} the value $\omega = L^{-1.33}$ we obtain \eqref{eq:esperance_finale_SS}.
\end{proof}

\cor{In the case of a default, we use instead $e^{-\omega \Dd_i}$ as a test function, and the same computation holds.}

\subsubsection*{\cor{Step 6. } Conclusion.}
Inserting \eqref{bound_Error_CI} and \eqref{eq:esperance_finale_SS} into \eqref{application_CI} (and using \eqref{eq:NNRML}) we obtain:
\begin{equation*}
\EN \left[ \prod_{i=1}^\Nn \G_i(\bXN) \ind_{ \Eext(\bCc_1) \cap \bigcap_{i=1}^\Nn \EE_i } \right] \leq \exp\left( \cor{\Cc} \frac{T^5 R \log R}{(ML)^2}  + \cor{\Cc \bCc_2} \frac{R}{ML} L^{-0.66} \right).
\end{equation*}
Since $M = T^6$ (by \eqref{choix_M}) it is easy to check that (for $R$ large enough) the dominant term in the exponent is by far the second one because $T \gg L$. 

Returning to \eqref{eq:CCR_Markov} we thus obtain:
\begin{equation*}
\PNbeta \left( \CCR \cap \Eext(\bCc_1) \cap  \bigcap_{i=1}^\Nn \EE_i  \right) \leq \exp\left(-\frac{\omega \epsilon_R R}{4 M} \right) \exp\left( \Cc \bCc_2 \frac{R}{ML} L^{-0.66} \right),
\end{equation*}
and since $\omega = L^{-1.33}$ (by \eqref{def_omega}), $L = \log^{0.99}R$ (by \eqref{choix_L}) and $\epsilon_R = \log^{-0.3} R$ (by \eqref{choix_alpha}) the first factor dominates the second one for $R$ large enough:
\begin{equation*}
\PNbeta \left( \CCR \cap \Eext(\bCc_1) \cap \bigcap_{i=1}^\Nn \EE_i  \right) \leq \exp\left(-\frac{\omega \epsilon_R R}{8 M} \right),
\end{equation*}
and thus after inserting the value of $M$ \eqref{choix_M} defined in terms of $T$ as in \eqref{eq:choix_de_T} we obtain the following Markov-type inequality:
\begin{equation*}
\PNbeta \left( \CCR \cap \Eext(\bCc_1)  \bigcap_{i=1}^\Nn \EE_i  \right) \leq \exp\left( - \exp\left( \log R + o(\log R)  \right)  \right).
\end{equation*}
In particular, for $R$ large enough, this is smaller than $\exp\left( - R^{1/2} \right)$. 

Returning successively to \eqref{CCR_condition}, \eqref{B_TO_C}, \eqref{eq:AAR_BBR} and using the fact that $\log T \geq \log^{0.99} R$ we obtain:
\begin{equation*}
\PNbeta \left( \CCR \right) \leq \exp\left( - \log^{1.98} R / \bCc_1 \right), \quad \PNbeta \left( \BBR \right) \leq \exp\left( \Cc \log R - \log^{1.98} R / \bCc_1 \right) \leq \exp\left( - \log^{1.98} R / (2\bCc_1) \right),
\end{equation*}
and finally:
\begin{equation*}
\PNbeta \left( \AAR \right) \leq \exp\left( \Cc \log R - \log^{1.98} R / (2\bCc_1) \right) \leq \exp\left( - \log^{1.98} R / (4\bCc_1) \right)
\end{equation*}
which proves \eqref{tail_estimate_quonprouve} and thus concludes the proof of Theorem \ref{theo:main2}.

\appendix
\section{Discussion of the model and terminology}
\label{sec:discussion_model}
\subsection{\cor{On the definition of the model}}
\subsubsection*{Possible definitions of the model}
There are several slightly different ways to define the two-dimensional one-component plasma. 
\begin{itemize}
	\item Some papers work with an “infinitely extended equilibrium” Coulomb system, e.g. \cite{martin1980charge,lebowitz1983charge,jancovici1993large}. The mathematical existence of such infinite-volume limits is not yet clear for $\beta \neq 2$, see however \cite[Corollary 1.1]{armstrong2019local} for existence of infinite-volume \emph{limit points} in the weak topology, \cor{and \cite{leble2024dlr} for a “physical” description through DLR equations}.

	\item In the statistical physics literature, it is common to place $N$ particles in a “uniform neutralizing background of opposite charge” which occupies a certain domain $\Sigma_N$ with constant density $\rho_N := \frac{-N}{|\Sigma_N|}$. There is then \emph{perfect confinement} in the sense that the particles are not allowed to live outside $\Sigma_N$. The domain is not always explicitly chosen, though it often ends up being a disk, mostly by default or for the convenience of symmetry\footnote{There is also some interest for studying the 2DOCP on a sphere, which avoids having to deal with a boundary.}. Some authors state their results for different “reasonable shapes” as e.g. \cite{sari1976nu}. 

	\cor{The presence of the background can be equivalently be seen as the application of} some potential to each point charge, while they all interact with each other. Indeed, one may write:
	\begin{equation}
	\label{FNwithVback}
\FN(\bXN) = \hal \iint_{\cor{x \neq y}} - \log|x-y| \dd \bXN(x) \dd \bXN(y) + \int \Vback(x) \dd \bXN(x) + C_N, 
	\end{equation} 
	where $\Vback$ is the logarithmic potential generated by the background, namely:
	\begin{equation*}
\Vback(x) := \int_{\Sigma_N} - \log |x-y| \rho_N \dd y,
	\end{equation*}
	and $C_N$ is a constant (the self-interaction of the background with itself) that does not depend on $\bXN$ (only on $N$) and can thus be absorbed in $\KNbeta$.

	\item In the mathematical physics literature around the planar Coulomb gas (e.g. \cite{zabrodin2006large,ameur2011fluctuations,MR3353821}) the particles/eigenvalues $\XN$ are usually not confined \emph{a priori} in a certain domain of the Euclidean space, but are rather subject to a certain external “confining” potential/field/weight $\Vv$ acting as $\Vback$ in \eqref{FNwithVback} (this model is sometimes called a \emph{(two-dimensional) $\beta$-ensemble} by analogy with well-known one-dimensional models coming from random matrix theory). 

	Via a certain mean-field energy functional, the choice of $\Vv$ determines\footnote{Alternatively, one can associate to $\Vv$ a “thermal equilibrium measure” as in \cite{armstrong2019local}, which has unbounded support and plays the role of those three objects.}:
	\begin{enumerate}
		\item A compact subset $\Sigma_N$, sometimes called the droplet.
		\item An \emph{equilibrium measure} $\mu_N$ on $\Sigma_N$.
		\item An effective confining potential $\zeta_N$.
	\end{enumerate}
	After “splitting” (see \cite[Lemma 3.1]{MR3353821}), the energy takes the following form, cf. \eqref{def:FN} and \eqref{FNwithVback}:
 \begin{equation}
 \label{withzeta}
\tFN(\bXN) := \hal \iint_{x \neq y} - \log |x-y| \dd \left(\bXN - \mu_N\right)(x) \dd \left(\bXN - \mu_N\right)(y) 
+ \sum_{i=1}^N \zeta_N(x_i) +C_N
 \end{equation} 
 \cor{(we write $\tFN$ instead of $\FN$ to emphasize that this is a more general definition).}
 
The confining potential $\zeta_N$ vanishes on $\Sigma_N$ and is positive outside of it - thus penalizing particles that leave $\Sigma_N$: \cor{one expects that} few particles will fall outside $\Sigma_N$, and that if they do they will stay close by, see \cite{ameur2019localization} for a quantitative statement. \cor{However this potential is usually \emph{not} chosen to be infinite outside $\Sigma_N$, and thus there is only “imperfect” or “soft” confinement.} 

 The canonical choice is the quadratic potential $\Vv(x) = |x|^2$ for which $\Sigma_N$ is a large disk of radius comparable to $\sqrt{N}$ and $\mu_N$ is the uniform measure on $\Sigma_N$.
\end{itemize}
\cor{Another way to speak of the nature of the confinement is to use the term “hard” edge/wall for the situation where the points are forced to live in $\LN$, and “soft” edge when there is a smooth (non-infinite) confining external potential instead.}

In this paper, we choose to work in a disk (for convenience) and with a perfect confinement (as this is the usual choice in the physics literature). However Theorem \ref{theo:main} still holds if one replaces $\LN$ by a square or by any “reasonable” shape, still with a perfect confinement. Indeed, our proof makes no use of the global geometry of the system.

\subsubsection*{\cor{Validity of our results for a soft confinement}}
\cor{More importantly, Theorem \ref{theo:main} also holds for a two-dimensional $\beta$-ensemble/Coulomb gas with quadratic potential. Indeed:
	\begin{itemize}
		\item All the tools that we import in Section \ref{sec:preliminary}: global laws, local laws, control on fluctuations, Wegner's estimates... are proven in that case.
		\item The analysis of Section \ref{sec:locating} relies only on those tools and takes place entirely in the bulk.
		\item The approximate conditional independence stated in Proposition \ref{prop:CI} (the only place in Section \ref{sec:approx_CI} where the Gibbs measure actually appears) would carry over effortlessly to a “soft edge” setting, because the presence of an effective confinement term $\zeta_N$ in the energy (see \eqref{withzeta}) would be only felt by the “exterior” part of the system, which anyway gets integrated out in \eqref{eq:CI}.  
		\item Let us temporarily skip over Section \ref{sec:good_good}. Then the “subsystems” introduced in Section \ref{sec:PnLV} are models on their own and their analysis (global laws, local laws) as done in Section \ref{sec:good_properties}, Section \ref{conseqSubSys} is independent of any hard/soft edge choice for the 2DOCP. 
		\item The approximate translation-invariance stated in Section \ref{sec:approx_QI} also relies only on bulk properties of the system. In particular, the proof of Proposition \ref{prop:effectPhiEnergy}, which is the main ingredient, uses only a deterministic analysis together with the local laws in the bulk.
		\item One could then combine all the ingredients of the proof as in Section \ref{sec:conclusion}.
	\end{itemize}}

\cor{In fact, the only moment in our analysis where we actually “see” the boundary of the 2DOCP is in Section~\ref{sec:good_good}, more precisely in the proof of Proposition \ref{prop:VextIsOftenGood}. Indeed, the potential $\Vext$ (see \eqref{def:VextLi}) felt in a sub-system $\La$ takes into account the entire system in $\LN \setminus \La$ \emph{up to the boundary} (and, in the case of an imperfect confinement, it would take into account the entire system in $\R^2 \setminus \La$). Within the proof of Proposition \ref{prop:VextIsOftenGood}, we can pinpoint the role of the boundary: it occurs in the course of the proofs of Lemma~\ref{lem:nablaRiwi} and Lemma \ref{lem:2nablaRiwi} during the “Step 3. The rest of the system”.}

\cor{In that step, we introduce “an artificial cut-off $\gamma$ supported on $2 \LN$”, which in the case of perfect confinement is completely transparent because all the points live in $\LN$ anyway, and we then apply Lemma \ref{lem:apriori} (fluctuations of Lipschitz linear statistics), which is valid for both hard or soft edges.}

\cor{The only difference in the case of a soft confinement would thus be that there might exist points living outside $2 \LN$. However, “localization estimates” as in \cite[Thm. 1.12]{chafai2018concentration} or \cite[Thm. 3]{ameur2019localization} ensure that up to changing $2\LN$ by $\Cc_\beta \LN$ for some $\Cc_\beta > 0$ large enough, we can write:
\begin{equation*}
\PNbeta\left[ \{ \text{There is no particle outside $\Cc_\beta \LN$} \} \right] \geq 1 - \exp\left(- \frac{1}{\Cc_\beta} N\right),
\end{equation*}
so we can work instead with an artificial cut-off $\gamma$ supported on $\Cc_\beta \LN$, and conclude similarly.
}

\subsection{\cor{On terminology}}
\cor{We call our model, defined by \eqref{def:FN} and \eqref{def:PNbeta}, a \emph{one-component plasma} in accordance with the name given in many early important physics papers, for instance: 
\begin{itemize}
	\item “Monte Carlo study of a one-component plasma“ \cite{brush1966monte} (this is about the three-dimensional model). \textit{The model treated in this study consists of a system of identical point charges immersed in a uniform background; the continuous charge density of the background is chosen equal and opposite to the average charge density of the point charges, so that the system as a whole is electrically neutral.}
	\item “On the $\nu$-dimensional one-component plasma” \cite{sari1976nu}. \textit{This paper deals with the one-component classical plasma constituted, in a domain $\La$, by $N$ point charges (electrical charge equal to $-e$), immersed in a uniform neutralizing background of density $\rho = \frac{N}{|\La|}$.}
	\item “On the classical two-dimensional one-component Coulomb plasma” \cite{alastuey1981classical}. \textit{A one-component plasma is a system of N identical particles (...) embedded in a uniform neutralizing background of opposite charge.}
\end{itemize}
Another possible name for our model would be “jellium”, as appears e.g. in \cite{jancovici1993large}: 
\begin{quote}
\textit{We shall mainly consider the mathematically simpler one-component plasma (OCP or Jellium). This frequently used model system consists of charged particles of one sign embedded in a uniform background of the opposite charge"}.
\end{quote}
The term “jellium” seems to have been coined in \cite{herring1952discussion}, see \cite[p. 20]{lewin2022coulomb}. Note however, as mentioned in \cite[footnote 386]{lewin2022coulomb}, that: \textit{In the physics literature, the name ‘Jellium’ is often employed for electrons (which are quantum with spin), whereas the ‘one-component plasma’ is mainly used for classical particles}.}

\cor{In the mathematics literature, this model has recently been studied either under the name “one-component plasma” (e.g. \cite{bauerschmidt2017local}), “Coulomb plasma” (e.g. \cite{MR4063572}), or “Coulomb gas” (e.g. \cite{MR3353821}). It is also sometimes informally referred to as “$\beta$-Ginibre”, “2d log-gas”, “2d $\beta$-ensemble”, “2d Dyson gas”...}

\section{Auxiliary proofs of preliminary results}
\newcommand{\Cepel}{\Cc_{\epsilon \ell^2}}

\subsection{Proof of Proposition \ref{prop:local_laws}}
\label{sec:proof_locAS}
\begin{proof}[Proof of Proposition \ref{prop:local_laws}]
The first item corresponds to the first statement in \cite[Theorem~1]{armstrong2019local}, combined with their Lemma B.2. To be precise, the statement of \cite[Theorem~1]{armstrong2019local} involves the quantity $\F^{\square_R(x)}$ which is short for $\F^{\square_R(x)}(\XN, U)$ with $U = \R^2$ as defined in \cite[Eq. (2.24)]{armstrong2019local}. The potential $u$ appearing there is defined in \cite[Section 2.3]{armstrong2019local} but as used in the main statement of \cite{armstrong2019local} it coincides with the true potential (because, with their notation, “$U = \R^2$” in this case). It remains to observe that Theorem 1 in \cite{armstrong2019local} states a control on $\F^{\square_R(x)}$ which can be turned into a control on the electric energy only (as we write in Proposition \ref{prop:local_laws}), this is precisely the purpose of \cite[Lemma B.2]{armstrong2019local} and in particular their equation (B.8), which shows that one can indeed control the electric energy in terms of $\F^{\square_R(x)}$. As a last technical comment for the careful reader: note that since here “$U = \R^2$” (their notation) the various truncations $\mathsf{r}, \tilde{\mathsf{r}}, \tilde{\tilde{\mathsf{r}}}$ all coincide.

The second item of Proposition \ref{prop:local_laws} is \cite[(1.18)]{armstrong2019local}. This, or  \cite[(1.19)]{armstrong2019local}, implies  \eqref{eq:number_of_points_LL}.
\end{proof}

\subsection{Proof of Lemma \ref{lem:apriori}}
\label{sec:proof_apriori}
\begin{proof}[Proof of Lemma \ref{lem:apriori}]
For $x$ in $\bX$ let us define the truncation $\eta(x)$ as:
\begin{equation*}
\eta(x) = \begin{cases}
0 &  \text{if } x \notin \Omega \\
\rr(x) & \text{if } x \in \Omega,
\end{cases}
\end{equation*}
where $\rr$ is the “nearest-neighbor” distance as in \eqref{def:nn_distance}. Let us recall that, by definition, we always have $|\rr| \leq 1/4$. Let $\Elec$ be an electric field compatible with $\bX$ on $\supp \varphi$ and let $\Elece$ be the electric field truncated accordingly. We have, in the sense of distributions on $\supp \nabla \varphi$:
\begin{equation*}
- \dive \Elece = 2 \pi \left(\sum_{x \in \bX} \delta^{(\eta(x))}_{x} - \mm_0 \right),
\end{equation*}
where we replace the Dirac mass $\delta_x$ by its “smeared out” version $\delta^{(\eta(x))}_{x}$, which is the uniform measure of mass $1$ on the circle of center $x$ and radius $\eta(x)$. Let us write $\bX_{\veta} := \sum_{x \in \bX} \delta^{(\eta(x))}_{x}$. The measures $\bX$ and $\bX_{\veta}$ coincide outside a $1$-neighborhood of $\supp \nabla \varphi$ and the atoms there have been smeared out at a distance at most $1$, thus:
\begin{equation*}
\left| \int_{\R^2} \varphi(x)  \dd \left( \bX - \bX_{\veta} \right) (x) \right| \leq |\varphi|_{\1, \Omega} \times \Points(\bX, \Omega),
\end{equation*}
which can be localized, if $\tilde{\Omega}_1, \dots, \tilde{\Omega}_m$ cover $\supp \nabla \varphi$ we have:
\begin{equation*}
\left| \int_{\R^2} \varphi(x)  \dd \left( \bX - \bX_{\veta} \right) (x) \right| \leq \sum_{k=1}^m \sup_{x \in \Omega_i} |\nabla \varphi(x)| \times \Points(\bX, \Omega_i),
\end{equation*}
where $\Omega_i$ contains a $1$-neighborhood of $\tilde{\Omega}_i$. Now, let us re-write the fluctuations of $\varphi$ as:
\begin{multline*}
\left|\int_{\R^2} \varphi(x) \dd\left(\bX - \mm_0\right)(x)\right| \leq \left| \int_{\R^2} \varphi(x) \dd \left(  \bX_{\veta}(x) - \mm_0\right)(x) \right| + |\varphi|_{\1, \Omega} \times \Points(\bX, \Omega) \\
= \frac{1}{2\pi} \left| \int_{\R^2} \varphi(x) \dive \Elece \right| + |\varphi|_{\1, \Omega} \times \Points(\bX, \Omega).
\end{multline*}
Integrating by parts and using Cauchy-Schwarz's inequality yields:
\begin{equation*}
\left|\int_{\R^2} \varphi(x) \dive \Elece \right| = \left|\int_{\supp \nabla \varphi} \nabla \varphi \cdot \Elece\right| \leq \left(\int_{\R^2} |\nabla \varphi|^2 \right)^\hal \left(\int_{\supp \nabla \varphi} |\Elece|^2 \right)^\hal.
\end{equation*}
Finally, it remains to observe that with our choice of truncation $\eta$ as above and the definition of $\Omega$, we have:
\begin{equation*}
\int_{\supp \nabla \varphi} |\Elece|^2 \leq \int_{\Omega} |\Elec_{\vr}|^2,
\end{equation*}
where $\vr$ is the nearest-neighbor truncation, which concludes the proof of \eqref{eq:apriori_one_var}. This last step can also be localized by decomposing $\supp \nabla \varphi$ into several domains.
\end{proof}

\subsection{Proof of Proposition \ref{prop:bound_fluct_radial}}
\label{sec:proofCLT}
\begin{proof}[Proof of Proposition \ref{prop:bound_fluct_radial}]
\cor{The proof follows the same lines as the analogous results proven in \cite{MR3788208,serfaty2020gaussian}, namely:
\begin{enumerate}
	\item Rewriting the Laplace transform of the fluctuations as a ratio of partition functions of two 2DOCP's with  two different background measures - one of them being $\mm_0$.
	\item Constructing a transport that maps $\mm_0$ to the second background measure.
	\item Estimating the effect of such a transport on the energy.
	\item Conclusion.
\end{enumerate}}

\subsubsection*{\cor{Step 1. Rewriting the Laplace transform.}}
We start by rewriting the Laplace transform of the fluctuations as a ratio of partition functions, this is a standard trick that goes back (at least) to \cite{MR1487983}. In the two-dimensional context, it can be found e.g. in \cite[Proposition 2.10]{MR3788208}. Let $\mus$ be the signed measure defined by:
\begin{equation}
\label{def:mut}
\mus := \muc - \frac{s}{2\pi \beta} \Delta \varphi, 
\end{equation}
which coincides with $\muc$ outside of $\AA$ (let us recall that $\AA$ is  some annulus containing $\supp \Delta \varphi$). Since $\nabla \varphi$ is compactly supported, the total mass of $\Delta \varphi$ (seen as a measure) is $0$ so $\mus$ and $\muc$ have the same total mass on $\AA$. Moreover as soon as the parameter $s$ satisfies the condition \eqref{eq:condition_on_s} ($|s| \leq \frac{\pi \beta}{4 |\varphi|_\2}$) then $\mus$ has a non-negative density which is bounded between $\frac{1}{2}$ and $\frac{3}{2}$ on $\AA$. We introduce the following notation:
\begin{equation*}
\FN(\bXN, \mus) := \hal \iint_{x \neq y} - \log |x-y| \dd \left(\bXN - \mus \right)(x) \dd \left(\bXN - \mus \right)(y).
\end{equation*}

\begin{claim}[Laplace transform as ratio of partition functions] 
\label{claim:rewriting_Laplace_I}
The following identity holds:
\begin{equation}
\label{eq:rewriting_Laplace_I}
\EN \left[  \exp\left( s \Fluct[\varphi]  \right) \right] 
= \exp\left(\frac{s^2}{4\pi \beta} \int_{\R^2} |\nabla \varphi|^2 \right) \frac{ \int_{\LN^N} \exp\left( - \beta \FN(\bXN, \mus) \right) \dd \XN}{\int_{\LN^N} \exp\left( - \beta \FN(\bXN, \muc) \right) \dd \XN}
\end{equation}
\end{claim}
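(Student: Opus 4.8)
The plan is to prove Claim~\ref{claim:rewriting_Laplace_I} by the standard ``completing the square'' trick, writing the Laplace transform as a ratio of partition functions after a quadratic shift of the energy. The starting point is the elementary identity $\Fluct[\varphi] = \int \varphi \, \dd\fN = -\frac{1}{2\pi\beta}\big(\!-2\pi\beta\!\int \varphi\, \dd\fN\big)$, combined with the observation that $\mus - \muc = -\frac{s}{2\pi\beta}\Delta\varphi$, so that for any configuration $\bXN$ we have, using \eqref{eq:identiteLaplaceLog},
\begin{equation}
\label{eq:plan_shift}
\int_{\LN} \varphi \, \dd \fN = \frac{1}{2\pi} \int_{\R^2} -\log|\cdot - y|\,\Delta\varphi(y)\,\dd y \text{ convolved against } \fN,
\end{equation}
i.e.\ $\int \varphi\,\dd\fN$ can be rewritten in terms of the logarithmic interaction between $\fN$ and $\Delta\varphi$. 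The point is that $\FN(\bXN,\mus)$ expands as $\FN(\bXN,\muc)$ plus a cross term linear in $\fN$ (proportional to $s\,\Fluct[\varphi]$) plus a term quadratic in $s$ that is \emph{independent of the configuration}, namely $\frac{s^2}{(2\pi)^2(2\pi\beta)^2}\cdot(2\pi)$-type constant that evaluates to $\frac{s^2}{4\pi\beta^2}\int|\nabla\varphi|^2$ after an integration by parts (using $\iint -\log|x-y|\Delta\varphi(x)\Delta\varphi(y) = 2\pi\int|\nabla\varphi|^2$).

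Concretely, first I would write out $-\beta\FN(\bXN,\mus)$ by substituting $\bXN - \mus = (\bXN - \muc) + \frac{s}{2\pi\beta}\Delta\varphi = \fN + \frac{s}{2\pi\beta}\Delta\varphi$ into the double integral defining $\FN(\cdot,\mus)$ (the analogue of \eqref{def:FLnuW}), being slightly careful that the diagonal is excluded only for the point-point interactions, while the interaction of $\fN$ with the absolutely continuous measure $\Delta\varphi$ has no diagonal issue. This yields three pieces: the term $-\beta\FN(\bXN,\muc)$, a cross term $-\beta\cdot\frac{s}{2\pi\beta}\iint -\log|x-y|\,\dd\fN(x)\,\dd(\Delta\varphi)(y) = \frac{s}{2\pi}\cdot(\text{potential generated by }\Delta\varphi)$ integrated against $\fN$, which by \eqref{eq:identiteLaplaceLog} equals exactly $s\int\varphi\,\dd\fN = s\,\Fluct[\varphi]$ (the sign and the $2\pi$ work out precisely because $-\Delta(-\log) = 2\pi\delta_0$), and finally the constant term $-\beta\cdot\frac{s^2}{2(2\pi\beta)^2}\iint -\log|x-y|\Delta\varphi(x)\Delta\varphi(y)\,\dd x\,\dd y = -\frac{s^2}{4\pi\beta}\int_{\R^2}|\nabla\varphi|^2$. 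Rearranging, $\exp(-\beta\FN(\bXN,\mus)) = \exp(-\beta\FN(\bXN,\muc))\cdot\exp(s\,\Fluct[\varphi])\cdot\exp\big(-\frac{s^2}{4\pi\beta}\int|\nabla\varphi|^2\big)$.

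From here the claim is immediate: integrate both sides in $\XN$ over $\LN^N$, pull the configuration-independent factor $\exp(-\frac{s^2}{4\pi\beta}\int|\nabla\varphi|^2)$ out of the integral, and divide by $\int_{\LN^N}\exp(-\beta\FN(\bXN,\muc))\,\dd\XN = \KNbeta$; the left side becomes $\EN[\exp(s\,\Fluct[\varphi])]$ by definition \eqref{def:PNbeta}, giving \eqref{eq:rewriting_Laplace_I}. I expect the only real subtlety — and the step I would handle most carefully — is the bookkeeping around the excluded diagonal and the integrability of $\iint -\log|x-y|\,\dd\mus\,\dd\mus$: one needs $\mus$ to be a genuine finite-energy measure with the same total mass as $\muc$ (which is where the hypothesis \eqref{eq:condition_on_s} that $\mus$ has density in $[\frac12,\frac32]$ on $\AA$, and agrees with $\muc$ elsewhere, is used), so that all manipulations of the double integral are justified and $\FN(\bXN,\mus)$ is well-defined and finite for a.e.\ configuration. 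Once that is in place, the computation is purely algebraic and the integration by parts $\iint -\log|x-y|\Delta\varphi(x)\Delta\varphi(y) = 2\pi\int|\nabla\varphi|^2$ is standard.
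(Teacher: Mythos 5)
Your proof is correct and takes essentially the same approach as the paper: both rewrite the Laplace transform as a ratio of partition functions, use the identity $-\Delta(-\log)=2\pi\delta_0$ to convert $\Fluct[\varphi]$ into a logarithmic cross-interaction between $\fN$ and $\frac{s}{2\pi\beta}\Delta\varphi$, and then complete the square to shift the background from $\muc$ to $\mus$. The paper is more terse, but the bookkeeping you spell out (the expansion of $\FN(\bXN,\mus)$, the harmless nature of the excluded diagonal against the absolutely continuous part, and the integration-by-parts evaluation of the quadratic term) is exactly the content of the paper's "complete the square" step.
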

\begin{proof}[Proof of Claim \ref{claim:rewriting_Laplace_I}]
We have by definition:
\begin{equation*}
 \EN \left[  \exp\left( s \Fluct[\varphi]  \right) \right] = \frac{\int_{\LN^N} \exp\left( - \beta \left(\FN(\bXN, \muc) - \frac{s}{\beta} \Fluct[\varphi] ) \right) \right) \dd \XN}{\int_{\LN^N} \exp\left( - \beta \FN(\bXN, \muc) \right) \dd \XN}.
\end{equation*}
Since $\frac{1}{2\pi} \Delta \log = \delta_0$ (see \eqref{eq:identiteLaplaceLog}) one may write $- \frac{s}{\beta} \Fluct[\varphi]$ as: 
\begin{equation*}
- \frac{s}{\beta} \Fluct[\varphi] := - \frac{s}{\beta} \int_{\LN} \varphi(x) \dd\left(\bXN - \muc\right)(x) =  \iint - \log |x-y| \left(\frac{s}{2\pi \beta} \Delta \varphi(y) \right) \dd y \dd( \bXN - \muc)(x),
\end{equation*}
and then “complete the square”.
\end{proof}

\subsubsection*{\cor{Step 2. Finding a good transport from $\muc$ to $\mus$.}}
In order to compare the partition functions in \eqref{eq:rewriting_Laplace_I}, it is common to introduce some sort of transportation map from $\muc$ to $\mus$. The fact that the density of $\muc$ is constant and the one of $\mus$ has radial symmetry reduces the computation to a one-dimensional problem which can be solved exactly and explicitely (the general, non-radial case requires an abstract argument or resorting to an approximate transportation, which makes the analysis more involved). Proposition \ref{prop:bound_fluct_radial} is then a consequence of \cite[Prop. 4.2]{serfaty2020gaussian}. 

\begin{claim}[Radial monotone rearrangement]
\label{claim:radial_transport}
For convenience let us assume that $\varphi$ is radially symmetric around $0$, and let $\rmax$ be the outer radius of the annulus $\AA$. Let $\Fs : [0, \rmax] \to \R_+$ be the cumulative radial distribution function of the density $\mus$, namely:
 \begin{equation*}
\Fs := r \mapsto \int_{0}^r \left(1 -  \frac{s}{2\pi \beta} \Delta \varphi(\rho) \right) 2 \pi \rho \dd \rho = \pi r^2 - \frac{s}{\beta}  \int_{0}^r \Delta \varphi(\rho) \rho \dd \rho,
\end{equation*}
where we denote by $\Delta \varphi(\rho)$ the value of $\Delta \varphi(x)$ at any point $x$ with $|x| = \rho$. Let $\Phis$ be the transport map defined by:
\begin{equation}
\label{def:Phis}
\Phis(r) := \Fs^{-1}(\pi r^2) \ (r \geq 0).
\end{equation}
Finally, let $\vPhis$ be the map $\vPhis(x) := \Phis(|x|) \frac{x}{|x|}$. Then:
\begin{enumerate}
\item  $\vPhis$ is a $C^1$-automorphism of the disk $\DD(0, \rmax)$, which transports $\muc$ onto $\mus$.
\item The map $\psis := x \mapsto \vPhis(x) - x$ is supported on the annulus $\AA$ and satisfies:
\begin{equation}
\label{eq:psisvarphi}
|\psis|_{\1} \preceq s |\varphi|_{\2}.
\end{equation}
\end{enumerate}
\end{claim}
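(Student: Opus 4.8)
The plan is to verify Claim \ref{claim:radial_transport} by direct computation, exploiting the fact that in the radially symmetric case the optimal (monotone) transport is completely explicit. The key observations are that $\mus$ has a radially symmetric density bounded between $\frac12$ and $\frac32$ on the annulus $\AA$ (by the condition \eqref{eq:condition_on_s}, as already noted just before the claim) and equals $1$ outside, so its cumulative radial distribution $\Fs$ is a strictly increasing $C^1$ bijection from $[0,\rmax]$ onto $[0,\pi\rmax^2]$ with $\Fs(r)=\pi r^2$ for $r$ smaller than the inner radius of $\AA$ and for $r=\rmax$. Hence $\Fs^{-1}$ is well defined and $C^1$, and $\Phis(r)=\Fs^{-1}(\pi r^2)$ is a $C^1$ increasing bijection of $[0,\rmax]$ fixing $0$ and $\rmax$ (and coinciding with the identity outside the radial projection of $\AA$).

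First I would check point (1): that $\vPhis(x)=\Phis(|x|)\tfrac{x}{|x|}$ is a $C^1$-automorphism of $\DD(0,\rmax)$ pushing $\muc$ forward to $\mus$. Smoothness at the origin follows because $\Phis(r)/r \to \Phis'(0)>0$ is smooth near $0$ (one computes $\Phis'(0)^2 = 1$ from $\Fs(r)\sim \pi r^2$). The transport identity is a one-dimensional change of variables: for a radial test function, $\int g(|\vPhis(x)|)\,d\muc(x) = \int_0^{\rmax} g(\Phis(r))\,2\pi r\,dr = \int_0^{\rmax} g(\Phis(r))\,d(\pi r^2) = \int_0^{\rmax} g(\Phis(r))\,d(\Fs(\Phis(r))) = \int_0^{\rmax} g(\rho)\,d\Fs(\rho) = \int g(|x|)\,d\mus(x)$, using the definition $\Phis = \Fs^{-1}(\pi r^2)$ in the third equality. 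Since both measures are radial this determines the pushforward.

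Next I would estimate $\psis := \vPhis - \Id$. By construction $\Phis(r)=r$ outside the radial support of $\Delta\varphi$, so $\psis$ is supported on $\AA$. For the $C^1$ bound \eqref{eq:psisvarphi}, write $\psi(r):=\Phis(r)-r$ and differentiate $\pi r^2 = \Fs(\Phis(r)) = \pi\Phis(r)^2 - \tfrac{s}{\beta}\int_0^{\Phis(r)}\Delta\varphi(\rho)\rho\,d\rho$; differentiating in $r$ gives $2\pi r = (2\pi\Phis(r) - \tfrac{s}{\beta}\Delta\varphi(\Phis(r))\Phis(r))\Phis'(r)$, i.e. $\Phis'(r) = \frac{r}{\Phis(r)}\cdot\frac{1}{1 - \frac{s}{2\pi\beta}\Delta\varphi(\Phis(r))}$. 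Since $\Phis(r)$ and $r$ are comparable on $\AA$ (both in a fixed compact annulus bounded away from $0$) and the denominator is between $\tfrac12$ and $\tfrac32$ by \eqref{eq:condition_on_s}, one gets $|\Phis'(r) - 1| \preceq s|\Delta\varphi|_{\0} \preceq s|\varphi|_{\2}$ after a short manipulation using $|\Phis(r)-r|$ itself being $O(s)$ (bootstrap: first bound $\sup|\psi|$ by integrating the crude estimate $|\Phis'|\preceq 1$, then feed it back). The angular derivative of $\psis$ contributes a term of size $|\psi(r)|/r \preceq s|\varphi|_{\2}$ as well; assembling, $|\psis|_{\1}\preceq s|\varphi|_{\2}$. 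Finally, Proposition \ref{prop:bound_fluct_radial} follows by plugging this transport map into \cite[Prop.~4.2]{serfaty2020gaussian} applied to the ratio of partition functions in \eqref{eq:rewriting_Laplace_I}, which expresses $\log(\mathrm{K}(\mus)/\mathrm{K}(\muc))$ as $s|\varphi|_{\2}\Cc_\beta \EnerPoints(\bXN,\AA)$ up to the Gaussian main term already isolated.

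I expect the main obstacle to be the regularity of $\vPhis$ at the origin and, relatedly, getting the $|\varphi|_{\2}$ (rather than $|\varphi|_{\3}$) dependence in \eqref{eq:psisvarphi} cleanly: one must be careful that $\Delta\varphi$ is only continuous (so $\Phis$ is only $C^1$, not $C^2$), which is exactly why the statement is phrased with $|\psis|_{\1}$ and not a higher-order norm, and why it is important that the quoted Serfaty result only needs a $C^1$ transport. The rest is routine one-dimensional calculus and a change of variables.
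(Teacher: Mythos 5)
Your overall strategy is the same as the paper's: check the pushforward property by a radial change of variables, and deduce \eqref{eq:psisvarphi} by differentiating the transportation identity $\Fs(\Phis(r)) = \pi r^2$. The paper does one extra cosmetic simplification — it integrates by parts in the radial integral, turning $\int_0^t \Delta\varphi(\rho)\,\rho\,\dd\rho$ into $t\varphi'(t)$ before estimating — but bounding $|\Delta\varphi|_{\0}$ directly by $|\varphi|_{\2}$ (as you do) gives the same result. Your identification of the two delicate points (regularity at the origin, and why only $|\psis|_{\1}$ and not $|\psis|_{\2}$ is claimed) matches the paper's concerns.

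One step as written does not go through, however. You claim to obtain $\sup|\psi|=\O(s)$ ``by integrating the crude estimate $|\Phis'|\preceq 1$.'' Integrating a bound $|\Phis'|\preceq 1$ (equivalently $|\Phis'-1|\preceq 1$) from the inner edge of $\AA$ only yields $|\psi|\preceq$ (width of $\AA$), which is $\O(1)$, not $\O(s)$; and the observation ``$\Phis(r)$ and $r$ are comparable on $\AA$'' only gives $\Phis(r)/r = 1+\O(1)$, whereas you need $\Phis(r)/r = 1+\O(s|\varphi|_{\2})$ to extract the factor $s$ from your formula for $\Phis'$. The correct route — and the one the paper takes — is to use the transportation identity at zeroth order: from $\Phis(r)^2-r^2 = \tfrac{s}{\pi\beta}\int_0^{\Phis(r)}\Delta\varphi(\rho)\,\rho\,\dd\rho$, bound the right-hand side by $\tfrac{s}{\pi\beta}|\Delta\varphi|_{\0}\tfrac{\Phis(r)^2}{2}$ (or, after IBP, by $\tfrac{s}{\pi\beta}|\varphi|_{\2}\Phis(r)^2$) and divide by $\Phis(r)+r$ to get $|\Phis(r)-r|\preceq s|\varphi|_{\2}\,r$ directly, with no bootstrap. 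Once you replace the ``bootstrap'' sentence by this, your first-order computation (and the angular term $|\psi(r)|/r\preceq s|\varphi|_{\2}$) closes the argument exactly as in the paper.
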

\begin{proof}[Proof of Claim \ref{claim:radial_transport}]
Since $s$ satisfies \eqref{eq:condition_on_s}, $\Fs$ is strictly increasing and continuous, thus $\Phis = \Fs^{-1}$ is well-defined (it is the so-called “monotone rearrangement”). Since $\muc, \mus$ are radially symmetric the map $\vPhis$ transports $\muc$ onto $\mus$ (as two-dimensional measures), and we have $\vPhis(x) = x$ for $x \notin \AA$. 

A computation in polar coordinates shows that $|\psis|_{\1} \leq |\Phis' - 1|_{\0}$, on the other hand the derivatives of $\Phis$ can be estimated in terms of the perturbation measure $\Delta \varphi$ by using the transportation identity \eqref{def:Phis}, which reads:
\begin{equation*}
\Phis(r)^2 - r^2 = \frac{s}{\pi \beta} \int_{0}^{\Phis(r)} \Delta \varphi( \rho ) \rho \dd \rho,
\end{equation*}
and taking derivatives. In particular one finds (after an integration by parts) that $|\Phis(r) - r| \preceq s |\varphi|_{\2} r$, and then that: $|\Phis'(r) - 1| \preceq s |\varphi|_{\2}$. 
\end{proof}

\subsubsection*{\cor{Step 3. Effect on the energy}}
Next let us extend the notation $\vPhis$ to $\vPhis(\bXN) := \sum_{i = 1}^N \delta_{\vPhis(x_i)}$.
\begin{claim}[Effect of the transportation on the energy]
\label{claim:effect_transport_energy}
We have:
\begin{equation*}
\left|\FN(\vPhis(\bXN), \mus) - \FN(\bXN, \muc)\right| \preceq s |\varphi|_{\2} \EnerPoints(\bXN, \AA).
\end{equation*}
\end{claim}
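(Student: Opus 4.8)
The plan is to reduce, via the electric representation of the logarithmic energy, to a comparison of \emph{electric energies} of two configurations related by the transport map $\vPhis$. First I would record that both $\muc$ and $\mus$ have the same total mass $N$ and, by Claim \ref{claim:radial_transport} together with \eqref{eq:condition_on_s}, density bounded in $[\hal,\frac32]$ and equal to $1$ off the annulus $\AA$; hence the renormalized-energy identity \eqref{secondformulation} (and its non-uniform-background version, Lemma \ref{lem:extendSecond}) applies and expresses both $\FN(\bXN,\muc)$ and $\FN(\vPhis(\bXN),\mus)$ as $\hal\big(\frac{1}{2\pi}\int_{\R^2}|\nabla\h_{\veta}|^2+\sum\log\eta(x)\big)-\sum\int_{\DD(x,\eta(x))}\ff_{\eta(x)}$ for a suitable truncation vector $\veta$. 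The singular ($\log$) nature of $G(x,y):=\log\frac{|x-y|}{|\vPhis(x)-\vPhis(y)|}$ is exactly what forbids bounding $\iint G\,\dd(\bXN-\muc)\,\dd(\bXN-\muc)$ directly by Lemma \ref{lem:apriori}, and forces this passage through truncated fields.

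The key covariance fact is that if $\Elec$ is compatible with $(\bXN,\muc)$ in the sense of Definition \ref{def:compatible}, then its pushforward
\[
\Elec^{\vPhis}(y):=\frac{\Dd\vPhis(\vPhis^{-1}(y))\,\Elec(\vPhis^{-1}(y))}{\det\Dd\vPhis(\vPhis^{-1}(y))}
\]
satisfies $-\dive\Elec^{\vPhis}=2\pi\big(\vPhis(\bXN)-\vPhis\#\muc\big)=2\pi(\vPhis(\bXN)-\mus)$, because $\vPhis$ transports $\muc$ onto $\mus$; so $\Elec^{\vPhis}$ is compatible with $(\vPhis(\bXN),\mus)$. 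Applying this to $\Elec=\nabla\h^{\bXN,\muc}_{\vr}$ and changing variables $y=\vPhis(x)$ gives $\int_{\R^2}|\Elec^{\vPhis}|^2=\int_{\R^2}\det(\Dd\vPhis(x))^{-1}|\Dd\vPhis(x)\,\nabla\h^{\bXN,\muc}_{\vr}(x)|^2\,\dd x$. Since $\vPhis=\Id$ off $\AA$ and $|\psis|_{\1}\preceq s|\varphi|_{\2}$ by \eqref{eq:psisvarphi} (which, by \eqref{eq:condition_on_s}, we may take $\le\hal$), we have $\|\Dd\vPhis-\Id\|\preceq s|\varphi|_{\2}$, so the integrand differs from $|\nabla\h^{\bXN,\muc}_{\vr}(x)|^2$ only on $\AA$ and by at most $\Cc s|\varphi|_{\2}|\nabla\h^{\bXN,\muc}_{\vr}(x)|^2$ there; integrating, $\big|\int|\Elec^{\vPhis}|^2-\int|\nabla\h^{\bXN,\muc}_{\vr}|^2\big|\preceq s|\varphi|_{\2}\,\Ener(\bXN,\AA)$.

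It remains to (i) pass from the non-canonically truncated field $\Elec^{\vPhis}$ to the \emph{true} truncated field of $(\vPhis(\bXN),\mus)$, and (ii) compare the truncation-correction terms. For (i) one uses that among all fields compatible with a configuration and carrying the same truncation the true one minimizes the $L^2$-norm (the difference is divergence-free, hence $L^2$-orthogonal to a gradient), so $\FN(\vPhis(\bXN),\mus)$ is bounded above by the renormalized energy of $\Elec^{\vPhis}_{\vr'}$, where $\vr'$ is the nearest-neighbour truncation of $\vPhis(\bXN)$. For (ii), the $C^1$-bound $|\psis|_{\1}\preceq s|\varphi|_{\2}$ gives $|\vPhis(x_i)-\vPhis(x_j)|=|x_i-x_j|\big(1+\O(s|\varphi|_{\2})\big)$ for all $i\neq j$, whence $\rr'(\vPhis(x_i))=\rr(x_i)\big(1+\O(s|\varphi|_{\2})\big)$, and likewise the image under $\vPhis$ of the circle of radius $\rr(x_i)$ about $x_i$ lies at distances from $\vPhis(x_i)$ in $\rr(x_i)\big(1+\O(s|\varphi|_{\2})\big)$; re-truncating therefore changes $\int|\cdot|^2$, the sum $\sum\log\eta(x)$, and the terms $\sum\int_{\DD(x,\eta(x))}\ff_{\eta(x)}$ (the last being $\O(\eta^2)$ per point and scaling quadratically in $\eta$) each by at most $\Cc s|\varphi|_{\2}$ per point of $\bXN$ lying in $\AA$ (the only points whose truncation is affected), hence by $\preceq s|\varphi|_{\2}\,\Points(\bXN,\AA)$ in total. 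Combining with the previous paragraph yields $\FN(\vPhis(\bXN),\mus)\le\FN(\bXN,\muc)+\Cc s|\varphi|_{\2}\,\EnerPoints(\bXN,\AA)$, and running the same argument with $\vPhis^{-1}$ in place of $\vPhis$ (it has the same quantitative properties by Claim \ref{claim:radial_transport}, and $\Ener(\bXN,\AA)$, $\Ener(\vPhis(\bXN),\AA)$ are comparable since $\vPhis$ is close to the identity) gives the reverse inequality, hence the claim. The one genuinely delicate point is the truncation bookkeeping of (i)--(ii); it is carried out in the constant-background case in \cite{armstrong2019local} and for transport maps in \cite[Sec.~4]{serfaty2020gaussian}, and the present situation is a routine adaptation (indeed the whole argument parallels \cite[Prop.~4.2]{serfaty2020gaussian}).
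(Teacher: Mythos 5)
The paper's proof of this claim is a one-line citation: it invokes \cite[Prop.~4.2]{serfaty2020gaussian} with $U_\ell = \AA$ and reads off (4.5)--(4.6) there. You are in effect reconstructing the proof of that cited lemma via the Piola (contravariant) transport of the truncated electric field, which is indeed the mechanism behind it and also behind the paper's own, more careful treatment of the measure-preserving version in Section~\ref{sec:EffetTranslaEnerPreuve}. The reduction to electric fields, the covariance computation for $\Elec^{\vPhis}$, the observation that $\vPhis = \Id$ off $\AA$, and the pointwise bound $\|\Dd\vPhis - \Id\| \preceq s|\varphi|_\2$ are all correct and give the $\preceq s|\varphi|_\2\,\Ener(\bXN,\AA)$ estimate for the ``main'' part of the discrepancy.

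The slip is in step (i). The field $\Elec^{\vPhis}$ you construct by pushing forward $\nabla\h^{\bXN,\muc}_{\vr}$ is \emph{already} desingularized: its divergence is $2\pi\big(\sum_i \vPhis\#\delta^{(\rr(x_i))}_{x_i} - \mus\big)$, i.e.\ the smeared charges sit on the \emph{images under $\vPhis$ of circles}, which are ellipses centred at $\vPhis(x_i)$, not circles. It is therefore not of the form $\nabla\h^{\vPhis(\bXN),\mus} - \sum_i\nabla\ff_{\eta'_i}(\cdot - \vPhis(x_i))$ for any truncation vector $\eta'$, the notation $\Elec^{\vPhis}_{\vr'}$ does not parse (you would be re-truncating an already truncated field with a different smearing), and the $L^2$-orthogonality argument does not apply between $\Elec^{\vPhis}$ and $\nabla\h^{\vPhis(\bXN),\mus}_{\vr'}$ because they are compatible with \emph{different} measures. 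What one must do is (a) compare $\Elec^{\vPhis}$ with the true field $\nhh$ carrying the \emph{same} ellipse smearing (this is where $L^2$ projection is legitimate), and then (b) compare the ellipse-truncated quantity with the circle-truncated one at truncation $\vr'$; the latter is precisely the $\Rem_1,\Rem_2,\Rem_3$ bookkeeping in \cite{serfaty2020gaussian} and in the paper's Proposition~\ref{prop:meas_pres}, and your per-point $\O(s|\varphi|_\2)$ heuristic for it is plausible but is not established by the sketch (it mixes a radial-rescaling effect, which \eqref{eq:EnersEner} does control, with the ellipse-vs-circle effect, which it does not). You acknowledge this and defer to the same reference the paper cites, so your route and the paper's ultimately coincide; but as written, step (i) is not correct, and it is exactly the part of the cited lemma that carries the real content.
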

\begin{proof}[Proof of Claim \ref{claim:effect_transport_energy}]
We apply \cite[Prop 4.2]{serfaty2020gaussian} with (in the notation of that paper) $U_\ell = \mathcal{A}$, which contains the support of $\psi_s$ and thus of its derivative. Then \cite[(4.5)]{serfaty2020gaussian} states that the electric energy in $\AA$ stays bounded along the transport by $\vPhis$, and \cite[(4.6)]{serfaty2020gaussian} that in fact its derivative is bounded by $|\psis|_\1$ times the initial electric energy plus the number of points in $\AA$.
\end{proof}

\subsubsection*{\cor{Conclusion.}}
To conclude, we change variables in \eqref{eq:rewriting_Laplace_I} using $\vPhis$ and use Claim \ref{claim:effect_transport_energy} to estimate the effect on the energy. The Jacobian term that appears is again of order $s |\varphi|_\2$ times the number of points in the support of $\psis$ and can be incorporated in the previous error term.
\end{proof}

\subsection{Cornering the discrepancy: proof of Proposition \ref{prop:chase_the_goat}}
\label{sec:proof_chase_the_goat}
\begin{proof}[Proof of Proposition \ref{prop:chase_the_goat}]
\cor{We treat here the case of a positive discrepancy (an excess of points), i.e. of the event $\DisInsP$, the other case being similar.}

Let $\varphi$ be a nonnegative, non increasing, compactly supported $\CC^2$ test function with radial symmetry around $z$ such that:
\begin{enumerate}
\item $\varphi \equiv 1$ on $\DD_{R-2L}$, $\varphi \equiv 0$ outside $\DD_{R-L}$, $\varphi$ takes values in $[0,1]$.
\item $|\varphi|_{\1} \leq \frac{100}{L}, \quad |\varphi|_{\2} \leq \frac{100}{L^2}$.
\end{enumerate}
Let $\tphi : \R^2 \to \R$ be the function defined by $\tphi(x) := \varphi(|x-z|)$. Assume that:
\begin{equation}
\label{assumption_is_spread}
\bXN \in \DisInsP := \bigcap_{R - 2L \leq r \leq R-L} \left\lbrace \Di(\bXN, \DD(z, r)) \geq \hal \eRR \right\rbrace 
\end{equation}
\begin{claim}[$\tphi$ detects the discrepancy]
	\label{claim:varphi_detects} Under \eqref{assumption_is_spread} we have:
\begin{equation}
\label{eq:varphi_detects}
\Fluct[\tphi] \geq \frac{\eRR}{2}.
\end{equation}
	\end{claim}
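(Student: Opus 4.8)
The plan is to exploit the radial monotonicity of $\tphi$ through a layer-cake (co-area) decomposition, which turns the linear statistic $\Fluct[\tphi]$ into an average of discrepancies of \emph{disks} centered at $z$ with radii in $[R-2L, R-L]$, each of which is directly controlled by the hypothesis \eqref{assumption_is_spread}.

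Concretely, I would first write, for every $x \in \R^2$,
\[
\tphi(x) = \int_0^1 \ind_{\{\tphi > t\}}(x)\, \dd t,
\]
which holds because $\tphi$ takes values in $[0,1]$. Since $\varphi$ is non-increasing and $\tphi$ is radially symmetric around $z$, for each $t \in (0,1)$ the super-level set $\{\tphi > t\}$ is a disk centered at $z$, say $\DD(z, r(t))$, up to its boundary circle (which is $\mm_0$-negligible, and, for a fixed point configuration, contains a point of $\bXN$ only for finitely many $t$). Moreover, because $\varphi \equiv 1$ on $\DD_{R-2L}$ and $\varphi \equiv 0$ outside $\DD_{R-L}$, the radius obeys $R-2L \le r(t) \le R-L$ for all $t \in (0,1)$. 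Integrating the layer-cake identity against $\fN$ and applying Fubini, and recalling that $\int \ind_{\DD(z,r)}\, \dd\fN = \Di(\bXN, \DD(z,r))$, gives
\[
\Fluct[\tphi] = \int_{\LN} \tphi \, \dd \fN = \int_0^1 \Di\bigl(\bXN, \DD(z, r(t))\bigr)\, \dd t.
\]

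Under \eqref{assumption_is_spread} every disk in the integrand has radius in $[R-2L, R-L]$ and therefore carries discrepancy at least $\tfrac12 \eRR$; integrating over $t \in (0,1)$ yields $\Fluct[\tphi] \ge \tfrac12 \eRR = \tfrac{\eRR}{2}$, i.e.\ \eqref{eq:varphi_detects}. There is essentially no real obstacle here — the argument is a one-line layer-cake computation — and the only point that deserves a word of care is the bookkeeping of open versus closed super-level disks; this is harmless, since it affects the integrand only on a $\dd t$-null set of values and since the bound \eqref{assumption_is_spread} is uniform over the entire interval $[R-2L,R-L]$.
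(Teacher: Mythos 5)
Your argument is correct and is essentially the paper's proof viewed through a change of variables: the paper writes $\Fluct[\tphi] = \int_0^{R-L}\varphi(r)\,\frac{\dd}{\dd r}\Di(\bXN,\DD_r)\,\dd r$ and integrates by parts to obtain $-\int_{R-2L}^{R-L}\varphi'(r)\,\Di(\bXN,\DD_r)\,\dd r$, and your layer-cake identity $\Fluct[\tphi]=\int_0^1\Di(\bXN,\DD(z,r(t)))\,\dd t$ is exactly this after the substitution $t=\varphi(r)$, with the nonnegative weight $-\varphi'(r)\,\dd r \leftrightarrow \dd t$ integrating to one. If anything, your formulation is marginally cleaner since it sidesteps the (formal) differentiation of the jump function $r\mapsto\Di(\bXN,\DD_r)$, but the idea is the same.
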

	\begin{proof}[Proof of Claim \ref{claim:varphi_detects}]
Without loss of generality we may assume that $z = 0$. We take advantage of the radial symmetry of $\tphi$ and re-write $\Fluct[\tphi]$ as: 
\begin{equation*}
\Fluct[\tphi] = \int_{\DD_{R-L}} \tphi(x) \dd \fN(x) = \int_{0}^{R-L} \varphi(r) \left( \frac{\dd}{\dd r} \Di(\bXN, \DD_r) \right) \dd r.
\end{equation*}
Integrating by parts, we get: $\Fluct[\tphi] = - \int_{R-2L}^{R-L} \varphi'(r) \Di(\bXN, \DD_r) \dd r.$ By assumption \eqref{assumption_is_spread} we have $\Di(\bXN, \DD_r) \geq \frac{\eRR}{2}$ for all $r$ in the domain of integration, moreover by construction $- \varphi' \geq 0$ and its integral is $1$. We thus obtain \eqref{eq:varphi_detects}.
\end{proof}
We have constructed a radially symmetric test function $\tphi$ that is of class $C^2$ and detects a fraction of the discrepancy. Let us compare \eqref{eq:varphi_detects} with the control on the size of $\Fluct[\tphi]$ given by Proposition \ref{prop:bound_fluct_radial}.

\begin{claim}[Fluctuations of $\tphi$]
\label{claim:typical_fluct_varphi} With $s$ as in \eqref{eq:s_and_L}
\begin{equation}
\label{eq:typical_fluct_varphi}
\log \EN \left[ e^{s \Fluct[\tphi]} \right] \leq \Cc_\beta (s^2 + s) \frac{R}{L}.
\end{equation}
\end{claim}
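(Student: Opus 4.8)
The plan is to apply Proposition~\ref{prop:bound_fluct_radial} to $\tphi$ and estimate the two terms on the right-hand side of \eqref{eq:bound_fluct_radial}. First I would record the relevant sizes of $\tphi$: since $\tphi(x)=\varphi(|x-z|)$ with $\varphi$ constant near $0$, the function $\tphi$ is genuinely of class $\CC^2(\R^2)$, radially symmetric around $z$, with $|\tphi|_{\1}\le 100/L$ and $|\tphi|_{\2}\le 100/L^2$, and $\Delta\tphi$ is supported in $\{R-2L\le |x-z|\le R-L\}$; hence one may take as the annulus $\AA$ of Proposition~\ref{prop:bound_fluct_radial} a slight fattening of this set, of area $\O(RL)$. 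The hypothesis \eqref{eq:condition_on_s} reads $|s|\le \frac{\pi\beta}{4|\tphi|_{\2}}$, which holds because \eqref{eq:s_and_L} forces $s\le \frac{1}{\Cins}\frac{L^3}{R}\le \frac{1}{\Cins}L^2$ (using $L\le R/10$), provided $\Cins$ is taken large enough in terms of $\beta$. With these inputs the ``Gaussian'' term is $\frac{s^2}{4\pi\beta}\int_{\R^2}|\nabla\tphi|^2\le \Cc_\beta \frac{s^2 R}{L}$, using $|\nabla\tphi|\le 100/L$ and $|\supp\nabla\tphi|=\O(RL)$, which already contributes at the required order.

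It remains to control $\log\EN\left[\exp\left(s|\tphi|_{\2}\Cc_\beta\,\EnerPoints(\bXN,\AA)\right)\right]$, i.e.\ exponential moments of $\EnerPoints$ over the thin annulus $\AA$ at the small parameter $\lambda:=s|\tphi|_{\2}\Cc_\beta=\O(s\Cc_\beta/L^2)$. Because $\AA$ has diverging aspect ratio, the local laws of Proposition~\ref{prop:local_laws} do not apply to it directly (cf.\ Remark~\ref{rem:LocalLaws}), so I would cover $\AA$ by $K=\O(R/L)$ squares $\sq(x_k,\ell)$ of sidelength $\ell=\O(L)\ge\rho_\beta$, each contained in $\DD(z,R)$ and hence satisfying \eqref{condi:LL}, and use monotonicity/sub-additivity of $\int|\cdot|^2$ together with a union bound to get $\EnerPoints(\bXN,\AA)\le\sum_{k=1}^K\EnerPoints(\bXN,\sq(x_k,\ell))$. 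A generalized Hölder inequality (with exponent $K$ in each factor), combined with Cauchy--Schwarz to split $\Ener$ from $\Points$ and then \eqref{eq:LocalLaws} and \eqref{eq:number_of_points_LL}, gives at the threshold value $\lambda_1:=\beta/(\Cc K)$ the bound $\log\EN\left[\exp\left(\lambda_1\sum_{k=1}^K\EnerPoints(\bXN,\sq(x_k,\ell))\right)\right]\le\Cc_\beta L^2$. Since $\lambda\mapsto\log\EN\left[\exp\left(\lambda\sum_k\EnerPoints(\bXN,\sq(x_k,\ell))\right)\right]$ is convex and vanishes at $\lambda=0$, for $\lambda\in[0,\lambda_1]$ it is at most $\frac{\lambda}{\lambda_1}\Cc_\beta L^2=\O(\Cc_\beta\lambda K L^2)=\O(\Cc_\beta\lambda R L)$; plugging back $\lambda=\O(s\Cc_\beta/L^2)$ — which lies below $\lambda_1$, of order $\beta L/R$, precisely because $s\le\frac{1}{\Cins}\frac{L^3}{R}$ with $\Cins$ large in terms of $\beta$ — yields $\O(\Cc_\beta sR/L)$ for this second term. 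Adding the two contributions gives \eqref{eq:typical_fluct_varphi}.

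The size estimates on $\tphi$ and the covering-and-Hölder bookkeeping are routine; the one genuinely delicate point is the passage from exponential moments of $\EnerPoints(\bXN,\AA)$ \emph{at the threshold} $\lambda_1$ of order $1/K$ — where the local laws give a bound of order $L^2$ essentially for free, but \emph{only} of that order — to exponential moments \emph{at the relevant, much smaller} $\lambda$ of order $s/L^2$, where one needs the improved order $\lambda R L$. Convexity of the cumulant generating function is exactly what converts the crude ``energy $\lesssim$ area'' estimate on the annulus into a bound linear in $s$; what makes it work is the smallness of $\lambda$, i.e.\ the assumption $s\le\frac{1}{\Cins}\frac{L^3}{R}$ built into \eqref{eq:s_and_L}, and one must check that all the resulting constraints on $\lambda$ (together with the validity of \eqref{eq:condition_on_s}) can be met simultaneously by enlarging the $\beta$-dependent constant $\Cins$.
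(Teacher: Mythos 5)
Your proof is correct and follows essentially the same path as the paper's: apply Proposition~\ref{prop:bound_fluct_radial}, bound the Dirichlet term by $\O(s^2 R/L)$, then cover the thin annulus by $\O(R/L)$ squares of sidelength $\sim L$ and use local laws together with a convexity argument to linearize the cumulant generating function of $\EnerPoints$ in $s$. The only cosmetic difference is that you apply a generalized Hölder inequality directly to the sum over squares before invoking convexity, whereas the paper first uses Jensen's inequality on $\exp$ to average over the squares and then applies the convexity/linearization to each square separately — these are the same estimate packaged differently.
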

\begin{proof}[Proof of Claim \ref{claim:typical_fluct_varphi}]
By construction, $\tphi$ is a $\CC^2$ test function with compact support and radial symmetry, and $|\tphi|_{\2} \leq \frac{100}{L^2}$. We assumed $L \leq \frac{R}{10}$ and $|s| \leq \frac{1}{\Cins} \frac{L^3}{R}$ (see condition \eqref{eq:s_and_L}), hence up to choosing the constant $\Cins$ large enough (depending on $\beta$) we can ensure that $s$ satisfies $|s| \leq \frac{\pi \beta}{4 |\tphi|_{\2}}$. Moreover the support of $\Delta \tphi$ is an annulus. In particular we may apply Proposition~\ref{prop:bound_fluct_radial} and control the exponential moment of $\Fluct[\tphi]$ by:
\begin{equation}
\label{usingCLT}
\log \EN \left[  \exp\left( s \Fluct[\tphi]  \right) \right] = \frac{s^2}{4 \pi \beta} \int_{\R^2} |\nabla \tphi|^2 
+ \log \EN \left[\exp \left(s |\tphi|_{\2} \O\left( \Cc_\beta \EnerPoints\left(\DD_{R-L} \setminus \DD_{R-2L}\right) \right) \right) \right].
\end{equation}
The quantity $\int_{\R^2} |\nabla \tphi|^2$ is readily bounded (up to some multiplicative constant) by $RL$ (the area of the annulus where $\Delta \tphi$ is supported) times $\frac{1}{L^2}$ (the order of magnitude of $|\varphi|_\1^2$), and thus:
\begin{equation}
\label{Hund}
\frac{s^2}{4 \pi \beta}\int_{\R^2} |\nabla \tphi|^2 = \O \left(\frac{s^2R}{L}\right).
\end{equation}
It remains to estimate the contribution of the following term:
\begin{equation*}
\Remain := \EN \left[\exp \left(s |\tphi|_{\2} \O\left( \Cc_\beta \EnerPoints\left(\DD_{R-L} \setminus \DD_{R-2L}\right) \right) \right)  \right].
\end{equation*}
By construction we know that $|\tphi|_2$ is of order $\frac{1}{L^2}$. Cover the annulus $\DD_{R-L} \setminus \DD_{R-2L}$ by a family $\{\sq_i\}_{i \in I}$ of $\#I = \O \left( \frac{R}{L} \right)$ squares of sidelength $\O\left(L\right)$ and write:
\begin{equation*}
\exp \left(s |\tphi|_{\2} \O\left( \Cc_\beta \EnerPoints\left(\DD_{R-L} \setminus \DD_{R-2L}\right) \right) \right) \leq \exp \left(\frac{\Cc_\beta s R}{L^3} \frac{1}{\# I} \sum_{i \in I} \EnerPoints(\sq_i) \right).
\end{equation*}
By convexity we get:
\begin{equation}
\label{remain2}
\Remain \leq \frac{1}{\# I} \sum_{i \in I} \EN \left[\exp\left( \frac{\Cc_\beta s R}{L^3} \EnerPoints(\sq_i)  \right) \right].
\end{equation}
Up to chosing $\Cins$ large enough we can ensure that the parameter $\frac{\Cc_\beta sR}{L^3}$ in the right-hand side of \eqref{remain2} is smaller than any fixed constant. Then for each $i \in I$ the local laws (Proposition \ref{prop:local_laws}) yield:
\begin{equation*}
\log \EN \left[ \exp\left( \frac{\Cc_\beta s R}{L^3} \left( \EnerPoints(\sq_i) \right) \right) \right] \leq \frac{\Cc_\beta s R}{L^3} \times \CLL L^2 = \O_\beta\left(\frac{sR}{L} \right).
\end{equation*}
Taking an average over $i \in I$ yields: 
\begin{equation}
\label{conclusion:remain}
\log \Remain = \O_\beta\left( \frac{s R}{L} \right),
\end{equation}
and combining it with \eqref{usingCLT}, \eqref{Hund} and \eqref{conclusion:remain}, we obtain \eqref{eq:typical_fluct_varphi} as claimed. 
\end{proof}
Next, applying Markov's inequality in exponential form to \eqref{eq:typical_fluct_varphi}, we get:
\begin{equation*}
\PNbeta\left(\Fluct[\tphi] \geq \frac{\eRR}{2}\right) \leq \exp \left( - \frac{s \eRR}{2} + (s^2 + s) \Cc_\beta\left(\frac{R}{L}\right) \right).
\end{equation*}
Since we assume $L \geq \Cins \frac{1}{\eR}$ and $s \leq \frac{1}{\Cins} L \eR$, up to choosing $\Cins$ large enough we can ensure that:
\begin{equation*}
- \frac{s\eRR}{2} + (s^2 + s) \Cc_\beta\left(\frac{R}{L}\right) \leq - \frac{s \eRR}{4}, 
\end{equation*}
which concludes the proof.
\end{proof}

\section{Study of the external potential: proof of Proposition \ref{prop:VextIsOftenGood}}

\let\oldrho\rho
\renewcommand{\rho}{T}
\renewcommand{\hrho}{\hT}

\label{sec:proofVext}
\begin{proof}[Proof of Proposition \ref{prop:VextIsOftenGood}]
Let us fix an index $i \in \{1, \dots, \Nn\}$ and study the external potential $\Vext$ (as defined in \eqref{def:Vext}) on $\Lai$. Let $\Vvi$ be the potential generated on $\Lai$ by “everything outside $\Lai$”, namely:
\begin{equation}
\label{def:Vi}
\Vvi(x) := \int_{\LN \setminus \Lai} - \log |x-y| \dd \fN(y), \ x \in \Lai.
\end{equation}
For all $x$ in $\Lai$ we have, by definition:
\begin{equation}
\label{VextVi}
\VextL(x) = \Vvi(x) - \sum_{j, j \neq i} \int_{\Laj} - \log |x-y| \dd \fN(y).
\end{equation}
For convenience, we will first study $\Vvi$ itself and then use a rough bound on the remaining terms in \eqref{VextVi}. In the rest of this section we choose an auxiliary length scale $\hrho$ as:
\begin{equation*}
\hrho := \log \rho.
\end{equation*}

\cor{The proof goes into five steps given in the following five subsections. The steps 2-3-4 each correspond to their own technical result (the combination of which eventually yields the statement of Proposition~\ref{prop:VextIsOftenGood} in step 5), which is stated at the beginning of each subsection and proven immediately thereafter.}

\subsection{Decomposition and regularization}
\label{sec:DecompoRegul}
\subsubsection*{Decomposition of $\Vv_i$}
Let us introduce a smooth cut-off function $\chi_i$ equal to $1$ on $\DD(\omega_i, \rho + \hrho)$ and to $0$ outside $\DD(\omega_i, \rho + 2 \hrho)$, with $|\chi_i|_\kk \leq \Cc \hrho^{-\kk}$ and let:
\begin{itemize}
\item $\RR'_i$ be the field generated by the background measure in the annulus $\AA := \DD(\omega_i, \rho + 2 \hrho) \setminus \DD(\omega_i, \rho)$, weighted by $\chi_i$, namely:
\begin{equation*}
\RR'_i := -\log \ast \left( - \chi_i \ind_{\Sigma_N \setminus \La_i} \mm_0 \right).
\end{equation*}
\item $\h^{\nu_i}$ be the field generated by the positive measure $\nu_i$ corresponding to the point charges in $\AA$, weighted by $\chi_i$, namely: 
\begin{equation*}
\h^{\nu_i} := - \log \ast \left( \chi_i \ind_{\Sigma_N \setminus \La_i} \bX_N \right).
\end{equation*}
\item $\RR_i$ be the field generated by the signed measure $(1-\chi_i) \fN$, namely: 
\begin{equation}
\label{def:Ri}
\RR_i := -\log \ast \left( (1 - \chi_i) \fN \right).
\end{equation}
\end{itemize}
Of course, we have $\h^{\nu_i} + \RR'_i = - \log \ast \left(\chi_i \fN \right)$ and thus the following decomposition holds (cf. \eqref{Vextassum}):
\begin{equation}
\label{decompose_VVi}
\Vv_i = \h^{\nu_i} + \RR'_i + \RR_i.
\end{equation}
\begin{remark}
\label{remark_sur_rprimei}
By Newton's theorem, the logarithmic potential $\RR'_i$ is \emph{constant} within the disk $\La_i$ thanks to its radial symmetry. So its derivative and its “interior” normal derivative $\partial_n^{-} \RR'_i$ (see \eqref{def:jumpR}) are identically $0$ there, and it will play no further role in the proof.
\end{remark}

\subsubsection*{Regularization of $\Vv_i$ near the edge.}
Since there might be charges in $\nu_i$ located very close to $\partial \La_i$, we cannot expect in general to have an upper bound on $\Vvi$ in $\La_i$ valid \emph{up to the edge}. In order to study $\Vvi$ near $\partial \La_i$, we introduce a smooth cut-off function $\chir : \R^2 \to [0,1]$ such that:
\begin{equation*}
\chir(z) = 0 \text{ if } |z| \leq \frac{1}{4}, \quad \chir(z) = 1  \text{ if } |z| \geq 1, \quad |\chir|_{\1} \leq 10,
\end{equation*} 
and we use $\chir$ to define a \emph{regularized} version of $\Vvi$ as follows:
\begin{equation}
\label{def:tVi}
\tVi (x) := \int_{\LN \setminus \Lai} - \log |x-y| \chir(x-y) \dd \fN(y), \ x \in \Lai.
\end{equation}
\begin{claim} \label{claimtVi} We have:
\begin{itemize}
    \item $\tVi(x) \leq \Vvi(x) + 100$ for all $x \in \Lai$.
    \item $\Vvi(x) = \tVi(x)$ for all $x \in \Lai$ with $\dist(x, \partial \Lai) \geq 1$.
\end{itemize}
\end{claim}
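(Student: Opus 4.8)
The plan is to compute the difference $\tVi - \Vvi$ directly and exploit the fact that it only involves the fluctuation measure within distance $1$ of the evaluation point. Subtracting \eqref{def:Vi} from \eqref{def:tVi}, for $x \in \Lai$ one obtains
\begin{equation*}
\tVi(x) - \Vvi(x) = \int_{\LN \setminus \Lai} \log|x-y|\,\bigl(1 - \chir(x-y)\bigr)\,\dd \fN(y).
\end{equation*}
By the defining properties of $\chir$, the function $z \mapsto 1 - \chir(z)$ takes values in $[0,1]$ and is supported in $\DD(0,1)$; hence the integrand vanishes whenever $|x-y| > 1$, and on $\{|x-y|\leq 1\}$ it has the sign of $\log|x-y|$, i.e. it is $\leq 0$ there.

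From this, the second bullet is immediate: if $\dist(x, \partial\Lai) \geq 1$ then every $y \in \LN\setminus\Lai$ satisfies $|x-y| > 1$ (the segment from $x$ to $y$ must cross $\partial\Lai$), so $1-\chir(x-y) = 0$ throughout the domain of integration and $\tVi(x) = \Vvi(x)$. For the first bullet, I would split $\fN = \bXN - \mm_0$. The $\bXN$-contribution is a sum of terms $\log|x-y|\,(1-\chir(x-y))$ over the points $y \in \bXN\cap(\LN\setminus\Lai)$, each of which is $\leq 0$ by the sign discussion above, so that contribution is $\leq 0$. The $-\mm_0$-contribution equals $\int_{\LN\setminus\Lai}(-\log|x-y|)\,(1-\chir(x-y))\,\dd y$, which, using $0 \leq 1-\chir \leq 1$ and $-\log|x-y|\geq 0$ on the unit ball, is at most $\int_{\DD(0,1)}(-\log|z|)\,\dd z = \tfrac{\pi}{2}$. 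Adding the two, $\tVi(x) - \Vvi(x) \leq \tfrac{\pi}{2} < 100$, which is the claimed bound.

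There is no substantial obstacle here: the statement is elementary, and the only points requiring a little care are the sign of $\log$ on $\DD(0,1)$ and the pointwise bound $0 \leq 1-\chir \leq 1$ coming from the construction of $\chir$. The explicit constant $\tfrac{\pi}{2}$ is of course immaterial; any finite bound below $100$ suffices, and the slack is deliberate so that the constant $100$ in \eqref{wVextLVextL} can later be used uniformly.
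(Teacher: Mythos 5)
Your proof is correct and uses the same approach as the paper's own (one-line) argument: drop the point-particle contribution because each term $\log|x-y|(1-\chir(x-y))$ is nonpositive on the unit ball, and bound the background contribution by $\int_{\DD(0,1)}(-\log|z|)\dd z = \pi/2$. The paper simply states this more compactly ("diminishes the non-negative influence of point particles... background in a small disk which is negative but uniformly bounded"); you have spelled out the elementary computation.
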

\begin{proof}[Proof of Claim \ref{claimtVi}]
For the first point, observe that adding the short-distance cut-off $\chir$ diminishes the (non-negative) influence of point particles at distance less than $1$, as well as the influence of the background in a small disk (which is negative but uniformly bounded). The second point is obvious.
\end{proof}
We introduce the (regularized) “boundary” term $B_i$:
\begin{equation}
\label{eq:defBi}
B_i : x \mapsto B_i(x) := \int_{\LN \setminus \Lai} \chi_i(y) \times  \log |x -y| \chir(x -y)\  \dd \fN(y).
\end{equation} 
We can decompose $\tVi$ as (compare with \eqref{decompose_VVi}):
\begin{equation}
\label{decompose:tVi}
\tVi = B_i + \RR_i,
\end{equation}
with $\RR_i$ as above in \eqref{def:Ri} (observe that for $x \in \La_i$ and $y$ in the support of $1 - \chi_i$ we have $\chir(x-y) = 1$ i.e. the regularization by $\chir$ is transparent and does not appear in $\RR_i$).

\subsection{Study of the boundary term}
\label{sec:nablaBi}
\begin{claim}
\label{claim:nablaBi}
We have, with $B_i$ as in \eqref{eq:defBi}:
\begin{equation}
\label{eq:nablaBi}
\PNbeta\left[ \sup_{x \in \La_i} |\nabla B_i(x)| \leq \Cc_\beta \log^2 T \ \right] \geq 1 - \exp\left(- \frac{1}{\Cc_\beta} \log^2 T \right).
\end{equation}
\end{claim}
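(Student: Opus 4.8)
\textbf{Proof strategy for Claim~\ref{claim:nablaBi}.} The plan is to differentiate $B_i$ under the integral sign, bound the resulting kernel crudely, and then split the contribution of the point charges of $\bXN$ from that of the uniform background, estimating each by a dyadic decomposition and invoking the local laws for the charges.

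First I would write, from~\eqref{eq:defBi},
\begin{equation*}
\nabla B_i(x) = \int_{\LN\setminus\Lai}\chi_i(y)\, g(x,y)\,\dd\fN(y),\qquad g(x,y):=\frac{x-y}{|x-y|^2}\chir(x-y)+\log|x-y|\,(\nabla\chir)(x-y),
\end{equation*}
and record the elementary bounds on $g$: since $\chir$ vanishes on $\DD(0,1/4)$, equals $1$ off $\DD(0,1)$, and has $|\chir|_\1\le 10$, one has $g(x,y)=0$ whenever $|x-y|<1/4$, and $|g(x,y)|\le\Cc\min(1,|x-y|^{-1})$ otherwise, for a universal $\Cc$. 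On $\LN\setminus\Lai$ the cut-off $\chi_i$ is supported in the annular strip $\AA=\DD(\omega_i,T+2\log T)\setminus\DD(\omega_i,T)$ of width $2\log T$ around $\partial\Lai$. Bounding $\dd|\fN|\le\dd\bXN+\dd y$, this yields, uniformly for $x\in\Lai$,
\begin{equation*}
|\nabla B_i(x)|\le\Cc\int_{\AA}\min(1,|x-y|^{-1})\,\dd y\;+\;\Cc\sum_{\substack{y\in\bXN\cap\AA\\ |x-y|\ge 1/4}}\min(1,|x-y|^{-1}).
\end{equation*}

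Next I would dispose of the background integral deterministically. Using that $\AA$ is a width-$2\log T$ strip contained in a disk of radius $\le 2T$, so that $|\AA\cap\DD(x,r)|\le\Cc\min(r^2,\ r\log T,\ T\log T)$ for all $r>0$, a dyadic decomposition in $|x-y|$ gives $\int_{\AA}\min(1,|x-y|^{-1})\,\dd y\le\Cc\log^2 T$. For the point sum I would introduce the event $\mathcal G$ that, in a fixed cover of $\AA$ by $\Cc\,T/\log T$ squares of sidelength $3\log T$, every square contains at most $\Cc_\beta\log^2 T$ points of $\bXN$; since the $\Lai$ lie well inside the bulk (by~\eqref{eq:assumption_distance}) these squares satisfy~\eqref{condi:LL}, and for $T$ large $3\log T\ge\oldrho_\beta$, so Proposition~\ref{prop:local_laws} in the form of Remark~\ref{rem:LocalLaws}, together with a union bound, gives
\begin{equation*}
\PNbeta(\mathcal G^c)\le\frac{\Cc\,T}{\log T}\exp\Bigl(-\tfrac{\log^2 T}{\Cc_\beta}\Bigr)\le\exp\Bigl(-\tfrac{\log^2 T}{\Cc_\beta}\Bigr)
\end{equation*}
for $T$ large, because $\log T\ll\log^2 T$. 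On $\mathcal G$, for each fixed $x\in\Lai$ I would decompose the point sum dyadically: at scale $2^k$ the charges of $\bXN\cap\AA$ lying within distance $2^{k+1}$ of $x$ number at most $\Cc_\beta\bigl(\log^2 T+\min(2^{2k},2^k\log T,T\log T)\bigr)$ — the same strip-geometry bound used for the background integral, applied to the cover squares — so that multiplying by $2^{-k}$ and summing over the $\Cc\log T$ relevant scales reproduces the bound $\Cc_\beta\log^2 T$, uniformly in $x$. Combining the two estimates yields $\sup_{x\in\Lai}|\nabla B_i(x)|\le\Cc_\beta\log^2 T$ on $\mathcal G$, which is~\eqref{eq:nablaBi}.

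The main obstacle is getting the estimate \emph{uniformly in $x$ all the way up to $\partial\Lai$}, where $x$ may sit very close both to charges and to the background of $\AA$ and a naive bound diverges. Two features rescue the argument: the short-distance regularization $\chir$, which removes the contribution of everything within distance $1/4$ of $x$, and the local laws, which cap the number of charges in the $O(\log T)$-wide boundary layer; the fact that $\AA$ has width only $2\log T$ is precisely what makes the dyadic sums telescope down to the $\log^2 T$ scale rather than to a power of $T$. As a harmless variant, one could first note that the components of $\nabla B_i$ are harmonic on $\{x\in\Lai:\dist(x,\partial\Lai)>1\}$ — there $\chir$ is transparent and $B_i$ is the logarithmic potential of a measure supported outside $\Lai$ — so that the maximum principle reduces the supremum to the thin layer $\{\dist(x,\partial\Lai)\le 1\}$, although this is not needed.
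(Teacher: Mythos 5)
Your proposal is correct and follows essentially the same route as the paper: differentiate under the integral sign, observe that the regularization $\chir$ yields a kernel bounded by $\min(1,|x-y|^{-1})$ (equivalently $(1+|x-y|)^{-1}$, as the paper writes), cover the width-$\O(\log T)$ support annulus by squares of sidelength $\O(\log T)$, apply the local laws plus a union bound to cap the number of points in each square, and observe that the resulting geometric sum is $\O(\log^2 T)$. The paper compresses the final accounting into a one-dimensional integral along $\partial\La_i$ while you spell out the dyadic decomposition explicitly, and your closing maximum-principle remark is a legitimate (though, as you say, unnecessary) simplification not used in the paper — but the underlying argument is the same.
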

\begin{proof}[Proof of Claim \ref{claim:nablaBi}]
For $x \in \La_i$ arbitrary and $y$ in the support of $\chi_i$ we have:
\begin{equation*}
\| \nabla \left(\log |x-y| \chir(x-y)\right) \| \preceq \frac{1}{1 + |x-y|}  
\end{equation*}
(this is possible thanks to the regularization due to $\chir$ near $\partial \La_i$). To control $|\nabla B_i(x)|$ it is thus enough to bound, for $x \in \La_i$:
\begin{equation}
\label{blablaLNLai_1}
\int \ind_{\LN \setminus \La_i}(y) \chi_i(y) \frac{1}{1 + |x-y|} \left(\dd \bXN(y) + \dd y\right).
\end{equation}
By construction $\chi_i \ind_{\Sigma_N \setminus \La_i}$ is supported in the annulus $\DD(\omega_i, \rho + 2 \hrho) \setminus \DD(\omega_i, \rho)$. Let us cover this annulus by $\O(\frac{\rho}{\hrho})$ squares of sidelength $\hrho$. In each square, by the local laws, we can ensure that there are at most $\Cc_\beta \hrho^2$ points with probability $1 - \exp\left(- \frac{1}{\Cc_\beta} \hrho^2 \right)$ and so after an union bound we can ensure that \emph{all of them} contain at most $\Cc_\beta \hrho^2$ points with probability $1 - \exp\left(- \frac{1}{\Cc_\beta} \hrho^2 \right)$ ($\hrho^2$ has been chosen in order to beat the combinatorial loss due to such union bounds). If so, then we have:
\begin{equation*}
\int \ind_{\LN \setminus \La_i}(y) \chi_i(y) \frac{1}{1 + |x-y|} \dd \bXN(y) \preceq \hrho \int_{y \in \partial \La_i} \frac{1}{1 + |x-y|} \dd y,
\end{equation*}
for which a rough bound is $\O\left(\log^{2} \rho\right)$. The continuous part of \eqref{blablaLNLai_1} is of the same order.
\end{proof}

We now focus on studying $\RR_i$. In the next sections the constants $\Cc_\beta$ will also depend on $\delta$ as in Assumption \eqref{eq:assumption_distance}.

\subsection{Estimate on the first derivative at the center}
\begin{lemma}
\label{lem:nablaRiwi}
We have, with $\RR_i$ as in \eqref{def:Ri}:
\begin{equation}
\label{eq:sizenablaV}
\PNbeta \left[ |\nabla \RR_i(\omega_i)| \leq \Cc_\beta \right] \geq 1 - \exp\left(-\frac{\log^2 T}{\Cc_\beta}\right). 
\end{equation}
\end{lemma}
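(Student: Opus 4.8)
The plan is to realise $\nabla R_i(\omega_i)$ as the linear statistic of a smooth, rapidly decaying vector field and to estimate it scale by scale: by the finer smooth–linear–statistics bound on the dyadic annuli around $\omega_i$ that stay in the bulk of $\LN$, and by the crude a priori bound of Lemma \ref{lem:apriori} on a single macroscopic region touching $\partial\LN$. Differentiating \eqref{def:Ri} under the integral sign (legitimate, since $1-\chi_i$ vanishes near $\omega_i$),
\begin{equation*}
\nabla R_i(\omega_i)\;=\;-\int_{\LN}\frac{\omega_i-y}{|\omega_i-y|^2}\,\bigl(1-\chi_i(y)\bigr)\,\dd\fN(y)\;=:\;\int_{\LN}\Psi\,\dd\fN\;=\;\bigl(\Fluct[\Psi_1],\Fluct[\Psi_2]\bigr),
\end{equation*}
where $\Psi$ is smooth, supported on $\LN\setminus\DD(\omega_i,T+\hT)$, with $|\Psi(y)|\le|\omega_i-y|^{-1}$ and, away from the transition annulus $\{T+\hT\le|\cdot-\omega_i|\le T+2\hT\}$ of $\chi_i$, $|\Psi|_{\1,\star}(y)\preceq|\omega_i-y|^{-2}$ and $|\Psi|_{\2,\star}(y)\preceq|\omega_i-y|^{-3}$. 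It suffices to bound $|\Fluct[\Psi_j]|$ for $j=1,2$. Fix a smooth radial partition of unity $1=\sum_{k\ge0}\psi_k+\psi_{\mathcal E}$ on $\supp\Psi\cup\partial\LN$, with $\psi_k$ supported in $\{c\,2^kT\le|\cdot-\omega_i|\le C\,2^kT\}$, $|\psi_k|_{\kk}\preceq(2^kT)^{-\kk}$, for $k\le K_\flat$ with $2^{K_\flat}T\asymp_\delta\sqrt N$, and $\psi_{\mathcal E}$ supported in $\mathcal E:=\LN\setminus\DD(\omega_i,2^{K_\flat}T)$ (a region of diameter $\asymp\sqrt N$), $|\psi_{\mathcal E}|_{\kk}\preceq_\delta N^{-\kk/2}$. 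This is possible because $\dist(\Lambda_i,\partial\LN)\gtrsim_\delta\sqrt N$ in the regime where the lemma is applied (cf.\ \eqref{eq:assumption_distance} and Section \ref{sec:well-separated}), so all annuli $\{|\cdot-\omega_i|\le 2^{K_\flat}T\}$ lie at distance $\gtrsim_\delta\sqrt N$ from $\partial\LN$.

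On a bulk scale $k\le K_\flat$ put $\varphi:=\Psi_j\psi_k$: it is $C^2$, supported in the bulk, lives at scale $2^kT$ (which exceeds the minimal lengthscale of Proposition \ref{prop:local_laws} once $T$ is large), and $\int_{\R^2}|\nabla\varphi|^2\preceq(2^kT)^{-2}$, $|\varphi|_{\2}\preceq(2^kT)^{-3}$; the innermost piece, which meets the transition annulus of $\chi_i$, is split once more by peeling off a cutoff supported on a $\hT$–neighbourhood of that annulus — whose $C^2$–norm is $\preceq(T\hT^2)^{-1}$ but whose support has area only $\asymp T\hT$, hence is still controlled — leaving a genuine scale–$T$, amplitude–$T^{-1}$ function. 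By the smooth–linear–statistics bound (the non–radial analogue of Proposition \ref{prop:bound_fluct_radial}, available from the analysis of \cite{MR3788208,serfaty2020gaussian}) together with the local laws (Proposition \ref{prop:local_laws}) on the $\O(1)$ squares of side $2^kT$ covering $\supp\nabla\varphi$, one gets, for $0<s\preceq(2^kT)^3$,
\begin{equation*}
\log\EN\bigl[e^{\,s\,\Fluct[\varphi]}\bigr]\;\le\;\frac{s^{2}}{4\pi\beta}\int_{\R^2}|\nabla\varphi|^{2}+\log\EN\bigl[e^{\,\Cc_\beta\,s\,|\varphi|_{\2}\,\EnerPoints(\bXN,\,\supp\nabla\varphi)}\bigr]\;\preceq_\beta\;\frac{s^{2}}{(2^kT)^{2}}+\frac{s}{2^kT}.
\end{equation*}
Optimising $s\asymp 2^kT(\log T+k)$ in Markov's inequality gives $\PNbeta\bigl(|\Fluct[\Psi_j\psi_k]|\ge\Cc_\beta(\log T+k)\,2^{-k}T^{-1}\bigr)\le 2\exp\bigl(-(\log T+k)^2/\Cc_\beta\bigr)$. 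Summing over $j\in\{1,2\}$ and $k\ge0$ (the thresholds grow in $k$, so the series converges regardless of $K_\flat$), the total failure probability is $\le\Cc_\beta\exp(-\log^2T/\Cc_\beta)$, and on the complementary event $\sum_{k\le K_\flat}|\Fluct[\Psi_j\psi_k]|\le\sum_{k\ge0}\Cc_\beta(\log T+k)\,2^{-k}T^{-1}\preceq_\beta\log T/T\preceq_\beta1$.

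For the edge region apply Lemma \ref{lem:apriori} to $\varphi:=\Psi_j\psi_{\mathcal E}$ (chosen supported in $\LN$) with $\Elec=\nHH^{\bXN}$ and $\Omega$ a $1$–neighbourhood of $\supp\nabla\varphi\subset\mathcal E$: there $\int_{\R^2}|\nabla\varphi|^2\preceq_\delta N^{-1}$, $|\varphi|_{\1,\Omega}\preceq_\delta N^{-1}$, $\Points(\bXN,\Omega)\le\Points(\bXN,\LN)=N$, and $\int_\Omega|\nHH^{\bXN}_{\vr}|^2\le\int_{\R^2}|\nHH^{\bXN}_{\vr}|^2$. By \eqref{secondformulation}, the lower bound $\FN\ge-\Cc N$, and the two–sided comparison between the truncated electric energy and $\FN$ (cf.\ \cite[Lemma B.2]{armstrong2019local} and the discussion in Section \ref{sec:logEnergy}), the last integral is $\preceq\FN(\bXN)+N\le\Cc_\beta N$ outside an event of probability $\le\exp(-N/\Cc_\beta)$, using the global law \eqref{eq:global_law_FN}; since $T\le\delta\sqrt N$ we have $\log^2T\preceq\log^2N\ll N$, so this is $\le\exp(-\log^2T/\Cc_\beta)$. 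On that event $|\Fluct[\Psi_j\psi_{\mathcal E}]|\preceq_\delta N^{-1}\,N^{1/2}\,(\Cc_\beta N)^{1/2}+N^{-1}\,N\preceq_\beta1$. Adding up, $|\nabla R_i(\omega_i)|\le\sum_{j=1,2}\bigl(\sum_{k\le K_\flat}|\Fluct[\Psi_j\psi_k]|+|\Fluct[\Psi_j\psi_{\mathcal E}]|\bigr)\le\Cc_\beta$ outside an event of probability $\le\Cc_\beta\exp(-\log^2T/\Cc_\beta)$, which after renaming $\Cc_\beta$ is the assertion. The main obstacle is the behaviour near $\partial\LN$, where local laws fail in an $\O(N^{1/4})$–layer: one therefore cannot run the dyadic decomposition down to the boundary, and the saving point is that only boundedly many dyadic scales reach the edge, so lumping them into one macroscopic region and using merely the a priori bound plus the global energy estimate still costs only $\O(1)$. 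A second essential point is that on the bulk scales one must use the \emph{finer} smooth–linear–statistics bound rather than Lemma \ref{lem:apriori}: the latter gives only $\O(1)$ per dyadic scale, hence a spurious factor $\log N$ after summation, whereas the former gives $\O\bigl((2^kT)^{-1}\log T\bigr)$, whose sum over scales is $\O(1)$.
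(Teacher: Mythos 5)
Your argument mirrors the paper's own proof of Lemma \ref{lem:nablaRiwi}: realise $\nabla R_i(\omega_i)$ as a linear statistic, split it into dyadic annuli around $\omega_i$ in the bulk plus one macroscopic region touching $\partial\LN$, estimate the bulk scales via the finer smooth-statistics bound together with the local laws, handle the edge piece with Lemma \ref{lem:apriori} and the global law, and give the transition annulus of $\chi_i$ separate treatment. The only cosmetic difference is that the paper treats the whole first outer layer (scales from $T$ to $3T$) via the localized version of Lemma \ref{lem:apriori} as a separate step (Claim \ref{claim:OuterLayer}), whereas you fold that annulus into the dyadic family and peel off only the $\hT$-thick transition layer of $\chi_i$ — which is fine, but you leave the $\O(1)$ estimate on that peeled-off piece implicit where the paper carries it out explicitly.
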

\begin{proof}[Proof of Lemma \ref{lem:nablaRiwi}]
Let us bound the first partial derivative of $\RR_i$. By definition of $\RR_i$ we have:
\begin{equation*}
\partial_1 \RR_i(x) = - \int_{\LN} (1 - \chi_i(y)) \partial_1 \log|x-y| \dd \fN(y). 
\end{equation*}

\textbf{Step 1. The first outer layer.}
Let $\sigma$ be a cut-off function with radial symmetry around $\omega_i$ such that:
\begin{equation*}
\sigma(z) = 1 \text{ if } |z - \omega_i| \leq  2 \rho, \quad \sigma(z) = 0 \text{ if } |z - \omega_i| \geq  3 \rho,\quad |\sigma|_{\kk} \leq \Cc T^{-\kk} \text{ for $\kk \in \{1, 2\}$}.
\end{equation*}
We start by evaluating the contribution to $\RR_i$ coming from the support of $\sigma$.
\begin{claim}
\label{claim:OuterLayer}
Let $A := \int (1 - \chi_i(y)) \sigma(y) \partial_1 \log|\omega_i-y| \dd \fN(y)$. We have:
\begin{equation}
\label{eq:OuterLayer}
\PNbeta \left( |A| \leq \Cc_\beta \right) \geq 1 - \exp\left(- \frac{1}{\Cc_\beta} \hrho^2 \right).
\end{equation}
\end{claim}
\begin{proof}[Proof of Claim \ref{claim:OuterLayer}]
We are looking at the fluctuations of $\varphi := y \mapsto \left(1- \chi_i(y) \right) \sigma(y) \partial_1 \log|\omega_i - y|$ which is of class $\CC^1$ and compactly supported within $\DD(\omega_i, 3\rho) \setminus \DD(\omega_i, \rho + \hrho)$. Let us recall that:
\begin{itemize}
    \item  $(1- \chi_i), \sigma$ are bounded and $\partial_1 \log|\omega_i - y|$ is of order $\rho^{-1}$ for $y$ in the support of $\varphi$.
    \item $ |\chi_i|_{\1} $ is of order $\hrho^{-1}$, $|\sigma|_\1$ is of order $\rho^{-1}$ and $\left|\partial_1 \log|\omega_i - \cdot| \right|_{\1, \supp \varphi}$ is of order $\rho^{-2}.$
\end{itemize}
Moreover let us make the following observations:
\begin{itemize}
\item On the annulus $\AA_1 := \DD(\omega_i, \rho + 2 \hrho) \setminus \DD(\omega_i, \rho + \hrho)$, of area $\O(\rho \hrho)$, $\|\nabla \varphi\|$ is of order $(\rho \hrho)^{-1}$ (the dominant contribution comes from differentiating $\chi_i$).
\item On the annulus $\AA_2 := \DD(\omega_i, 3 \rho) \setminus \DD(\omega_i, \rho + 2 \hrho)$, of area $\O(\rho^2)$, $\|\nabla \varphi \|$ is of order $\rho^{-2}$ ($\chi_i$ does not play a role anymore).
\end{itemize}
Using Lemma \ref{lem:apriori} in its localized version \eqref{eq:apriori_one_var_localized} we are left to bound:
\begin{equation*}
A_1 := \left(\rho \hrho\right)^{-1} \times \left( \Ener(\bXN, \AA_1)^\hal \times \left(\rho \hrho\right)^\hal + \Points\left(\bXN, \AA_1 \right) \right)
\end{equation*}
and 
\begin{equation*}
A_2 := \rho^{-2} \times \left(  \Ener(\bXN, \AA_2)^\hal \times (\rho^2)^\hal + \Points\left(\bXN, \AA_2 \right) \right).
\end{equation*}
To control the first term $A_1$, we cover the annulus $\AA_1$ by $\O(\frac{\rho}{\hrho})$ squares of sidelength $\hrho$ and use the local laws. We obtain:
\begin{equation*}
\PNbeta \left( \EnerPoints(\bXN, \AA_1) \geq \Cc_\beta \rho \hrho \right) \leq \exp\left(- \frac{1}{\Cc_\beta} \hrho^2 \right),
\end{equation*}
where we used a union bound on $\exp\left( \O(\log \rho) \right)$ events and the fact that $\log \rho \ll \hrho^2$. In particular this ensures that:
\begin{equation*}
\PNbeta \left( |A_1| \leq \Cc_\beta \right) \geq 1 - \exp\left(- \frac{1}{\Cc_\beta} \hrho^2 \right).
\end{equation*}
On the other hand, to control the second term $A_2$ we may apply the local laws to the full disk $\DD(\omega_i, 3 \rho)$ and see that:
\begin{equation*}
\PNbeta \left( \EnerPoints\left(\bXN, \DD(\omega_i, 3\rho) \right) \geq \Cc_\beta \rho^2 \right) \leq \exp\left(- \frac{\rho^2}{\Cc_\beta} \right), 
\end{equation*}
which ensures that $\PNbeta \left( |A_2| \leq \Cc_\beta \right) \geq 1 - \exp\left(- \frac{\rho^2}{\Cc_\beta} \right)$. We deduce \eqref{eq:OuterLayer}.
\end{proof}

\textbf{Step 2. Dyadic scales up to the boundary.}
Let $K := \log_2 \left(\frac{1}{10} \dist\left( \La_i, \partial \LN \right)\right)$. By our assumption \eqref{eq:assumption_distance} we have $K \geq \log_2 \left(\frac{\delta \sqrt{N}}{10}\right) $.

For $\log_2(2T) \leq k \leq K$ we let $\tau_k$ be a smooth cut-off function such that:
\begin{itemize}
    \item $\tau_k$ has radial symmetry around $\omega_i$ and is supported on the annulus $\AA'_k := \DD(\omega_i, 2^{k+1}) \setminus \DD(\omega_i, 2^{k})$ of area $\O(2^{2k})$.
    \item We have $|\tau_k|_\1 \leq \Cc 2^{-k}$ and $|\tau_k|_\2 \leq \Cc 2^{-2k}$ (with $\Cc$ independent of $k$).
    \item We have the following “partition of unity”-type of identity:
    \begin{equation}
    \label{partition_unity_tauk}
\sigma(z) + \sum_{k = \log_2(2 \rho)}^{K} \tau_k(z) \equiv 1 \text{ for } |z-\omega_i| \leq 2^{K-1}.
    \end{equation}
\end{itemize}
For each $k$ we let $\phi_k$ be the following map:
\begin{equation*}
\phi_k : y \mapsto - \tau_k(y) \partial_1 \log|\omega_i - y|.
\end{equation*}
One can check that:
\begin{equation*}
|\phi_k|_\1 \leq \Cc 2^{-2k}, \quad |\phi_k|_\2 \leq \Cc 2^{-3k}.
\end{equation*}

\begin{claim}
\label{claim:dyadic_scales}
We have, for $k = \log_2(2 \rho), \dots, K$:
\begin{equation}
\label{eq:dyadic_scales}
\PNbeta\left[ \left| \Fluct[\phi_k] \right| \geq \Cc_\beta 2^{-k/2} \right] \leq \exp\left(- \frac{1}{\Cc_\beta} 2^{k/2}\right).
\end{equation}
\end{claim}
\begin{proof}[Proof of Claim \ref{claim:dyadic_scales}]
We use the “fine bounds” on fluctuations given by Proposition \ref{prop:bound_fluct_radial}. We choose the parameter $s = c 2^{k}$ for some constant $c$, as allowed by \eqref{eq:condition_on_s}, and observe that, in view of the bounds mentioned above:
\begin{equation*}
s^2 \int_{\AA'_k} |\nabla \phi_k|^2 \preceq s^2 2^{2k} \times \left(2^{-2k} \right)^2 = \O\left(c^2 \right).
\end{equation*}
Moreover we may use the local laws to write (note that we deliberately stopped before reaching the boundary thanks to our choice of $K$):
\begin{equation*}
\log \EN\left(e^{\Cc_\beta s |\phi_k|_\2 \left(\EnerPoints(\bXN, \AA'_k)\right)} \right) \leq \log \EN\left(e^{\Cc_\beta c 2^{-2k} \left(\EnerPoints(\bXN, \AA'_k)\right)} \right) \leq c \Cc_\beta.
\end{equation*}
In summary, we have:
\begin{equation*}
\log \EN\left[e^{c 2^{k} \Fluct[\phi_k]}\right] = \O\left(c^2 \right) + \O_\beta\left(c\right),
\end{equation*}
and Markov's inequality yields the claim.
\end{proof}
Summing up the controls on $\phi_k$ and using a union bound, we obtain:
\begin{equation}
\label{eq:sum_phik}
\PNbeta\left[ \left| \sum_{k = \log_2(2 \rho)}^K \Fluct[\phi_k] \right| \geq \Cc_\beta T^{-1/2} \right] \leq \exp\left(- \frac{1}{\Cc_\beta} T^{1/2} \right) \ll \exp\left(- \log^2 T\right).
\end{equation}
\end{proof}

\textbf{Step 3. The rest of the system.} It remains to study the contribution to $\nabla \RR_i$ coming from the part of the system  far from  $\omega_i$. Let us introduce an artificial cut-off $\gamma$ supported on $2 \LN$ - \cor{this is fine because in our model all the points must live in $\LN$.}

We have
\begin{multline*}
\int_{\LN} \left( 1 - \left(\sigma(y) + \sum_{k = \log_2(2 \rho)}^{K} \tau_k(y)\right) \right) \partial_1 \log|y-\omega_i| \dd \fN(y), 
\\
= \int \gamma(y) \left( 1 - \left(\sigma(y) + \sum_{k = \log_2(2 \rho)}^{K} \tau_k(y)\right) \right) \partial_1 \log|y-\omega_i| \dd \fN(y) 
\end{multline*}
and we may thus apply Lemma \ref{lem:apriori} to the function:
\begin{equation*}
\varphi : y \mapsto \gamma(y) \left( 1 - \left(\sigma(y) + \sum_{k = \log_2(2 \rho)}^{K} \tau_k(y)\right) \right) \partial_1 \log|y-\omega_i|,
\end{equation*}
whose derivative is of order at most $\frac{1}{\delta^2 N}$ in view of Assumption \ref{eq:assumption_distance} (this is where the dependency in $\delta$ comes in). The support of $\varphi$ has area $N$, the number of points is obviously bounded by $N$ and the energy there is $\O(N)$ with high probability by the \emph{global} law. We obtain:
\begin{equation*}
|\Fluct[\varphi]| \leq \Cc_\beta \frac{1}{\delta^2}, \text{ with probability } 1 - \exp\left(-\frac{1}{\Cc_\beta} N\right).
\end{equation*}

Keeping the dominant contributions of these three steps, we obtain \eqref{eq:sizenablaV}.
\end{proof}

\subsection{Estimate on the second derivative up to the boundary.}
\begin{lemma}
\label{lem:2nablaRiwi}
We have, with $\RR_i$ as in \eqref{def:Ri}:
\begin{equation}
\label{eq:size2nablaV}
\PNbeta \left[ \sup_{x \in \La_i} |\Dd^2 \RR_i(x)| \times \left(1 + \dist(x, \partial \La_i) \right) \leq \Cc_\beta \right] \geq 1 - \exp\left(- \frac{1}{\Cc_\beta} \hrho^2 \right). 
\end{equation}
\end{lemma}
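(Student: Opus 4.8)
The plan is to bound $\Dd^2 R_i(x)$ pointwise by $\Cc_\beta/(\hrho+d)$, where $d:=\dist(x,\partial\Lai)$; since $\hrho=\log T\geq 1$ for $T$ large this gives $\hrho+d\geq 1+d$, hence $|\Dd^2R_i(x)|\,(1+d)\leq\Cc_\beta$, which is \eqref{eq:size2nablaV}. Differentiating \eqref{def:Ri} under the integral and using that $\partial_{x_a}\partial_{x_b}\log|x-y|$ equals a function $g_{ab}(x,y)$ which is smooth in $y$ away from $y=x$, with $|g_{ab}(x,\cdot)|\lesssim|x-\cdot|^{-2}$ and $|\nabla_y g_{ab}(x,\cdot)|\lesssim|x-\cdot|^{-3}$, we have $(\Dd^2R_i(x))_{ab}=-\int(1-\chi_i(y))\,g_{ab}(x,y)\,\dd\fN(y)$. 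The crucial geometric point is that $\supp(1-\chi_i)=\{|y-\omega_i|\geq\rho+\hrho\}$ lies at distance $\geq\hrho+d$ from $x$ (because $|x-\omega_i|=\rho-d$), so the integrand is smooth on the relevant region and we may freely localize.

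Fix an entry $(a,b)$. Following the dyadic decomposition in the proof of Lemma \ref{lem:nablaRiwi}, I would write $\sum_{j=0}^{J}\psi_j\equiv 1$ on $\{|y-x|\leq\tfrac1{10}\dist(\Lai,\partial\LN)\}$, with $\psi_j$ a smooth cutoff supported on the annulus of radii $\asymp r_j:=2^j(\hrho+d)$ and $|\psi_j|_{\kk}\lesssim r_j^{-\kk}$, and $2^J(\hrho+d)\asymp\tfrac1{10}\dist(\Lai,\partial\LN)$, so $J=\O(\log N)$. For $j$ large enough that $1-\chi_i\equiv 1$ on $\supp\psi_j$ — i.e. all but the two smallest scales, as one checks — the test function $\varphi_{j,x}:=\psi_j(1-\chi_i)\,g_{ab}(x,\cdot)$ is $C^2$, supported on an annulus $\AA_j$ of scale $r_j$, with $|\varphi_{j,x}|_{\1}\lesssim r_j^{-3}$; covering a $1$-neighbourhood $\Omega_j$ of $\AA_j$ by $\O(1)$ squares of side $\asymp r_j$, all well inside $\LN$ by \eqref{eq:assumption_distance}, the local laws (Remark \ref{rem:LocalLaws}) give $\Ener(\bXN,\Omega_j)+\Points(\bXN,\Omega_j)\leq\Cc_\beta r_j^2$ off an event of probability $\leq\Cc_\beta\exp(-r_j^2/\Cc_\beta)$, and then \eqref{eq:apriori_one_var_localized} yields $|\Fluct[\varphi_{j,x}]|\lesssim |\varphi_{j,x}|_{\1}\bigl(\Ener(\bXN,\Omega_j)^{1/2}|\Omega_j|^{1/2}+\Points(\bXN,\Omega_j)\bigr)\lesssim r_j^{-3}\cdot r_j^2=r_j^{-1}$. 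For the two smallest scales, where $\chi_i$ transitions, I would instead sub-cover the annulus by $\O((\hrho+d)/\hrho)=\O(T)$ squares of side $\hrho$: on one meeting the transition strip $\{|y-\omega_i|\in[\rho+\hrho,\rho+2\hrho]\}$ the localized test function has $|\cdot|_{\1}\lesssim\hrho^{-1}(\hrho+d)^{-2}$ (from differentiating $\chi_i$ and the bump) and domain of area $\O(\hrho^2)$, contributing $\lesssim\hrho(\hrho+d)^{-2}$, so these sum to $\lesssim(\hrho+d)^{-1}$, while on squares with $\chi_i\equiv 0$ the previous estimate applies. Summing the geometric series, $\sum_j r_j^{-1}\lesssim(\hrho+d)^{-1}$.

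The complement $\{|y-x|>\tfrac1{10}\dist(\Lai,\partial\LN)\}\cap\LN$, which lies at distance $\gtrsim\delta\sqrt N$ from $x$, I would treat exactly as in Step 3 of the proof of Lemma \ref{lem:nablaRiwi}: the corresponding test function has gradient of size $\O(\delta^{-3}N^{-3/2})$ and support of area $\O(N)$, the energy on all of $\LN$ is $\O(N)$ with probability $\geq 1-\exp(-N/\Cc_\beta)$ by the global law \eqref{eq:global_law_FN}, so Lemma \ref{lem:apriori} bounds this contribution by $\O(\delta^{-3}N^{-1/2})$, which is $\leq\Cc_\beta(1+d)^{-1}$ since $d\leq T=o(\sqrt N)$. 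Uniformity in $x$ is then automatic: the dyadic annulus $\{|y-x|\asymp r\}$ is covered by $\O(1)$ dyadic squares of side $\asymp r$ uniformly over $x\in\Lai$, so all the local-laws events used above range over a single fixed family of $\O(\log N+T)$ squares of sides in $[\hrho,\,\delta\sqrt N]$; the failure probabilities $\exp(-r_j^2/\Cc_\beta)$ decay super-geometrically in $j$, so their sum is $\lesssim\exp(-\hrho^2/\Cc_\beta)$, and the $\O(T)$ combinatorial factor coming from the scale-$\hrho$ squares is absorbed into a slightly worse $\Cc_\beta$ using $\log T\ll\hrho^2=\log^2 T$. Together with the (negligible) global-law event this yields \eqref{eq:size2nablaV}.

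I expect the only real obstacle to be bookkeeping: handling the anisotropic transition strip of $\chi_i$ near $\partial\Lai$, which forces the $\hrho$-scale sub-division at the two smallest dyadic scales, and checking that every square invoked stays in the bulk of $\LN$ where the local laws hold. There is no structural novelty — the non-radial nature of $\nabla^2_x\log|x-y|$ rules out using the sharper Proposition \ref{prop:bound_fluct_radial} and forces reliance on Lemma \ref{lem:apriori}, but that is precisely the pattern already used in Steps 1 and 3 of the proof of Lemma \ref{lem:nablaRiwi}.
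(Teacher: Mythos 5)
Your decomposition is genuinely different from the paper's. The paper's proof keeps the dyadic annuli $\AA'_k = \DD(\omega_i, 2^{k+1})\setminus\DD(\omega_i,2^k)$ centered at $\omega_i$ and starting at radius $\asymp T$, and handles the entire region near $\partial\La_i$ (the ``first outer layer'') by a pointwise bound $\|\Dd^2\log|x-y|\|\preceq(\hrho+|x-y|)^{-2}$ integrated against $\dd\bXN+\dd y$ on $\hrho$-squares, which is where the factor $(1+\dist(x,\partial\La_i))^{-1}$ is extracted. You instead recenter the dyadic decomposition at $x$, start at the variable scale $\hrho+d$, and work with the fluctuation $\Fluct[\varphi_{j,x}]$ at every scale, relying on the built-in cancellation rather than on an $L^1$ bound in the boundary region. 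This is arguably a cleaner route: the paper's pointwise $L^1$ bound over the full support of $(1-\chi_i)\sigma$ would, read literally, give a $\log T$ divergence on the thick annulus $\{\rho+2\hrho\le|y-\omega_i|\le 3\rho\}$ where the cancellation is not invoked, whereas your fluctuation-based dyadic scheme avoids this.

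However, there is a concrete geometric error in your proposal: you assert that only ``the two smallest scales'' of your dyadic partition meet the transition strip $\{\rho+\hrho\le|y-\omega_i|\le\rho+2\hrho\}$. This is false. That strip is an annulus of width $\hrho$ surrounding the disk $\La_i$ of radius $T$, so its points lie at distances from $x$ ranging from $\hrho+d$ (at the nearest boundary point) all the way up to $\approx 2T-d$ (on the diametrically opposite side). Consequently the strip intersects \emph{every} dyadic annulus $\{|y-x|\asymp r_j\}$ with $r_j$ between $\hrho+d$ and $\O(T)$, i.e.\ $\O(\log T)$ scales, not two. The error is fortunately self-healing: at scale $r_j$ the intersection of the strip with the $j$-th annulus has area $\O(\hrho r_j)$ and is covered by $\O(r_j/\hrho)$ squares of side $\hrho$; on each, $|\varphi_{j,x}|_{\1}\lesssim\hrho^{-1}r_j^{-2}$ (the $\chi_i$ derivative dominating) and $\Ener^{1/2}\hrho+\Points\lesssim\hrho^2$, so the contribution is again $\O(r_j^{-1})$, same as the smooth part, and the geometric series still sums to $\O((\hrho+d)^{-1})$. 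The union bound also survives: the full transition strip is covered by $\O(T/\hrho)$ dyadic $\hrho$-squares independently of $x$ and $j$, and $T\exp(-\hrho^2/\Cc_\beta)\le\exp(-\hrho^2/\Cc_\beta')$ since $\hrho^2=\log^2 T\gg\log T$. So the conclusion stands, but as written your bookkeeping for the boundary cutoff is wrong and needs to be carried out at every scale $r_j\lesssim T$, not just the two smallest.

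A smaller point worth flagging: the lowest scale $r_0=\hrho+d$ depends on $x$ through $d$, so ``the dyadic annulus $\{|y-x|\asymp r\}$ is covered by $\O(1)$ fixed dyadic squares uniformly over $x$'' needs a word of justification; the cleanest repair is to take the fixed scales $r_j=2^j\hrho$ and use that annuli with $2r_j<\hrho+d$ miss $\supp(1-\chi_i)$ entirely, so the sum automatically starts at $r_j\gtrsim\hrho+d$ and nothing changes.
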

\begin{proof}[Proof of Lemma \ref{lem:2nablaRiwi}]
We bound one of the four second partial derivatives (they can all be treated the same way) and use the same decomposition as in the proof of Lemma \ref{lem:nablaRiwi}.

\medskip

\textbf{Step 1. The first outer layer.} To control the first outer layer, we proceed like for Claim \ref{claim:nablaBi}. For $x \in \La_i$ arbitrary and $y$ in the support of $(1- \chi_i) \sigma$ we have:
\begin{equation*}
\| \Dd^2 \log|x-y| \| \preceq \frac{1}{\left(\hrho + |x-y|\right)^2}  
\end{equation*}
(the points $x,y$ being at distance at least $\hrho$ from each other). It is thus enough to bound:
\begin{equation}
\label{blablaLNLai}
\int \sigma(y) \left(1 - \chi_i(y)\right) \frac{1}{\left(\hrho + |x-y|\right)^2} \left(\dd \bXN(y) + \dd y\right).
\end{equation}
Assume as above that each square of sidelength $\hrho$ covering the annulus $\DD(\omega_i, \rho + 2 \hrho) \setminus \DD(\omega_i, \rho)$ contains at most $\Cc \hrho^2$ points (an event that occurs with probability $1 - \exp\left(- \frac{1}{\Cc} \hrho^2 \right)$ and does not depend on $x$). We may then compare \eqref{blablaLNLai} to the following one-dimensional integral:
\begin{equation*}
\hrho \int_{y \in \partial \La_i} \frac{1}{\left(\hrho + |x-y|\right)^2} \dd y,
\end{equation*}
for which a rough bound is $\Cc \frac{1}{1 + \dist(x, \partial \La_i)}$. 

\medskip

\textbf{Step 2. Dyadic scales up to the boundary.} We proceed as in the previous proof, but instead of using the fine bounds of Proposition \ref{prop:bound_fluct_radial} (which are only valid function-wise, so $x$-wise here) we use the rougher controls of Lemma \ref{lem:apriori} which allow for a uniform control. For each $k$ we let $\phi_k$ be the following map:
\begin{equation*}
\phi_{k,x} : y \mapsto - \tau_k(y) \partial_{12} \log|x - y|.
\end{equation*}
If $x$ is in $\La_i$ and $y \in \AA'_k$ for $k \geq \log_2(2\rho)$ we have $| \phi_{k,x} |_\1 \leq \Cc 2^{-3k}$. Hence using \eqref{eq:apriori_one_var} we know that we can bound $\Fluct[\phi_{k,x}]$ by:
\begin{equation*}
2^{-3k} \times \left(2^{2k} + \EnerPoints(\bXN, \AA'_k)\right).
\end{equation*}
Using the local laws, we may control $\EnerPoints(\bXN, \AA'_k)$ by $\Cc_\beta 2^{2k}$ with probability $1 - \exp\left(-\frac{1}{\Cc_\beta} 2^{2k} \right)$. We deduce that:
\begin{equation*}
\PNbeta\left[ \sup_{x \in \La_i} \left| \Fluct[\phi_{k,x}] \right| \geq \Cc_\beta 2^{-k} \right] \leq \exp\left(- \frac{1}{\Cc_\beta} 2^{2k} \right).
\end{equation*}
Summing again the contributions through an union bound, we get that the contribution to the second derivative of $\RR_i$ due to the dyadic annuli is bounded by $\Cc \rho^{-1}$ with probability $1 - \exp\left(-\frac{\rho^2}{\Cc_\beta}\right)$.

\medskip

\textbf{Step 3. The rest of the system.} We argue as in the previous proof, but thanks to the additional derivative we gain a factor $\frac{1}{\delta \sqrt{N}} \ll \dist(x, \La_i)^{-1}$.
\end{proof}

\subsection{Summary and conclusion}

\subsubsection*{Bounding the contributions of other sub-systems}
Returning to \eqref{def:Vi}, \eqref{VextVi}, a rough bound on $\|\Dd \log|x-y| \|$ for $x \in \La_i$ and $y \in \La_j$, together with the fact that each sub-system contains by assumption $\O(T^2)$ points, ensures that (using \eqref{sumdij}):
\begin{equation}
\label{eq:VviVext}
\left| \Vvi - \Vext \right|_{\1, \Lai} \leq \Cc T^2 \sum_{j, j \neq i} \frac{1}{\dij} = \O\left(T^2 \frac{\log R}{ML} \right).  
\end{equation}
(Let us note that it is fairly easy to improve \eqref{eq:VviVext} by expanding the interaction in a more precise way and controlling some fluctuations, but it would add technicalities while not making a big difference in our final statement.)

\subsubsection*{Summary}
\textbf{1. Control up to the edge.}
 Combining Lemma \ref{lem:nablaRiwi} and Lemma \ref{lem:2nablaRiwi} and integrating between $\omega_i$ and any given $x$ in $\La_i$ we deduce that $|\nabla \RR_i(x)| \leq \Cc \log \rho$ for all $x$ in $\Lai$. Combining this with the decomposition \eqref{decompose:tVi} and Claim \ref{claim:nablaBi} we deduce that
    \begin{equation*}
    \|\nabla \tVi\| \leq \Cc_\beta \log^{2} T \text{ on $\Lai$...}
    \end{equation*}
...and thus after integrating between $\omega_i$ and any given $x$ in $\La_i$ we obtain:
    \begin{equation*}
    \left| \tVi(x) - \tVi(\omega_i) \right| \leq \Cc_\beta T \log^{2} T
    \end{equation*}

\medskip

\textbf{2. Properties of the decomposition.} We have already bounded $|\RR_i|_{\1, \La}$, and we have used the control of $\nu$ at scale $\hT = \log T$ several times in the argument above.

In conclusion, with probability  $1 - \exp\left(- \frac{\hrho^2}{\Cc_\beta} \right)$ the potential $\Vv_i$ satisfies the requirements to be a “good external potential” on $\Lai$. Thus in view of \eqref{eq:VviVext}, since we enforce \eqref{eq:condi_T2logTML}, with probability $1 - \exp\left(- \frac{\hrho^2}{\Cc_\beta} \right)$ the potential $\Vext$ is a “good external potential with constant $\bCc$”, where $\bCc$ is some large enough constant depending only on $\beta$ and $\delta$.

We conclude with a union bound on $\Nn$ such events.

\section{\cor{Background measure and global laws for sub-systems: proof of Proposition \ref{prop:correspondence} and Proposition \ref{prop:global_law_SUBSYS}}}
\label{sec:global_law}
\renewcommand{\rho}{\oldrho}

\cor{Here is the plan of this section:
\begin{itemize}
	\item In Section \ref{sec:effect-perturb-eq}, we give some general results on the background/equilibrium measure associated to harmonic external potentials as the ones appearing in \eqref{def:Vext}. This is strongly inspired by \cite{bauerschmidt2017local}.
	\item In Section \ref{sec:applications_to_generalized}, we specifically use these results in the case of “good” external potentials as defined in Section \ref{sec:good_good} and we show that our generalized 2DOCP's with external potentials are “equivalent” to generalized 2DOCP's with a certain background measure, proving Proposition \ref{prop:correspondence}.
	\item In Section \ref{sec:proofglobal}, we combine our study of the background measure, the properties of good external potentials / sub-systems and explicit computations in order prove the global law for good sub-systems stated in Proposition \ref{prop:global_law_SUBSYS}.
\end{itemize}
}

\subsection{Effect of an harmonic perturbation on the equilibrium measure}
\label{sec:effect-perturb-eq}
In this section, we revisit (parts of) the analysis of \cite[Sec.3]{bauerschmidt2017local}. Broadly speaking, our goal is to understand the effect of the external potential $\Vext$ on the typical repartition of charges within the subsystem. In particular, we want to use the fact that $\Vext$ is \emph{harmonic} on $\La$ and is generated by a “nice” external configuration. We start by recalling some elements of logarithmic potential theory.

\subsubsection{Some potential theory}
\label{sec:potentialTheory}
Let $\VV : \R^2 \to \R \cup \{+ \infty\}$ be a lower semi-continuous function satisfying, for some $\epsilon > 0$, the following growth condition:
\begin{equation*}
\lim_{|z| \to \infty} \left(\VV(z) - (2 + \epsilon) \log |z|\right) = + \infty.
\end{equation*}
\cor{Assume that $\VV < + \infty$ on some open disk.} Denote by $\IV$ the following functional defined on the space $\PP$ of probability measures on $\R^2$:
\begin{equation}
\label{def:IV}
\IV : \mu \mapsto  \iint - \log |x-y| \dd \mu(x) \dd \mu(y) + \int \VV \dd \mu.
\end{equation}
The next proposition\footnote{The statements of Proposition \ref{prop:frostman} only hold “quasi everywhere” (q.e.), which means “up to a set of capacity zero”. This makes no difference for us and, for simplicity, we omit it.
} covers well-known properties of the minimization problem associated to $\IV$.
\begin{proposition}[The equilibrium measure]
\label{prop:frostman} 
There exists a unique minimizer $\muV$ of $\IV$, which we call the “equilibrium measure” associated to $\VV$. Its support, denoted by $\SV$, is compact.

The equilibrium measure $\muV$ is characterized by the following fact: there exists a constant $\cV$ (depending on $\VV$) such that the logarithmic potential generated by $\muV$, namely $\hmuV(x) := \int - \log |x-y| \dd \muV(y)$ satisfies Euler-Lagrange equations of the form:
	\begin{equation}
	\label{eq:EulerLagrange}
\begin{cases}
\hmuV + \hal \VV = \cV & \text{ on } \SV \\
\hmuV + \hal \VV \geq \cV & \text{ on } \R^2 \setminus \SV.
\end{cases}
	\end{equation}

Moreover, the equilibrium measure is connected to the solution of an \emph{obstacle problem} in the following sense: if we define $\uV : \R^2 \to \R \cup \{+ \infty\}$ by setting: 
\begin{equation}
\label{def:uV}
\uV(z) := \sup \left\lbrace v(z), v \text{ is subharmonic},\ v \leq \frac{\VV}{2} \text{ on } \R^2,\ \limsup_{|z| \to \infty} \left( v(z) - \log |z| \right) < \infty  \right\rbrace,
\end{equation}
and let the “coincidence set” $\SSV$ be $\SSV := \left\lbrace z \in \R^2, \uV(z) = \hal \VV(z) \right\rbrace$, then $\SV \subset \SSV$ (in words: the support of the equilibrium measure is contained in the coincidence set of the obstacle problem) and we have:
\begin{equation}
\label{uVhmuV}
\uV(z) = \cV - \hmuV(z) \text{ on $\R^2$}.
\end{equation}
In particular the density of $\muV$ is given equivalently by $\frac{1}{2\pi} \Delta \uV$ or $\frac{1}{4 \pi} \Delta \VV$ on $\SV$.
\end{proposition}
\begin{proof}[Proof of Proposition \ref{prop:frostman}]
The first part is a classical result by Frostman \cite{frostman1935potentiel}. The connection with the obstacle problem has been investigated in \cite{hedenmalm2013coulomb} and an exposition can be found in \cite[Sec.2]{serfaty2015coulomb}. See also \cite[Sec. 2.1, 2.2]{bauerschmidt2017local} and the references therein.
\end{proof}

\subsubsection{Harmonic perturbations}
\renewcommand{\VV}{\mathrm{V}^\mathrm{ref}}
\label{sec:HarmonicPerturbations}
We now study how the equilibrium measure reacts to certain perturbations of the external potential. Here we closely follow the exposition of \cite[Sec. 3.3]{bauerschmidt2017local} while making several changes. \cor{We first choose explicitly our “reference” situation - for us, a quadratic potential with strict confinement, giving rise to a uniform equilibrium measure - then we describe the class of perturbations that we allow, and which is strongly inspired by the discussion in \cite{bauerschmidt2017local}, althoug slightly different. The main statements are Proposition \ref{prop:supportmuW}, Proposition \ref{prop:singcomp}: there we give a precise description of the equilibrium measure associated to an external potential of the form “reference potential $+$ admissible perturbation”. The proof of those propositions is again strongly inspired by similar results in \cite{bauerschmidt2017local}, but we need to make several non-trivial modifications - this will then occupy the next sub-sub-sections.}

\medskip

\textbf{A. The reference measure \cor{and reference potential}.}
Let $\rho \geq 1$ be some fixed radius, we work on the disk $\rho \DD := \DD(0, \rho)$ (for convenience we adopt here the notation of \cite{bauerschmidt2017local}). Let us assume that the reference potential $\VV$ is given by:
\begin{equation}
\label{def:VV}
\VV(z) = \begin{cases}
\frac{|z|^2}{\rho^2} & \text{ for } |z| \leq \rho, \\
+ \infty & \text{ for } |z| > \rho.
\end{cases}
\end{equation}
It satisfies the assumptions of Section \ref{sec:potentialTheory}, and it is easy to check that the associated equilibrium measure $\muV$ is the uniform measure on the disk $\rho \DD$ - in other words, we have $\mu_V = \frac{1}{|\rho \DD|} \mm_0$ on $\rho \DD$. 

The density of $\mu_V$ is equal $\frac{1}{\pi \rho^2}$ on its support $\rho \DD$, in particular the assumption that “$\frac{1}{4 \pi} \Delta \VV \geq \alpha$ in $\rho \DD$” made in \cite[Sec. 3.3]{bauerschmidt2017local} is satisfied here but with $\alpha$ proportional to $\rho^{-2}$.

\medskip

\textbf{B. Class of perturbations.}
We work with the same class of potentials $\WW$ as in \cite{bauerschmidt2017local}, namely we take $\WW$ as:
\begin{equation}
\label{def:W}
\WW := \tau \VV + \rho^{-2} \left(\hnu +  \RR \right), \text{ where:}
\end{equation}
\begin{itemize} 
    \item $\VV$ is the reference potential of \eqref{def:VV} and $\tau$ is a real parameter. 

    For us $\tau$ will always be close to $1$, and for convenience we can assume that $|\tau-1| \leq \frac{1}{10}$.
    \item  $\hnu$ is the logarithmic potential generated by a \emph{positive}, finite distribution $\nu$ of charges located \emph{outside} of the disk $\rho \DD$.

    For convenience (and since that is enough for us), we will assume that $\nu$ is supported on an annulus $\DD(0, \rho + \rho') \setminus \DD(0, \rho)$ for some $\rho' \leq \frac{\rho}{100}$.

    In \cite[Prop. 3.3, 3.4]{bauerschmidt2017local} the controls are given in terms of the \emph{total mass} $\|\nu\|$ of $\nu$, but we want more precise estimates and to that aim we introduce the following quantity: 
    \begin{equation}
    \label{def:nuloc}
        \nuloc(\rho') := \sup_{x \in \partial \rho \DD} \nu(\DD(x, \rho')),
        \end{equation} 
    which controls the mass of $\nu$ in a \emph{local} fashion, at scale $\rho'$ along $\partial \rho \DD$.

\item $\RR$ is some harmonic function on $\rho \DD$ which is continuous up to the boundary. \\
An important role is played by the \emph{normal derivative of $\RR$ on $\partial \rho \DD$}. We introduce the notation:
\begin{equation}
\label{def:jumpR}
\partial_n^{-} \RR(z) := \lim_{\epsilon \to 0} \frac{\RR(z) - \RR(z-\epsilon \vec{n})}{\epsilon}, \quad \|\partial_n^{-} \RR\|_{\infty, \partial \rho \DD} := 
\sup_{z \in \partial \rho \DD} |\partial_n^{-} \RR(z)|,
\end{equation}
where for $z \in \partial \rho \DD$ we denote by $\vec{n}$ the outer unit normal vector (to the circle) at $z$. 
\end{itemize}

Our scaling $\rho^{-2}$ in front of the perturbation terms in \eqref{def:W} is not present in \cite[(3.7)]{bauerschmidt2017local}, but it is of course equivalent to taking the mass of $\nu$ and the potential $\RR$ to be of order $\rho^{-2}$ in their statements. This scaling will compensate the fact that for us the quantity $\alpha$ of \cite{bauerschmidt2017local} is of order $\rho^{-2}$ (see above).

\cor{Before moving on, let us give a quick heuristic explanation for the class of pertubations chosen in \eqref{def:W}. The presence of the parameter $\tau \approx 1$ in front of $\VV$ corresponds to the fact that the expected number of particles in $\rho \DD$ will be close to, but not always exactly equal to $\pi \rho^2$. The distribution $\nu$ corresponds to a system of point charges living outside $\rho \DD$, while the harmonic function $\RR$ will take into account (possibly among other things) the background measure of a system living outside $\rho \DD$ - cf. \eqref{def:Vext}.}

\medskip

\textbf{C. Effect of the perturbation.} Passing from $\VV$ to $\WW$ as in \eqref{def:W} has three effects on the equilibrium measure:
\begin{enumerate}
	\item The support loses some parts located near the circle $\partial \rho \DD$.
	\item A singular measure appears on the circle $\partial \rho \DD$.
    \item The continuous density in the new support changes from $\Delta \VV$ to $\tau \Delta \VV$.
\end{enumerate}
In other words, the \cor{corresponding equilibrium} measure $\muW$ can be written as: 
\begin{equation*}
\muW := \tau \muV + \left(\eta - \tau \muV \ind_{B} \right),
\end{equation*}
with $B$ a subset of $\rho \DD$ located near $\partial \rho \DD$ and $\eta$ a measure that is absolutely continuous with respect to the arclength measure $\dd s$ on $\partial \rho \DD$.

Of course the combination of all effects must have total mass $0$ so that the resulting equilibrium measure $\muW$ still be a probability measure. 

\medskip

The following propositions give quantitative controls on the location of $B$ (first item above) and the density of $\eta$ (second item). The techniques are heavily borrowed from \cite[Prop. 3.3 \& 3.4]{bauerschmidt2017local}, with some simplications due to our specific context, and additional precisions due to our use of the quantity $\nuloc$ as in \eqref{def:nuloc}. \cor{We recall that $\rho'$ is a certain lengthscale smaller $\frac{\rho}{100}$ such that $\nu$ is supported on an annulus $\DD(0, \rho + \rho') \setminus \DD(0, \rho)$, and that we used it in the definition of $\nuloc$.}

\begin{proposition}[The support of $\muW$ contains a large sub-disk]
\label{prop:supportmuW}
The support of $\muW$ contains the set of points $z \in \rho \DD$ at distance at least $\kappa$ from $\partial \rho \DD$, with $\kappa$ \cor{given by}: 
\begin{equation}
\label{condi:kappa}
\kappa = \Cc \max\left(20 \rho',\ \frac{\nuloc(\rho')}{\rho'},\ |1-\tau| \rho + \|\partial^-_n \RR\|_{\infty, \partial \rho \DD}\right).
\end{equation} 
\end{proposition}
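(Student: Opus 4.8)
We only sketch the argument, which follows \cite[Prop.~3.3]{bauerschmidt2017local} up to the modifications announced above, notably the replacement of the total mass $\|\nu\|$ by the localized quantity $\nuloc(\rho')$ of \eqref{def:nuloc}. Write $u_\WW$ for the obstacle function of \eqref{def:uV} attached to $\WW$ and $S_\WW := \supp\muW$, so that $u_\WW \leq \hal\WW$ on $\R^2$, $u_\WW = \hal\WW$ on $S_\WW$, $u_\WW$ is subharmonic, and $u_\WW(z) = \log|z| + O(1)$ as $|z|\to\infty$. Recall moreover that $u_\WW$ is the \emph{largest} subharmonic function below $\hal\WW$ with at most logarithmic growth.

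\emph{Reduction to a statement about the contact set.} Inside $\rho\DD$ both $\hnu$ and $R$ are harmonic (the charges of $\nu$ sit outside $\rho\DD$, and $R$ is harmonic there by assumption), hence $\Delta\WW = \tau\,\Delta\VV = 4\tau\rho^{-2} > 0$ on $\rho\DD$ by \eqref{def:W} and \eqref{def:VV}. Consequently, on the interior of the contact set $\{u_\WW = \hal\WW\}\cap\rho\DD$ the equilibrium measure has density $\tfrac{1}{4\pi}\Delta\WW = \tfrac{\tau}{\pi\rho^2} > 0$, so that interior lies in $S_\WW$. It therefore suffices to prove that $u_\WW = \hal\WW$ on the sub-disk $\DD(0,\rho-\kappa)$ as soon as $\kappa$ obeys \eqref{condi:kappa}.

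\emph{No interior hole, then a barrier.} First, the non-contact set $\{u_\WW < \hal\WW\}$ has no connected component $\Omega$ with $\overline\Omega\subseteq\rho\DD$: on such an $\Omega$ the function $u_\WW$ is harmonic, so $g := \hal\WW - u_\WW$ satisfies $\Delta g = 2\tau\rho^{-2}>0$, i.e.\ $g$ is subharmonic, while $g\geq 0$ on $\overline\Omega$ and $g\equiv 0$ on $\partial\Omega$; the maximum principle forces $g\equiv 0$, a contradiction. Hence any component of the non-contact set meeting $\rho\DD$ reaches $\partial\rho\DD$, and it remains only to bound the depth to which it can penetrate. Following \cite{bauerschmidt2017local}, one constructs an explicit subharmonic competitor $v$ on $\R^2$ with $v\leq\hal\WW$ everywhere and $v(z)=\log|z|+O(1)$ at infinity which coincides with $\hal\WW$ on $\DD(0,\rho-\kappa)$; since $u_\WW$ dominates every such function (cf.\ \eqref{def:uV}) this gives $u_\WW\geq v$, whence $u_\WW=\hal\WW$ on $\DD(0,\rho-\kappa)$. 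On $\rho\DD$ one has $\hal\WW = \tau\,\uV + \tfrac{1}{2\rho^2}(\hnu+R)$, using $\uV=\hal\VV$ on the support $\rho\DD$ of $\muV$; the competitor is built by keeping $\tau\,\uV$ (globally subharmonic, as $\tau>0$), replacing $\hnu$ and $R$ outside $\rho\DD$ by bounded harmonic extensions with matching boundary values on $\partial\rho\DD$, and adding a non-positive harmonic correction near the circle. The width $\kappa$ of the collar that must be sacrificed is then dictated by four requirements: matching the logarithmic growth, which reconciles the prefactor $\tau$ of $\tau\,\uV$ (and the behavior of $\hnu$ at infinity) with the mass-one normalization of $\muW$, at cost $O(|1-\tau|\rho)$; maintaining subharmonicity of $v$ across $\partial\rho\DD$ against the jump of the normal derivative of $R$, at cost $O(\|\partial_n^{-}R\|_{\infty,\partial\rho\DD})$; controlling the short-range variation of $\hnu$ near $\supp\nu$, at cost $O(\nuloc(\rho')/\rho')$; and not attempting to resolve anything below the scale $\rho'$ on which $\nu$ is spread, whence $\kappa\gtrsim\rho'$. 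These are precisely the four terms of \eqref{condi:kappa}.

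\emph{Main obstacle.} The delicate part is the analysis on and around $\partial\rho\DD$: the perturbation genuinely produces the singular component $\eta$ of $\muW$ on the circle, and $\hnu$ oscillates on the short scale $\rho'$, so a bound phrased only through $\|\nu\|$ is too crude. It is exactly the use of the localized mass $\nuloc(\rho')$ of \eqref{def:nuloc} — together with the positivity of $\Delta\WW$ inside $\rho\DD$ exploited in the first two steps — that yields the sharp penetration depth \eqref{condi:kappa}, at the unavoidable price $\kappa\gtrsim\rho'$.
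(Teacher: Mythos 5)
Your high-level reading of the structure is right: one reduces to showing that $D := \{z \in \rho\DD,\ \dist(z, \partial\rho\DD)\geq\kappa\}$ is in the contact set (since $\WW - \VV$ is harmonic on $\rho\DD$ and hence the density of $\muW$ on the contact set is $\tau/(\pi\rho^2) > 0$), and then one exhibits a subharmonic competitor à la Bauerschmidt--Bourgade--Nikula--Yau. The "no interior hole" observation via the maximum principle is correct, but it is superfluous here: the paper builds, for \emph{each} $z_0\in D$, a barrier $v_{z_0}$ with $v_{z_0}(z_0) = \hal\WW(z_0)$, which already forces $u_\WW = \hal\WW$ pointwise on $D$. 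Your sketch instead speaks of a \emph{single} competitor coinciding with $\hal\WW$ on all of $\DD(0,\rho-\kappa)$, which is a stronger (and not standard) demand; the pointwise touching construction is the one that is actually carried out.

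The real gap is that the proof stops at precisely the place where the paper departs from \cite[Prop.~3.3]{bauerschmidt2017local}. You note that $\|\nu\|$ is replaced by $\nuloc(\rho')$ and call this "the delicate part", but give no mechanism. The paper's mechanism is concrete: in the explicit construction of Lemma~\ref{lem:explicitcons}, the spread-out radius $r$ is chosen \emph{depending on the charge location} $w$, as $r(w) = \tfrac{1}{10}|z_0 - w|$. With this choice, the term $\int \frac{1}{r(w)^2}\ind_{\DD(\tilde z(w), r(w))}\,\nu(\dd w)$ in $\Delta L$ becomes $\preceq \int \frac{\nu(\dd w)}{|z_0-w|^2}$, and one then covers $\supp\nu$ by $O(\rho/\rho')$ squares of sidelength $\rho'$, uses $\nuloc(\rho')$ on each, and compares to a line integral to get the bound $\preceq \nuloc(\rho')/(\kappa\,\rho')$, hence subharmonicity as soon as $\kappa\gtrsim\nuloc(\rho')/\rho'$. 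Writing "controlling the short-range variation of $\hnu$" would apply just as well to the original argument that only yields $\|\nu\|$; without the $w$-dependent $r(w)$ the localized bound simply does not appear. Similarly, the growth-at-infinity condition $\|\nu\|\geq\rho^2(1-\tau)+\rho\gamma$ is handled in the paper by (if necessary) artificially adding a uniform layer of charge around $\partial\rho\DD$ and observing this only shifts $\nuloc(1)$, not $\|\partial_n^- R\|$; this step is absent from your outline and is needed to make the competitor admissible.

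In short: same route, correct reduction, one extra harmless observation, but the two places where the quantitative claim \eqref{condi:kappa} is actually earned (the $w$-dependent truncation in $L$, and the trick ensuring the logarithmic growth condition can always be met) are not present.
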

 
\begin{proposition}[The singular component has a controlled density]
\label{prop:singcomp}
The Radon-Nikodym derivative of $\eta$ with respect to the \emph{arclength} measure on $\partial \rho \DD$ is bounded by:
\begin{equation}
\label{bound:eta_s}
\| \frac{\dd \eta}{\dd s} \|_\infty \leq \Cc \left( \frac{1}{\rho} \|\eta\| + \rho^{-2} \frac{\nuloc(\rho')}{\rho'} + \rho^{-2} \|\partial^-_n \RR\|_{\infty, \partial \rho \DD} + |1-\tau| \rho^{-1} \right),
\end{equation}
\cor{where $\|\eta\|$ is the mass of $\eta$.}
\end{proposition}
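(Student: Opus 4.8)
\textbf{Proof plan for Proposition \ref{prop:singcomp}.}

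The plan is to follow the strategy of \cite[Prop.~3.4]{bauerschmidt2017local}, adapting the computation to the local mass control $\nuloc(\rho')$ in place of the total mass $\|\nu\|$, and to our scaling normalization. First I would set up the obstacle-problem description from Proposition \ref{prop:frostman}: writing $u_\WW$ for the solution of the obstacle problem with obstacle $\frac{\WW}{2}$, the coincidence set $\Ss^*_\WW$ contains $\SW$, and $\muW = \frac{1}{2\pi}\Delta u_\WW$. The singular part $\eta$ of $\muW$ is carried by $\partial(\rho\DD)$ and arises from the jump of $\nabla u_\WW$ across the circle: by the jump formula for the Laplacian of a function that is harmonic on either side of a smooth curve (applied to $u_\WW$, whose restriction to $\rho\DD$ is smooth near $\partial\rho\DD$ away from the hole region, and whose restriction to the exterior is $\frac{\WW}{2}$ up to a harmonic correction), one gets
\begin{equation}
\frac{\dd\eta}{\dd s}(z) = \frac{1}{2\pi}\left( \partial_n^{+} u_\WW(z) - \partial_n^{-} u_\WW(z) \right),
\end{equation}
for $z\in\partial\rho\DD$, where $\partial_n^{\pm}$ are the one-sided normal derivatives. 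So the task reduces to bounding the normal derivative jump of $u_\WW$ at the boundary circle.

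Next I would estimate each contribution to that jump. Outside $\rho\DD$ we have $u_\WW = \frac{\WW}{2} = \frac{\tau}{2}\VV + \frac{1}{2}\rho^{-2}(\hnu + R)$ (up to the harmonic extension of $u_\WW - \frac{\WW}{2}$ from inside, which is zero on the exterior side by the obstacle problem outside the coincidence set), so the exterior normal derivative decomposes into: the $\VV$-part, contributing $O(\tau/\rho)$ which after subtracting the reference jump gives the $|1-\tau|\rho^{-1}$ term; the $R$-part, contributing $\rho^{-2}\|\partial_n^- R\|_{\infty,\partial\rho\DD}$; and the $\hnu$-part, where $\hnu = -\log\ast\nu$ with $\nu$ supported on the thin outer annulus $\DD(0,\rho+\rho')\setminus\DD(0,\rho)$. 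For this last term I would use that the normal derivative of $-\log\ast\nu$ at a boundary point $z$ is controlled by $\int \frac{\dd\nu(y)}{|z-y|}$, and splitting the integration annulus into $O(1)$ arcs of angular size $\sim \rho'/\rho$ and summing the contributions of $\nu(\DD(x_j,\rho'))\le \nuloc(\rho')$ weighted by $1/\dist \sim (\text{dyadic in }\rho')^{-1}$ gives a bound of order $\rho^{-2}\nuloc(\rho')/\rho'$ (the factor $\rho^{-2}$ being the scaling in \eqref{def:W}, and the $\log$ from the harmonic sum being absorbed by a constant since only a bounded range of dyadic scales down to $\rho'$ contributes near $z$ — or, if one is careful, one keeps only the nearest piece which dominates). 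Finally, the interior normal derivative $\partial_n^- u_\WW$ contributes the term $\frac{1}{\rho}\|\eta\|$: since $u_\WW$ is, inside $\rho\DD$ near the boundary, essentially $c_\WW$ minus the logarithmic potential of $\muW$ (a probability measure of bounded total mass sitting inside $\rho\DD$ plus the singular part $\eta$ on the circle), its gradient near $\partial\rho\DD$ is bounded by $O(1/\rho) + O(\|\eta\|/\rho)$, using that the self-interaction normal derivative of a measure on a circle of radius $\rho$ is of order $\|\eta\|/\rho$.

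Collecting these four estimates and using $|1-\tau|\le\frac{1}{10}$ yields \eqref{bound:eta_s}. The main obstacle, as in \cite{bauerschmidt2017local}, is the careful bookkeeping near $\partial\rho\DD$: one must justify that $u_\WW$ is genuinely smooth on each side up to the boundary circle outside the hole region $B$ (so that the one-sided normal derivatives exist and the jump formula applies), and one must control the $\hnu$ contribution sharply in terms of $\nuloc(\rho')$ rather than $\|\nu\|$ — this is precisely the improvement over \cite{bauerschmidt2017local} and requires the dyadic decomposition of the boundary annulus together with the observation that, because $\nu$ lives within distance $\rho' \le \rho/100$ of the circle, the relevant length scales in the harmonic sum range only down to $\rho'$, so no extra $\log\rho$ factor appears. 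I would also note that one should track the interplay with Proposition \ref{prop:supportmuW}: the hole $B$ lies within distance $\kappa$ of $\partial\rho\DD$, and on the portion of $\partial\rho\DD$ adjacent to $B$ the analysis is slightly different (there $u_\WW$ detaches from the obstacle on the inside too), but since $\eta$ is supported only where the coincidence set touches the circle, these subtleties only affect the support of $\eta$, not the density bound, which is uniform.
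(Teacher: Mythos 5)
Your overall strategy matches the paper's: you track down to the same step in the argument of \cite[Prop.~3.4]{bauerschmidt2017local} (the integral involving $\nu$) and you correctly identify that the improvement consists of replacing the crude bound by $\hal\|\nu\|$ with one that uses the local mass $\nuloc(\rho')$, with the $\rho^{-2}$ prefactor coming from the scaling in \eqref{def:W}. The conceptual picture (jump formula for $\Delta u_\WW$ across the circle, decomposition of the exterior normal derivative into $\VV$, $R$, $\hnu$ pieces, contribution of $\eta$'s self-interaction to the interior side) is also consistent with the paper's implicit framework inherited from \cite{bauerschmidt2017local}.

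However, there is a genuine gap in your treatment of the $\hnu$-contribution. You propose to control it by $\int \frac{\dd\nu(y)}{|z-y|}$, but this kernel is much too crude for two reasons. First, $\nu$ is allowed to have mass arbitrarily close to the boundary point $z$ (the constraint $\nuloc(\rho')$ only controls mass at scale $\rho'$, not the distance to $\partial\rho\DD$), so $\int\frac{\dd\nu(y)}{|z-y|}$ may actually diverge. Second, even if one regularizes, a dyadic decomposition with the kernel $\frac{1}{|z-y|}$ does produce a full harmonic sum over the scales $\rho', 2\rho',\dots,\rho$ and hence a factor $\log(\rho/\rho')$; your two proposed fixes (absorbing the log into a constant because ``only a bounded range of scales contributes,'' or keeping only the nearest piece) are both mathematically incorrect — all scales contribute comparably and the nearest piece does not dominate a harmonic sum. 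The resolution, which is the actual content of both the paper's proof and of \cite{bauerschmidt2017local}, is that one must keep the \emph{signed} normal projection $\frac{z-w}{|z-w|^2}\cdot\vec{n}$ rather than its modulus $\frac{1}{|z-w|}$: an elementary computation shows that for $z\in\partial\rho\DD$ and $|w|\geq\rho$ one has $\frac{z-w}{|z-w|^2}\cdot\vec{n} \leq \frac{1}{2\rho}$, with no singularity as $w\to z$. This uniform bound handles the near part ($|w-z|\leq 10\rho'$) without any blow-up, and for the far part one compares $\nu$ to $\frac{\nuloc(\rho')}{\rho'}$ times arclength on the circle and uses scale invariance of the one-dimensional integral of $\max(0,\frac{z-w}{|z-w|^2}\cdot\vec{n})$ — giving $\O\left(\frac{\nuloc(\rho')}{\rho'}\right)$ with no log. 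Without invoking the cancellation built into the normal projection, the bound \eqref{bound:eta_s} cannot be reached.
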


\cor{Observe that $\kappa$ and $\| \frac{\dd \eta}{\dd s} \|_\infty$ become smaller when the perturbation is smaller i.e. $\tau$ is close to $1$, $\nuloc$ is small, $\RR$ is small...}

The next two subsections are devoted to the proofs of Propositions \ref{prop:supportmuW} and \ref{prop:singcomp}. We follow the proofs of the corresponding results in \cite{bauerschmidt2017local} while emphasizing the required modifications.

\subsubsection{Proof of Proposition \ref{prop:supportmuW}}
\textbf{\cor{A. Importing an explicit construction.}}
\cor{We want to prove that the support of $\muW$ contains a certain disk $D := \{z \in \rho \DD, \dist(z, \partial \rho \DD) > \kappa \}$. To do so, one can rely on the connection with the “obstacle problem” recalled in Proposition \ref{prop:frostman} - in view of \eqref{uVhmuV}, which here reads:
\begin{equation*}
u_\WW(z) = c_\WW - \mathfrak{h}^{\muW}(z) \text{ on $\R^2$}, 
\end{equation*}
and applying the Laplacian on both sides, we see that in order to prove that $D$ belongs to $\supp \muW$ i.e. that $\muW$ has a positive density on $D$, it is enough to show that $\Delta u_\WW \neq 0$ on $D$. The strategy is then to compute $u_\WW$ explicitly (at least on $D$) and to show that $u_\WW = \hal \WW$ on $D$, which concludes because $\Delta \WW = \Delta \VV > 0$ on $D$.}

\cor{The computation of $u_\WW$ is based on its pointwise characterization \eqref{def:uV} as a supremum of values of a certain class of sub-harmonic functions. The key ingredient of the proof is thus the explicit construction of certain sub-harmonic functions.}   

To quote from \cite[Proof of Prop 3.3]{bauerschmidt2017local}: “let $D := \{ z \in \rho \DD, \dist(z, \partial \rho \DD) \geq \kappa \}$, we show that $D$ [is included in the support of $\muW$] by exhibiting for every $z_0 \in D$ a function $v = v_{z_0}$ that satisfies $v(z_0) = \hal \WW(z_0)$ and:
\begin{equation}
\label{goalvz}
\begin{cases}
\text{$v$ is sub-harmonic on $\R^2$}, \\
\text{$v \leq \hal \WW$ on $\R^2$}, \\
\lim_{|z| \to \infty} \left( v(z) - \log|z| \right) < \infty.
\end{cases}
\end{equation}
Thus we have $u_\WW = \hal \WW$ in $D$ (cf. \eqref{def:uV}) and since the perturbation $\WW - \VV$ is harmonic on $\rho \DD \supset D$ we have\footnote{This follows from the last statement in Proposition \ref{prop:frostman}.} (as densities) $\muW = \muV > 0$ on $D$ and thus $D$ is contained in the support of $\muW$.” 

It remains to construct such a function $v$. \cor{The following lemma from \cite{bauerschmidt2017local} provides the crucial tool for it}. For any $r > 0$, let $l_r$ be the logarithmic potential generated by the uniform probability measure spread on $\DD(0, r)$.
\begin{lemma}[An explicit construction]
\label{lem:explicitcons}
\label{lem:sigmalr}
Let $z_0, w$ be two points in $\R^2$ and let $r$ such that $r \leq \hal |z_0 - w|$. Then there exists $\tilde{z} \in \R^2$ and $k \in \R$ (both depending on $z_0$ and $w$) such that:
\begin{equation}
\label{eq:sigmalr} \hal \left(l_r(z_0 - \tilde{z}) + k \right) = - \hal \log |z_0 - w|, \quad \hal \left(l_r(z - \tilde{z}) + k\right) \leq - \hal \log |z - w| \text{ for } z \in \R^2.
\end{equation}
Moreover the point $\tilde{z}$ can be found on the line segment between $z_0$ and $w$ with:
\begin{equation}
\label{disttzz0}
|\tilde{z} - z_0| = \frac{r^2}{|z_0 - w|}.
\end{equation}
\end{lemma}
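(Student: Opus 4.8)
The plan is to read the conclusion \eqref{eq:sigmalr} as a statement about logarithmic potentials. Given $z_0,w$ I would set
\[
\psi(z) := -\log|z-w| - l_r(z-\tilde z), \qquad k := -\log|z_0-w| - l_r(z_0-\tilde z),
\]
so that the first identity in \eqref{eq:sigmalr} holds by the very definition of $k$, and the inequality in \eqref{eq:sigmalr} is equivalent to: \emph{$z_0$ is a global minimizer of $\psi$ on $\R^2$}. By Newton's theorem, $l_r(x)=-\log|x|$ for $|x|\geq r$ and $l_r(x)=-\log r+\hal\bigl(1-|x|^2/r^2\bigr)$ for $|x|\leq r$; in particular $l_r$ is $C^1$ on $\R^2$ and smooth on $\DD(0,r)$ and on its exterior, which is all the regularity I need. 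The function $\psi$ is (up to $2\pi$) the potential of the signed mass‑$0$ measure $\delta_w-\sigma_r$, where $\sigma_r$ is the uniform probability measure on $\DD(\tilde z,r)$; hence $\psi$ is continuous on $\R^2$, harmonic off $\{w\}\cup\overline{\DD(\tilde z,r)}$, strictly subharmonic on $\DD(\tilde z,r)$, tends to $+\infty$ at $w$, and satisfies $\psi(z)=\log\frac{|z-\tilde z|}{|z-w|}\to 0$ as $|z|\to\infty$, so $\psi$ does attain a global minimum.

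Next I would locate the interior critical points. Placing $\tilde z$ at the origin and $w=a\hat e$ with $a=|w|\ge 2r$, the interior formula for $l_r$ gives $\nabla\psi(z)=0$ with $|z|<r$ only if $z$ lies on the line $\R\hat e$, and writing $z=t\hat e$ reduces the equation to the quadratic $t^2-at-r^2=0$. Exactly one of its roots, $t_-=\tfrac12\bigl(a-\sqrt{a^2+4r^2}\bigr)$, lies in $(-r,0)$; the other root exceeds $r$ precisely because $a\ge 2r$ — this is the one and only place where the hypothesis $r\le\hal|z_0-w|$ is used. A short computation shows $|t_-|\,(a+|t_-|)=r^2$, i.e. the critical point $z^\star=t_-\hat e$ satisfies $|z^\star-\tilde z|=r^2/|z^\star-w|$ with $\tilde z$ between $z^\star$ and $w$. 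Inverting this: for given $z_0,w$ define $\tilde z:=z_0+\frac{r^2}{|z_0-w|}\,\frac{w-z_0}{|w-z_0|}$, which gives $|\tilde z-z_0|=r^2/|z_0-w|\le r/2<r$, places $\tilde z$ on the segment $[z_0,w]$, and makes $z_0$ the unique critical point of $\psi$ in $\DD(\tilde z,r)$; a one–line second‑derivative check along $\hat e$ and transversally confirms it is a strict local minimum.

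To upgrade local to global minimality I would split $\R^2=\overline{\DD(\tilde z,r)}\cup\bigl(\R^2\setminus\DD(\tilde z,r)\bigr)$. On the exterior region $\psi$ is harmonic, including at $\infty$ where it equals $0$, except at $w$ where it is $+\infty$; the minimum principle then gives $\inf_{\R^2\setminus\DD(\tilde z,r)}\psi=\min_{\partial\DD(\tilde z,r)}\psi$. On $\partial\DD(\tilde z,r)$ one has $l_r\equiv-\log r$, so there $\psi=-\log|z-w|+\log r$ is minimized at the point of the circle farthest from $w$, i.e. where $|z-w|=|\tilde z-w|+r$. On $\overline{\DD(\tilde z,r)}$, $\psi$ is continuous and its only interior critical point is $z_0$, so $\min_{\overline{\DD(\tilde z,r)}}\psi$ is attained at $z_0$ or on $\partial\DD(\tilde z,r)$. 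Thus everything comes down to the single inequality $\psi(z_0)\le\min_{\partial\DD(\tilde z,r)}\psi$.

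For that last inequality I would substitute the explicit values. With $D:=|z_0-w|$ one has $|z_0-\tilde z|=r^2/D$, $|\tilde z-w|=D-r^2/D$, and after cancelling common terms the claim becomes $-\hal+\frac{r^2}{2D^2}\le\log\frac{D^2}{D^2+rD-r^2}$; writing $u:=r/D\in(0,\hal]$ this is $\log(1+u-u^2)\le\tfrac{1-u^2}{2}$, which follows from $\log(1+x)\le x$ applied to $x=u-u^2$ together with $u-u^2\le\tfrac{1-u^2}{2}\iff(1-u)^2\ge 0$. This proves $\psi(z)\ge\psi(z_0)$ for all $z$, hence $l_r(z-\tilde z)+k\le-\log|z-w|$ with equality at $z_0$; halving gives \eqref{eq:sigmalr}, while $|\tilde z-z_0|=r^2/|z_0-w|$ and $\tilde z\in[z_0,w]$ are built into the choice of $\tilde z$. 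The only genuinely non‑bookkeeping point is the passage to a \emph{global} minimum: one has to exclude competing interior critical points (using $r\le\hal|z_0-w|$) and carry out the boundary comparison, which is where the elementary inequality $(1-u)^2\ge 0$ does the decisive work.
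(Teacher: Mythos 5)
Your proof is correct, and it takes a genuinely different route from the paper's. The paper handles Lemma~\ref{lem:explicitcons} by direct citation to \cite[Lemma~3.6]{bauerschmidt2017local}, merely recording the cosmetic differences in hypotheses (the range of $r$, the fixed value $\sigma=\hal$, and the explicit location of $\tilde z$ stated in \eqref{disttzz0}). You instead give a self-contained potential-theoretic argument: you recast the target inequality as the statement that $z_0$ globally minimizes the superharmonic function $\psi := -\log|\cdot-w| - l_r(\cdot-\tilde z)$, use Newton's theorem to compute $l_r$ explicitly inside the disk, locate the unique interior critical point of $\psi$ via a quadratic (which simultaneously fixes $\tilde z$ and yields \eqref{disttzz0}), split $\R^2$ into $\overline{\DD(\tilde z,r)}$ and its complement and use the minimum principle on the exterior, and finally reduce everything to the one-variable inequality $\log(1+u-u^2)\le\frac{1-u^2}{2}$, which follows from $\log(1+x)\le x$ together with $(1-u)^2\ge 0$. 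What the paper's route buys is brevity; what yours buys is self-containment and a transparent derivation of why $\tilde z$ must sit at distance $r^2/|z_0-w|$ from $z_0$ on the segment towards $w$, which the paper only asserts as "a straightforward consequence" of facts inside the cited proof.

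One small slip is worth fixing. After placing $\tilde z$ at the origin you assert $a=|w-\tilde z|\ge 2r$ and claim this is where the hypothesis $r\le\hal|z_0-w|$ is used. In fact, with $D:=|z_0-w|\ge 2r$ one only gets $a=D-r^2/D\ge \tfrac{3}{2}r$ (with equality at $D=2r$), so $a\ge 2r$ is false. Fortunately the conclusion you want at that step, namely $t_+=\hal\bigl(a+\sqrt{a^2+4r^2}\bigr)>r$, holds for every $a>0$: the inequality $\sqrt{a^2+4r^2}>2r-a$ is trivial when $a\ge 2r$ and, when $a<2r$, squaring reduces it to $4ar>0$. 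So the argument survives unchanged, but the hypothesis $r\le\hal D$ is not actually consumed at the quadratic step. Its real role is geometric: it guarantees $|\tilde z-z_0|=r^2/D\le r/2<r$ (so $z_0$ is an \emph{interior} point of $\DD(\tilde z,r)$, and the interior formula for $l_r$ applies at $z_0$) and $|w-\tilde z|=D-r^2/D\ge\tfrac{3}{2}r>r$ (so $w$ lies strictly outside that disk), after which your final inequality in $u=r/D$ actually holds for all $u\in(0,1)$ and not just $u\le\hal$.
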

\begin{proof}[Proof of Lemma \ref{lem:explicitcons}]
This is \cite[Lemma 3.6]{bauerschmidt2017local} with three minor differences:
\begin{itemize}
	\item First of all, in their statement they assume that $r \leq \hal |z_0 - w|$ \emph{and} that $r \in (0, 1)$ (which is always the case in their setting), whereas we take $r$ arbitrary (still with the condition $r \leq \hal |z_0 - w|$). There is in fact no additional generality in our statement: the case $r \in (0,1)$ extends to the general case by scaling.
	\item Secondly, they allow for any value $\sigma \geq \hal$ of a certain parameter $\sigma$, but we will only need $\sigma = \hal$ (so “$\sigma$” does not appear here). 
	\item   Finally, the estimate \eqref{disttzz0} is not written down in \cite{bauerschmidt2017local}, however it is a straightforward consequence of the fact that $|\tilde{z} - z_0| \leq r \leq  \hal |z_0 - w|$ (which is given by their statement and assumption) and the first equality in \cite[(3.14)]{bauerschmidt2017local} (with $\sigma = \hal$).
\end{itemize}
\end{proof}

\textbf{\cor{B. The proof of Proposition \ref{prop:supportmuW} }}
Next, we follow \cite[Proof of Prop 3.3]{bauerschmidt2017local}, with a slight adaptation, to prove our Proposition \ref{prop:supportmuW}.
\begin{proof}[Proof of Proposition \ref{prop:supportmuW}]
Contrarily to \cite{bauerschmidt2017local} we will not scale everything back to the unit circle. 

We fix $z_0$ in the disk $D := \{ z \in \rho \DD, \dist(z, \partial \rho \DD) \geq \kappa \}$, with $\kappa$ satisfying the three conditions of \eqref{condi:kappa}, and we seek to construct a function $v$ satisfying \eqref{goalvz}. 

\medskip

\textbf{\cor{Step 1. Introducing some objects.}} The function $v$ will be “made” with the following quantities\footnote{In \cite{bauerschmidt2017local} there is an additional free parameter $\sigma$ but we can always take $\sigma = \hal$, thus $\sigma$ does not appear here.}:
\begin{itemize}
	\item Let $\uV$ be the solution to the obstacle problem for the reference potential $\VV$ as in \eqref{def:uV}
	\item Let $\tilde{\RR}$ be the harmonic extension of $\RR$ outside $\rho \DD$ (as in \cite[(3.17)]{bauerschmidt2017local})
	\item Let $z \mapsto G(z) : = \max(0, \log |\rho^{-1} z|)$.
	\item Let $\gamma := 2 \|\partial_n^- \RR\|_{\infty, \partial \rho \DD}$ as defined in \eqref{def:jumpR}.
\end{itemize}
There remains to define one last object, the function $\tilde{z}$, for which we differ slightly from \cite[(3.18)]{bauerschmidt2017local} as we explain next.

\medskip 

\textbf{\cor{Step 2. Definition of $\tilde{z}$, $k$ and $L$.}}
For each $w$ in the support of $\nu$, we apply Lemma \ref{lem:explicitcons} to $z_0, w$ as above and (which is new compared to \cite{bauerschmidt2017local}) we choose the parameter $r$ depending on $w$ as follows:
\begin{equation*}
r = r(w) = \frac{1}{10} |z_0 - w|,
\end{equation*}
which is valid choice as it is obviously smaller than $\hal |z_0 - w|$. We obtain a point\footnote{$\tilde{z}$ depends also on $z_0$ but we will not write down this dependency as we work for any fixed $z_0$.} $\tilde{z} = \tilde{z}(w)$ and a real number $k = k(w)$ such that \eqref{eq:sigmalr} is satisfied. 

We now make the following important observation: we know by Lemma \ref{lem:sigmalr} that $\tilde{z}$ lies on the line segment between $z$ and $w$, at distance $\frac{1}{100} |z_0 - w|$ from $z_0$. Since by assumption we have on the one hand: 
\begin{equation*}
\dist(z_0, \partial \rho \DD) \geq \kappa \geq 20 \rho',
\end{equation*}
and on the other hand $\supp \nu \subset \DD(0, \rho + \rho') \setminus \DD(0, \rho)$ it is easy to check that:
\begin{equation*}
|\tilde{z} - z_0| \leq \frac{1}{100} \left(\dist(z_0, \partial \rho \DD) + \rho'\right) \leq \frac{1}{4} \dist(z_0, \partial \rho \DD),
\end{equation*}
and thus the point $\tilde{z}$ remains in $\rho \DD$, in fact \emph{the entire disk $\DD\left(\tilde{z}, r(w)\right)$ is contained in $\rho \DD$}.

This being done for all $w \in \supp \nu$ we may define a map $L : \R^2 \to \R$ by:
\begin{equation}
\label{eq:defLz}
L(z) :=  \hal \int_{\supp \nu} \left(l_{r(w)}(z - \tilde{z}(w)) + k(w) \right) \nu(\dd w).
\end{equation}

\medskip 

\textbf{\cor{Step 3. Definition of $v$.}} 
Finally, as in \cite[(3.18)]{bauerschmidt2017local} we form the map $v = v_{z_0}$ by setting:
\begin{equation}
\label{eq:defvz0}
v : z \mapsto \tau \uV(z) + \rho^{-2} \left( L(z) + \tilde{\RR}(z) \right) + \rho^{-1} \gamma G(z),
\end{equation}
and we claim that $v$ satisfies \eqref{goalvz}.

\medskip

\textbf{\cor{Step 4. Checking sub-harmonicity.}}
The distributional Laplacian \cor{of $v$} is given by:
\begin{equation}
\label{distrib_laplacian}
\Delta v = \tau \Delta \uV + \rho^{-2} \Delta L  + \rho^{-2} (2 \partial_n^- \RR + \gamma ) \dd s,
\end{equation}
the measure $\dd s$ being the arclength measure on $\partial \rho \DD$. We used the general formula of \cite[(3.10)]{bauerschmidt2017local} \cor{to compute the Laplacian of $\tilde{\RR}$ and $G$: these functions are, by definition, harmonic outside $\partial \rho \DD$, so their Laplacian lives on $\partial \rho \DD$ and is given by the jump of their normal derivatives along $\partial \rho \DD$. For $\tilde{\RR}$ this is simply $2 \partial_n^- \RR$, and for $G$ this is $\rho^{-1}$.} 

In the interior of $\DD$ we can consider only the first two terms in the right-hand side of \eqref{distrib_laplacian}, use the last item of Proposition \ref{prop:frostman} to evaluate $\Delta \uV = \hal \frac{1}{\rho^2}$, and an explicit computation of $\Delta l_r$ - we recall that $l_r$ is the logarithmic potential generated by the uniform probability measure on the disk of radius $r$) - to get:
\begin{equation*}
\Delta v = \frac{\tau}{2 \rho^2} - \rho^{-2} \left( \int_{\supp \nu} \frac{1}{r(w)^2} \ind_{\DD(\tilde{z}(w), r(w))} \ \nu(\dd w) \right)
\end{equation*} 
(The differences with the corresponding identity in \cite{bauerschmidt2017local} is that here $\alpha$ (their notation) is of order $\rho^{-2}$, $\sigma$ (their notation) is always $\hal$, that $L$ here is scaled by $\rho^{-2}$ and most importantly that the distance $r$ is here chosen \emph{depending on $w$} as above.) 

In order to check that $\Delta v \geq 0$ (which is a requirement of \eqref{goalvz}) we thus need to guarantee that: 
\begin{equation}
\label{rw2leq12}
\int_{\supp \nu} \frac{1}{r(w)^2} \ind_{\DD(\tilde{z}(w), r(w))} \ \nu(\dd w) \leq \frac{\tau}{2}, \text{ with $\frac{\tau}{2} \geq \frac{9}{20}$ because $|\tau-1| \leq \frac{1}{10}$ by assumption.}
\end{equation}
Here we stop following the route of \cite{bauerschmidt2017local} and instead recall that by construction $r(w) = \frac{1}{10} |z_0 - w|$, thus we may write (bounding the indicator function by $1$):
\begin{equation*}
\int_{\supp \nu} \frac{1}{r(w)^2} \ind_{\DD(\tilde{z}(w), r(w))} \ \nu(\dd w) \leq 100 \int_{\supp \nu} \frac{1}{|w - z_0|^2} \nu(\dd w).
\end{equation*}
Cover the annulus $\DD(0, \rho + \rho') \setminus \DD(0, \rho)$, which by assumption contains the support of $\nu$, by $\O(\rho/\rho')$ squares of sidelength $\rho'$. The quantity $\nuloc$ allows us to bound the mass of $\nu$ on each such square. Then for $z \in \rho \DD$, if $z$ is at distance at least $20 \rho'$ from the boundary then we can compare the integral $\int \frac{1}{|z-w|^2}\  \nu(\dd w)$ to
\begin{equation*}
 \frac{\nuloc(\rho')}{\rho'} \times \int_{w \in \partial \DD} \frac{1}{|z-w|^2} \dd w,
\end{equation*}
which is $\O\left(\nuloc(\rho') \times \frac{1}{\rho' \dist(z, \partial \DD)}\right)$ and thus smaller than $\Cc \frac{\nuloc(\rho')}{\kappa \rho' }$ if $z$ is at distance at least $\kappa$ from the boundary. In particular, if $\kappa$ is larger than some constant times $\frac{\nuloc(\rho')}{\rho'}$ we do have \eqref{rw2leq12}. 

On the other hand, for the singular part of $\Delta v$ living on the boundary of $\rho \DD$, we need to ensure that $2 \partial_n^- \RR + \gamma \geq 0$, which is true by definition of $\gamma$.

\medskip 

\textbf{\cor{Step 5. Checking the obstacle property.}} We want to check that: 
\begin{equation*}
v(z_0) = \hal \WW(z_0), \quad v(z) \leq \hal \WW(z) \text{ on $\R^2$}
\end{equation*} 
(in fact it is enough to check the last inequality on the disk $\rho \DD$ as $\WW$ is infinite outside). 

This step goes on exactly like in the original proof: we already know that $\uV(z) = \hal \VV(z)$ on $\rho \DD$ (because $\uV$ solves the “reference” obstacle problem), that $\tilde{\RR}(z) = \RR(z)$ on $\rho \DD$ (by definition), and that $G$ vanishes on $\rho \DD$. Thus in view of \eqref{def:W} it suffices to have:
\begin{equation*}
L(z_0) = \hnu(z_0), \quad L(z) \leq \hnu(z) \text{ for $z \neq z_0$}, 
\end{equation*}
which is precisely guaranteed by construction of $L$ using Lemma \ref{lem:explicitcons}, see \eqref{eq:sigmalr} and \eqref{eq:defLz}.

\medskip

\textbf{\cor{Step 6. Checking the growth at infinity}.} \cor{It remains to check the last property required in \eqref{def:uV} or \eqref{goalvz}, namely the behavior of $v - \log$ at infinity.} Here we proceed again a bit differently than in \cite{bauerschmidt2017local}. 

Recall that we have chosen our function as:
\begin{equation*}
\tau \uV(z) + \rho^{-2} \left( L(z) + \tilde{\RR}(z) \right) + \rho^{-1} \gamma G(z).
\end{equation*}
As $|z| \to \infty$ we have $\uV(z) \sim \log |z|$, $L(z) \sim - \|\nu\| \log |z|$, $\tilde{\RR}(z) = \O(1)$ and $G(z) \sim \log z$. Thus the third condition of \eqref{goalvz} is satisfied if we have: 
\begin{equation*}
\|\nu\|  \geq \rho^2 (1-\tau) + \rho \gamma.
\end{equation*} 
We argue that we may always assume that this is the case. 

Indeed, if not, then we may distribute\footnote{This is inspired by a different but similar argument in the original proof of \cite{bauerschmidt2017local}.} positive charges uniformly with density $\gamma + \rho |1-\tau|$ all around the circle $\partial \rho \DD$ on an annulus of width~$1$, and redefine the perturbation $\RR$ accordingly in order for the total potential to remain the same. This operation changes $\nu$: it increases its total mass to a convenient level and increases $\nuloc(1)$ by $\gamma + \rho |1-\tau|$. In view of the “Sub-harmonicity” paragraph above, this is harmless as long as we assume that $\kappa$ larger than some constant times $\gamma + \rho |1-\tau|$ in \eqref{condi:kappa}.

On the other hand, this operation does not affect the quantity  $\|\partial_n^- \RR\|_{\infty, \partial \rho \DD}$ as defined in \eqref{def:jumpR}, because placing a radially symmetric density of charges outside $\rho \DD$ creates a \emph{constant} potential within $\rho \DD$.

Hence without loss of generality we may assume that $\|\nu\| \geq \rho^2 (1-\tau) + \rho \gamma$. 

\medskip
This concludes the proof of Proposition \ref{prop:supportmuW}.
\end{proof}

\subsubsection{Proof of Proposition \ref{prop:singcomp}}
\begin{proof}[Proof of Proposition \ref{prop:singcomp}]
\cor{The goal is now to bound the Radon-Nykodym derivative $\frac{\dd \eta}{\dd s}$, where $\eta$ is the singular component of $\muW$ and $s$ is the arclength measure on $\rho \DD$.} This time the changes are minor compared to \cite[Proof of Prop. 3.4]{bauerschmidt2017local}. 

\cor{We follow their computations until the moment when they argue that $\frac{\dd \eta}{\dd s}$ can be bounded pointwise by\footnote{\cor{In view of \eqref{def:W}, our $\rho^{-2} \RR$ is their $R$ and our $\rho^{-2} \nu$ is their $\nu$.}} (see \cite[(3.25)]{bauerschmidt2017local}):
\begin{equation}
\label{aubord}
\frac{\dd \eta}{\dd s}(z) \leq - \frac{\tau}{2} \partial_n^{-} \VV(z) + \rho^{-2} \int \frac{z-w}{|z-w|^2} \cdot \vec{n} \ \nu(\dd w) - \rho^{-2} \partial_n^{-} \RR(z) + \int \frac{z-w}{|z-w|^2} \cdot \vec{n} \left(\mu + \eta\right)(\dd w).
\end{equation}
We can then write (as they do) that: 
\begin{equation*}
\int \frac{z-w}{|z-w|^2} \cdot \vec{n}\ \mu(\dd w)  = \hal \partial_n^{-} \VV(z), \quad \int \frac{z-w}{|z-w|^2} \cdot \vec{n}\ \eta(\dd w) \leq \frac{1}{\rho} \|\eta\|,
\end{equation*} 
and we bound $\partial_n^{-} \VV$ by $\rho^{-1}$ (see \eqref{def:VV}), which in view of \eqref{aubord} leads to the pointwise bound:
\begin{equation}
\label{PreDiff} \frac{\dd \eta}{\dd s}(z) \leq \frac{1-\tau}{2} \rho^{-1} + \rho^{-2} \int \frac{z-w}{|z-w|^2} \cdot \vec{n} \  \nu(\dd w) - \rho^{-2} \partial_n^{-} \RR(z) + \frac{1}{\rho} \|\eta\|.
\end{equation}
}

\cor{The only real difference is that we bound $\int \frac{z-w}{|z-w|^2} \cdot \vec{n} \ \nu(\dd w)$ in a different way.} Here $z$ is some fixed point on the circle and $\vec{n}$ is the normal vector to the circle at $z$. We distinguish between $|w -z| \leq 10 \rho'$ and $|w -z| \geq 10 \rho'$. 
\begin{itemize}
	\item For the first contribution, we note\footnote{As in \cite[Proof of Prop. 3.4]{bauerschmidt2017local}, except that they scale everything back to $\rho = 1$.} that $\frac{z-w}{|z-w|^2} \cdot \vec{n}$ is always smaller than $\frac{1}{2\rho}$ and thus:
\begin{equation*}
\int_{|w-z| \leq 10 \rho'} \frac{z-w}{|z-w|^2} \cdot \vec{n} \  \nu(\dd w) \leq \frac{\Cc \nuloc(\rho')}{2\rho}
\end{equation*}
\item We bound the rest of the integral by $\int_{|w-z| \geq 10 \rho'} \max\left(0, \frac{z-w}{|z-w|^2} \cdot \vec{n} \right) \nu(\dd w)$, which we may compare to:
\begin{equation*}
\frac{\nuloc(\rho')}{\rho'} \int_{w \in \partial \DD \rho, |w-z| \geq 10 \rho'} \max\left(0, \frac{z-w}{|z-w|^2} \cdot \vec{n} \right) \dd w, 
\end{equation*}
which is itself smaller than:
\begin{equation*}
\frac{\nuloc(\rho')}{\rho'} \int_{w \in \partial \DD \rho} \max\left(0, \frac{z-w}{|z-w|^2} \cdot \vec{n} \right) \dd w = \frac{\nuloc(\rho')}{\rho'} \times \O(1)
\end{equation*}
by one-dimensional scale-invariance.
\end{itemize}
The dominant contribution is the second one, so in summary:
\begin{equation}
\label{differencefromBBNY}
\int \frac{z-w}{|z-w|^2} \cdot \vec{n} \ \nu(\dd w) \preceq \frac{\nuloc(\rho')}{\rho'}.
\end{equation}

\cor{Inserting \eqref{differencefromBBNY} into \eqref{PreDiff} yields \eqref{bound:eta_s}.}
\end{proof}

\subsection{Application to a generalized 2DOCP with good external potential: \cor{proof of Proposition \ref{prop:correspondence}}}
\label{sec:applications_to_generalized}
\newcommand{\tmu}{\tilde{\mu}}
\renewcommand{\gg}{\mathsf{g}}
\newcommand{\Pp}{\mathrm{P}}
\renewcommand{\VV}{\mathrm{V}^\mathrm{ref}_\La}
\renewcommand{\muV}{\mu_{\VV}}
\renewcommand{\IV}{\mathcal{I}_{\mu_{\VV}}}
We now apply the knowledge gained in Section \ref{sec:effect-perturb-eq} to justify \cor{show \eqref{identity_entre_points_de_vue}}, i.e. to pass from a 2DOCP with good external potential to a 2DOCP with well-controlled non-uniform neutralizing background (see Sections \ref{sec:good_good} and \ref{sec:PnLV}).

Let $\La$ be a disk of radius $\rho = T$ and $\nn$ be the number of points in $\La$. Let $\VextL$ be a “good external potential” on $\La$. Consider the potential $\WW$ defined as:
\begin{equation}
\label{rappelWW}
\WW := \tau \VV + \frac{2}{\nn} \VextL, \quad \tau := \frac{|\La|}{\nn}, 
\end{equation}
where $\VV$ is as in \eqref{def:VV} (with $\rho = T$) and let $\muW$ be the associated equilibrium measure, which we write (with the notation of Section \ref{sec:HarmonicPerturbations}) as:
\begin{equation}
\label{rappelmuW}
\muW = \tau \muV + \left( \eta - \tau \ind_B \muV \right).
\end{equation}
We will use repeatedly below the assumption that (see Definition \ref{def:goodSS} of a good sub-system):
\begin{equation}
\label{taumoinsun}
\tau = 1 + \O\left(\frac{\log^2T}{T}\right).
\end{equation}

\subsubsection*{A. Some properties of the new equilibrium measure}
\begin{claim}
\label{claim:Effect_KDeta}
For some constant $\Cc_\mu$ depending on the “good external potential” $\bCc$ constant of $\VextL$, the following holds:

The measure $\muW - \tau \muV$ has mass $1 - \tau = 1 - \frac{|\La|}{\nn}$, and is made of:
\begin{enumerate}
    \item A continuous (negative) part of density $- \frac{\tau}{|\La|}$ supported on a subset $B$ located at distance $\leq \kappa$ from $\partial \rho \DD$, with $\kappa$ as in \eqref{condi:kappa}. We can take: 
    \begin{equation}
    \label{choix_kappa}
    \kappa = \Cc_\mu  \times \log^2  T.
    \end{equation}
     \item A singular component $\eta$ living on $\partial \La$, with a density $\frac{\dd \eta}{\dd s}$ bounded as in \eqref{bound:eta_s}. We have:
    \begin{equation}
    \label{bound_deta_ds}
        \| \frac{\dd \eta}{\dd s} \| \leq \Cc_\mu \frac{\log^2 T}{T^2}.
    \end{equation}
 The total mass of the continuous part, of the singular part, and the total variation of $\muW - \tau \muV$ is bounded by $\Cc_\mu \frac{\log^2 T}{T}$.
 Moreover, we have:
\begin{equation}
\label{interaction_diff_measure}
\nn^2 \iint -\log|x-y| \dd \left(\eta - \tau \muV \ind_B \right)(x) \dd \left(\eta - \tau \muV \ind_B\right)(y) = \O\left(\Cc_\mu^2 T^2 \log^5(T)\right).
\end{equation}
\end{enumerate}
\end{claim}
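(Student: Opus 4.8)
The plan is to read off the data of the ``good external potential'' $\VextL$ and feed it into Propositions \ref{prop:supportmuW} and \ref{prop:singcomp}, after reconciling normalizations. Recall from \eqref{rappelWW} that $\WW = \tau \VV + \tfrac{2}{\nn}\VextL$ with $\tau = |\La|/\nn$, and from the good-potential decomposition \eqref{Vextassum} that $\VextL = \hnu + R$, where $\nu\geq 0$ is carried by the annulus of width $\log T$ just outside $\La$, $R$ is harmonic on $\La$ with $|R|_{1,\La}\leq \bCc\log^2 T$, and $\sup_{x\in\partial\La}\nu(\DD(x,\log T))\leq \bCc\log^2 T$ by \eqref{assum_control_nuloc}. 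Setting $\rho:=T$ and the ``local scale'' $\rho':=\log T$ (which is $\leq\rho/100$ for $T$ large, and along which $\nu$ has all of its mass), we are precisely in the situation of Section \ref{sec:HarmonicPerturbations}: writing $\WW=\tau\VV+\rho^{-2}(\hnu+R)$ up to the harmless constant factor $\tfrac{2\rho^2}{\nn}=\O(1)$ (because $|\nn-|\La||\leq T\log^2 T$ by Definition \ref{def:goodSS}), the rescaled quantities $\tfrac{2\rho^2}{\nn}\nuloc(\rho')$ and $\tfrac{2\rho^2}{\nn}\|\partial_n^-R\|_{\infty,\partial\La}$ are both $\O(\log^2 T)$, while $|1-\tau|=\O(\log^2 T/T)$ by \eqref{taumoinsun}.

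With these inputs, Proposition \ref{prop:supportmuW} authorizes the choice
\[
\kappa \;=\; \Cc\max\!\left(20\rho',\ \frac{\nuloc(\rho')}{\rho'}\cdot\frac{2\rho^2}{\nn},\ |1-\tau|\rho+\|\partial_n^-R\|_{\infty,\partial\La}\cdot\frac{2\rho^2}{\nn}\right)\;=\;\O(\log^2 T),
\]
the three terms being $\O(\log T)$, $\O(\log T)$ and $\O(\log^2 T)$; this gives \eqref{choix_kappa} and shows the ``hole'' $B$ is contained in $\{z\in\La:\dist(z,\partial\La)\leq\kappa\}$, of area $\O(T\log^2 T)$. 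Since the perturbation $\WW-\tau\VV$ is harmonic on $\La$, the last statement of Proposition \ref{prop:frostman} forces $\muW=\tau\muV$ on the interior of its support, so $\muW-\tau\muV=\eta-\tau\muV\ind_B$ with $\eta$ singular on $\partial\La$, exactly as in \eqref{rappelmuW}. The continuous component has mass $\tau|B|/|\La|=\O(\log^2 T/T)$; as $\muW-\tau\muV$ has total mass $1-\tau=\O(\log^2 T/T)$, we get $\|\eta\|=(1-\tau)+\tau|B|/|\La|=\O(\log^2 T/T)$, whence all three masses and the total variation are $\O(\log^2 T/T)$. Inserting $\|\eta\|=\O(\log^2 T/T)$ into \eqref{bound:eta_s} of Proposition \ref{prop:singcomp} and reusing the bounds on the rescaled $\nu,R,1-\tau$ above, all four terms on its right-hand side are $\O(\log^2 T/T^2)$, which is \eqref{bound_deta_ds}.

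It remains to establish \eqref{interaction_diff_measure} for $\mu':=\muW-\tau\muV=\eta-\tau\muV\ind_B$. This measure is supported in the annulus $\{T-\kappa\leq|x|\leq T\}$, so $|x-y|\leq 2T$ on its square support, and we split $-\log|x-y|$ according to whether $|x-y|\geq 1$ or $<1$. On $\{|x-y|\geq 1\}$ the kernel has absolute value at most $\log(2T)\leq 2\log T$, contributing $\leq 2\log T\cdot\|\mu'\|_{\TotVar}^2=\O(\log^5 T/T^2)$. On $\{|x-y|<1\}$ we use that $\mu'$ has bounded density: its continuous part has density $\O(T^{-2})$ on a set of area $\O(T\log^2 T)$, and $\eta$ has arclength density $\O(\log^2 T/T^2)$ on a curve of length $\O(T)$; since $\sup_x\int_{|x-y|<1}(-\log|x-y|)\,dy=\O(1)$ in $\R^2$ and arc-length is comparable to chord-length at scale $\leq 1$ along a circle of radius $T$, each near-diagonal piece (continuous$\times$continuous, continuous$\times$singular, singular$\times$singular) is $\O(\log^4 T/T^3)$, dominated by the far part for $T$ large. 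Hence $\big|\iint-\log|x-y|\,d\mu'(x)\,d\mu'(y)\big|=\O(\log^5 T/T^2)$, and multiplying by $\nn^2=\O(T^4)$ yields \eqref{interaction_diff_measure}.

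The three mass/density estimates are routine once Propositions \ref{prop:supportmuW} and \ref{prop:singcomp} are applied; the only step needing a little care is the interaction integral, where one must cleanly separate the logarithmic singularity near the diagonal (absorbed by boundedness of the densities) from the far field (controlled by the total variation and the $\O(\log T)$ size of the kernel over a set of diameter $\O(T)$). I expect the main obstacle to be purely bookkeeping: keeping the $\rho^{-2}$ normalization of Section \ref{sec:HarmonicPerturbations} consistent with the $2/\nn$ weight in \eqref{rappelWW}, and tracking the precise powers of $\log T$ through the quoted propositions so that the final exponent in \eqref{interaction_diff_measure} comes out as $\log^5 T$.
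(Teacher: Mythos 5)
Your proposal is correct and follows essentially the same route as the paper: plug the ``good external potential'' data into Propositions \ref{prop:supportmuW} and \ref{prop:singcomp} with $\rho' = \log T$, use \eqref{taumoinsun} for $|1-\tau|$, then compute the masses of the continuous and singular parts geometrically. Your explicit reconciliation of the $\rho^{-2}$ normalization of \eqref{def:W} with the $\tfrac{2}{\nn}$ weight in \eqref{rappelWW} (via $\tfrac{2\rho^2}{\nn}=\O(1)$) is a point the paper handles silently, and it is good that you made it explicit.

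The one place you deviate is the last estimate \eqref{interaction_diff_measure}: the paper disposes of it by ``scaling everything back to a disk of radius $1$'' and invoking the density bounds, whereas you split the kernel $-\log|x-y|$ into $\{|x-y|\geq 1\}$ (bounded by $\log(2T)$ and controlled by $\|\mu'\|_{\TotVar}^2$) and $\{|x-y|<1\}$ (where the densities are bounded and the local integrability of $-\log$ is used for each of the cts$\times$cts, cts$\times$sing, sing$\times$sing pieces). Both arguments are correct and give $\O(T^2\log^5 T)$ after multiplying by $\nn^2$; your near/far split is the more elementary and self-contained of the two, at the cost of three small bookkeeping estimates in place of one change of variables. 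Either is acceptable.
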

\begin{proof}[Proof of Claim \ref{claim:Effect_KDeta}]
Inserting Definition \ref{def:GoofPotential} of a “good external potential” into the bounds \eqref{condi:kappa} and \eqref{bound:eta_s} with $\rho' = \hat{T} = \log T$, and using \eqref{taumoinsun} we obtain \eqref{choix_kappa} and \eqref{bound_deta_ds}.

The total mass of the continuous part is of order $T \times \Cc_\mu  \log^2 T \times T^{-2} = \frac{\log^2 T}{T}$ (it is contained in a region at distance $\leq \Cc_\mu \log^2 T$ from the boundary of the disk of radius $T$, and its density is of order $T^{-2}$).

The mass of the singular part matches the mass of the continuous part up to an error $1-\tau$, but \eqref{taumoinsun} holds. Hence the mass of the singular part is also bounded by $\Cc_\mu \frac{\log^2 T}{T}$, and so is the total variation of $\muW - \tau \muV$.

Scaling everything back to a disk of radius~$1$, and bounding the self-interaction using \eqref{choix_kappa}, \eqref{bound_deta_ds}, we get \eqref{interaction_diff_measure}.
\end{proof}

\subsubsection*{B. Inserting the equilibrium measure into the energy: proof of Proposition \ref{prop:correspondence}}
Let $\zetaW$ be given by:
\begin{equation}
\label{def:zetaW}
\zetaW = \h^{\muW} + \hal \WW - \IW(\muW) + \hal \int \WW \dd \muW.
\end{equation}
It is a standard fact of logarithmic potential theory that the function $\zetaW$ vanishes on the support of $\muW$ and is non-negative outside of it, and we refer to $\zetaW$ as the “effective confining potential” (see \cite[Definition 2.18]{serfaty2015coulomb}).

\begin{proof}[Proof of Proposition \ref{prop:correspondence}]
\cor{Using the notation of \eqref{def:PnLV} and \eqref{def:PnLV2}, our goal is to prove that:
\begin{equation*}
\mathbb{P}_{\nn, \La, \VextL}^\beta( \cdot ) = \mathbb{P}_{\nn, \La}^\beta\left(\cdot , \nn \muW, \zetaW \right),
\end{equation*}
and thus that \eqref{identity_entre_points_de_vue} is justified.}

\medskip

\textbf{\cor{Step 1. An algebraic identity.}} We start with the following “splitting formula”.
\begin{claim}
\label{claim:algebraic_identity1}
Let $\mu_\nn := \frac{1}{\nn} \sum_{i=1}^\nn \delta_{x_i} = \frac{1}{\nn} \bX$ be the empirical measure associated to a $\nn$-tuple in $\La$. We have:
\begin{multline*}
\FL(\bX) + \int_\La \VextL(x) \dd \bX(x) =  \hal \nn^2 \left(\iint_{x \neq y} -\log|x-y| \dd \mu_\nn(x) \dd \mu_\nn(y)  + \int \left(\frac{|\La|}{\nn} \VV(x) + \frac{2}{\nn} \VextL(x) \right) \dd \mu_\nn(x)  \right) 
 \\
- \hal \left( \left(2 \nn |\La| - |\La|^2\right) \IV(\muV) -  |\La| \left(\nn - |\La| \right) \int \VV \dd \muV \right).
\end{multline*}
\end{claim}
\begin{proof}
The proof is elementary but it does require some care because of the possible non-neutrality and the various scalings involved. We start with expanding the definition \eqref{def:FL} of $\FL(\bX)$ as:
\begin{equation*}
2 \FL(\bX) = \iint_{x \neq y} -\log|x-y| \dd \bX(x) \dd \bX(y) + \iint_{\La \times \La} -\log|x-y| \dd x \dd y - 2 \iint_{\La \times \La} -\log|x-y| \dd \bX(x) \dd y.
\end{equation*}
Introducing the empirical measure $\mu_\nn$ and the reference probability measure $\muV$ (uniform on $\La$) we get:
\begin{multline}
\label{FNusingMun}
2 \FL(\bX) = \nn^2 \iint_{x \neq y} -\log|x-y| \dd \mu_\nn(x) \dd \mu_\nn(y) + |\La|^2 \iint_{\La \times \La} -\log|x-y| \dd \muV(x) \dd \muV(y) \\
 - 2 \nn |\La| \iint_{\La \times \La} -\log|x-y| \dd \mu_\nn(x) \dd \muV (y).
\end{multline}
Let us recall the following standard identity valid on $\La$ (see \eqref{eq:EulerLagrange} or e.g. \cite[Thm. 2.1]{serfaty2015coulomb}):
\begin{equation}
\label{hmuVV}
\int_{\La} -\log|x-y| \dd \muV(y) = - \frac{1}{2} \VV(x) + \left( \IV(\muV)- \hal \int \VV \dd \muV \right),
\end{equation}
we obtain after some computations: 
\begin{multline}
\label{eq:FL_as_munn}
2 \FL(\bX) = \nn^2 \left( \iint_{x \neq y} -\log|x-y| \dd \mu_\nn(x) \dd \mu_\nn(y) + \frac{|\La|}{\nn} \int \VV \dd \mu_\nn  \right) \\
- \left(2 \nn |\La| - |\La|^2\right) \IV(\muV)+ |\La| \left(\nn - |\La| \right) \int \VV \dd \muV.
\end{multline}
Dividing by $2$ and inserting the contribution of $\VextL$, we obtain the claim.
\end{proof}

\medskip

\textbf{\cor{Step 2. Another use of the splitting formula.}}
\begin{claim}
\label{claim:introduce}
We have:
\begin{equation*}
\FL(\bX) + \int_{\La} \Vext \dd \bX = \FL(\bX, \nn \muW) + \int \zetaW \dd \bX + \mathrm{Const.}(\nn, \La, \VextL)
\end{equation*}
\end{claim}
\begin{proof}[Proof of Claim \ref{claim:introduce}]
On the one hand, we know by Claim \ref{claim:algebraic_identity1} that:
\begin{equation*}
\FL(\bX) + \int_{\La} \Vext(x) \dd \bX(x) = \hal \nn^2 \left( \iint_{x \neq y} -\log|x-y| \dd \mu_\nn(x) \dd \mu_\nn(y) + \int \WW \dd \mu_\nn  \right) + \mathrm{Const.}(\nn, \La).
\end{equation*}
On the other hand, using the “splitting formula” of Sandier-Serfaty (see \cite[Lemma 2.1]{MR3353821}) we have:
\begin{multline}
\label{use_splitting}
\hal \nn^2 \left( \iint_{x \neq y} -\log|x-y| \dd \mu_\nn(x) \dd \mu_\nn(y) + \int \WW \dd \mu_\nn  \right) \\
= \hal \nn^2 \IW(\muW) + \FL(\bX, \nn \muW) + \nn \int \zetaW(x) \dd \bX(x).
\end{multline}
Discarding quantities that do not depend on the configuration $\bX$, we obtain the claim.
\end{proof}

\textbf{Conclusion.} We may thus equivalently use $\FL(\bX, \nn \muW) + \int \zetaW \dd \bX$ instead of $\FL(\bX) + \int_{\La} \Vext  \dd \bX$ in the Boltzmann's factor, up to some constant \cor{(with respect to $\bX$)} that gets absorbed in the partition function.
\end{proof}

\subsection{\cor{Global law for good sub-systems}: proof of Proposition \ref{prop:global_law_SUBSYS}}
\label{sec:proofglobal}

\begin{proof}[Proof of Proposition \ref{prop:global_law_SUBSYS}]
By definition of $\PnLL$ we have:
\begin{multline}
\label{calculGL}
\log \EnLL \left[  \exp\left( \frac{\beta}{2} \FL(\bX, \nn \muW) \right) \Big| \EE_\La \right] 
= \log \int_{\La^\nn} \ind_{\EE_\La}(\bX) e^{- \frac{\beta}{2} \left( \FL(\bX, \nn \muW) + 2 \int \zetaW(x) \dd \bX(x) \right)  } \dd \X_\nn \\
- \log \int_{\La^\nn} \ind_{\EE_\La}(\bX) e^{- \beta \left( \FL(\bX, \nn \muW) + \int \zetaW(x) \dd \bX(x) \right) } \dd \X_\nn.
\end{multline}

\cor{We study each term in the right-hand side of \eqref{calculGL} separately. First we will prove a lower bound on the energy appearing in the exponent, which can be turned into an \emph{upper bound} on the first term. Next, we give a lower bound on the “partition function” (the second integral in the right-hand side) which yields an \emph{upper bound} on the second term.}

\subsubsection*{1. Configuration-wise lower bound on the energy.}
Since $\zetaW$ is non-negative, we have: 
\begin{equation*}
\FL(\bX, \nn \muW) + 2 \int \zetaW \dd \bX \geq \FL(\bX, \nn \muW).
\end{equation*}
Because of the possible singularity of $\muW$ along the boundary, we cannot directly use results like \cite[Lemma B.2.]{armstrong2019local} to bound $\FL$ from below. Instead, for each $i = 1, \dots, \nn$ we let $\eta(x_i) := \frac{1}{4} \min\left(1, \dist(x_i, \partial \La) \right)$ and use this as a “truncation vector”. By construction, the disks $\DD(x_i, \eta(x_i))$ do not intersect the support of the singular part of $\muW$, we may thus use the monotonicity property of \cite[Lemma B.1]{armstrong2019local} (or Onsager's lemma) and write that:
\begin{equation*}
\FL(\bX, \nn \muW) \geq \frac{1}{2} \left( \frac{1}{2\pi} \int_{\R^2} |\nabla \h^{\bX, \nn \muW}_{\veta}|^2 + \sum_{i=1}^\nn \log \eta(x_i) \right) - \O(\nn).
\end{equation*}
The integral is obviously non-negative, and it remains to find a lower bound for the negative contribution coming from $\sum_{i=1}^\nn \log \eta(x_i)$. This is where we use our restriction to the event $\EE_\La$. Since $\bX$ is assumed to satisfy the conditions of Definition \ref{def:EELA} we know that:
\begin{enumerate}
    \item There is no index $i$ such that $\eta(x_i) \leq e^{-\log^2 T}$.
    \item There are at most $T \log T$ indices $i$ such that $\eta(x_i) \leq \frac{1}{4}$.
\end{enumerate}
We thus have the rough lower bound: $\sum_{i=1}^\nn \log \eta(x_i) \geq - \nn \log 4 - T \log T \times \log^2 T$. Hence for $\bX \in \EE_\La$, we have the configuration-wise lower bound: $\FL(\bX, \nn \muW) \geq - \O\left(T^2\right)$, which we can integrate in order to obtain (with an implicit constant depending only on $\beta$):
\begin{equation}
\label{eq:GlobalLawA}
\log \int_{\La^\nn} \ind_{\EE_\La}(\bX) e^{- \frac{\beta}{2} \left( \FL(\bX, \nn \muW) + 2 \nn \int \zetaW \dd \bX \right)  } \dd \X_\nn \leq \log |\EE_\La| + \O\left(T^2\right),
\end{equation}
where $|\EE_\La|$ denotes the volume of the event $\EE_\La$ under the Lebesgue measure $\dd \X_\nn$ on $\La^\nn$.

\subsubsection*{2. Lower bound on the partition function.}
To find an upper bound on the second term in the right-hand side of \eqref{calculGL}, we rely on a “Jensen's trick” inspired by \cite{garcia2019large} and write:
\begin{multline}
\label{with_restriction}
- \log \int_{\La^\nn} \ind_{\EE_\La}(\bX) e^{ - \beta \left( \FL(\bX, \nn \muW) + \nn \int \zetaW(x) \dd \bX(x) \right)  } \dd \X_\nn 
\\
\leq - \log |\EE_\La| + \beta \int_{\La^\nn} \ind_{\EE_\La}(\bX) \left(\FL(\bX, \nn \muW) + \nn \int \zetaW(x) \dd \bX(x) \right) \frac{\dd \X_\nn}{|\EE_\La|}.
\end{multline}

Let us start by doing computations without the indicator $\ind_{\EE_\La}(\bX)$.
\begin{claim}
\label{claim:sansEELA}
We have:
\begin{equation}
\label{eq:without_the_restriction}
\int_{\La^\nn} \left(\FL(\bX, \nn \muW) + \nn \int \zetaW(x) \dd \bX(x) \right) \frac{\dd  \X_\nn}{|\La^\nn|} \leq \O(T^2 \log^5 T).
\end{equation}
\end{claim}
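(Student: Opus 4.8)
The plan is to use that under the normalized Lebesgue measure $\dd\Xn/|\La^\nn|$ the points $x_1,\dots,x_\nn$ are i.i.d.\ with common law $\muV$, the uniform probability measure on $\La$ (which is precisely the equilibrium measure of the reference potential \eqref{def:VV}), and then to treat the two summands of the integrand separately. For the energy term, I would expand the double integral defining $\FL(\bX,\nn\muW)$ as $\tfrac12\sum_{i\neq j}-\log|x_i-x_j|-\nn\sum_i\h^{\muW}(x_i)+\tfrac12\nn^2\iint-\log|x-y|\,\dd\muW\,\dd\muW$ and take expectations using independence, which gives the exact identity
\begin{equation*}
\int_{\La^\nn}\FL(\bX,\nn\muW)\,\frac{\dd\Xn}{|\La^\nn|}=\tfrac12\nn^2\iint-\log|x-y|\,\dd(\muV-\muW)(x)\,\dd(\muV-\muW)(y)-\tfrac12\nn\iint-\log|x-y|\,\dd\muV\,\dd\muV.
\end{equation*}
The last term is $\O(\nn\log T)=\O(T^2\log T)$ since $\muV$ is uniform on a disk of radius $T$. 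For the first term I would write $\muV-\muW=(1-\tau)\muV-(\muW-\tau\muV)$ and expand the (nonnegative) logarithmic quadratic form bilinearly: the self-energy of $\muW-\tau\muV=\eta-\tau\ind_B\muV$, multiplied by $\nn^2$, is $\O(T^2\log^5 T)$ by the already-proved \eqref{interaction_diff_measure}; the $(1-\tau)\muV$ self-term is $(1-\tau)^2$ times $\O(\log T)$, hence $\O(\log^5 T/T^2)$ by \eqref{taumoinsun}; and the cross-term $(1-\tau)\int\h^{\muV}\,\dd(\muW-\tau\muV)$ is controlled using $\|\h^{\muV}\|_{L^\infty(\bar\La)}=\O(\log T)$ (which follows from the Euler--Lagrange relation \eqref{hmuVV}, namely $\h^{\muV}=c-\tfrac12\VV$ on $\bar\La$) together with the total-variation bound $\|\muW-\tau\muV\|=\O(\log^2 T/T)$ from Claim~\ref{claim:Effect_KDeta}. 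After the factor $\nn^2$ each contribution is $\O(T^2\log^5 T)$.

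For the effective confining potential, independence gives $\int_{\La^\nn}\nn\int\zetaW\,\dd\bX\,\frac{\dd\Xn}{|\La^\nn|}=\nn^2\int\zetaW\,\dd\muV$. Now $\zetaW$ vanishes on $\supp\muW$, which by Proposition~\ref{prop:supportmuW} and \eqref{choix_kappa} contains the concentric sub-disk of radius $T-\kappa$ with $\kappa=\O(\log^2 T)$; hence only the annulus of width $\kappa$ near $\partial\La$ contributes to $\int\zetaW\,\dd\muV$, and $\muV$ charges that annulus with mass $\O(\kappa/T)=\O(\log^2 T/T)$. It then suffices to bound $\|\zetaW\|_{L^\infty(\bar\La)}$ by $\O(\mathrm{polylog}(T)/T)$. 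Writing $\zetaW=\h^{\muW}+\tfrac12\WW-c$ with $c$ the Euler--Lagrange constant (so that $\zetaW\equiv0$ on $\supp\muW$) and integrating $\nabla\zetaW=\nabla\h^{\muW}+\tfrac12\tau\nabla\VV+\tfrac1\nn\nabla\VextL$ along a radial segment of length $\le\kappa$ joining a point $z$ of the annulus to the sub-disk, one reduces to pointwise derivative estimates valid up to $\partial\La$: $|\nabla\VV|=\O(1/T)$ on $\bar\La$; $\tfrac1\nn|\nabla\VextL|=\tfrac1\nn|\nabla\h^{\nu}+\nabla R|=\O(\log^2 T/T^2)$ by the ``good external potential'' properties of Definition~\ref{def:GoofPotential}; and for $\nabla\h^{\muW}$ one splits $\muW$ into its bulk part (density $\O(T^{-2})$, whose potential has gradient $\O(1/T)$) and its near-boundary part $\muW-\tau\muV$, whose potential has gradient $\O\bigl(\tfrac{\log^2 T}{T^2}\log(T/\dist(z,\partial\La))\bigr)$ --- a bound that is integrable along the transversal segment thanks to the description of $\muW$ in Claim~\ref{claim:Effect_KDeta} ($B$ at distance $\O(\log^2 T)$ from $\partial\La$, singular part $\eta$ of density $\O(\log^2 T/T^2)$ on $\partial\La$). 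Altogether $\|\zetaW\|_{L^\infty(\bar\La)}=\O(\log^2 T/T)$, so $\nn^2\int\zetaW\,\dd\muV=\O(T^4)\cdot\O(\log^2 T/T)\cdot\O(\log^2 T/T)=\O(T^2\log^4 T)$.

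Summing the two steps yields \eqref{eq:without_the_restriction}. I expect the main obstacle to be the up-to-the-edge control of $\zetaW$ and of $\nabla\h^{\muW}$ near $\partial\La$: this is precisely where the fine structure of a ``good external potential'' and the quantitative description of $\muW$ in Claim~\ref{claim:Effect_KDeta} must be invoked, and where the (mild, one-dimensional) logarithmic singularity produced by the singular part $\eta$ of $\muW$ on $\partial\La$ has to be absorbed by integrating across a transversal.
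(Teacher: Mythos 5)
Your route differs genuinely from the paper's: you split the integrand into the energy part $\FL$ and the confinement part $\nn\int\zetaW\dd\bX$ and treat them separately, computing i.i.d.\ expectations of each, whereas the paper never isolates $\zetaW$ at all — it integrates both sides of the splitting identity \eqref{use_splitting} so that $\zetaW$ disappears into the algebra, and the resulting expression involves only $\iint-\log\dd\muV\dd\muV$, $\int\WW\dd\muV$ and $\IW(\muW)$, which it then massages using \eqref{hmuVV} into the form \eqref{after_simplification_lower_bound}. Your handling of the energy part is correct: the exact identity for $\int\FL\,\dd\Xn/|\La|^\nn$ under i.i.d.\ sampling from $\muV$ is right, and the bilinear decomposition of $\hal\nn^2\iint-\log\dd(\muV-\muW)^{\otimes 2}$ via $\muV-\muW=(1-\tau)\muV-(\muW-\tau\muV)$, controlled by \eqref{interaction_diff_measure}, \eqref{taumoinsun} and the TV bound from Claim~\ref{claim:Effect_KDeta}, gives $\O(T^2\log^5 T)$.

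The gap is in the confinement part, and it is the obstacle you yourself flagged but did not actually resolve. Your bound $\|\zetaW\|_{L^\infty(\bar\La)}=\O(\log^2 T/T)$ hinges on the pointwise estimate $\tfrac1\nn|\nabla\VextL|=\O(\log^2 T/T^2)$ \emph{up to $\partial\La$}, via $\nabla\VextL=\nabla\h^{\nu}+\nabla R$. Definition~\ref{def:GoofPotential} gives you $|R|_{1,\La}\le\bCc\log^2T$ and a control on the \emph{local mass} $\nuloc$, but it gives no bound at all on $|\nabla\h^{\nu}|$ near $\partial\La$: the measure $\nu$ consists of point charges in an annulus right outside $\La$, and if one of those atoms sits at distance $\epsilon$ from $\partial\La$ then $|\nabla\h^{\nu}|\sim\epsilon^{-1}$ near it, with $\epsilon$ uncontrolled. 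This is precisely why the paper introduces the regularized version $\wVextL$ in \eqref{wVextLVextL}--\eqref{wV_up_to_the_edge} and phrases the ``control up to the edge'' in terms of $\wVextL$, not $\VextL$; and in the corresponding step of its proof it pairs the near-boundary contribution against $\muW$ (not $\muV$) and replaces $\VextL$ by $\wVextL$ using nonnegativity of $\muW$. So the pointwise bound on $\nabla\zetaW$ that your transversal-integration argument requires is not available from the hypotheses, and it would in fact be false in the worst case. The statement $\nn^2\int\zetaW\dd\muV=\O(T^2\,\mathrm{polylog}\,T)$ is most likely still true, but to prove it along your line you would have to replace the $L^\infty$ bound on $\zetaW$ by an $L^1(\muV)$ bound on the near-boundary annulus, absorbing the $1/\dist$ singularities of $\nabla\h^{\nu}$ by two-dimensional integration and using $\nuloc$ to control the local density of singularities — which is essentially re-deriving, in a more laborious way, the cancellations that the paper's splitting-identity shortcut gets for free. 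The cleaner fix is simply not to separate $\zetaW$ from $\FL$ in the first place.
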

\begin{proof}[Proof of Claim \ref{claim:sansEELA}]
Let us recall that the integrand can be written (see \eqref{use_splitting}) as:
\begin{multline}
\label{rewriteintegrand}
\FL(\bX, \nn \muW) + \nn \int \zetaW(x) \dd \bX(x) \\
= \hal \nn^2 \left( \iint_{x \neq y} -\log|x-y| \dd \mu_\nn(x) \dd \mu_\nn(y) + \int \WW \dd \mu_\nn  \right) - \hal \nn^2 \IW(\muW). 
\end{multline}
By elementary computations, we obtain:
\begin{equation*}
\hal \nn^2 \int_{\La^\nn} \left( \iint_{x \neq y} -\log|x-y| \dd \mu_\nn(x) \dd \mu_\nn(y) \right) \frac{\dd \X_\nn}{|\La|^\nn}  = \hal \nn (\nn-1) \iint -\log|x-y| \dd \muV(x) \dd \muV(y)
\end{equation*}
and similarly:
\begin{equation*}
\hal \nn^2 \int_{\La^\nn} \left( \int \WW \dd \mu_\nn  \right)  \frac{\dd \X_\nn}{|\La|^\nn}  = \hal \nn^2 \int \WW(x) \dd \muV(x),
\end{equation*}
and thus after some simplifications we get:
\begin{multline*}
\int_{\La^\nn} \left(\FL(\bX, \nn \muW) + \nn \int \zetaW(x) \dd \bX(x) \right) \frac{\dd \X_\nn}{|\La|^\nn} 
\\
= \hal \nn(\nn-1) \iint -\log|x-y| \dd \muV(x) \dd \muV(y) + \hal n^2 \int \WW \dd \muV - \hal n^2 \IW(\muW). 
\end{multline*}
We now insert the expressions \eqref{rappelWW}, \eqref{rappelmuW} for $\WW$ and $\muW$, the definition \eqref{def:IV} of $\IW(\muW)$, expand and use the identity \eqref{hmuVV}. We obtain:
\begin{multline}
\label{after_simplification_lower_bound}
\hal \nn(\nn-1) \iint -\log|x-y| \dd \muV(x) \dd \muV(y) + \hal n^2 \int \WW \dd \muV - \hal n^2 \IW(\muW) = \\
\nn^2 \iint -\log|x-y| \dd \left(\eta - \tau \muV \ind_B \right)(x) \dd \left(\eta - \tau \muV \ind_B\right)(y)
- \hal \nn \iint -\log|x-y| \dd \muV(x) \dd \muV(y) \\
+ \nn \int \VextL(x) \left( \dd \muV - \dd \muW \right)(x).
\end{multline}
Using \eqref{interaction_diff_measure} we control the first term in the right-hand side of \eqref{after_simplification_lower_bound} by $\O(T^2 \log^5(T))$. On the other hand, by a direct estimate, we have $\nn \iint -\log|x-y| \dd \muV(x) \dd \muV(y) = \O(T^2 \log T)$. It remains to bound $\nn \int \VextL \left( \dd \muV - \dd \muW \right)$ from above. Since $\muW, \muV$ have the same mass, it is equivalent to bound:
\begin{equation*}
\nn \int \left(\VextL(x) - \VextL(\omega) \right) \left( \dd \muV - \dd \muW \right)(x),
\end{equation*}
where $\omega$ is the center of $\La$. Let us decompose the integral into two parts:

\medskip

\textbf{Away from the boundary.} On $\{ \dist(z, \partial \La) \geq \kappa \}$ we know that $\muW$ coincides with $\tau \muV$ and thus (since we take $\kappa \geq 1$):
\begin{equation*}
\left| \nn \int_{\{ \dist(z, \partial \La) \geq \kappa \}} \left( \VextL(z) - \VextL(\omega) \right) \left( \dd \muW - \dd \muV \right) \right| \leq \nn \times |1-\tau| \times \sup_{\{\dist(z, \partial \rho \DD) \geq 1\}} |\VextL(z) - \VextL(\omega)|.
\end{equation*}
We know that $|1-\tau| = \O\left(\frac{\log^2 T}{T}\right)$ and by assumption we have $|\VextL(z) - \VextL(\omega)| = \O(T \log^3 T)$ on $\{\dist(z, \partial \rho \DD) \geq 1\}$. Thus the contribution “away from the boundary” is bounded by $\O\left(T^2 \log^5 T\right)$.

\medskip

\textbf{Near the boundary.} On $\{\dist(z, \partial \La) \leq \kappa \}$ we control each contribution separately. On the one hand, we have, using the mean value formula for $\VextL$ (which is harmonic on $\La$):
\begin{equation*}
\nn \int_{\{ \dist(z, \partial \La) \geq \kappa \}} \left( \VextL(z) - \VextL(\omega) \right) \dd \muV(z) = 0.
\end{equation*}
On the other hand, $\muW$ being a non-negative measure, we may write: 
\begin{equation*}
- \nn \int_{\{ \dist(z, \partial \La) \geq \kappa \}} \left(\VextL(z) - \VextL(\omega) \right) \dd \muW(z) \leq  - \nn \int_{\{ \dist(z, \partial \La) \geq \kappa \}} \left(\wVextL(z) - \wVextL(\omega) \right) \dd \muW(z),  
\end{equation*}
where we used the auxiliary function $\wVextL$ from Definition \ref{def:GoofPotential}. The mass of $\muW$ near the boundary is bounded by $\O\left(\frac{\kappa}{T} + T \|\frac{\dd \eta}{\dd s}\|_{\infty}  \right)$. Using the results of Claim \ref{claim:Effect_KDeta} and the definition of a “good external potential” we bound the contribution “near the boundary” by $\O\left(T^2 \log^5 T\right)$ also.

This concludes the proof of the claim. 
\end{proof}

\cor{Note that the integral in the left-hand side of \eqref{eq:without_the_restriction} is not exactly the same as the second integral in the right-hand side of \eqref{calculGL}.} It \cor{thus remains to check} that restricting our integrals to $\EE_\La$ as in \eqref{with_restriction} instead of \eqref{eq:without_the_restriction} has no consequence on the estimate, which is not totally obvious. \cor{Using \eqref{rewriteintegrand}, and in view of \eqref{with_restriction}, we see that it amounts to comparing
\begin{equation}
\label{tocomparing}
\int_{\La^\nn} \ind_{\EE_\La}(\bX)  \left( \hal \nn^2 \left( \iint_{x \neq y} -\log|x-y| \dd \mu_\nn(x) \dd \mu_\nn(y) + \int \WW \dd \mu_\nn  \right) - \hal \nn^2 \IW(\muW) \right) \frac{\dd \X_\nn}{|\EE_\La|}
\end{equation}
namely the case “with restriction”, to our previous expression, namely
\begin{equation*}
\int_{\La^\nn}  \left( \hal \nn^2 \left( \iint_{x \neq y} -\log|x-y| \dd \mu_\nn(x) \dd \mu_\nn(y) + \int \WW \dd \mu_\nn  \right) - \hal \nn^2 \IW(\muW) \right)  \frac{\dd  \X_\nn}{|\La^\nn|},
\end{equation*}
which we have just bounded.
}

In the proof below, we will make a crucial use of the following observation: the “cluster bound” for independent points is not worse than the one mentioned in \eqref{eq:cluster_bound}, and we thus have:
\begin{equation}
\label{EE_LavsLa}
\frac{|\EE_\La|}{|\La^\nn|} \geq 1 - \exp\left(- \log^2 T \right).
\end{equation}

\cor{In the following claim, we focus on the second term in the integrand of \eqref{tocomparing}.}
\begin{claim}[\cor{The restriction does not matter too much}] 
\label{lem:virer_le_conditionnement}
We have:
\begin{equation}
\label{eq:sans_condi_lavie}
\hal \nn^2 \int_{\La^\nn} \ind_{\EE_\La}(\bX) \left( \int \WW \dd \mu_\nn  \right)  \frac{\dd \X_\nn}{|\EE_\La|} \leq \hal \nn^2 \int_{\La^\nn} \left( \int \WW \dd \mu_\nn  \right)  \frac{\dd \X_\nn}{|\La|^\nn} + \exp\left(- \log^2 T \right).
\end{equation}
\end{claim}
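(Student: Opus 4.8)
The plan is to first isolate, inside $\int \WW \dd \mu_\nn = \tfrac{1}{\nn}\sum_{i=1}^\nn \WW(x_i)$, the value of $\WW$ at the center $\omega$ of $\La$. Since $\VV(\omega) = 0$ by \eqref{def:VV} we have $\WW(\omega) = \tfrac{2}{\nn}\VextL(\omega)$, and writing $\WW(x_i) = \WW(\omega) + \bigl(\WW(x_i) - \WW(\omega)\bigr)$ gives
\[
\hal \nn^2 \int \WW \dd \mu_\nn \;=\; \nn\, \VextL(\omega) \;+\; h(\bX), \qquad h(\bX) := \hal\, \nn \sum_{i=1}^\nn \bigl(\WW(x_i) - \WW(\omega)\bigr).
\]
The term $\nn\,\VextL(\omega)$ does not depend on $\bX$, hence it contributes exactly $\nn\,\VextL(\omega)$ to \emph{both} sides of \eqref{eq:sans_condi_lavie} (a constant integrates to itself against any normalised measure) and cancels in the difference. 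This step is the heart of the matter: the absolute size of $\VextL(\omega)$ is \emph{not} under control (the exterior configuration may carry a discrepancy of polynomial size in $R$, making $\nn\,\VextL(\omega)$ large), whereas the ``good external potential'' hypothesis controls exactly the \emph{oscillation} of $\VextL$, which is all that enters $h$. It thus remains to estimate $\int h\,\tfrac{\ind_{\EE_\La}\dd\X_\nn}{|\EE_\La|} - \int h\,\tfrac{\dd\X_\nn}{|\La|^\nn}$ (the denominator on the left of \eqref{eq:sans_condi_lavie} being read as $|\EE_\La|$).

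\textbf{A pointwise lower bound on $h$.} Next I would record a crude pointwise estimate. Using Definition \ref{def:GoofPotential} ($\wVextL = \VextL$ on $\{\dist(\cdot,\partial\La)\ge 1\}$, $\wVextL \le \VextL + 100$, $|\wVextL(x)-\wVextL(\omega)| \le \bCc T\log^3 T$, so that $\VextL(x) \ge \wVextL(x) - 100 \ge \wVextL(\omega) - \bCc T\log^3 T - 100 = \VextL(\omega) - \bCc T\log^3 T - 100$), together with $0 \le \VV \le 1$ on $\La$ and $\nn = |\La| + \O(T\log T) \asymp T^2$, one gets for every $x$ in the open disk $\La$
\[
\WW(x) - \WW(\omega) \;=\; \tau\,\VV(x) + \tfrac{2}{\nn}\bigl(\VextL(x) - \VextL(\omega)\bigr) \;\ge\; -\,\tfrac{2}{\nn}\bigl(\bCc T\log^3 T + 100\bigr) \;=\; -\,\bCc\,\tfrac{\log^3 T}{T}.
\]
Summing over $i$ and multiplying by $\hal\nn$ yields $h(\bX) \ge -\,\bCc\, T^3\log^3 T$ uniformly on $\La^\nn$. (Only a lower bound is needed; near $\partial\La$ the quantity $\WW(x)-\WW(\omega)$ may be large and positive, which is harmless for the direction of the inequality \eqref{eq:sans_condi_lavie}.)

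\textbf{Bookkeeping and conclusion.} Writing $\ind_{\EE_\La} = 1 - \ind_{\EE_\La^c}$ (complement in $\La^\nn$),
\[
\int h\,\tfrac{\ind_{\EE_\La}\dd\X_\nn}{|\EE_\La|} - \int h\,\tfrac{\dd\X_\nn}{|\La|^\nn} \;=\; \Bigl(\tfrac{1}{|\EE_\La|} - \tfrac{1}{|\La|^\nn}\Bigr)\int_{\La^\nn} h\,\dd\X_\nn \;-\; \tfrac{1}{|\EE_\La|}\int_{\EE_\La^c} h\,\dd\X_\nn .
\]
By \eqref{EE_LavsLa}, $|\EE_\La^c| \le e^{-\log^2 T}|\La|^\nn$ and $|\EE_\La| \ge \hal |\La|^\nn$ for $T$ large, hence $|\EE_\La^c|/|\EE_\La| \le 2 e^{-\log^2 T}$. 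The unrestricted average of $h$ is found by the symmetry argument already used in the proof of Claim \ref{claim:sansEELA}: since $\VextL$ is harmonic on $\La$ and $\muV$ is uniform, $\int \VextL \dd\muV = \VextL(\omega)$ and $\int \VV\dd\muV = \hal$, so $\int (\WW - \WW(\omega))\dd\muV = \tfrac{\tau}{2}$ and $\int_{\La^\nn} h\,\dd\X_\nn = \tfrac{\tau\nn^2}{4}|\La|^\nn$. Thus the first term equals $\tfrac{\tau\nn^2}{4}\cdot\tfrac{|\EE_\La^c|}{|\EE_\La|} = \O\bigl(T^4 e^{-\log^2 T}\bigr)$, and the pointwise bound $h \ge -\bCc T^3\log^3 T$ makes the second term at most $\bCc T^3\log^3 T\cdot\tfrac{|\EE_\La^c|}{|\EE_\La|} = \O\bigl(T^3\log^3 T\, e^{-\log^2 T}\bigr)$. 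Both are of the form $T^{\O(1)} e^{-\log^2 T}$; absorbing the polynomial prefactor into the exponent (or simply noting that this is negligible against the target $\O(T^2\log^5 T)$ of Proposition \ref{prop:global_law_SUBSYS}), one obtains \eqref{eq:sans_condi_lavie}.

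\textbf{Main obstacle.} The only genuinely delicate point is the cancellation in the first step: one must express the integrand through the \emph{oscillation} $\WW(\cdot)-\WW(\omega)$ \emph{before} comparing restricted and unrestricted averages, since the unremoved constant $\nn\,\VextL(\omega)$ would otherwise be multiplied by the $e^{-\log^2 T}$ coming from $\EE_\La^c$ and is too large for that to be harmless. Everything else relies on two inputs already available: the $\wVextL$ bound of Definition \ref{def:GoofPotential}, and the volume estimate $|\EE_\La^c| \le e^{-\log^2 T}|\La|^\nn$ for the complement of the ``good sub-system'' event (itself a consequence of the cluster bound for independent points, cf. the line preceding \eqref{EE_LavsLa}). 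The same scheme, incidentally, also disposes of the companion term $\hal\nn^2\iint_{x\ne y} -\log|x-y|\dd\mu_\nn\dd\mu_\nn$ occurring in \eqref{with_restriction}: it is bounded below by $-\nn^2\log(2T)$ pointwise on $\La^\nn$, its $\tfrac{\dd\X_\nn}{|\La|^\nn}$-average equals $\hal\nn(\nn-1)\iint -\log|x-y|\dd\muV\dd\muV = \O(T^4\log T)$, and the two estimates above then apply verbatim.
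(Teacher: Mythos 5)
Your proof is correct, and it is organized around a slightly different (and in fact more careful) decomposition than the paper's. The paper introduces $\hat{\WW} := \WW - \VextL(\omega) + \Cc T^2$ and argues $\hat{\WW} \geq 0$, then uses $\ind_{\EE_\La} \leq 1$ directly on the nonnegative integrand, picking up a factor $\frac{|\La^\nn|}{|\EE_\La|}$. You instead recenter by $\WW(\omega) = \tfrac{2}{\nn}\VextL(\omega)$ (using $\VV(\omega)=0$), accept that the remainder $h$ is only bounded \emph{below} by $-\bCc T^3\log^3 T$ rather than nonnegative, and do the bookkeeping exactly via $\ind_{\EE_\La} = 1 - \ind_{\EE_\La^c}$ and a rearrangement into two explicitly bounded error terms. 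Both hinge on the same two inputs: the oscillation control on $\VextL$ from Definition~\ref{def:GoofPotential}, and the volume estimate \eqref{EE_LavsLa}. Incidentally, your centering is cleaner than the paper's: because $\WW$ contains $\tfrac{2}{\nn}\VextL$ (not $\VextL$), subtracting $\VextL(\omega)$ as the paper literally writes does not remove the dominant $\bX$-independent contribution, and the resulting $\hat{\WW}$ is \emph{not} nonnegative unless one also controls the uncontrolled quantity $\VextL(\omega)$; the paper evidently intends to subtract $\tfrac{2}{\nn}\VextL(\omega)$. Your phrasing singles out exactly this point as "the heart of the matter," which is right. Finally, like the paper's proof, yours produces $T^{\O(1)} e^{-\log^2 T}$ rather than $e^{-\log^2 T}$ verbatim; you correctly note that this is harmless, both because the polynomial factor is absorbed by the super-logarithmic exponent and because the bound is only ever used against the much larger target $\O(T^2\log^5 T)$ in Proposition~\ref{prop:global_law_SUBSYS}.
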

\begin{proof}[Proof of Lemma \ref{lem:virer_le_conditionnement}]
Let $\Cc$ be some large constant, and let
\begin{equation}
\label{def:hatWW}
\hat{\WW} := \WW - \VextL(\omega) + \Cc T^2.
\end{equation}
Substracting the same constant to both sides, we may write that:
\begin{multline}
\label{ladifference}
\hal \nn^2 \int_{\La^\nn} \left( \int \WW \dd \mu_\nn  \right)  \frac{\dd \X_\nn}{|\La|^\nn}  - \hal \nn^2 \int_{\La^\nn} \ind_{\EE_\La}(\bX) \left( \int \WW \dd \mu_\nn  \right)  \frac{\dd \X_\nn}{|\EE_\La|} 
\\
= \hal \nn^2 \int_{\La^\nn} \left( \int \hat{\WW} \dd \mu_\nn  \right)  \frac{\dd \X_\nn}{|\La|^\nn}  - \hal \nn^2 \int_{\La^\nn} \ind_{\EE_\La}(\bX) \left( \int \hat{\WW} \dd \mu_\nn  \right)  \frac{\dd \X_\nn}{|\EE_\La|}.  
\end{multline}
On the other hand, in view of the definition \eqref{rappelWW} of $\WW$ and the properties of the “good external potential” $\VextL$ as listed in Definition \ref{def:GoofPotential}, we know that by choosing the constant $\Cc$ large enough we can guarantee that $\hat{\WW} \geq 0$, in which case it is clear that:
\begin{equation}
\label{avechatWW}
\hal \nn^2 \int_{\La^\nn} \ind_{\EE_\La}(\bX) \left( \int \hat{\WW} \dd \mu_\nn  \right)  \frac{\dd \X_\nn}{|\EE_\La|} \leq \hal \nn^2 \int_{\La^\nn} \left( \int \hat{\WW} \dd \mu_\nn  \right)  \frac{\dd \X_\nn}{|\La|^\nn} \times \left( \frac{|\La^\nn|}{|\EE_\La|} \right).
\end{equation}
We thus see that:
\begin{equation*}
\hal \nn^2 \int_{\La^\nn} \left( \int \WW \dd \mu_\nn  \right)  \frac{\dd \X_\nn}{|\La|^\nn}  - \hal \nn^2 \int_{\La^\nn} \ind_{\EE_\La}(\bX) \left( \int \WW \dd \mu_\nn  \right)  \frac{\dd \X_\nn}{|\EE_\La|}  \leq \hal \nn^2 \int_{\La^\nn} \left( \int \hat{\WW} \dd \mu_\nn  \right)  \frac{\dd \X_\nn}{|\La|^\nn}  \left(\frac{|\La^\nn|}{|\EE_\La|} - 1 \right).
\end{equation*}
On the other hand, the same computation as in the proof of Claim \ref{claim:sansEELA} yields:
\begin{equation*}
\hal \nn^2 \int_{\La^\nn} \left( \int \hat{\WW} \dd \mu_\nn  \right)  \frac{\dd \X_\nn}{|\La|^\nn} = \hal \nn^2 \int \hat{\WW}(x) \dd \muV(x),
\end{equation*}
which we can evaluate explicitely using our choice \eqref{def:hatWW} for $\hat{\WW}$ and the expression \eqref{rappelWW} for $\WW$. We obtain some polynomial in $T$, which gets absorbed by the sub-algebraic tail of \eqref{EE_LavsLa}. This concludes the proof of \eqref{eq:sans_condi_lavie}.
\end{proof}
We could proceed similarly for the first term \cor{in the integrand of \eqref{tocomparing}} (which is easier because $\WW$ does not play any role) \cor{and for the third term (which is a constant)}.

This concludes the proof of \eqref{global_law_SS_Statement} and thus of Proposition \ref{prop:global_law_SUBSYS}.
\end{proof}

\section{Local laws for sub-systems: proof of Proposition \ref{prop:local_law_SUBSYS}}
\label{sec:proofLL}
\newcommand{\ccR}{\mathcal{R}}

\subsubsection*{\cor{A summary of the local laws of \cite{armstrong2019local}.}}
\cor{By “local laws”, we mean \emph{good controls on the electric energy down to (large enough) microscopic scales} - controls on the number of points are then obtained as a byproduct. By “good control”, we mean ideally an estimate of the form (for some constant $\Cc_\beta$ depending only on $\beta$):
\begin{equation}
\label{LLAlleg}
\log \mathbb{E}\left[ \exp\left( \frac{\beta}{\Cc} \Ener\left(\bX, \sq(x, \ccR) \right) \right) \right] \leq \Cc_\beta \ccR^2
\end{equation}
where $\ccR$ is the lengthscale (for convenience, we switch here to the notation of \cite{armstrong2019local} for scales) and the expectation is taken under the Gibbs measure of the system.}

\cor{Such controls are easier to obtain when $\ccR$ is large, but for many purposes it is crucial to get them for $\ccR$ as small as some constant\footnote{\cor{Getting local laws \emph{down to the microscopic scale} is one of the big improvements of \cite{armstrong2019local} over the “mesoscopic” ones proven in \cite{MR3719060,bauerschmidt2017local}.}} depending only on $\beta$, but not on the size of the system - this is the constant $\rho_\beta$ in the statement. Finally, for technical reasons, local laws happen to be more difficult to prove if the region $\sq(x, \ccR)$ under consideration is close to the boundary of the system, and one usually places a restriction to be “in the bulk” in some sense (our notion of bulk here is, for convenience, much more restrictive than the one in \cite{armstrong2019local}).}

\medskip

\cor{In \cite{armstrong2019local}, the authors consider (one-component) Coulomb gases in dimension $d \geq 2$, which includes the 2DOCP with soft edge (see Section \ref{sec:discussion_model} for a dicussion), and they can also treat the case of a 2DOCP with hard edge and Neumann boundary conditions, which we will not discuss here. The background measure in their case does not have to be uniform.} 

\cor{The proof strategy in \cite{armstrong2019local}, as in \cite{MR3719060,bauerschmidt2017local}, is a bootstrap in scales, starting from the macroscopic one ($\ccR = $ size of the system) and going down to the microscopic one ($\ccR \geq \rho_\beta$, of order $1$).}

\subsubsection*{\cor{Description of the bootstrap.}}
\cor{In a nutshell, the aforementioned “bootstrap in scales” proceeds as follows:
\begin{enumerate}
	\item Start from $\ccR$ of order $T$ (our notation for the total size of the system), use a “global law” of the form:
	\begin{equation*}
\log \mathbb{E}\left[ \exp\left( \frac{\beta}{\Cc} \FL(\bX) \right) \right] \leq \Cc_\beta T^2,
	\end{equation*}
	(see \eqref{eq:global_law_FN} in the case of the 2DOCP) and convert it into an estimate on the global  “electric” energy via identities like \eqref{extendSecondFormulation}.
\item Show that if the local laws \eqref{LLAlleg} hold at scale $\ccR$ with a constant $\Cc_\beta$, then they also hold at scale $\hal \ccR$ \emph{with the same constant $\Cc_\beta$}, as long as $\ccR$ satisfies certain conditions which boil down to $\ccR$ being larger than a certain lengthscale $\rho_\beta$.
\end{enumerate}
This clearly gives the desired result. The difficult point is the second one, it relies on the \emph{screening construction} of Serfaty et al., whose role we now briefly present.
}

\subsubsection*{\cor{The role of screening.}}
\cor{For the purpose of the exposition, let us write the Gibbs density as $\frac{1}{\int e^{-\beta \F(\bX)} \dd \X} e^{-\beta \F(\bX)}$, where $\F$ could be the energy $\FN$ of the 2DOCP as in \eqref{def:FN}, or the interaction energy of a $2d$ Coulomb gas with soft confinement as in \cite{armstrong2019local}, or the energy $\FL$ of our generalized 2DOCP's as in \eqref{def:FLnuW}...}
\newcommand{\Fint}{\F^{\Int}}
\newcommand{\Fext}{\F^{\Ext}}
\cor{Fix some “interior” region $\Int := \sq(x, \ccR)$ on which one wants to prove local laws, i.e. to control the exponential moment:
\begin{equation*}
\E\left[ e^{\frac{\beta}{2} \Ener(\bX, \sq(x, \ccR))} \right] = \frac{1}{\int e^{-\beta \F(\bX)} \dd \X} \int e^{-\beta \F(\bX) + \frac{\beta}{2} \Ener(\bX, \sq(x, \ccR))} \dd \X \overset{?}{\leq} e^{\Cc_\beta \ccR^2}.
\end{equation*}
The main issue here is that $\F(\bX)$ (appearing in both integrands) is a global quantity of order $T^2$ whereas $\Ener(\bX, \sq(x, \ccR))$ is a local quantity (which we hope to show is of order $\ccR^2$) and it is thus hard to have them “play together”.}

\medskip

\cor{Now, assume for a moment that the energy $\F$ splits into something of the form:
\begin{equation}
\label{eq:Ffintfext}
\F(\bX) = \Fint(\bXint) + \Fext(\bXext),
\end{equation}
with $\Fint$ the energy of the configuration in $\Int$ and $\Fext$ the energy in $\Ext := \overline{\Int}$. Then it would be easy to check that we can integrate out the $\Ext$-configuration and obtain:
\begin{equation}
\label{localization}
\mathbb{E}\left[ \exp\left( \frac{\beta}{2} \Fint \right) \right] = \mathbb{E}_{\Int} \left[ \exp\left( \frac{\beta}{2} \Fint \right) \right], 
\end{equation}
where $\mathbb{E}_{\Int}$ is the “local” Gibbs density $\frac{1}{\int e^{-\beta \Fint(\bXint)} \dd \Xint} e^{-\beta \Fint(\bXint)}$. The situation would become much nicer because all the terms now live at the same scale $\ccR$, and we could use the same strategy as when proving the “global” law in order to derive the desired estimate.
}

\medskip

\cor{In fact, in view of the identities \eqref{secondformulation} or \eqref{extendSecondFormulation}, we do have (cf. \eqref{eq:Ffintfext}):
\begin{multline}
\label{presquesplit}
\F(\bX) = \frac{1}{4\pi} \left( \Ener(\bX, \Int) + \Ener(\bX, \Ext) \right) \\
+ \sum_{x \in \bXint} \left(\frac{1}{4\pi} \log \eta(x) - \int_{\DD(x,\eta(x))} \hspace{-0.3cm} \ff_{\eta(x)}(t-x) \dd t\right) + \sum_{x \in \bXext} \left(\frac{1}{4\pi} \log \eta(x) - \int_{\DD(x,\eta(x))} \hspace{-0.3cm} \ff_{\eta(x)}(t-x) \dd t\right),  
\end{multline}
and it thus looks like we can split $\F$ according to $\Int$ and $\Ext$. However $\Ener(\bX, \Int)$ is not a priori equal to $\Ener(\bXint, \Int)$ - the electric field living on $\Int$ depends \emph{on the entire configuration}, not only on $\bXint$. This subtle lack of locality is the central issue here.}

\medskip

\cor{The “screening construction” initiated in \cite{MR3353821} helps us making sense of \eqref{eq:Ffintfext} and \eqref{localization} - in an approximate way. In short, it starts with the data of $\bXext$, and proceeds to placing an arbitrary “nice” configuration $\bXint$ inside of $\Int$ while ensuring that (cf. \eqref{eq:Ffintfext}):
\begin{equation}
\label{PreScr}
\Fint(\bXint) + \Fext(\bXext) \leq \F(\bXint + \bXext) \leq \Fint(\bXint) + \Fext(\bXext) + \ErrorScr(\ccR), 
\end{equation}
with $\ErrorScr(\ccR)$ the screening error at scale $\ccR$. We will not define $\Fint$, $\Fext$ here, but these are “localized” versions of $\frac{1}{4\pi} \Ener(\bX, \Int)$, $\frac{1}{4\pi} \Ener(\bX, \Ext)$ (cf. \eqref{presquesplit}).}

\medskip

\cor{Obtaining \eqref{PreScr} allows us to get something of the form (cf. \eqref{localization}):
\begin{equation}
\label{PreScrCons}
\mathbb{E}\left[ \exp\left( \frac{\beta}{2} \Fint \right) \right] \leq  \mathbb{E}_{\Int} \left[ \exp\left( \frac{\beta}{2} \Fint \right) \right] \times \exp\left(\beta \times \text{A bound on } \ErrorScr(\ccR) \right),
\end{equation}
the interested reader might look e.g. at \cite[(4.30)]{armstrong2019local} for a precise formulation.}

\medskip

\cor{In fact, the screening error term in \eqref{PreScr} is not bounded uniformly with respect to $\bXext$, but it can be bounded \emph{if we know that the energy of $\bXext$ in a slightly larger square $\sq(x, 2 \ccR)$ obeys a certain bound} (this follows from the screening construction itself, which is both quite involved and quite explicit). \\ The bootstrap in scales can now take place, by saying at each step:
\begin{enumerate}
	\item We know that the local laws hold at scale $2 \ccR$ i.e. we control the typical energy of $\bXext$ in $\sq(x, 2 \ccR)$.
	\item Hence we can control the typical screening error $\ErrorScr(\ccR)$ at scale $\ccR$ appearing in \eqref{PreScr}.
	\item We then run the screening construction and get \eqref{PreScrCons}. We deduce the local laws at scale $\ccR$.
\end{enumerate}
}

\subsubsection*{\cor{In which way(s) is our situation different.}}
Compared to \cite{armstrong2019local}, there are two main differences:
\begin{enumerate}
	\item We are in presence of a background measure $\nn \muW$ \cor{which is non-uniform in a strange way: it is  uniform as soon as one is far enough from the boundary, but also has holes near the boundary and a singular component on the boundary, as described in Proposition \ref{prop:correspondence}.}
	\item The \emph{global law} that we were able to derive is as in Proposition \ref{prop:global_law_SUBSYS}, with a global energy estimate that is slightly larger than the total volume. \cor{We thus start the bootstrap in scales with an information of the form:
\begin{equation}
\label{instead}
\text{Energy at scale $T$} \leq \Cc_\beta T^2 \log^5 T \text{ instead of } \leq \Cc_\beta T^2,
\end{equation}
(in exponential moments) and yet we would like to obtain, for $\ccR \leq \hal T$: 
\begin{equation}
\label{andyet}
\text{Energy at scale $\ccR$} \leq \Cc_\beta \ccR^2.
\end{equation}}
\end{enumerate}
Our goal is thus twofold:
\begin{enumerate}
	\item \cor{Check that the method of \cite{armstrong2019local} still applies with our background measure.}
	\item Show that the bootstrap in scale not only \emph{propagates} \cor{good energy estimates} to smaller scales, but in fact \emph{improves} (if needed) the estimates at each step, which allows to get rid of the logarithmic correction \cor{that we start with in \eqref{instead}.}
\end{enumerate}

The first point is, in fact, not an issue. Indeed, although the methods of \cite{armstrong2019local} are \emph{not} suited to situations where the background measure has singularities or holes (in fact this is the reason why local laws are not proven near the edge), they are local by design and they will apply readily as soon as we look at distances $\geq \Cc \log T$ from the boundary, because then the background measure $\nn \muW$ \cor{has a constant density} equal to $1$ (see Section \ref{sec:good_properties}). Since we only care about the bulk of $\La$ \cor{in our statement}, this is fine. \cor{It remains to understand why starting with \eqref{instead} still leads to \eqref{andyet}.}

\subsubsection*{\cor{Why the bootstrap still works.}}
\cor{The main screening error $\ErrorScr$ in \eqref{PreScr} is the first term in \cite[(4.7)]{armstrong2019local} which reads “$\frac{\ell}{\tell} S$”, where $\ell, \tell$ are parameters and $S$ depends on the configuration $\bXext$. More precisely:
\begin{itemize}
	\item The quantity $\ell$ must be such that $\ell^3 \geq \frac{S}{\tell}$ (\cite[(4.4)]{armstrong2019local}).
	\item The quantity $\tell$ can be chosen\footnote{The interested reader can find a list of conditions on $\ell, \tell$ at the end of the proof of \cite[Prop. 4.5]{armstrong2019local}. The choice of $\tell$ mentioned in \cite[(4.27)]{armstrong2019local} correspond to the \emph{smallest} possible choice, but increasing $\tell$ up to $\mathcal{R}$ still gives a valid choice.} of the same order at the scale $\ccR$.
	\item The quantity $S$ is bounded by the energy at scale $2 \ccR$.
\end{itemize}
so in fact the screening error can be controlled by: 
\begin{equation}
\label{theErrorScr}
\ErrorScr(\ccR) \preceq \left( \frac{S}{\ccR} \right)^{4/3} \preceq \left( \frac{\text{Energy at scale $2 \ccR$}}{\ccR} \right)^{4/3}. 
\end{equation}
In particular, if we know (by a previous step of the bootstrap) that the energy at scale $2 \ccR$ is typically smaller than $\Cc_\beta \times (2 \ccR)^2$, we get an error of order $\ccR^{4/3} \ll \mathcal{R}^2$ which is indeed negligible\footnote{This can in fact be improved further by chosing $\ell, \tell$ more cleverly, see the proof of \cite[Prop. 2.5]{armstrong2019local}.}}.

\medskip

\cor{For our purposes, the key observation is that even if we start from a poorer estimate\footnote{Having a poorer estimate is equivalent to having the parameter $\mathcal{C}$ in \cite[(4.25)]{armstrong2019local} depend on $\mathcal{R}$. The conditions written at the end of the proof of \cite[Proposition 4.5]{armstrong2019local} can still be satisfied as long as $\mathcal{C}$ is much smaller than $\mathcal{R}$, which corresponds to an initial energy estimate in $o(\mathcal{R}^3)$. \\ If $\mathcal{C}$ is smaller than $\mathcal{R}^{3/2}$ then the error term will become smaller than $\mathcal{R}^2$ \emph{in one step}, otherwise one would need to apply the bootstrap in scales for some time (i.e. go down in scales) before reaching the desired local laws. Our situation corresponds to $\mathcal{C} = \log^5 \mathcal{R}$, cf. \eqref{global_law_SS_Statement}.} on the energy at scale $2 \ccR$, for example one with some big logarithmic corrections of the form (cf. our global law \eqref{global_law_SS_Statement}):
\begin{equation*}
\text{Energy at scale $2 \ccR$} \leq \left(2\ccR \right)^2 \log^{100} (\ccR)
\end{equation*}
(say, in exponential moments) then the right-hand side of \eqref{theErrorScr} \emph{remains much smaller than $\mathcal{R}^2$}.} 

\medskip

\cor{Returning to \eqref{PreScrCons}, we see that even if we start from \eqref{instead} then after \emph{a single step} of the bootstrap, i.e. when working at the scale $\ccR = \hal T$, we have something of the form:
\begin{equation*}
\mathbb{E}\left[ \exp\left( \frac{\beta}{2} \Fint \right) \right] \leq  \mathbb{E}_{\Int} \left[ \exp\left( \frac{\beta}{2} \Fint \right) \right] \times \exp\left(\beta \times o(T^2) \right),
\end{equation*}
and thus the “quality” of our local law depends entirely on how good of an estimate we can get for $\mathbb{E}_{\Int} \left[ \exp\left( \frac{\beta}{2} \Fint \right) \right]$. But we are now at scale $\hal T$ so “in the bulk”, where the background measure has density $1$, and the standard techniques apply.
}

\medskip

\cor{ In fact, a careful inspection of the proof (which we will not need here) reveals that one could start with a “global law” as poor as $\O(T^{5/2})$, or even $o(T^3)$ instead of \eqref{global_law_SS_Statement} and still recover good local laws, possibly after a performing a certain number of steps of the bootstrap.
}

\section{Effect of localized translations: Proof of Proposition \ref{prop:effectPhiEnergy}}
\label{sec:EffetTranslaEnerPreuve}
In all this section, $\epsilon$ and $\ell$ are fixed and we assume that $|t| \leq \frac{\ell}{10}$. To lighten notation, we will drop some dependencies with respect to $\epsilon$ and $\ell$. \cor{The “universal” constants $\Cc$ in this section depend of course on our construction of the flow, but not on the other parameters ($t, \ell, x$ etc.).}

\subsubsection*{\cor{Some reminders}}
\cor{In Section \ref{sec:slowvartrans}, we have defined a flow  $\{\Phit\}_t$ for $|t| \leq \frac{\ell}{10}$, such that (see Lemma \ref{lemma_propPhi}):
\begin{enumerate}
	\item $\Phit$ is an area-preserving diffeomorphism of $\R^2$.
	\item $\Phit(x) = x + t \vu$ for $|x| \leq \ell/4$.
	\item  $\Phit(x) = x$ for $|x| \geq 2 \ell \eep $.
\end{enumerate}
The defining property of $\Phit$ is that it solves the ODE \eqref{eq:ODEPhit} $\partial_t \Phit = \Wel \circ \Phit$.
}

\subsection{Some additional properties of localized translations}
\label{study_of_phit}
\renewcommand{\Psit}{\psi_t}
\renewcommand{\Phit}{\Phi_t}
\newcommand{\rt}{\gamma_t}
\newcommand{\rmt}{\gamma_{-t}}
We decompose $\Phit$ in two ways, either as:
\begin{equation}
\label{redefPsit}
\Phit(x) = x + \Psit(x),
\end{equation}
which defines a vector field $\Psit$, or alternatively as:
\begin{equation}
\label{defrt}
\Phit(x) = x + t \Wel(x) + \rt(x),
\end{equation}
which defines a vector field $\rt$.

\begin{proposition}
\label{prop:taut_prop}
\begin{enumerate}
    \item We have $|\Phit - \Id|_\0 = |\Psit|_\0 \leq 2 |t| \leq \frac{\ell}{5}$.
    \item For $|x| \leq \ell/4$, we have $|\Psit|_{\1, \star}(x) = 0$, $|\Psit|_{\2, \star}(x) = 0$.
    \item For $\ell/4 \leq |x| \leq  2 \ell \eep$, we have:
\begin{equation}
\label{Psit1X}
    |\Psit|_{\kk, \star}(x) \leq \frac{\Cc t \epsilon}{|x|^\kk} \text{ for } \kk = 1, 2.
\end{equation}
\item For $|x| \leq \ell/4$, we have $\rt \equiv 0$.
\item For $\ell/4 \leq |x| \leq  2 \ell \eep$, we have:  
\begin{equation}
\label{Psit2X}
  |\rt|_{\kk, \star}(x) \leq \frac{\Cc t^2 \epsilon}{|x|^{\kk + 1}}, \text{ for } \kk = 0, 1, 2, 3.
\end{equation}
\end{enumerate}
\end{proposition}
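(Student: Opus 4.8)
The plan is to read everything off the flow equation $\tfrac{d}{ds}\Phit(x)=\Wel(\Phit(x))$, $\Phi_0=\Id$, together with the pointwise derivative bounds on the spin wave. Rescaling Lemma~\ref{lem:spinwave} gives $|\Wel|_{\kk,\star}(y)\preceq\epsilon\max(\ell,|y|)^{-\kk}$ (constant depending on $\kk$); the lemma records this only for $\kk=1,2$, but the explicit mollified construction of $\fbep$ supplies it at every order, and I will need it up to order $4$ or $5$ — this is the only input beyond the excerpt. Items (1), (2) and (4) are then immediate. For (1): $\Psit(x)=\int_0^t\Wel(\Phi_s(x))\,ds$ and $|\Wel|_\0=|\Wep|_\0\leq 2$, so $|\Psit|_\0\leq 2|t|\leq\ell/5$. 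For (2) and (4): if $|x|\leq\ell/4$ then, exactly as in the proof of Lemma~\ref{lemma_propPhi}, the trajectory $s\mapsto\Phi_s(x)$ stays inside $\DD(0,\ell/2)$ for $|s|\leq\ell/10$, a region where $\Wel\equiv\vu$; hence $\Phit$ agrees with the genuine translation $x\mapsto x+t\vu$ on a neighbourhood of $\DD(0,\ell/4)$, so $\Psit\equiv t\vu$ is locally constant there (all its derivatives vanish), and $\rt=\Psit-t\Wel=t\vu-t\vu=0$.

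For the decay estimate (3) I would use the variational equations. Fix $\ell/4\leq|x|\leq 2\ell\eep$ and $|s|\leq\ell/10$; since $|\Phi_s(x)-x|\leq 2|s|\leq\ell/5\leq\frac45|x|$, the whole trajectory obeys $\frac15|x|\leq|\Phi_s(x)|\leq 2|x|$, so $|\Wel|_{\kk,\star}(\Phi_s(x))\preceq\epsilon/|x|^{\kk}$ along it. In particular $\int_0^{|t|}\|D\Wel(\Phi_u(x))\|\,du\preceq\epsilon|t|/|x|\preceq\epsilon$, so Grönwall applied to $\tfrac{d}{ds}D\Phi_s=D\Wel(\Phi_s)D\Phi_s$ — and to the higher variational equations for $D^2\Phi_s$ and $D^3\Phi_s$ — yields the uniform bounds $\|D^m\Phi_s(x)\|\preceq 1$ for $m=1,2,3$. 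Feeding these back into the Duhamel identity $D\Phit(x)=\Id+\int_0^t D\Wel(\Phi_s)D\Phi_s\,ds$ and its $x$-derivative gives $|D^{\kk}\Psit(x)|=|D^{\kk}\Phit(x)|\preceq\int_0^{|t|}\epsilon/|x|^{\kk}\,ds\preceq\epsilon|t|/|x|^{\kk}$ for $\kk=1,2$, which is (3).

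Item (5) is the substantive one. I would use $\rt(x)=\Phit(x)-x-t\Wel(x)=\int_0^t\big(\Wel(\Phi_s(x))-\Wel(x)\big)\,ds$. The gain of a factor $t/|x|$ relative to the naive bound comes from the increment being itself small: $\Wel(\Phi_s(x))-\Wel(x)=\int_0^s D\Wel(\Phi_u(x))\cdot\Wel(\Phi_u(x))\,du$, so $|\Wel(\Phi_s(x))-\Wel(x)|\preceq\epsilon|s|/|x|$ and $|\rt|_{\0,\star}(x)\preceq\epsilon t^2/|x|$. For $\kk=1,2,3$ I would differentiate under the integral, expand $D^{\kk}\big[\Wel\circ\Phi_s\big]$ by Faà di Bruno, and split: the leading term $D^{\kk}\Wel(\Phi_s)[D\Phi_s]^{\otimes\kk}$ is compared with $D^{\kk}\Wel(x)$ — writing $D^{\kk}\Wel(\Phi_s(x))-D^{\kk}\Wel(x)=\int_0^s D^{\kk+1}\Wel(\Phi_u)\cdot\Wel(\Phi_u)\,du$ again produces the factor $|s|/|x|$, while the leftover $D^{\kk}\Wel(x)\{[D\Phi_s]^{\otimes\kk}-\Id^{\otimes\kk}\}$ carries an extra $\|D\Psit\|\preceq\epsilon|s|/|x|$ — and every other Faà di Bruno term contains at least one factor $D^m\Phi_s=D^m\Psit$ with $m\geq 2$, already $O(\epsilon|s|/|x|^{m})$ by (3), hence smaller still. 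Using the uniform bounds $\|D^{\leq 3}\Phi_s\|\preceq 1$ from the Grönwall step, in every case the $s$-integrand is $\preceq\epsilon|s|/|x|^{\kk+1}$, so $|D^{\kk}\rt(x)|\preceq\epsilon t^2/|x|^{\kk+1}$ for $\kk=0,1,2,3$.

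I expect the main obstacle to be organizational rather than conceptual: carrying the variational equations for $\Phit$ up to third order and running the Faà di Bruno expansion of $\Wel\circ\Phi_s$ cleanly, while keeping track of two bookkeeping facts at once — (i) that the trajectory $\Phi_s(x)$ stays at distance $\gtrsim|x|$ from the origin, which is what makes the $\epsilon/|x|^{\kk}$ decay of the derivatives of $\Wel$ usable, and (ii) that the $O(t^2)$ cancellation built into $\rt$ survives differentiation, i.e. that each surviving term genuinely contains either an explicit increment $\Wel(\Phi_s)-\Wel(x)$ or a higher derivative $D^m\Psit$ ($m\geq 2$), both of which contribute the extra $|s|/|x|$ that converts the naive $\epsilon|t|/|x|^{\kk}$ into $\epsilon t^2/|x|^{\kk+1}$. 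The single input used from outside the excerpt — the bounds of Lemma~\ref{lem:spinwave} at a couple of higher orders — is immediate from the explicit mollified construction of $\fbep$.
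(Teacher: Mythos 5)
Your proposal follows essentially the same route as the paper: the same flow/variational-equation setup, the same Grönwall-type differential inequalities for $\Dd^\kk\Phit$ (the paper's Claim~\ref{claim:control_DdkPhit}), the same reading-off of items (1), (2), (4) from the explicit region where $\Wel$ is constant, and the same integral formula $\rt = \int_0^t\big(\Wel(\Phi_s)-\Wel(x)\big)\dd s$ for item (5) with a second-order gain from the increment; the only cosmetic difference is that you produce the increment bound via $\int_0^s \Dd\Wel(\Phi_u)\cdot\Wel(\Phi_u)\dd u$ where the paper uses the Lipschitz constant of $\Wel$ times the displacement $|\Psit|_\0$, and that for higher derivatives of $\rt$ you invoke Faà di Bruno where the paper just says ``Leibniz's rule together with Claims~\ref{claim:WAtPhitx} and~\ref{claim:control_DdkPhit}.'' You also correctly flag (as the paper does implicitly) that third- and fourth-order bounds on $\Wel$ beyond those stated in Lemma~\ref{lem:spinwave} are needed and follow from the explicit mollified construction of $\fbep$.
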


\begin{proof}[Proof of Proposition \ref{prop:taut_prop}]
Let us start by some general observations.

\subsubsection*{\cor{Step 1.} Preliminary claims.}
Since the vector field $\Wel$ is bounded by $2$ (see Lemma \ref{lem:spinwave}) the distance $|\Phit(x) - x|$ is always smaller than $2|t|$ and thus $|\Psit|_\0 \leq 2 |t|$. This proves the first item.  On the other hand, as stated in Lemma \ref{lemma_propPhi}, the flow $\Phit$ acts as a translation on $\DD(0, \ell/4)$, which means that $\Psit$ is a constant and thus the derivatives of $\Psit$ vanish identically there. This proves the second item, and also implies that the second-order correction $\rt$ vanishes identically on $\DD(0, \ell/4)$.
\medskip
In view of the bounds on $\Wep$ given in Lemma \ref{lem:spinwave} and the definition of $\Wel := \Wep(\cdot / \ell)$, after scaling we obtain for $\kk = 1, 2$:
\begin{equation*}
|\Wel|_{\kk, \star}(x) \leq \Cc \epsilon 
\begin{cases} \frac{1}{|x|^\kk} & \text{ for } |x| \geq 2 \ell \\
\frac{1}{|\ell|^\kk} & \text{ for } |x| \leq 2 \ell. 
\end{cases}
\end{equation*}
Thus we may always write that $|\Wel|_{\kk, \star}(x) \leq \Cc \epsilon \frac{1}{|x|^\kk}$. 

\begin{claim} 
\label{claim:WAtPhitx}
If $|x| \geq \frac{\ell}{4}$ then:
\begin{equation}
\label{theboundWel}
|\Wel|_{\kk, \star}(\Phit(x)) \leq \Cc \epsilon \frac{1}{|x|^\kk}
\end{equation}
\end{claim}
\begin{proof}
Since $|t| \leq \frac{\ell}{10}$ and $|\Phit(x) - x| \leq 2|t|$, we can ensure that if $|x| \geq \frac{\ell}{4}$, then $|\Phit(x)| \geq \frac{1}{5} |x|$, so in view of the previous estimates on $\Wel$, we obtain the claim.
\end{proof}

Let us end this paragraph with a simple general fact which will be useful to prove the remaining bounds:
\begin{equation}
\label{ddfnorm}
\frac{\dd}{\dd t} \|f(t)\| \leq \| \frac{\dd}{\dd t} f(t)\|.
\end{equation}

\subsubsection*{\cor{Step 2.} Initial controls on $\Dd^\kk \Phit$.}
\begin{claim}
\label{claim:control_DdkPhit}
We have, for $|x| \geq \frac{\ell}{4}$:
\begin{equation}
\label{eq:DDkPhit}
\|\Dd \Phit(x) \| \leq 1 + \Cc \frac{\epsilon t}{|x|}, \quad \|\Dd^\kk \Phit(x) \| \leq \Cc \frac{\epsilon t}{|x|^\kk} \text{ for $\kk = 2, 3$}
\end{equation}
\end{claim}
\newcommand{\ddt}{\frac{\dd}{\dd t}}

\begin{proof}[Proof of Claim \ref{claim:control_DdkPhit}]
At $t = 0$ we have $\Phit = \Id$ and thus $\Dd \Phit \equiv \Id$. Then we can compute:
\begin{equation*}
\ddt \| \Dd \Phit(x) \| \leq \| \ddt \Dd \Phit(x) \| = \| \Dd \left(\Wel \circ \Phit(x) \right) \| \leq \| \Dd \Wel \circ \Phit(x) \| \times \| \Dd \Phit(x) \|. 
\end{equation*}
Using \eqref{theboundWel} and \cor{Grönwall's inequality}, we see that:
\begin{equation*}
 \| \Dd \Phit(x) \|  \leq e^{\frac{\Cc \epsilon t}{|x|}} = 1 + \Cc' \frac{\epsilon t}{|x|}.
\end{equation*}
Arguing similarly, we have: $\Dd^2 \Phit = 0$ at $t=0$, and:
\begin{equation*}
\ddt \| \Dd^2 \Phit(x) \| \leq \| \ddt \Dd^2 \Phit(x) \| = \| \Dd^2 \left(\Wel \circ \Phit(x) \right) \|.
\end{equation*}
Applying Leibniz's rule, using the previous bound under the rough form $\|\Dd \Phit\| \leq \Cc$ and inserting \eqref{theboundWel} again, we obtain:
\begin{equation*}
\ddt \| \Dd^2 \Phit(x) \| \leq \frac{\Cc \epsilon}{|x|^2} + \frac{\Cc \epsilon}{|x|} \times   \| \Dd^2 \Phit(x) \|.
\end{equation*}
Solving again the differential inequality with initial condition $0$, we obtain:
\begin{equation*}
\| \Dd^2 \Phit(x) \| \leq \Cc \frac{\epsilon t}{|x|^2}
\end{equation*}
We proceed the same way for the third derivative.
\end{proof}

\subsubsection*{\cor{Step 3. Proof of \eqref{Psit1X} (properties of $\Psit$).}}
Returning to the definition \eqref{redefPsit} of $\Psit$, we see that $\Dd \Psit = 0$ at $t = 0$ and we may then compute:
\begin{equation*}
\ddt \| \Dd \Psit(x) \| \leq \| \ddt \Dd \Psit(x) \| = \| \Dd \left(\Wel \circ \Phit(x) \right) \| \leq | \Dd \Wel \circ \Phit(x) \| \times \| \Dd \Phit(x) \|. 
\end{equation*}
By the same computations as above, we find $\ddt \| \Dd \Psit(x) \| \leq \Cc \frac{\epsilon}{|x|}$ which implies the bound \eqref{Psit1X} for $\kk = 1$.

To study $|\Psit|_\2$, observe that $\Dd^2 \Phit = \Dd^2 \Psit$ and use \eqref{eq:DDkPhit}.


\subsubsection*{\cor{Step 4. Proof of \eqref{Psit2X} (properties of $\rt$)}.}
If we want to control, say $|\rt|_\0$, we compute:
\begin{equation*}
\left| \ddt \rt \right| = \left| \Wel \circ \Phit(x) - \Wel(x) \right| \leq \Cc \frac{\epsilon |\Psit|_\0}{|x|} \leq \Cc \frac{\epsilon t}{|x|},
\end{equation*}
where we used the bound on $|\Psit|_\0$ stated as the first item of Proposition \ref{prop:taut_prop} together with Claim \ref{claim:WAtPhitx} in order to bound the Lipschitz constant of $\Wel$ between $x$ and $\Phit(x)$. Integrating on $t$ yields $|\rt|_\0 \preceq \epsilon t^2 |x|^{-1}$ as claimed.

Higher derivatives are controlled the same way, using Leibniz's rule together with Claims~\ref{claim:WAtPhitx} and~\ref{claim:control_DdkPhit}.
\end{proof}

\subsubsection*{Upper bounds on the derivatives of $\Psit$.} 
\newcommand{\PsiU}{\Psi_{1}}
\newcommand{\PsiD}{\Psi_{2}}
Let us introduce two functions $\PsiU, \PsiD$:
\begin{equation}
\label{def:PsiUD}
\PsiU : x \mapsto \frac{\Cc t \epsilon}{(\ell + |x|)}, \quad \PsiD : x \mapsto \frac{\Cc t \epsilon}{(\ell + |x|)^2}
\end{equation}
Choosing the constant $\Cc$ suitably we have, in view of Proposition \ref{prop:taut_prop}, the pointwise bounds:
\begin{equation*}
|\Psit|_{\1, \star}(x) \leq \PsiU(x), \quad |\Psit|_{\2, \star}(x) \leq \PsiD(x), \text{ for all } x \in \R^2,
\end{equation*}
thus when looking for \emph{upper bounds} we may replace occurrences of $|\Psit|_{\1, \star}, |\Psit|_{\2, \star}$ by $\PsiU, \PsiD$. The upside of working with $\PsiU, \PsiD$ is that they enjoy the following properties deduced from elementary calculus:
\begin{itemize}
\item (Slow variation at scale $\ell$.) For all $x$ we have:
\begin{equation}
\label{eq:Psi_slowvar_l}
\sup_{y, |x-y| \leq \ell} \frac{|\PsiU(y)|}{|\PsiU(x)|} \leq 2, \quad \sup_{y, |x-y| \leq \ell} \frac{|\PsiD(y)|}{|\PsiD(x)|} \leq 2.
\end{equation}
\item (Slow variation.) For all $x$ we have:
\begin{equation}
\label{eq:Psi_slowvar_}
\sup_{y, |x-y| \leq \hal |x|} \frac{|\PsiU(y)|}{|\PsiU(x)|} \leq 2, \quad \sup_{y, |x-y| \leq \hal |x|} \frac{|\PsiD(y)|}{|\PsiD(x)|} \leq 2.
\end{equation}
\end{itemize}
Both \eqref{eq:Psi_slowvar_l} and \eqref{eq:Psi_slowvar_} will be convenient to simplify computations later on. We also record the following simple facts:
\begin{claim}  
\label{claim:simple_facts}
If $\epsilon$ is chosen smaller than some universal constant, then:
\begin{itemize}
\item $|\Psit|_{\1} \leq \frac{1}{5}$ (globally).
\item For all $x,x'$ we have:
\begin{equation*}
\hal |x-x'| \leq |\Phit(x) - \Phit(x')| \leq 2 |x-x'|. 
\end{equation*}
\item For all $x$ we have:
\begin{equation}
\label{PsitPhitPsiU}
|\Psit|_{\1, \star}(\Phit(x)) \leq 2 \PsiU(x), \quad |\Psit|_{\2, \star}(\Phit(x)) \leq 2 \PsiD(x).
\end{equation}
    \end{itemize}
\end{claim}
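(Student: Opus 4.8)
The plan is to reduce every assertion of the claim to Proposition~\ref{prop:taut_prop} and to the elementary properties of the majorant functions $\PsiU,\PsiD$ of \eqref{def:PsiUD}; no substantial argument is needed, so the only genuine ``obstacle'' is bookkeeping of the universal constant $\Cc$ and fixing the smallness threshold on $\epsilon$ once and for all.

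First I would prove the global bound $|\Psit|_{\1}\le\frac15$. Using the pointwise majoration $|\Psit|_{\1,\star}(x)\le\PsiU(x)$ recorded just before the claim, together with the explicit form $\PsiU(x)=\Cc t\epsilon/(\ell+|x|)\le\Cc t\epsilon/\ell$ and the standing hypothesis $|t|\le\ell/10$, one gets $|\Psit|_{\1,\star}(x)\le\Cc\epsilon/10$ for every $x$, hence $|\Psit|_{\1}\le\Cc\epsilon/10$. Choosing $\epsilon$ smaller than the universal constant $2/\Cc$ makes this $\le\frac15$, which is the first bullet. (Alternatively one can argue directly: by the second and third items of Proposition~\ref{prop:taut_prop} the differential of $\Psit$ vanishes on $\DD(0,\ell/4)$, vanishes outside $\DD(0,2\ell\eep)$ since there $\Phit=\Id$, and is bounded by $\Cc t\epsilon/|x|\le 4\Cc t\epsilon/\ell$ on the intermediate annulus.)

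Second, the bi-Lipschitz bounds follow at once: writing $\Phit=\Id+\Psit$ and invoking the mean-value inequality with the bound just obtained, $\Psit$ is $\frac15$-Lipschitz, so for all $x,x'$ we have $|\Phit(x)-\Phit(x')|\ge|x-x'|-|\Psit(x)-\Psit(x')|\ge\frac45|x-x'|\ge\hal|x-x'|$ and likewise $|\Phit(x)-\Phit(x')|\le\frac65|x-x'|\le 2|x-x'|$.

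Third, for \eqref{PsitPhitPsiU} I would again use the pointwise majorations, this time evaluated at the point $\Phit(x)$: $|\Psit|_{\1,\star}(\Phit(x))\le\PsiU(\Phit(x))$ and $|\Psit|_{\2,\star}(\Phit(x))\le\PsiD(\Phit(x))$. Since $|\Phit(x)-x|=|\Psit(x)|\le 2|t|\le\ell/5\le\ell$ by the first item of Proposition~\ref{prop:taut_prop}, the ``slow variation at scale~$\ell$'' property \eqref{eq:Psi_slowvar_l} yields $\PsiU(\Phit(x))\le 2\PsiU(x)$ and $\PsiD(\Phit(x))\le 2\PsiD(x)$, which is exactly the last bullet. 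The only place where one must be slightly careful is that the Lipschitz estimate in the second bullet genuinely relies on the \emph{global} bound of the first bullet rather than merely on a pointwise estimate; everything else is a direct substitution.
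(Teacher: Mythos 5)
Your proof is correct and follows essentially the same route as the paper's: all three bullets reduce to Proposition \ref{prop:taut_prop} (specifically \eqref{Psit1X} for the first, hence the second), together with the majorants $\PsiU,\PsiD$ and the slow-variation property \eqref{eq:Psi_slowvar_l} for \eqref{PsitPhitPsiU}. Your remark that the bi-Lipschitz bound needs the global (not pointwise) bound on $|\Psit|_\1$ is accurate, and your "alternative" direct argument is exactly the paper's reference to \eqref{Psit1X}.
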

\begin{proof}[Proof of Claim \ref{claim:simple_facts}]
The first item follows directly from Proposition \ref{prop:taut_prop} (see in particular \eqref{Psit1X}). It implies the second item straightforwardly. To prove \eqref{PsitPhitPsiU} let us recall that $|\Phit(x) - x| \leq 2|t| \leq \frac{\ell}{5}$ (by the first item of Proposition \ref{prop:taut_prop} and the assumption $|t| \leq \frac{\ell}{10}$) and that $|\Psit|_{\1, \star}(\Phit(x)) \leq \PsiU(\Phit(x))$ by construction (similarly for $|\Psit|_{\2, \star}, \PsiD$). We thus have:
\begin{equation*}
|\Psit|_{\1, \star}(\Phit(x)) \leq \PsiU(\Phit(x)) \leq \sup_{|y-x| \leq \frac{\ell}{10}} \PsiU(y),
\end{equation*}
which is smaller than $2 \PsiU(x)$ according to \eqref{eq:Psi_slowvar_l}, as desired.
\end{proof}

\subsection{The “well-spread” event}
\label{sec:well_spread}
\begin{definition}[The $\WS$ event]
\label{defi:WS}
Let $\Omega$ be some subset of $\R^2$, let $\ell \geq 1$ be a length-scale, let $K \geq 10$. We define the event $\WS(\Omega, \ell, K)$ as follows:
\begin{equation}
\label{def:WS}
\WS(\Omega, \ell, K) := \bigcap_{x \in (\ell \Z)^2 \cap \Omega} \left\lbrace \Points(\bX, \sq(x, \ell)) \leq K \ell^2 \right\rbrace \cap \left\lbrace \Ener(\bX, \sq(x, \ell)) \leq K \ell^2 \right\rbrace.
\end{equation}
\end{definition}
Saying that $\bX \in \WS(\Omega, \ell, K)$ essentially means that if we cover $\Omega$ by squares of side-length $\ell$, then the number of points and the electric energy in each square are of order $\ell^2$, which is what we expect in view of the local laws. It has the following consequences:

\begin{enumerate}
    \item If $\bX \in \WS(\Omega, \ell, K)$ then for each $x \in (\ell \Z)^2 \cap \Omega$ the quantity $\Ener_s(\bX, \sq(x, \ell))$ (as defined in \eqref{def_Ener}) is bounded by $K \ell^2 (1 + \log s)$ for $s \in (0,1)$.
    \item Let us say that a function $f$ varies slowly at scale $\ell$ when:
\begin{equation}
\label{slowlyvar} 
\sup_{|x' - x| \leq 2 \ell} |f(x')| \leq 10 |f(x)|. 
\end{equation}
Then if $f$ satisfies \eqref{slowlyvar} and is supported in $\Omega$, assuming that $\bX \in \WS(\Omega, \ell, K)$ allows us to make two computational simplifications:
\begin{enumerate}
    \item An energy density upper bound, namely (for $s \in (0,1)$):
    \begin{equation}
    \label{eq:EnerDensUB}
    \left| \int f(x) |\nabla h_{s \vr}|^2 \dd x \right| \preceq K \|f\|_{L^1} (1 + |\log s|). 
    \end{equation}
    \item Sum-integral comparisons:
    \begin{equation}
    \label{eq:SumIntegral}
     \left| \sum_{x \in \bX} f(x) \dd x \right| \preceq K \|f\|_{L^1}. 
    \end{equation}
\end{enumerate}
(To prove \eqref{eq:EnerDensUB} and \eqref{eq:SumIntegral}, cover the support of $f$ by squares of sidelength $\ell$ and use \eqref{slowlyvar}).
\end{enumerate}

\begin{lemma}[The Well-Spread event is frequent]
\label{lem:WS_is_frequent}
For $K$ larger than some constant (depending on $\beta$ and the “good external potential” constant $\bCc$) and if $\epsilon \ell^2 \geq 1$ (which is guaranteed by \eqref{eq:ellrhoepsilon}), we have:
\begin{equation}
\label{eq:PnLLgoodgood}
\PnLL \left( \WS(\Lab, \ell, K) \big| \EE_\La   \right) \geq 1 - \exp\left( - \ell^2  \right)
\end{equation}
\end{lemma}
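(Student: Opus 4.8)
The plan is a routine Chernoff-plus-union-bound argument, with the sub-system local laws of Proposition \ref{prop:local_law_SUBSYS} doing all the work. First I would fix the grid $\mathcal{Q} := (\ell\Z)^2 \cap \Lab$ indexing $\WS(\Lab,\ell,K)$; since $\Lab = \DD(\omega, T/2)$ has area $\O(T^2)$ we have $|\mathcal{Q}| = \O(T^2/\ell^2) \leq T^2$. For every $x \in \mathcal{Q}$ the square $\sq(x,\ell)$ is contained in $\DD(\omega, T/2 + \ell) \subset \La$ (using $\ell \ll T$, which follows from \eqref{eq:ellrhoepsilon}) and sits at distance $\gtrsim T$ from $\partial\La$, hence well inside the region where $\nn\muW$ has constant density $1$; consequently the conclusions of Proposition \ref{prop:local_law_SUBSYS} are available for $\sq(x,\ell)$. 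For those $x \in \mathcal{Q}$ whose square straddles $\partial\Lab$ this uses the harmless remark — already made in the proof of Proposition \ref{prop:local_law_SUBSYS} — that the local-law machinery only requires distance $\gtrsim \log T$ from $\partial\La$, not genuine inclusion in $\Lab$.

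Next, for a fixed $x \in \mathcal{Q}$, I would apply exponential Markov's inequality to the two exponential-moment bounds of Proposition \ref{prop:local_law_SUBSYS}, obtaining
\begin{align}
\PnLL\!\left[ \Ener(\bXn, \sq(x,\ell)) > K\ell^2 \,\big|\, \EE_\La \right] &\leq e^{-\frac{\beta}{2} K \ell^2}\, \EnLL\!\left[ e^{\frac{\beta}{2}\Ener(\bXn, \sq(x,\ell))} \,\big|\, \EE_\La \right] \leq e^{\beta \ell^2\left(\Cloc - \frac{K}{2}\right)}, \\
\PnLL\!\left[ \Points(\bXn, \sq(x,\ell)) > K\ell^2 \,\big|\, \EE_\La \right] &\leq e^{-\frac{\beta}{\Cc} K \ell^2}\, \EnLL\!\left[ e^{\frac{\beta}{\Cc}\Points(\bXn, \sq(x,\ell))} \,\big|\, \EE_\La \right] \leq e^{\beta \ell^2\left(\Cloc - \frac{K}{\Cc}\right)},
\end{align}
where $\Cc \geq 1$ is the universal constant of Proposition \ref{prop:local_law_SUBSYS} and $\Ener$ is computed with respect to the background $\nn\muW$. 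Choosing $K \geq 4\Cc\,\Cloc$ makes both right-hand sides at most $\exp\!\left(-\tfrac{\beta K}{4\Cc}\ell^2\right)$.

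A union bound over the $|\mathcal{Q}| \leq T^2$ squares then gives
\begin{equation*}
\PnLL\!\left(\WS(\Lab,\ell,K)^c \,\big|\, \EE_\La\right) \leq 2T^2 \exp\!\left(-\tfrac{\beta K}{4\Cc}\ell^2\right) = \exp\!\left(2\log T + \O(1) - \tfrac{\beta K}{4\Cc}\ell^2\right).
\end{equation*}
This is the one place where the hypothesis $\epsilon\ell^2 \geq 1$ enters: combining $T \leq 10\ell\, e^{1/\epsilon}$ with $\log\ell \leq \epsilon^{-1} \leq \ell^2$ (all from \eqref{eq:ellrhoepsilon}) yields $\log T \leq \log\ell + \epsilon^{-1} + \O(1) \leq 2\ell^2 + \O(1)$, so the combinatorial loss $2\log T$ is dominated once $K$ is large enough that $\tfrac{\beta K}{4\Cc} \geq 6$. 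Taking $K \geq \max\!\left(4\Cc\,\Cloc,\ 24\Cc/\beta\right)$ — a constant depending on $\beta$ and $\Cloc$, hence on $\beta$ and the good external potential constant $\bCc$ — we conclude $\PnLL(\WS(\Lab,\ell,K)^c \mid \EE_\La) \leq e^{-\ell^2}$ for $\ell$ large, which is \eqref{eq:PnLLgoodgood}.

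The argument carries no genuine difficulty, being entirely powered by the sub-system local laws; the only two points requiring a word of care are (i) the availability of the local laws for every square of the grid, including those meeting $\partial\Lab$, and (ii) the elementary check that the union-bound factor $\log T$ stays negligible against $\ell^2$, which is precisely what the scale constraints \eqref{eq:ellrhoepsilon} provide.
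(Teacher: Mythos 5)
Your proof is correct and follows essentially the same route as the paper's: cover $\Lab$ by $\O(T^2/\ell^2)$ squares of sidelength $\ell$, apply the sub-system local laws of Proposition \ref{prop:local_law_SUBSYS} on each square via an exponential Markov inequality, take a union bound, and absorb the combinatorial loss using the relations in \eqref{eq:ellrhoepsilon}. You spell out the Chernoff step and the arithmetic of the $K$-choice more explicitly than the paper does, and you also flag (correctly, and helpfully) that squares of the grid straddling $\partial\Lab$ are still covered by the local-law machinery because that machinery only needs distance $\gtrsim \log T$ from $\partial\La$ — a point the paper silently elides; but none of this amounts to a different argument.
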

\begin{proof}[Proof of Lemma \ref{lem:WS_is_frequent}]
Covering $\Lab$ by $\O(T^2 \ell^{-2})$ squares of sidelength $\ell$ and using the local laws in $\Lab$ on each one, together with a union bound, we get:
\begin{equation}
\label{eq:WSfrequent}
\PnLL \left( \WS(\Lab, \ell, K)  \big| \EE_\La \right) \geq 1 - \Cc \frac{T^2}{\ell^2} \exp\left(- \frac{K \ell^2}{\Cc_\beta} \right).
\end{equation}
Using the relation between $T, \ell$ and $\epsilon$ as in \eqref{eq:ellrhoepsilon}, we obtain:
\begin{equation*}
\PnLL \left( \WS(\Lab, \ell, K) \big| \EE_\La   \right) \geq 1 - \Cc \exp\left( \frac{2}{\epsilon} - \frac{K \ell^2}{\Cc_\beta} \right),
\end{equation*}
and thus if we choose $K$ larger than some constant (depending on $\beta$ and $\bCc$) and impose that $\epsilon \ell^2 \geq 1$ then \eqref{eq:PnLLgoodgood} holds.
\end{proof}

\subsection{The measure-preserving case}
Let $|t| \leq \frac{\ell}{10}$ and let $\Phi = \Phi_t^{\epsilon, \ell}$ be the localized translation constructed in Section \ref{sec:slowvartrans}. Let $\psi = \psi_t^{\epsilon, \ell} = \Phi - \Id$ as studied in Proposition \ref{prop:taut_prop}. In this section, we carefully inspect the proof of \cite[Prop 4.2]{serfaty2020gaussian} in order to obtain the following:
\begin{proposition}[Energy comparison along a measure-preserving map]
\label{prop:meas_pres}
Let $\bX$ be a point configuration with $\nn$ points in $\La$. Assume that $\bX$ belongs to $\WS(\Lab, \ell, K)$ for a certain $K > 1$. Then:
\begin{equation}
\label{bli_measure_pres}
\FL(\PhiX \bX) = \FL(\bX) + \AUN \left[ \bX, \psi \right] + \O\left(K^2 t^2 \epsilon \log \epsilon  \right).
\end{equation}
\end{proposition}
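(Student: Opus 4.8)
The goal is to compare $\FL(\Phi\bX)$ with $\FL(\bX)$ when $\Phi = \Phi_t^{\epsilon,\ell}$ is our localized translation, which is area-preserving. The natural route is to work with the \emph{electric} formulation of the energy from \eqref{secondformulation}--\eqref{extendSecondFormulation}: writing $\h := \h^{\bX, \mm}$ and $\h^\Phi := \h^{\Phi\bX, \mm}$ (recall $\Phi$ preserves $\mm = \nn\muW$ on $\Lab$ since $\mm$ has density $1$ there and $\Phi$ is measure-preserving), we have the identity $\FL(\Phi\bX) - \FL(\bX) = \frac{1}{4\pi}\left(\int |\nabla \h^\Phi_{\vec\eta}|^2 - \int |\nabla \h_{\vec\eta}|^2\right) + (\text{truncation corrections})$, provided we use compatible truncation radii. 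The first step is thus to set up these electric fields and check that the truncation correction terms (the $\sum \log \eta(x)$ terms and the $\int_{\DD(x,\eta(x))} \ff$ terms in \eqref{extendSecondFormulation}) either cancel or contribute only to the error: since $\Phi$ is bi-Lipschitz with constant close to $1$ (Claim \ref{claim:simple_facts}), nearest-neighbor distances are distorted by a bounded factor, and the mismatch produces a term of order $t\epsilon$ times the number of points in $\supp \nabla\psi$, which is absorbed.

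The heart of the matter is the comparison of the two Dirichlet energies, and here I would follow the change-of-variables computation of \cite[Prop. 4.2]{serfaty2020gaussian} essentially verbatim but tracking the dependence on $\psi$ quantitatively. Pushing the electric field forward by $\Phi$ and changing variables, one writes $\int |\nabla \h^\Phi|^2$ as an integral against $\nabla \h$ of a quadratic form built from $D\Phi$; expanding $D\Phi = \Id + D\psi$, the zeroth-order term reproduces $\int|\nabla\h|^2$, the first-order term is linear in $D\psi$ and \emph{defines} the quantity $\AUN[\bX,\psi]$, and the remainder is quadratic in $D\psi$. The key point is that the quadratic remainder has the schematic form $\int |D\psi|^2 |\nabla \h_{\vec\eta}|^2$ (up to constants and lower-order terms), and this is exactly where the $\WS$ event enters: since $|\psi|_{\1,\star}(x)^2 \leq \PsiU(x)^2 \preceq t^2\epsilon^2/(\ell+|x|)^2$ varies slowly at scale $\ell$ (property \eqref{eq:Psi_slowvar_l}), the energy-density upper bound \eqref{eq:EnerDensUB} applies with $f = |D\psi|^2$, giving $\int |D\psi|^2 |\nabla \h_{s\vr}|^2 \preceq K \| |D\psi|^2 \|_{L^1} (1 + |\log s|)$. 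One then computes $\| |D\psi|^2 \|_{L^1} \preceq t^2\epsilon^2 \int_{\ell}^{2\ell\eep} \frac{dr}{r} \preceq t^2\epsilon^2 \cdot \frac1\epsilon = t^2\epsilon$, using that the support of $\psi$ lives in an annulus of inner radius $\sim \ell$ and outer radius $\sim \ell \eep$. Choosing the truncation scale appropriately (e.g. $s$ of order a negative power of $\epsilon$ consistent with \eqref{eq:ellrhoepsilon}), the $|\log s|$ factor contributes a further $\log(1/\epsilon) \sim -\log\epsilon$, yielding the claimed $\O(K^2 t^2 \epsilon \log\epsilon)$; one factor of $K$ comes from \eqref{eq:EnerDensUB}, and the second from bounding the cross/boundary terms (sum-integral comparisons via \eqref{eq:SumIntegral} and controls on $\int|D\psi|^2 \Points$) — I would keep the accounting loose here rather than optimize the power of $K$.

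I would also need to handle the terms in the energy difference that are not purely quadratic in $D\psi$: there is a genuine first-order contribution, but by construction that is precisely $\AUN[\bX,\psi]$ and is left as-is (it will be dealt with in the averaging step, Proposition \ref{prop:effectPhiEnergy}, where $\hal(\AUN[\bX,\psi] + \AUN[\bX,-\psi])$ is shown to be small using the second derivatives of $\psi$ and the "smallness of anisotropy"). There are also terms involving $\nabla\psi$ paired against the singular parts of the field near each point; these are controlled by the near-diagonal analysis already present in \cite{serfaty2020gaussian}, using that $\psi$ is smooth and bounded derivative near every point, so that the renormalized self-energy contributions match up to $\O(t\epsilon)$ per point, again absorbed via \eqref{eq:SumIntegral}.

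The main obstacle I anticipate is \emph{not} the algebra of the change of variables — that is imported from \cite[Prop. 4.2]{serfaty2020gaussian} — but rather the bookkeeping needed to make the error genuinely of size $t^2\epsilon\log\epsilon$ and uniform over $|t| \leq \ell/10$: one must be careful that (i) the slowly-varying majorants $\PsiU, \PsiD$ of \eqref{def:PsiUD} are used consistently so that \eqref{eq:EnerDensUB} and \eqref{eq:SumIntegral} apply on every dyadic annulus from scale $\ell$ out to $\ell\eep$, (ii) the logarithmic divergence of $\int dr/r$ over that range is exactly what cancels the extra $\epsilon$ to leave $\epsilon \cdot \frac1\epsilon$, and (iii) the truncation at scale $s$ is chosen so that the $\WS$-based bounds survive and the discarded truncation corrections are lower order. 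Once the $\WS$ event is invoked (which happens with probability $\geq 1 - e^{-\ell^2}$ by Lemma \ref{lem:WS_is_frequent}, matching the probability in Proposition \ref{prop:effectPhiEnergy}), everything is deterministic and the estimate follows by assembling these pieces.
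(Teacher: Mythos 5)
Your high-level plan — electric formulation via \eqref{extendSecondFormulation}, change of variables from \cite[Prop.~4.2]{serfaty2020gaussian}, expansion in $D\psi$ with the $\WS$ event and the slowly-varying majorants $\PsiU, \PsiD$ of \eqref{def:PsiUD} giving $\||D\psi|^2\|_{L^1} \preceq t^2\epsilon$ and hence the advertised $K^2 t^2\epsilon\log\epsilon$ error — is indeed the skeleton of the paper's argument. But there are two genuine gaps in what you wrote.

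First, you never confront the \emph{$|\psi|_{L^\infty}|\psi|_{C^2}$ obstruction}, which Remark~\ref{rem:meas_pres} identifies as a key departure from \cite{serfaty2020gaussian}. A naive adaptation of \cite[Prop.~4.2]{serfaty2020gaussian} produces a second-order term controlled by $|\psi|_{L^\infty}|\psi|_{C^2}$, and for our localized translation $|\psi|_{L^\infty}\approx 2|t|$ is \emph{not} small (only the gradient is small). Integrating $|\psi|_0 |\psi|_{\2,\star}(x) \preceq t \cdot t\epsilon/|x|^2$ over the annulus $\ell \le |x| \le 2\ell\eep$ gives $\O(t^2\epsilon\cdot\epsilon^{-1}) = \O(t^2)$, which is \emph{not} $\O(t^2\epsilon\log\epsilon)$ and would ruin the estimate. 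You mention that $\Phi$ is area-preserving but do not say where this is used to kill these terms; in the paper's proof it enters at several specific points (e.g.\ the remark after \eqref{analysis_main} and the identity $\dive\psi(y_i) = \O(|\psi|_{\1,\star}^2(y_i))$ in the $\Rem_1$ analysis) to replace the dangerous $|\psi|_0|\psi|_{\2}$ contributions by $|\psi|_{\1}^2$-type ones. Without this step the bound does not close.

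Second, your treatment of the circle-vs-deformed-circle discrepancy (what the paper calls $\Rem = \Rem_1 + \Rem_2 + \Rem_3$) is too optimistic. You write that the near-diagonal contributions "match up to $\O(t\epsilon)$ per point, again absorbed via \eqref{eq:SumIntegral}" — but $\O(t\epsilon)$ per point summed over the $\O(T^2)$ points in $\supp\psi$ is $\O(t\epsilon T^2)$, which is enormous, and the actual per-point contributions in the paper have entirely different structure (they involve $s^2\PsiU(x_i)\PsiD(x_i)\ell^2$-type expressions). Moreover the anisotropy $\AUN$ in \eqref{def:Aniv2} collects first-order contributions from \emph{both} the main term and from $\Rem_3$; identifying $\AUN$ solely with the linear term of the change-of-variables expansion, as you do, misattributes part of it. The $\Rem_3$ term in particular requires re-expressing quantities involving $\bY = \Phi\bX$ in terms of the original points $\bX$, which is the content of Claims~\ref{claim:d1hci} and \ref{claim:d2hci}, and it is precisely here that the very small truncation choice $s = \epsilon^2$ of \eqref{eq:vetasvr} earns its keep. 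This is a substantial part of the proof that your proposal compresses into a sentence, and the compression hides the point at which the estimate would actually fail if done naively.
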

\cor{The quantity $\AUN$, called the anisotropy, is given by:
\begin{equation}
\label{predefAUN}
\AUN \left[ \bX, \psi \right] := \frac{1}{4\pi} \int \left\langle \nhetaX, 2 \Dd \psi \nhetaX \right\rangle + \sum_{1 \leq i \leq \nn} \dashint_{|u| = \eta_i} \nabla \hciX(x_i + u) \cdot \left(\psi(x_i + u) - \psi(x_i)\right) \dd u,
\end{equation}
where the truncation vector $\veta$ is chosen as $\veta : = s \vr \text{ with } s = \epsilon^{2}$ (see \eqref{eq:vetasvr} below), $\vr$ being the vector of nearest-neighbor distances, and $\nhetaX$ is the truncated electric field generated by $\bX$ as in Definition \ref{def:trunc}. We recall all these notions in the course of the proof, and we refer to Section \ref{sec:smallani} for more details on the anisotropy. The presence of this “first-order” term is not new, in fact the most important feature in \eqref{bli_measure_pres} is the size of the error term $\O\left(K^2 t^2 \epsilon \log \epsilon  \right)$, which cannot be deduced from existing results.}

\begin{remark}[Comparison with existing analyses]
\label{rem:meas_pres}
Compared to the analysis done in \cite{serfaty2020gaussian} (on which we rely heavily) there are two important modifications:
\begin{enumerate}
    \item We estimate the energy cost of transporting by $\Phi$ through the \emph{local} density of electric energy  and points instead of using the global one (denoted by $\cXi(t)$ in \cite{serfaty2020gaussian}). \cor{This is why we use the $\WS$ event.}
    \item The quantity $|\psi|_{L^\infty} |\psi|_{C^2}$ which appears in the control on the second derivative of the energy (along a transport) in \cite[Prop. 4.2]{serfaty2020gaussian} does not appear in our computations. 
\end{enumerate}
Both items are crucial for us:
\begin{itemize}
 	\item The localization allows us to bound the error in terms of $\int_\La |\psi|^2_{1,\star}(x) \dd x$ \cor{(recall the notation $|\psi|_{1,\star}(x)$ for the norm of the derivative of $\psi$ at $x$, see Section \ref{sec:notation})} instead of $|\psi|^2_1 \times |\La|$. 

 	In the case of our localized translation, the former is $\O(\epsilon)$ while the latter is gigantic.
 	\item Even after localizing, the contribution of $|\psi|_{L^\infty} |\psi|_{C^2}$, \cor{namely (with our notation) $\int |\psi|_\0 |\psi|_{\2, \star}(x) \dd x$} would be of order $1$ but not small\footnote{It is in fact impossible to make it small by choosing a different “localized translation”, because in dimension $2$ the (homogeneous) Sobolev space $\dot{W}^{2,1}$ is embedded in $L^\infty$. This is, fortunately, not the case for $\dot{W}^{1,2}$.} so it was necessary to get rid of it.
 \end{itemize} 

Obtaining these two refinements requires significant adaptations. On the other hand, the “measure-preserving” character of $\Phi$ will bring several small simplifications: the background measure is not affected by the transport so all distinctions between $\mu$ \cor{and the push-forward} $\nu := \Phi \# \mu$ \cor{of $\mu$ by $\Phi$} (using the notation of \cite{serfaty2020gaussian}) are void. We will in particular repeatedly use the fact that (for various functions $f$):
\begin{equation*}
\int f(x) \left(\dd (\PhiX \bX)(x) - \dd x\right) = \int f(\Phi(x)) \left( \dd \bX(x) - \dd x \right).
\end{equation*}
\end{remark}

\newcommand{\VLa}{\mathrm{V}_\La}
Since we are working on a disk with a constant background, the logarithmic potential generated by said background is explicitly computable and given by:
\begin{equation}
\label{def:VLa}
\VLa(x) := \int_{\La} - \log |x-y| \dd y = \frac{|x|^2}{4}.
\end{equation}
We use the explicit expression \eqref{def:VLa} for simplicity a couple times below, although one could work with a more general shape and proceed to a more careful analysis instead.

\begin{proof}[Proof of Proposition \ref{prop:meas_pres}] \renewcommand{\tell}{\ell}

We follow the steps of \cite[Appendix A]{serfaty2020gaussian} while making several important changes. We will only use a couple technical results as “black boxes” and copy or adapt all the main arguments and computations. \cor{Before diving into the proof, we give an overview: the main goal is to study the difference $\FL(\PhiX \bX) -\FL(\bX)$, where $\Phi$ is a measure-preserving diffeomorphism. In order to do this with a good level of precision, the method introduced in \cite{MR3788208} and refined in \cite{serfaty2020gaussian} consists in:
\begin{enumerate}
	\item Expressing $\FL$ in terms of electric fields as in \eqref{extendSecondFormulation} (see also \eqref{basic_comparison} below).
	\item Using $\Phi$ to transport \emph{the} electric field associated to $\bX$ onto \emph{an} electric field compatible with $\PhiX \bX$ in an explicit way.
	\item Writing $\Phi = \Id + \psi$ and performing a Taylor's expansion of the electric energy.
\end{enumerate}
In that process, two difficulties arise:
\begin{enumerate}
	\item There is never a unique electric field compatible (in the sense of Definition \ref{def:compatible}) with a given configuration $\bX$, see the discussion in Section \ref{sec:ElectricFields}. The expression of $\FL(\bX)$ in terms of electric energy requires to use the “true” electric field $\nabla \h^{\bX}$ as in \eqref{true_mm}. However, transporting $\nabla \h^{\bX}$ by $\Phi$ will usually not produce the true electric field associated to $\PhiX \bX$, but only \emph{some} compatible electric field. One needs to take care of the difference.
	\item At a technical level, $\FL(\bX)$ is not directly expressed in terms of $\nabla \h^{\bX}$ but rather with its \emph{truncated} version $\nhetaX$ in the sense of Definition \ref{def:trunc}, which is an electric field compatible with the “spread-out” configuration $\sum_{x \in \bX} \delta_{x}^{(\eta(x))}$ (see \eqref{DeltaHHeta}), where $\delta_{x}^{(\eta)}$ is the probability measure on the circle of center $x$ and radius $\eta$. When applying a transport, a Dirac mass will be sent to a Dirac mass, but those “spread-out” charges might not be exactly mapped onto other ones.\\
	This might seem innocent, but controlling (to the desired level of precision) the error coming from this difference ends up taking a big part of the proof, as in \cite{serfaty2020gaussian}. One reason is that there are “many” points (thus many spread-out masses), and that because of the long-range interaction they “all interact with each other”.
\end{enumerate}
The proof goes in several steps:
\begin{enumerate}
	\item In Section \ref{sec:Transporting} we define the aforementioned transport of electric fields, and in Section \ref{sec:applying_transport} we apply it to the “electric formulation” of the energy. This yields in \eqref{MainRemA}, \eqref{MainRemB} a first decomposition of the difference $\FL(\PhiX \bX) -\FL(\bX)$ into the sum of two terms $\Main$ and $\Rem$.
	\item The $\Main$ term is already well-understood, and the so-called “anisotropy” $\AUN \left[ \bX, \psi \right]$ comes out of it (see \eqref{bli_measure_pres}, \eqref{predefAUN}). The existing estimates of \cite{serfaty2020gaussian} are easy to localize, and one can check that in contrast to those, no term of the form $|\psi|_\0 |\psi|_{\2, \star}$ appears - this is due to our assumption on $\Phi$ being measure-preserving. This is done in Section \ref{sec:MainTransport}.
	\item The $\Rem$ term, on the other hand, which comes from the spread-out masses, turns out to be trickier. We split it into three parts (Section \ref{sec:RemPrelim}) and analyze them separately in Sections \ref{sec:Rem1}, \ref{sec:Rem2}, \ref{sec:Rem3}.
	\item Finally, in Section \ref{sec:ConclusionMeasPres} we combine all the previous steps to prove Proposition \ref{prop:meas_pres}.
\end{enumerate}
}

\subsubsection{Transporting vector fields}
\label{sec:Transporting}
\begin{definition}[Transport of vector fields]
If $v$ is a vector field on $\R^2$ we define  $\PhiS v$, the “transport of $v$ by $\Phi$”, as:
\begin{equation}
\label{def:PhiS}
\PhiS v := \left(\Dd \Phi \circ \Phi^{-1} \right)^{T} v \circ \Phi^{-1}.
\end{equation}
\end{definition}
The point is that when $\dive v$ is a measure we have:
\begin{equation}
\label{transport_PhiS}
\dive \PhiS v = \PhiS \left( \dive v \right),
\end{equation}
where on the left-hand side there is a \emph{transport of vector fields} while the right-hand side is a \emph{push-forward of measures}. The identity \eqref{transport_PhiS} is a special case of \cite[Lemma A.3]{serfaty2020gaussian}. 

On top of it, we make the following simple observation:
\begin{claim}
\label{claim:transport_gradient}
If $v = \nabla h$ is a gradient, then:
\begin{equation*}
\PhiS \nabla h = \left( \Dd \Phi \circ \Phi^{-1} \right)^T \nabla\left(h \circ \Phi^{-1}\right) \left(\Dd \Phi \circ \Phi^{-1} \right).
\end{equation*}
\end{claim}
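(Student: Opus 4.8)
\textbf{Proof plan for Claim \ref{claim:transport_gradient}.}

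The statement is a purely algebraic identity, and the plan is to compute $\PhiS \nabla h$ directly from the definition \eqref{def:PhiS} and then recognize the result as a conjugated gradient. First I would set $v = \nabla h$ in \eqref{def:PhiS}, which gives
\begin{equation*}
\PhiS \nabla h = \left(\Dd \Phi \circ \Phi^{-1}\right)^T \left(\nabla h\right) \circ \Phi^{-1}.
\end{equation*}
The only thing to do is to rewrite $(\nabla h)\circ \Phi^{-1}$ in terms of the gradient of the composed function $h \circ \Phi^{-1}$. By the chain rule applied to $h \circ \Phi^{-1}$, we have $\nabla\left(h \circ \Phi^{-1}\right)(y) = \left(\Dd \Phi^{-1}(y)\right)^T (\nabla h)\left(\Phi^{-1}(y)\right)$, so that, solving for the ``raw'' gradient of $h$ evaluated at $\Phi^{-1}(y)$,
\begin{equation*}
(\nabla h)\left(\Phi^{-1}(y)\right) = \left(\left(\Dd \Phi^{-1}(y)\right)^T\right)^{-1} \nabla\left(h \circ \Phi^{-1}\right)(y) = \left(\Dd \Phi^{-1}(y)\right)^{-T}\nabla\left(h\circ\Phi^{-1}\right)(y).
\end{equation*}

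Next I would use the inverse function theorem in the form $\Dd \Phi^{-1}(y) = \left(\Dd \Phi(\Phi^{-1}(y))\right)^{-1}$, hence $\left(\Dd \Phi^{-1}(y)\right)^{-T} = \left(\Dd \Phi \circ \Phi^{-1}(y)\right)^{T}$. Substituting this back,
\begin{equation*}
(\nabla h)\circ\Phi^{-1} = \left(\Dd \Phi \circ \Phi^{-1}\right)^T \nabla\left(h\circ\Phi^{-1}\right),
\end{equation*}
and plugging this into the expression for $\PhiS \nabla h$ above yields
\begin{equation*}
\PhiS \nabla h = \left(\Dd \Phi \circ \Phi^{-1}\right)^T \left(\Dd \Phi \circ \Phi^{-1}\right)^T \nabla\left(h\circ\Phi^{-1}\right),
\end{equation*}
which is not quite the claimed formula: one of the two $\left(\Dd\Phi\circ\Phi^{-1}\right)^T$ factors should instead be the untransposed matrix on the \emph{right} of the gradient, as in the statement $\left(\Dd\Phi\circ\Phi^{-1}\right)^T \nabla\left(h\circ\Phi^{-1}\right)\left(\Dd\Phi\circ\Phi^{-1}\right)$. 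The resolution is that the intended reading of the right-hand side treats $\nabla\left(h\circ\Phi^{-1}\right)$ as a row vector (differential) rather than a column vector; with that convention $A^T (\nabla g) A$ and $A^T A^T (\nabla g)^{\mathrm{col}}$ agree once one transposes consistently. So the cleanest route is to phrase everything in terms of differentials: write $\Dd(h\circ\Phi^{-1}) = \Dd h(\Phi^{-1})\,\Dd\Phi^{-1}$, take adjoints, and match with \eqref{def:PhiS}. The main (minor) obstacle is purely bookkeeping: keeping the transpose/row-column conventions consistent between $\PhiS$ as defined in \eqref{def:PhiS} and the way the gradient of a composition is written, which is why I would carry out the computation at the level of differentials and only transpose at the very end. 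No analytic input is needed beyond the chain rule and the inverse function theorem, both of which apply since $\Phi = \Phit$ is a $C^1$ (indeed smooth) diffeomorphism by Lemma \ref{lemma_propPhi}.
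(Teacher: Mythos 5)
Your computation is correct and takes exactly the route the paper calls ``straightforward using the definition \eqref{def:PhiS} and some calculus'': plug $v=\nabla h$ into \eqref{def:PhiS}, use the chain rule $\nabla(h\circ\Phi^{-1}) = (\Dd\Phi^{-1})^T\,(\nabla h)\circ\Phi^{-1}$, and invert with $\Dd\Phi^{-1} = (\Dd\Phi)^{-1}\circ\Phi^{-1}$ to get
\begin{equation*}
\PhiS\nabla h \;=\; \bigl(\Dd\Phi\circ\Phi^{-1}\bigr)^T\bigl(\Dd\Phi\circ\Phi^{-1}\bigr)^T\,\nabla\bigl(h\circ\Phi^{-1}\bigr).
\end{equation*}
You are also right to flag that the display as printed in the Claim does not parse under the usual matrix-acts-on-column-vector convention: with $A:=\Dd\Phi\circ\Phi^{-1}$ a $2\times 2$ matrix and $\nabla g$ a $2\times 1$ vector, $A^T(\nabla g)A$ is not a well-formed product. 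Your proposed row-vector reading does not repair it either ($A^T\cdot(1\times 2)\cdot A$ is still ill-typed), so rather than trying to rescue the printed sandwich form, the safest move is to record the identity in the form you derived. That form is what the paper actually uses in the very next line: since $\Phi=\Id+\psi$ with $|\psi|_\1\le\frac15$, one has $\bigl((\Dd\Phi\circ\Phi^{-1})^T\bigr)^2=\Id+\O(\|\Dd\psi\circ\Phi^{-1}\|)$, giving the pointwise bound that feeds into \eqref{eq:distance_projection}; the precise placement and transposition of the two $A$-factors are irrelevant there. In short: your proof is right, your diagnosis of the typographical issue is right, but your attempted reconciliation is a red herring --- just state the corrected identity.
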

Thus is $v$ is a gradient and $\Phi$ is close to the identity map, then $\PhiS v$ is “almost” a gradient. The proof of Claim \ref{claim:transport_gradient} is straightforward using the definition \eqref{def:PhiS} and some calculus.

\subsubsection{Setting up the energy comparison}
\label{sec:applying_transport}
We want to compare $\FL(\bY)$ to $\FL(\bX)$. We start by recalling known expressions for both quantities. 

\textbf{\cor{Step 1. Expressing the energy in terms of electric fields.}}
Let $\hetaX, \hetaY$ be the true electric potentials generated by $\bX, \bY$ in $\La$ in the sense of Definition \ref{def:true}. As a truncation vector, let us choose\footnote{Such a choice of a very small truncation parameter appears in \cite{MR3788208,serfaty2020gaussian}. It might seem “unphysical” but it is very convenient to get rid of annoying error terms, while only costing $\log s$ in view of \eqref{eq:EnersEner}.} 
 \begin{equation}
 \label{eq:vetasvr}
\veta : = s \vr \text{ with } s = \epsilon^{2}
 \end{equation}
(in particular $s \leq \frac{1}{10}$), the distances $\vr$ being computed with respect to the configuration $\bX$. We recall that, by Claim \ref{claim:simple_facts}:
\begin{equation}
\label{yiyjxixj}
\hal |x_i - x_j| \leq |y_j - y_i| \leq 2 |x_i - x_j|,
\end{equation} 
and thus if we compute the nearest-neighbor distances $\vr$ with respect to $\bY$ instead of $\bX$ we still have $\veta \leq 2 s \vr \leq \frac{1}{5} \vr$. 

From Lemma \ref{lem:extendSecond} we know that:
\begin{equation}
\label{basic_comparison}
\FL(\bX) - \FL(\bY) = \frac{1}{4 \pi} \left(\int |\nhetaX|^2 -  \int |\nhetaY|^2\right),
\end{equation}
where $\h^{\bX} = \h^{\bX, \mm}$, $\h^{\bY} := \h^{\bY, \mm}$ (the background will be $\mm = \nn \muW$ everywhere and  we omit it) and $\veta$ is as in \eqref{eq:vetasvr}. 

\medskip

\textbf{\cor{Step 2. Introducing the transported fields.}}
We introduce two additional vector fields, using the notation of \eqref{def:PhiS} for the first one:
\begin{equation}
\label{new_vf}
\Elec_{\veta} := \PhiS \nhetaX, \quad \nhh := \nabla (- \log) \ast \left(\sum_{i=1}^\nn \PhiS \delta_{x_i}^{(\eta_i)} - \mm \right),
\end{equation}
(we recall that $\delta_{x_i}^{(\eta_i)}$ denotes the uniform measure of mass $1$ spread on the circle of center $x_i$ and radius~$\eta_i$). To summarize we have, besides $\nhetaX$ which is the true electric field generated by $\bX$, 
\begin{eqnarray*}
    & \nhetaY  =  \nabla (-\log) \star \left(\sum_{i=1}^\nn \delta_{\Phi(x_i)}^{(\eta_i)} - \mm  \right) \\
    & \nhh   = \nabla (- \log) \star \left(\sum_{i=1}^\nn \PhiS \delta_{x_i}^{(\eta_i)} - \mm \right) \\
    & \Elec_{\veta}  = \PhiS \nabla (-\log) \star \left(\sum_{i=1}^\nn \delta_{x_i}^{(\eta_i)} - \mm \right).
\end{eqnarray*}
There are subtle differences between these three vector fields:
\begin{itemize}
    \item $\nhetaY$ is a gradient, it is the true electric field generated by $\bY$, and in its divergence the charges are spread along a circle with centers $y_i = \Phi(x_i)$ ($1 \leq i \leq \nn$).
    \item $\nhh$ is also a gradient, but the charges are spread along deformed circles $\PhiS \delta_{x_i}^{(\eta_i)}$ which are (approximately) ellipses of center $y_i$.
    \item $\Elec_{\veta}$ is not a gradient in general, but it is obtained by transporting $\nhetaX$ (which is a gradient and for which the charges are spread along circles around the original points $x_i$'s) according to \eqref{def:Phis}.
\end{itemize}
By the identity \eqref{transport_PhiS}, $\nhh$ and $\Elec_{\veta}$ have the same divergence, and as in \cite[(A.23)]{serfaty2020gaussian} we get the decomposition:
\begin{equation*}
\int |\Elec_{\veta}|^2 = \int |\nhh|^2 + \int |\Elec_{\veta} - \nhh|^2.
\end{equation*}

\medskip

\textbf{\cor{Step 3. Bounding the difference $\Elec_{\veta} - \nhh$.}}
We control the second term in the right-hand side immediately. Using Claim \ref{claim:transport_gradient} we see that:
\begin{equation*}
\Elec_{\veta} = \left( \Dd \Phi \circ \Phi^{-1} \right)^T \nabla \left( \h_{\veta}^{\bX} \circ \Phi^{-1} \right) \left(\Dd \Phi \circ \Phi^{-1} \right), 
\end{equation*}
and thus since $\Phi = \Id + \psi$ with $|\psi|_{\1}$ smaller than $\frac{1}{5}$ we have the pointwise bound:
\begin{equation*}
| \Elec_{\veta} - \nabla \left( \h_{\veta}^{\bX} \circ \Phi^{-1} \right) | \leq \Cc \| \Dd \psi \circ \Phi^{-1} \| \times |\nhetaX \circ \Phi^{-1}|,
\end{equation*}
which implies (after integrating the previous inequality and changing variables by $\Phi$):
\begin{equation*}
\int |\Elec_{\veta} - \nabla \left( \h_{\veta}^{\bX} \circ \Phi^{-1} \right|^2 \leq \Cc \int |\psi|^2_{1, \star} |\nhetaX|^2.
\end{equation*}
This provides an upper bound on the $L^2$-distance between $\Elec_{\veta}$ and the space of gradients, and thus since $\nhh$ is its projection onto that space we get:
\begin{equation}
\label{eq:distance_projection}
\int |\Elec_{\veta} - \nhh|^2 \leq \Cc \int_\La |\psi|^2_{1, \star}(x) |\nhetaX|^2(x) \dd x.
\end{equation}
This is the first moment where we will use the notation and simple facts of Section \ref{sec:slowvartrans} combined with our $\WS$ assumption. First, replacing $|\psi|_{1, \star}$ by $\PsiU$ (as in \eqref{def:PsiUD}) provides an upper bound. Next, since $\PsiU$ has slow variations at scale $\ell$ (see \eqref{eq:Psi_slowvar_l}) and we are working on $\WS(\Lambda, \ell, K)$, we may apply the energy density upper bound \eqref{eq:EnerDensUB} (we will apply a similar chain of argument repeatedly in the rest of the proof). Here in conclusion, we have:
\begin{equation}
\label{first_example_psiPsi}
\int_\La |\psi|^2_{1, \star}(x) |\nhetaX|^2(x) \dd x \leq \int_\La \PsiU^2(x) |\nhetaX|^2(x) \dd x \preceq K \left( \int_\La \PsiU^2(x)  \dd x\right) (1 + |\log s|), 
\end{equation}
which finally implies, after a direct estimate of the $L^2$ norm of $\PsiU$ (see \eqref{def:PsiUD}), that:
\begin{equation}
\label{eq:distance_projection_2}
\int |\Elec_{\veta} - \nhh|^2 \leq \Cc K t^2 \epsilon (1 + |\log s|) = \O\left( K t^2 \epsilon \log \epsilon \right),
\end{equation}
because we have chosen $s = \epsilon^2$ in \eqref{eq:vetasvr}. 

\medskip

\textbf{\cor{Step 4. A first decomposition of the difference of energies}}
Going back to \cite[(A.25)]{serfaty2020gaussian}, and inserting \eqref{eq:distance_projection_2} we write\footnote{Let us observe that the additional term $\mathrm{Err}$ appearing in \cite[(A.23)]{serfaty2020gaussian} is $0$ in our case because $\Phi$ is measure-preserving and thus, with the notation of \cite{serfaty2020gaussian}, $\nu = \mu$.}:
\begin{equation}
\label{MainRemA}
\FL(\bY) - \FL(\bX) = \Main + \Rem + \O\left( K t^2 \epsilon \log \epsilon \right)
\end{equation}
where $\Main, \Rem$ are given by:
\begin{equation}
\label{MainRemB}
\Main := \frac{1}{4 \pi} \left( \int |\Elece|^2 - \int |\nhetaX|^2 \right), \quad \Rem := \frac{1}{4\pi} \left( \int |\nhetaY|^2 - \int |\nhh|^2 \right).
\end{equation}

\subsubsection{The Main term}
\label{sec:MainTransport}
For the term $\Main$, a direct expansion of $\Elece = \PhiS \nhetaX$ using the definition \eqref{def:PhiS} gives:
\begin{equation}
\label{analysis_main}
\Main = \frac{1}{4\pi} \int \left\langle \nhetaX, 2 \Dd \psi \nhetaX  \right\rangle + \int \O(|\psi|^2_{1, \star}) |\nhetaX|^2.
\end{equation}
This is consistent with \cite[(A.31)]{serfaty2020gaussian}, the improvement being that we have no contribution of the form $|\psi|_{L^{\infty}} |\psi|_{C^2}$ in the second order term thanks to the fact that $\Phi$ is measure-preserving. Let us also note that although $\Phi$ is measure-preserving, we may not have $\dive \psi = 0$, however it is true that $\dive \psi = \O\left(|\psi|_{1,\star}^2\right)$ pointwise and thus the $- \dive \psi$ term appearing in \cite[(A.32)]{serfaty2020gaussian} can be absorbed in our second order correction as in \eqref{analysis_main}. Arguing as in \eqref{first_example_psiPsi} we may re-write the error term in \eqref{analysis_main} and get:
\begin{equation}
\label{Main_2}
\Main = \frac{1}{4 \pi} \int \left\langle \nhetaX, 2 \Dd \psi \nhetaX  \right\rangle + \O\left( K t^2 \epsilon \log \epsilon \right).
\end{equation}
We thus see how the $H^1$ norm of $\psi$ appears as a second-order contribution to the energy change, and since by construction we have $\int |\psi|_{\1, \star}^2 = \O(\epsilon)$ we may indeed hope (if $\Main$ is indeed the “main” term) to have a \emph{small} energy cost.

\subsubsection{The Rem term}
\label{sec:RemPrelim}
The $\Rem$ term is due to the difference between the electric fields $\nhetaY$ (for which charges are spread along circles) and $\nhh$ (for which charges are spread along approximate ellipses). The fact that we can choose \emph{small} truncations via the parameter $s$ (we which recall to have chosen as $s = \epsilon^2$ in \eqref{eq:vetasvr}, we could even have used an higher power of $\epsilon$) will turn out to be crucial in order to control those errors. Let us keep the notation of \cite{serfaty2020gaussian} and use:
\begin{itemize}
\item  $\dyi$ to denote the charge spread uniformly on the circle of center $y_i = \Phi(x_i)$ and radius $\eta_i$,
\item $\dhi$ to denote the push-forward by $\Phi$ of the measure $\delta_{x_i}^{(\eta_i)}$. 
\end{itemize}
We also write $v_i$ for the function 
\begin{equation}
\label{def:vi}
v_i := - \log \star \left( \dhi - \dyi \right).
\end{equation}
As in \cite[(A. 41)]{serfaty2020gaussian} we decompose $\Rem$ as $\Rem_1 + \Rem_2 + \Rem_3$ and analyse each term separately.

\subsubsection{The \texorpdfstring{$\Rem_1$}{Rem1} term.}
\label{sec:Rem1}
 $\Rem_1$ is defined as:
\begin{equation*}
\Rem_1 = - \hal \sum_{i = 1}^\nn \int v_i \left( \dhi + \dyi \right).
\end{equation*}
We write as in \cite[(A.42)]{serfaty2020gaussian} (with $\vec{n}$ the unit normal vector to the circle)
\begin{equation*}
\Rem_1 = \hal \sum_{i = 1}^\nn \frac{1}{\eta_i} \dashint_{\partial \DD(y_i, \eta_i)} \left( \psi(y) - \psi(y_i) \right) \cdot \vec{n} + \O\left( \sum_{i=1}^\nn |\psi|^2_{\1, \loc}(y_i) \right).
\end{equation*}
Let us use $\PsiU$ as an upper bound to the derivative of $\psi$, we have:
\begin{equation*}
\sum_{i=1}^\nn |\psi|^2_{\1, \loc}(y_i) \preceq \sum_{i=1}^\nn \left(\PsiU \right)^2(x_i),
\end{equation*}
Moreover, a Taylor's expansion yields:
\begin{equation*}
 \psi(y) - \psi(y_i) = \Dd \psi(y_i) (y -y_i) + \O(|\psi|_{\2, \loc}(y_i) \eta_i^2),
\end{equation*}
but since $\Phi$ is measure-preserving we know that $\dive \psi(y_i) = \O(|\psi|^2_{\1, \star}(y_i))$, thus:
\begin{equation*}
\hal \sum_{i = 1}^\nn \frac{1}{\eta_i} \dashint_{\partial \DD(y_i, \eta_i)} \left( \psi(y) - \psi(y_i) \right) \cdot \vec{n} = \O(s \PsiD(x_i) + \PsiU^2(x_i)).
\end{equation*}
Since $\PsiU, \PsiD$ have slow variations and since we are working under the $\WS$ assumption we may compare sums to integrals as in \eqref{eq:SumIntegral}, hence:
\begin{equation*}
\sum_{i=1}^\nn \left(\PsiU \right)^2(x_i) + s \PsiD(x_i) \leq K \int \PsiU^2(x) \dd x + K t^2 s  = \O(K t^2 \epsilon).
\end{equation*}
We thus obtain:
\begin{equation}
\label{eq:Rem_1}
\Rem_1 = \O(K t^2 \epsilon).
\end{equation}

\renewcommand{\rho}{T}

\subsubsection{The $\Rem_2$ term.} 
\label{sec:Rem2}
$\Rem_2$ is defined as (see \eqref{def:vi})
\begin{equation*}
\Rem_2 := \hal \sum_{1 \leq i \neq j \leq \nn} \int v_i \left(\dhj - \dyj \right).
\end{equation*}
\cor{Our goal in this section will be to prove:
\begin{equation*}
\Rem_2 =  \O\left(s^3 K^2 \ell^2 \epsilon \log \epsilon t^2 \right) = \O\left(K^2 \epsilon \log \epsilon t^2\right),
\end{equation*}
}

We start by writing as in \cite[(A.45)]{serfaty2020gaussian}:
\begin{equation}
\label{preRem2}
\left| \sum_{1 \leq i \neq j \leq \nn} \int v_i \left(\dhj - \dyj \right)\right| \preceq \sum_{1 \leq i \neq j \leq \nn} \eta_i^2 \eta_j \left(  \frac{|\psi|^2_{\1, \loc}(x_i)}{|y_i - y_j|^3} + \frac{|\psi|_{\1, \loc}(x_i) |\psi|_{\2, \loc}(x_i)}{|y_i - y_j|^2} \right),
\end{equation}
and then proceed a bit differently. 

First, we use $\PsiU, \PsiD$ as upper bounds to the derivatives of $\psi$, use \eqref{yiyjxixj} and \eqref{eq:Psi_slowvar_} in order to replace the $y_i, y_j$'s in the right-hand side of \eqref{preRem2} by the corresponding $x_i, x_j$'s up to some multiplicative constant. Next, we decompose (as in \cite{serfaty2020gaussian}) the sum between contributions coming from “close” and “far away” pairs of points. 

\smallskip

\textbf{A. Distances smaller than $10 \ell$.}
Since we always have $|x_i - x_j|^3 \geq \rr_i^2 \rr_j$ and since we take the truncation $\eta_i = s \rr_i$ we may write for each fixed $i$
\begin{multline*}
\sum_{1 \leq j \leq \nn, j \neq i, |x_j - x_i| \leq 10 \ell} \eta_i^2 \eta_j \left(  \frac{\PsiU^2(x_i)}{|x_i - x_j|^3} + \frac{\PsiU(x_i) \PsiD(x_i)}{|x_i - x_j|^2} \right) \\
\leq s^3 \left( \PsiU^2(x_i) + \PsiU(x_i) \PsiD(x_i) \right) \times \#\left\lbrace j, |x_j-x_i| \leq 10 \ell \right\rbrace.
\end{multline*}
Since we condition on $\WS(\Lab, \ell, K)$ we may bound $\#\left\lbrace j, |x_j-x_i| \leq 10 \ell \right\rbrace$ by $\Cc K \ell^2$, and thus:
\begin{equation*}
s^3 \left( \PsiU^2(x_i) + \PsiU(x_i) \PsiD(x_i) \right) \times \#\left\lbrace j, |x_j-x_i| \leq 10 \ell \right\rbrace \preceq s^3 K \ell^2 \left( \PsiU^2(x_i) + \PsiU(x_i) \PsiD(x_i) \right).
\end{equation*}
Next we compare the sum (over $i$) of the previous quantity to an integral using \eqref{eq:SumIntegral} and get:
\begin{equation}
\label{Rem2close}
\sum_{1 \leq j \leq \nn, j \neq i, |x_j - x_i| \leq 10 \ell} \eta_i^2 \eta_j \left(  \frac{\PsiU^2(x_i)}{|x_i - x_j|^3} + \frac{\PsiU(x_i) \PsiD(x_i)}{|x_i - x_j|^2} \right) \leq \Cc s^3 K^2 \ell^2 \int \left( \PsiU^2(x) + \PsiU(x) \PsiD(x) \right) \dd x,
\end{equation}
and the right-hand side can be evaluated using \eqref{def:PsiUD}
\begin{equation}
\label{Rem2close_application}
s^3 K^2 \ell^2 \int \left( \PsiU^2(x) + \PsiU(x) \PsiD(x) \right) \dd x =\O\left( s^3 K^2 \ell^2  \epsilon t^2 \right).
\end{equation}

\begin{remark}
We do lose some information when replacing all distances by the smallest one over a large zone of size $\ell$, but for technical reasons it seemed hard to do much better, and it works well enough for us.
\end{remark}

\smallskip

\textbf{B. Distances larger than $10 \ell$.}
The function: $z \mapsto \frac{1}{|z|^3}$ “varies slowly at scale $\ell$” (in the sense of \eqref{slowlyvar}) on $\{|z| \geq 10 \ell\}$ thus using \eqref{eq:SumIntegral} for each fixed $i$ we can compare 
\begin{equation*}
\sum_{1 \leq j \leq \nn, j \neq i, |x_j - x_i| \geq 10 \ell} \frac{1}{|x_i - x_j|^3}
\end{equation*} 
to $K$ times the corresponding integral, namely: 
\begin{equation*}
\int_{\{|x - x_i| \geq 10 \ell\}} \frac{1}{|x_i - x|^3} \dd x = \O(1).
\end{equation*}
Similarly the sum $\sum_{1 \leq j \leq \nn, |x_j - x_i| \geq 10 \ell}\frac{1}{|x_i - x_j|^2}$ can be compared to:
\begin{equation*}
K \times \int_{\La \cap \{|x - x_i| \geq 10 \ell\}} \frac{1}{|x_i - x|^2} \dd x = K \times \O(\log T).
\end{equation*}
Using the obvious bound $\eta_i^2 \eta_j \leq s^3$ we thus obtain:
\begin{multline*}
A := \sum_{1 \leq j \leq \nn, |x_j - x_i| \geq 10 \ell} \eta_i^2 \eta_j \left(  \frac{|\psi|^2_{\1, \loc}(x_i)}{|x_i - x_j|^3} + \frac{|\psi|_{\1, \loc}(x_i) |\psi|_{\2, X}(x_i)}{|x_i - x_j|^2} \right) \\
\preceq s^3 K \sum_{1 \leq i \leq \nn} |\psi|^2_{\1, \loc}(x_i) + |\psi|_{\1, \loc}(x_i) |\psi|_{\2, \loc}(x_i) \log T. 
\end{multline*}
Using again $\PsiU, \PsiD$ instead and comparing again the sum to an integral, we obtain:
\begin{equation}
\label{rem2_far}
A \preceq s^3 K^2 \left( \int \PsiU^2(x) + \log T \PsiU(x) \PsiD(x)  \dd x\right),  
\end{equation}
which can be evaluated using \eqref{def:PsiUD} (and \eqref{eq:ellrhoepsilon}):
\begin{equation}
\label{eq:Rem2far_application}
A \preceq s^3 K^2 \left( \int \PsiU^2(x) + \log \rho \PsiU(x) \PsiD(x)  \dd x\right) \preceq  s^3 K^2 \left(\epsilon t^2 + \epsilon^2 t^2 \log T \right) = \O\left(s^3 K^2 \epsilon \log \epsilon t^2 \right)
\end{equation}

\smallskip

\textbf{C. Conclusion for $\Rem_2$.} Combining \eqref{Rem2close_application} and \eqref{eq:Rem2far_application} and discarding negligible terms we get, as desired:
\begin{equation}
\label{Rem2}
\Rem_2 =  \O\left(s^3 K^2 \ell^2 \epsilon \log \epsilon t^2 \right) = \O\left(K^2 \epsilon \log \epsilon t^2\right),
\end{equation}
where we have used \eqref{eq:ellrhoepsilon} to simplify the expression.

\subsubsection{The \texorpdfstring{$\Rem_3$}{Rem3} term.} 
\label{sec:Rem3}
$\Rem_3$ is defined as ($\mm$ is outside of the sum over $j$.):
\begin{equation*}
\Rem_3 := - \sum_{1 \leq i \leq \nn} \int_\La v_i \left( \sum_{1 \leq j \leq \nn, j \neq i} \dhj - \mm \right).
\end{equation*}
\cor{Our goal in this section is to prove the following estimate:
\begin{equation*}
\Rem_3 = \sum_{1 \leq i \leq \nn} \dashint_{|u| = \eta_i} \nabla \hciX(x_i + u) \cdot \left(\psi(x_i + u) - \psi(x_i)\right) \dd u +\O\left(K^2 \epsilon \log \epsilon t^2 \right). 
\end{equation*}
This is the most cumbersome part of the proof, we split it into the following five steps (plus the conclusion).
}

\textbf{\cor{Step 1: Setting up the computation.}}
Writing $\dhj - \mm = \dhj - \dyj + \dyj - \mm$ and recalling that $v_i := - \log \star \left( \dhi - \dyi \right)$, one can express $\Rem_3$ (as in \cite[Substep (5.3)]{serfaty2020gaussian}) as:
\begin{equation*}
\Rem_3 = \sum_{1 \leq i \leq \nn} \int \hci \left(\dyi - \dhi \right).
\end{equation*} 
We recall that the truncated field $\h_{\veta}$ coincides with $\tilde{\h}_i$ inside the $i^{th}$ spread-out charge.

The analysis of \cite[Step 4.]{serfaty2020gaussian} shows that, for each $i$ we have, as in \cite[Substep 5.3]{serfaty2020gaussian}:
\begin{multline}
\label{Rem3v1}
\int \hci \left(\dyi - \dhi \right) = \dashint_{|u| = \eta_i} \nabla \hci(y_i + u) \cdot \left(\psi(x_i + u) - \psi(x_i)\right) \dd u \\ + \O\left(|\hci|_{C^2(B(y_i, 2\eta_i))} \eta_i^2 |\psi|^2_{\1, \loc}(x_i) \right).
\end{multline}
Here and below we temporarily borrow the notation of \cite{serfaty2020gaussian} for local controls on derivatives of $\hci$, namely:
\begin{equation*}
|\hci|_{C^\kk(B(x,r))} = L^{\infty} \text{ norm of the $\kk$-th derivative of $\hci$ over the ball/disk of center $x$ and radius $r$}.
\end{equation*}
We cannot use \eqref{Rem3v1} as such because of the various dependencies in $\Y$, which we need to analyse.
\medskip

\textbf{\cor{Step 2: First preliminary claim: Variation of the first derivatives.}}
\cor{We study how the first derivative behaves under the transport.}
\begin{claim}
\label{claim:d1hci}[Variation of the first derivatives]
For all $a \in \{1, 2\}$, $1 \leq i \leq \nn$ and $|u| \leq \eta_i$ we have:
\begin{equation*}
\left| \partial_a \hci(y_i + u) - \partial_a \hciX(x_i + u) \right| \preceq  K \left( \frac{1}{\rr_i} \PsiU(x_i) \tell^2 +  |\psi|_\0 \log \rho \right)
\end{equation*}
\end{claim}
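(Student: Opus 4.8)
We may assume $x_i$ lies in (a unit neighbourhood of) the support of $\psi=\Psit$, the estimate being trivial otherwise; by \eqref{eq:ellrhoepsilon} this forces $x_i\in\Lab$, at distance $\gtrsim\ell e^{1/\epsilon}$ from $\partial\La$. The plan is to separate the contributions of the background and of the other point charges. The background $\mm=\nn\muW$ is the \emph{same} measure in $\hci$ and in $\hciX$, so its contribution to the two sides differs only because the evaluation points $y_i+u$ and $x_i+u$ differ by $\psi(x_i)$; this difference is at most $|\psi(x_i)|$ times the supremum of $\|\Dd^2(-\log\ast\mm)\|$ over the joining segment, and since $x_i$ sits deep in $\Lab$ where $\mm$ has density $1$, while the non-uniform part of $\mm$ near $\partial\La$ is small (Claim \ref{claim:Effect_KDeta}) and very far from $x_i$, this Hessian is $\O(1)$. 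Hence the background contributes $\O(|\psi(x_i)|)=\O(|t|)$, which is $\lesssim|\psi|_\0\log\rho$.

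For the point charges I would use translation invariance of $-\log$: the point part of $\hciX$ at $x_i+u$ equals $-\log\ast\sum_{j\neq i}\delta_{x_j+\psi(x_i)}^{(\eta_j)}$ evaluated at $(x_i+u)+\psi(x_i)=y_i+u$, so that
\[
\partial_a \hci(y_i+u) - \partial_a \hciX(x_i+u) = \partial_a\Bigl[-\log\ast\sum_{j\neq i}\bigl(\dhj-\delta_{x_j+\psi(x_i)}^{(\eta_j)}\bigr)\Bigr](y_i+u) + \O(|\psi|_\0\log\rho).
\]
The gain is that in the sum the $j$-th spread charge has moved by the \emph{difference} $\psi(x_j)-\psi(x_i)$ (and is mildly deformed), not by $\psi(x_j)$ itself.

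It then remains to estimate the sum termwise. Using $\eta_j\le\rr_j$ and $|x_i-x_j|\ge 4\rr_i$ to place $y_i+u$ at distance $\gtrsim|x_i-x_j|$ from the $j$-th source — treating by hand, via $\WS$, the boundedly many $x_j$ within a fixed multiple of $\rr_i$ of $x_i$ — and Taylor-expanding $-\log$ in the displacement and differentiating once in the evaluation variable, the $j$-th term is $\O(|\psi(x_j)-\psi(x_i)|/|x_i-x_j|^2)$, the deformation part being one order higher in $\eta_j$ and negligible. For $j$ with $|x_i-x_j|\le\ell$ I would bound $|\psi(x_j)-\psi(x_i)|\le|x_i-x_j|\sup\|\Dd\psi\|\lesssim|x_i-x_j|\,\PsiU(x_i)$, using the slow variation \eqref{eq:Psi_slowvar_l} of $\PsiU$, so that the $\WS$ point count in $\DD(x_i,\ell)$ (equivalently \eqref{eq:SumIntegral}) gives $\sum_{4\rr_i\le|x_i-x_j|\le\ell}\PsiU(x_i)/|x_i-x_j|\lesssim K\,\PsiU(x_i)\,\ell^2/\rr_i$. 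For $j$ with $|x_i-x_j|>\ell$ I would instead use $|\psi(x_j)-\psi(x_i)|\le 2|\psi|_\0$ and decompose dyadically into shells $\{2^k\le|x_i-x_j|<2^{k+1}\}$ with $\ell\le 2^k\le T$, where $\WS$ bounds the number of points by $\O(K\,2^{2k})$, giving $\sum_{|x_i-x_j|>\ell}|\psi|_\0/|x_i-x_j|^2\lesssim K\,|\psi|_\0\log T=K\,|\psi|_\0\log\rho$. Adding the two ranges and the background term yields the claimed bound.

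I expect the main obstacle to be the short-range bookkeeping: obtaining a single power of $\rr_i^{-1}$ rather than $\rr_i^{-2}$ on the innermost scale is precisely what forces the translation by $\psi(x_i)$, so that only the difference $\psi(x_j)-\psi(x_i)$ enters; one then still has to combine the Lipschitz bound on $\psi$, the slow variation of $\PsiU$, and a careful use of the $\WS$ point count there. Verifying that the deformed-circle corrections are genuinely one order lower, and that the background Hessian is $\O(1)$ uniformly over the relevant $x_i$, is routine but relies on the bounds on $\muW$ from Section \ref{sec:applications_to_generalized}.
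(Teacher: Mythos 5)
Your proposal is correct and follows essentially the same route as the paper: split into the background and the point-charge contributions, bound the background term by $\O(|\psi|_\0)$, reduce the point part to the relative displacement $\psi(x_j)-\psi(x_i)$ and a mean value estimate, then split the sum into close and far pairs using the $\WS$ point counts. The one place you take a genuine detour is the translation by $\psi(x_i)$, which you identify as the key move needed to get $\rr_i^{-1}$ rather than $\rr_i^{-2}$. It is not actually needed: the paper just writes the point part of the difference as $\sum_{j\neq i}\bigl[\partial_a(-\log)\bigl(y_j-(y_i+u)\bigr)-\partial_a(-\log)\bigl(x_j-(x_i+u)\bigr)\bigr]$ and observes that the two arguments are $(x_j-x_i)-u+(\psi(x_j)-\psi(x_i))$ and $(x_j-x_i)-u$, i.e.\ they already differ by exactly $\psi(x_j)-\psi(x_i)$; the mean value theorem for $\partial_a(-\log)$, whose Hessian is $\O(|x_i-x_j|^{-2})$ along the joining segment, therefore yields $|\psi(x_j)-\psi(x_i)|/|x_i-x_j|^2$ termwise with no explicit translation. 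Your trick is a valid reformulation of this cancellation, not a different mechanism. Two smaller points in your favour: your Hessian argument for $-\log\ast\mm$ is a bit more general than the paper's, which just reads off the constant Hessian of the explicit quadratic $\VLa(x)=|x|^2/4$ and lets the $\preceq$ absorb the tiny correction from $\mm-\mm_0$; and you explicitly flag the deformed-circle (multipole) correction, which the paper silently discards even though the displayed expansion is only an identity up to that error — you are right that it is higher order in $\eta_j$ and negligible.
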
 
\begin{proof}[Proof of Claim \ref{claim:d1hci}]
We have by definition:
\begin{multline*}
\partial_a \hci(y_i + u) - \partial_a \hciX(x_i + u) = \sum_{1 \leq j \leq \nn, j \neq i} \partial_a (-\log) (y_j - (y_i +u)) - \partial_a(-\log)(x_j - (x_i +u))  \\
- \partial_a \VLa(y_i + u) + \partial_a \VLa(x_i+u),
\end{multline*}
where $\VLa$ is the logarithmic potential generated by the background measure on $\La$, whose expression is given in \eqref{def:VLa}. Since $y_i = x_i + \psi(x_i)$ the difference $\partial_a \VLa(y_i + u) - \partial_a \VLa(x_i+u)$ is easily bounded by $|\psi|_\0$. We now focus on the contribution coming from the point particles. We have:
\begin{equation*}
\left| \partial_a \hci(y_i + u) - \partial_a \hciX(x_i + u) \right| \leq \sum_{1 \leq j \leq \nn, j \neq i} \frac{|\psi(x_i) - \psi(x_j)|}{|x_i - x_j|^2}. 
\end{equation*}
Let us split the sum into two parts corresponding to “close” and “far away” pairs of points.

\smallskip

\textbf{Distances smaller than $10 \ell$.}
On the one hand we have:
\begin{equation*}
\sum_{j, |x_j - x_i| \leq 10 \ell, j \neq i} \frac{|\psi(x_i) - \psi(x_j)|}{|x_i - x_j|^2} \preceq \frac{1}{\rr_i} \PsiU(x_i) \times  \#\left\lbrace j, |x_j-x_i| \leq 10 \ell \right\rbrace \preceq K \frac{1}{\rr_i} \PsiU(x_i) \ell^2.
\end{equation*}
To be precise, here we have used a mean value argument to argue that:
\begin{equation*}
\frac{|\psi(x_i) - \psi(x_j)|}{|x_i - x_j|} \leq \sup_{x \in [x_i, x_j]} |\psi|_{\1, \star}(x),  
\end{equation*}
then we wrote $|\psi|_{\1, \star}(x) \leq \PsiU(x)$ and finally we used \eqref{eq:Psi_slowvar_l}. Then we applied the local control on the number of points implied by $\WS(\Lab, \ell, K)$.

\smallskip

\textbf{Distances larger than $10 \ell$.}
On the other hand:
\begin{equation*}
\sum_{|x_j - x_i| \geq 10 \ell} \frac{|\psi(x_i) - \psi(x_j)|}{|x_i - x_j|^2} \preceq K |\psi|_\0 \int_{|x - x_i| \geq 10 \ell, x \in \La} \frac{1}{|x-x_i|^2} \dd x \preceq K |\psi|_\0 \log \rho.
\end{equation*}
This time we simply bounded $|\psi(x_i) - \psi(x_j)|$ by $2 |\psi|_\0$ and used the fact that $z \mapsto \frac{1}{|z|^2}$ “varies slowly at scale $\ell$” (in the sense of \eqref{slowlyvar}) on $\{|z| \geq 10 \ell\}$ in order to compare the sum to an integral as in \eqref{eq:SumIntegral}.

Combining those estimates proves the claim.
\end{proof}

\textbf{\cor{Step 3: Second preliminary claim: Variation of the second derivatives.}}
\cor{Similarly, we study how the \emph{second} derivative behaves under the transport.}
\begin{claim}[Variation of the second derivatives]
\label{claim:d2hci}
For all $a, b \in \{1, 2\}$, $1 \leq i \leq \nn$ and $|u| \leq \eta_i$ we have:
\begin{equation*}
\left| \partial_{ab}\hci(y_i + u) - \partial_{ab} \hciX(x_i + u) \right| \preceq K \left(\frac{1}{\rr_i^2} \tell^2 \PsiU(x_i) + \PsiU(x_i) \log T  + |\psi|_{\0} \frac{1}{\tell + |x_i|}\right).
\end{equation*}
\end{claim}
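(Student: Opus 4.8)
The plan is to reproduce, with one more derivative, the computation behind Claim \ref{claim:d1hci}. Since $\hci = -\log\star\big(\sum_{j\neq i}\delta^{(\eta_j)}_{y_j}\big) - \VLa$ with $y_k=\Phi(x_k)$ and likewise for $\hciX$ with the $x_j$'s (here, as in Claim \ref{claim:d1hci}, $\VLa(x)=|x|^2/4$ is the potential of the uniform background; the correction coming from $\nn\muW$ differing from Lebesgue only near $\partial\La$ is harmonic in the region that matters and contributes negligibly), one has
\begin{multline*}
\partial_{ab}\hci(y_i+u) - \partial_{ab}\hciX(x_i+u) \\
= \sum_{j\neq i}\Big(\partial_{ab}(-\log)(y_i+u-y_j) - \partial_{ab}(-\log)(x_i+u-x_j)\Big) \\
- \partial_{ab}\VLa(y_i+u) + \partial_{ab}\VLa(x_i+u),
\end{multline*}
up to an error from the circular smearing of the charges which is negligible at the distances involved (all $\ge 2\rr_i\gg\eta_j$). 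The background term is now identically zero, because $\VLa$ has constant second derivatives; this is the one genuine simplification relative to Claim \ref{claim:d1hci}, where it contributed $\O(|\psi|_\0)$. Writing $y_k=x_k+\psi(x_k)$, using that the third derivatives of $-\log$ decay like $|z|^{-3}$, and a mean-value estimate — legitimate because $|u|\le\rr_i\le\tfrac14|x_i-x_j|$ and $|\psi(x_i)-\psi(x_j)|\le|\psi|_\1\,|x_i-x_j|\le\tfrac15|x_i-x_j|$ by Claim \ref{claim:simple_facts}, so all distances are comparable to $|x_i-x_j|$ — gives the per-term bound $|\psi(x_i)-\psi(x_j)|\,|x_i-x_j|^{-3}$.

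Next I would split the sum over $j$ into close pairs $|x_j-x_i|\le 10\ell$ and far pairs $|x_j-x_i|\ge 10\ell$, as in Claim \ref{claim:d1hci}. For close pairs, bound $|\psi(x_i)-\psi(x_j)| \le |x_i-x_j|\sup_{[x_i,x_j]}|\psi|_{\1,\star} \le \Cc|x_i-x_j|\,\PsiU(x_i)$ using $|\psi|_{\1,\star}\le\PsiU$ and the slow variation \eqref{eq:Psi_slowvar_l} of $\PsiU$ on the $10\ell$-ball; the per-term bound becomes $\Cc\,\PsiU(x_i)/|x_i-x_j|^2\le\Cc\,\PsiU(x_i)/\rr_i^2$ (since $|x_i-x_j|\ge 4\rr_i$), and summing against the local point count $\#\{j:|x_j-x_i|\le 10\ell\}\le\Cc K\ell^2$ supplied by $\WS(\Lab,\ell,K)$ produces the first term $K\ell^2\PsiU(x_i)/\rr_i^2$. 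For the far pairs, where $z\mapsto|z|^{-2}$ and $z\mapsto|z|^{-3}$ vary slowly at scale $\ell$, I would split once more: for pairs whose segment $[x_i,x_j]$ stays away from the origin relative to $|x_i|$, the mean-value bound gives $\sup_{[x_i,x_j]}\PsiU\le\Cc\PsiU(x_i)$ by \eqref{eq:Psi_slowvar_}, so, eating one power of $|x_i-x_j|$ into the mean value and comparing the remaining sum to $K\int_{\La\cap\{|x-x_i|\ge 10\ell\}}|x-x_i|^{-2}\,dx=\O(K\log T)$, one gets $K\PsiU(x_i)\log T$; for the remaining pairs, whose segment dips toward the origin and which therefore satisfy $|x_i-x_j|\gtrsim \ell+|x_i|$, the crude bound $|\psi(x_i)-\psi(x_j)|\le 2|\psi|_\0$ together with $K\int_{\{|x-x_i|\gtrsim \ell+|x_i|\}}|x-x_i|^{-3}\,dx=\O\big(K/(\ell+|x_i|)\big)$ gives $K|\psi|_\0(\ell+|x_i|)^{-1}$. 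All of this is uniform in $|u|\le\eta_i$.

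The main obstacle is exactly this last (far-pair) step. In Claim \ref{claim:d1hci} the far pairs could be dispatched by the blunt bound $|\psi(x_i)-\psi(x_j)|\le 2|\psi|_\0$, which here would only yield the non-localized factor $1/\ell$ in place of $(\ell+|x_i|)^{-1}$ — too weak for the subsequent use in the $\Rem_3$ estimate, where the bound gets summed over all $i$ and one needs $\sum_i(\ell+|x_i|)^{-1}\lesssim T \ll \nn/\ell = \sum_i 1/\ell$. To upgrade it one must use simultaneously that $\psi=\Phi-\Id$ is supported in $\DD(0,2\ell e^{1/\epsilon})$ (so only points in that disk can make $\psi(x_i)-\psi(x_j)$ nonzero, and when $|x_i|$ is large the distances to such points are $\gtrsim|x_i|$) and the decay $|\psi|_{\1,\star}(x)\le\Cc t\epsilon(\ell+|x|)^{-1}$ along the relevant segments; organizing this cleanly is most naturally done by a dyadic decomposition of the far pairs in $|x_i-x_j|$, bounding the point count on each dyadic shell both by $K$ times its area and by $K$ times the area of $\DD(0,2\ell e^{1/\epsilon})$, then summing the resulting geometric series. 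The rest is routine bookkeeping, and a $\Dd^3$-version, should it be needed later, would follow the same template with the $|z|^{-4}$ decay of the fourth derivatives of $-\log$ and the weight $\PsiD$.
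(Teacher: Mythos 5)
Your proof is correct and follows the paper's route: the same three-way split of the sum over $j$ (close pairs $|x_i-x_j|\le 10\ell$, intermediate pairs $10\ell\le|x_i-x_j|\le\tfrac12|x_i|$, far pairs beyond), the same mean-value per-term bound with the $|z|^{-3}$ decay of $\Dd^3(-\log)$, the cancellation of the background term because $\partial_{ab}\VLa$ is constant, and the same sum-to-integral comparisons afforded by $\WS(\Lab,\ell,K)$. Your concluding "main obstacle" paragraph and the proposed dyadic alternative are superfluous — the far-pair argument you already stated (crude $|\psi|_\0$ bound combined with $\int_{\{|x-x_i|\gtrsim\ell+|x_i|\}}|x-x_i|^{-3}\,dx=\O\big((\ell+|x_i|)^{-1}\big)$) is exactly what the paper does and closes the proof without any appeal to the support of $\psi$.
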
 
\begin{proof}[Proof of Claim \ref{claim:d2hci}]
The proof is similar to Claim \ref{claim:d1hci}. We have by definition:
\begin{multline*}
\partial_{ab} \hci(y_i + u) - \partial_{ab} \hci(x_i + u) = \sum_{1 \leq j \leq \nn, j \neq i} \partial_{ab} (-\log) (y_j - (y_i +u)) - \partial_{ab}(-\log)(x_j - (x_i +u))\\
- \partial_{ab} \VLa(y_i + u) + \partial_{ab} \VLa(x_i+u),
\end{multline*}
however the second derivatives of $\VLa$ are \emph{constant} (see \eqref{def:VLa}) so we can discard those terms.
Writing $y = \Phi(x) = x + \psi(x)$, we get:
\begin{equation*}
\left| \partial_{ab} \hci(y_i + u) - \partial_{ab} \hciX(x_i + u) \right| \leq \sum_{1 \leq j \leq \nn, j \neq i} \frac{|\psi(x_i) - \psi(x_j)|}{|x_i - x_j|^3}. 
\end{equation*}
We now split the sum into three parts: $|x_j - x_i| \leq \tell$, $|x_j - x_i| \leq \hal |x_i|$ and $|x_j - x_i| \geq \left(\tell \cup \hal |x_i| \right)$. 

\textbf{Distances smaller than $10 \ell$.}
Arguing as in the proof of Claim \ref{claim:d1hci} we get:
\begin{equation*}
\sum_{1 \leq j \leq \nn, j \neq i} \frac{|\psi(x_i) - \psi(x_j)|}{|x_i - x_j|^3} \preceq  K \frac{1}{\rr_i^2} \tell^2 \PsiU(x_i).
\end{equation*}

\smallskip

\textbf{Distances between $10 \ell$ and $\hal |x_i|$.}
For $10 \ell \leq |x_j - x_i| \leq \hal |x_i|$ we write by a mean value argument:
\begin{equation*}
\frac{|\psi(x_i) - \psi(x_j)|}{|x_i - x_j|^3} \leq \sup_{x \in [x_i, x_j]} |\psi|_{\1, \star}(x) \times \frac{1}{|x_j - x_i|^2}, 
\end{equation*}
then we may again replace $|\psi|_{\1, \star}(x)$ by $\PsiU(x)$ and use property \eqref{eq:Psi_slowvar_} to bound it by $\PsiU(x_i)$ up to some multiplicative constant. Next, comparing a sum to an integral, we have:
\begin{equation*}
\sum_{j, |x_j - x_i| \geq 10 \ell} \frac{1}{|x_i - x_j|^2} \preceq K \log \rho.
\end{equation*}
In conclusion, we get for fixed $i$:
\begin{equation*}
\sum_{1 \leq j \leq \nn, j \neq i, 10 \ell \leq |x_j - x_i| \leq \hal |x_i|} \frac{|\psi(x_i) - \psi(x_j)|}{|x_i - x_j|^3} \leq K \PsiU(x_i) \log \rho.
\end{equation*}

\textbf{Large distances.}
To estimate the remaining contribution, we write:
\begin{equation*}
\sum_{1 \leq j \leq \nn, |x_j - x_i| \geq \max(10 \ell,  \hal |x_i|)} \frac{|\psi(x_i) - \psi(x_j)|}{|x_i - x_j|^3} \preceq K |\psi|_\0 \int_{|x - x_i|  \geq \max(10 \ell,  \hal |x_i|)} \frac{1}{|x-x_i|^3} \dd x \preceq K |\psi|_\0 \frac{1}{|x_i| + \tell},
\end{equation*}
where we compared the sum to an integral using \eqref{eq:SumIntegral}.

Combining all three estimates proves the claim.
\end{proof}

\begin{remark}
In Claims \ref{claim:d1hci} and \ref{claim:d2hci} we have studied the contributions coming from the point particles and the background separately. Taking cancellations between those terms into account would yield more accurate estimates, but we do not need them here.
\end{remark}
\medskip 

\textbf{Step 4: Studying the first-order term.} 
Recall that we are still trying to express $\Rem_3$ purely in terms of the original points $(x_1, \dots, x_\nn)$.

Going back to \eqref{Rem3v1}, we write the first-order term as:
\begin{multline*}
\dashint_{|u| = \eta_i} \nabla \hci(y_i + u) \cdot \left(\psi(x_i + u) - \psi(x_i)\right) \dd u = \dashint_{|u| = \eta_i} \nabla \hciX(x_i + u) \cdot \left(\psi(x_i + u) - \psi(x_i)\right) \dd u \\ 
+ \dashint_{|u| = \eta_i} \left( \nabla \hci(y_i + u) - \nabla \hciX(x_i + u) \right) \cdot \left(\psi(x_i + u) - \psi(x_i)\right) \dd u. 
\end{multline*}
We keep the first term in the right-hand side as such and we focus on the second one, which we decompose as:
\begin{multline}
\label{Rem3vA}
\dashint_{|u| = \eta_i} \left( \nabla \hci(y_i + u) - \nabla \hciX(x_i + u) \right) \cdot \left(\psi(x_i + u) - \psi(x_i)\right) \dd u \\ 
= 
\dashint_{|u| = \eta_i} \left( \nabla \hci(y_i) - \nabla \hciX(x_i) \right) \cdot \left(\psi(x_i + u) - \psi(x_i)\right) \dd u \\
+ 
\dashint_{|u| = \eta_i} \left( \left( \nabla \hci(y_i + u) - \nabla \hciX(x_i + u) \right) - \left( \nabla \hci(y_i) - \nabla \hciX(x_i) \right) \right) \cdot \left(\psi(x_i + u) - \psi(x_i)\right) \dd u.
\end{multline}

Let us also write that $\psi(x_i + u) - \psi(x_i) = \Dd \psi(x_i) \times u + \O\left(|\psi|_{\2, \loc}(x_i) \eta_i^2 \right)$, and  observe that according to Claim \ref{claim:d2hci} we have (for $|u| = \eta_i$)
\begin{multline*}
\left|\left( \nabla \hci(y_i + u) - \nabla \hciX(x_i + u) \right) - \left( \nabla \hci(y_i) - \nabla \hciX(x_i) \right)  \right| \\
\preceq \eta_i \times K \left(\frac{1}{\rr_i^2} \tell^2 \PsiU(x_i) + \PsiU(x_i) \log T  + |\psi|_{\0} \frac{1}{\tell + |x_i|}\right).
\end{multline*}

We may thus re-write the right-hand side of \eqref{Rem3vA} as:
\begin{multline*}
\dashint_{|u| = \eta_i} \left( \nabla \hci(y_i) - \nabla \hciX(x_i) \right) \cdot \Dd \psi(x_i) u \dd u + \dashint_{|u| = \eta_i} \left( \nabla \hci(y_i) - \nabla \hciX(x_i) \right) \times \O\left(|\psi|_{\2, \loc}(x_i) \eta_i^2 \right) \dd u \\
+ 
\O\left(\eta_i \times K \left(\frac{1}{\rr_i^2} \tell^2 \PsiU(x_i) + \PsiU(x_i) \log T  + |\psi|_{\0} \frac{1}{\tell + |x_i|}\right) \right) \times |\psi|_{\1, \loc}(x_i) \times \eta_i. 
\end{multline*}
The first term vanishes by symmetry, and we can bound the second term further using Claim \ref{claim:d1hci}. In conclusion, we obtain (using that $\eta_i \leq s \rr_i \leq s$):
\begin{multline}
\label{rem3FirstOrder}
\dashint_{|u| = \eta_i} \nabla \hci(y_i + u) \cdot \left(\psi(x_i + u) - \psi(x_i)\right) \dd u = \dashint_{|u| = \eta_i} \nabla \hciX(x_i + u) \cdot \left(\psi(x_i + u) - \psi(x_i)\right) \dd u \\ 
+ K \left( s^2 \PsiU(x_i) \PsiD(x_i) \ell^2 +  s^2 |\psi|_\0 \PsiD(x_i) \log T + s^2 \PsiU^2(x_i) \left(\tell^2 + \log T\right) + s^2 |\psi|_{\0} \PsiU(x_i) \frac{1}{\tell + |x_i|}\right).
\end{multline}

Summing the error term in \eqref{rem3FirstOrder} over $i$ yields:
\begin{multline}
\label{sum_error_First_order}
\sum_{i=1}^\nn K \left( s^2 \PsiU(x_i) \PsiD(x_i) \ell^2 +  s^2 |\psi|_\0 \PsiD(x_i) \log T + s^2 \PsiU^2(x_i) \left(\tell^2 + \log T\right) + s^2 |\psi|_{\0} \PsiU(x_i) \frac{1}{\tell + |x_i|}\right) 
\\ \leq K^2 \left( s^2 \ell^2  \int \PsiU(x) \PsiD(x) \dd x +  s^2 |\psi|_\0 \log T \int \PsiD(x)  \dd x + s^2 \left(\tell^2 + \log T\right) \int \PsiU^2(x) \dd x  \right. \\ \left. + s^2 |\psi|_{\0} \int \PsiU(x) \frac{1}{\ell + |x|} \dd x\right) 
\\
\leq K^2 \left(s^2 \ell^2  \epsilon^2 t^2  +  s^2 t^2 \log T + s^2 \left(\tell^2 + \log T\right) \epsilon t^2 + s^2 t^2 \right) = \O\left(K^2 \epsilon t^2\right),
\end{multline}
where we have used \eqref{eq:ellrhoepsilon} to simplify the expression involving $T, \ell, \epsilon$ and $s$ (we recall that $s = \epsilon^2$).
\medskip 

\textbf{Step 5: Re-writing the error term.} The error term in \eqref{Rem3v1} involves $|\hci|_{C^2(B(y_i, 2\eta_i))}$, which is expressed in terms of the \emph{transported} points and thus remains an issue for us. Using Claim \ref{claim:d2hci} however, we see that:
\begin{equation}
\label{hciB}
|\hci|_{C^2(B(y_i, 2\eta_i))} \leq |\hciX|_{C^2(B(x_i, 2\eta_i))} + K \O\left(\frac{1}{\rr_i^2} \ell^2 |\psi|_{\1, \loc}(x_i) + |\psi|_{\1, \loc}(x_i) \log \rho   + |\psi|_{\0} \frac{1}{\ell + |x_i|}\right).
\end{equation}
Now, the analysis of \cite[Lemma A.2]{serfaty2020gaussian} gives:
\begin{equation}
\label{eq:hciXCk}
|\hciX|_{C^2(B(x_i, 2\eta_i))} \preceq \frac{1}{\rr_i^2} \left(1 + \int_{\DD(x_i, \rr_i)} |\nabla \hciX|^2 \right).
\end{equation}
\begin{remark}
The proof of \eqref{eq:hciXCk} uses the fact that $\hciX$ is almost harmonic on the disk $\DD(x_i, \rr_i)$, and is simplified by the fact that the background measure ($\mu$ in the notation of \cite{serfaty2020gaussian}) is here constant (beware: when reading \cite[(A.5), (A.6)]{serfaty2020gaussian}, our background measure corresponds to $N \mu$ and not $\mu$ - compare \cite[(3.1)]{serfaty2020gaussian} with our Definition \ref{def:true}. Also, the nearest-neighbor distances in \cite{serfaty2020gaussian} are of order $N^{-1/2}$ where ours are of order one, this is due to a different choice of scaling, but \cite[Lemma A.2]{serfaty2020gaussian} is valid in any case).
\end{remark}

Combining \eqref{hciB} and \eqref{eq:hciXCk} the error term appearing in \eqref{Rem3v1} can be re-written as:
\begin{multline*}
|\hci|_{C^2(B(y_i, 2\eta_i))} \eta_i^2 |\psi|^2_{\1, \loc}(x_i) \preceq \PsiU^2(x_i) \frac{\eta_i^2}{\rr_i^2} \left(1 + \int_{\DD(x_i, \rr_i)} |\nabla \hciX|^2 \right) \\
+ K \PsiU^2(x_i) \eta_i^2 \left( \frac{1}{\rr_i^2} \tell^2 \PsiU(x_i) + \PsiU(x_i) \log \rho  + |\psi|_{\0} \frac{1}{\ell + |x_i|}  \right),
\end{multline*}

 Summing over $1 \leq i \leq \nn$, using the $\WS$ assumption, comparing to an integral and using again that $\eta_i \leq s \rr_i$ we obtain:
\begin{multline}
\label{RemErrorNew}
\sum_{1 \leq i \leq \nn} |\hci|_{C^2(B(y_i, 2\eta_i))} \eta_i^2 |\psi|^2_{\1, \loc}(x_i) 
\\ \preceq K (1 + |\log s|)  s^2 \int \PsiU^2(x) \dd x + K^2 s^2 \left( \left(\tell^2 + \log \rho \right) \int_\La \PsiU^3(x)  \dd x + |\psi|_\0 \int_\La \PsiU^2(x) \frac{1}{\tell + |x|} \dd x \right)
\end{multline}
Estimating everything explicitly and keeping only the dominant term we may thus control the error term in \eqref{Rem3v1} by:
\begin{equation}
\label{RemErrorNew_application}
\sum_{1 \leq i \leq \nn} |\hci|_{C^2(B(y_i, 2\eta_i))} \eta_i^2 |\psi|^2_{\1, \loc}(x_i)  = \O\left(K^2 (1 + \log |s|)  s^2 \ell^2 \epsilon t^2 \right) = \O\left(K^2 \epsilon \log \epsilon t^2\right),
\end{equation}
where we have used that $s = \epsilon^2$ and \eqref{eq:ellrhoepsilon} again.

\medskip 

\textbf{Step 6: Conclusion for $\Rem_3$.}
In conclusion we obtained that:
\begin{equation}
\label{eq:Rem3Conclusion}
\Rem_3 = \sum_{1 \leq i \leq \nn} \dashint_{|u| = \eta_i} \nabla \hciX(x_i + u) \cdot \left(\psi(x_i + u) - \psi(x_i)\right) \dd u +\O\left(K^2 \epsilon \log \epsilon t^2 \right). 
\end{equation}

\subsubsection{Concluding the proof of Proposition \ref{prop:meas_pres}}
\label{sec:ConclusionMeasPres}
Combining all the previous steps, we find that:
\begin{equation*}
\FL(\PhiX \bX) = \FL(\bX) + \AUN(\bX, \psi) + \O\left(K^2 \epsilon \log \epsilon t^2 \right),
\end{equation*}
where the “anisotropy” term $\AUN$ is defined as the sum of the linear (in $\psi$) terms obtained in $\Main$ and $\Rem_3$, namely:
\begin{equation}
\label{def:Aniv2}
\AUN(\bX, \psi) := \frac{1}{4\pi} \int \left\langle \nhetaX, 2 \Dd \psi \nhetaX \right\rangle + \sum_{1 \leq i \leq \nn} \dashint_{|u| = \eta_i} \nabla \hciX(x_i + u) \cdot \left(\psi(x_i + u) - \psi(x_i)\right) \dd u.
\end{equation}
There was a linear term appearing in $\Rem_1$ but it was found to be negligible, and comparing \eqref{def:Aniv2} to \cite[(4.8)]{serfaty2020gaussian}, the reader might observe that there is another term missing (the last term in \cite[(4.8)]{serfaty2020gaussian}), in fact for us this term is $\O\left(K \epsilon t^2 \right)$ and can thus be incorporated in the error term. This is due to the fact that our $\Phi$ is measure-preserving.

This concludes the proof of Proposition \ref{prop:meas_pres}.
\end{proof}

\newcommand{\tpsii}{\tilde{\psi}^{(i)}}
\subsection{Smallness of the anisotropy}
\label{sec:smallani}
Applying the result of Proposition \ref{prop:meas_pres} to $\Phit$ and $\Phimt$ we obtain that if $\bX$ is in $\WS(\Lab, \ell, K)$ we have:
\begin{equation}
\label{avant_decontrolerAni}
\hal \left(\FL(\Phit \bX, \mm) + \FL(\Phimt \bX, \mm) \right) - \FL(\bX, \mm) = \hal \AUN[\bX, \Psit + \Psimt] + \O\left(K^2 \epsilon \log \epsilon t^2 \right).
\end{equation}
Let us decompose $\Psit, \Psimt$ as in \eqref{redefPsit}, \eqref{defrt}. The terms of first order in $t$ cancel each other, and it remains to bound $\AUN[\bX, \rt + \rmt]$.  At this point, using the “rough” bounds of \cite[Prop. 4.2]{armstrong2019local} on the anisotropy~$\AUN$, even in a localized way, would yield a bounded, \emph{but not small}, error term - which would make the whole approach pointless. We thus need to rely on a finer understanding of anisotropy terms as put forward in \cite{MR3788208,serfaty2020gaussian} (see also \cite{MR4063572} for similar concerns about their “angle term”).

For simplicity, we will focus on the context of interest for us, namely when the background measure $\mm$ is given by $\nn \muW$ as above - in particular it is constant on $\Lab$.

\subsubsection*{Anisotropy.} 
Let $\psi$ be a continuous vector field supported on $\Lab$ and let $U$ be a neighborhood of $\supp \psi$. We define $\AUN(\bX, \mm, \psi)$ as:
\begin{equation}
\label{def:AUN}
\AUN(\bX, \mm, \psi) := \iint_{(x,y) \in \La \times \La, x \neq y} \psi(x) \cdot \nabla \log|x-y| \dd \left( \bX_n - \mm \right)(x) \dd \left( \bX - \mm \right)(y).
\end{equation}
(It has an alternative expression using the electric field, which is the one that we used above in \eqref{def:Aniv2}, see \cite[(4.14)]{serfaty2020gaussian}). The expression \eqref{def:AUN} makes sense because $\mm$ has a continuous density near $\supp \psi$ and is such that $\iint -\log|x-y| \dd \mm(x) \dd \mm(y)$ is finite. Thanks to a clever integration by parts, one can control $\AUN$ as follows (see \cite[(4.10)]{serfaty2020gaussian}):
\begin{equation}
\label{eq:control_AUN}
\AUN(\bX, \mm, \psi) \leq \Cc |\psi|_{\1} \times \left( \EnerPoints(\bX, \supp \psi) \right).
\end{equation} 
If $\psi$ lives on a disk of radius $\ell$, if $|\psi|_{\1} = \O(\ell^{-2})$ and if local laws hold then we can expect the anisotropy to typically be $\O(1)$. Let us now explain how $\AUN$ shows up in the computations.

\subsubsection*{Energy comparison.}
\newcommand{\Cpsi}{\Cc_\psi}
Assume that $\psi$ is a vector field of class $C^3$, supported on a disk of radius $\ell$ within $\Lab$, and such that:
\begin{equation}
\label{eq:assumpsi}
|\psi|_{\kk} \leq \Cpsi \ell^{-\kk - 1}, \text{ for } \kk = 1, 2, 3.
\end{equation}
Let $\tau$ be a real parameter such that (for some universal $\Cc$ large enough):
\begin{equation}
\label{condi_sur_tau}
|\tau| |\psi|_\1 \leq \frac{1}{\Cc}, \quad |\tau| |\psi|_{\2} \leq \log \ell \frac{1}{\Cc}
\end{equation}
For all $\tau$ such that \eqref{condi_sur_tau} is satisfied, let $\Phi_\tau := \Id + \tau \psi$ and $\mm_\tau := \Phi_\tau \# \mm$.
\begin{lemma}
\label{lem:ComparisonTransport}
We have the following expansion for all $\bX$
\begin{equation}
\label{eq:FsousPhit}
\FL(\Phi_\tau \cdot \bX, \mm_\tau) = \FL(\bX_\nn, \mm) + \tau \AUN(\bX, \mm, \psi) + \tau^2 \ErrorEnergy(\bX, \psi),
\end{equation}
where $\ErrorEnergy(\bX, \psi)$ is controlled by:
\begin{equation}
\label{eq:ErrorEnergy}
\ErrorEnergy(\bX, \psi) \leq \Cpsi^2 \frac{\log \ell}{\ell^4} \EnerPoints(\bX, \supp \psi),
\end{equation}
the energy being computed with $\mm$ as neutralizing background.
\end{lemma}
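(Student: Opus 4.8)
The plan is to rerun, essentially verbatim, the proof of Proposition~\ref{prop:meas_pres} (itself a localized and sharpened version of \cite[Prop.~4.2]{serfaty2020gaussian}), with two accounting changes. On the one hand, here both the points \emph{and} the background are transported by $\Phi_\tau = \Id + \tau\psi$, so we no longer have the ``measure-preserving'' simplifications used in Proposition~\ref{prop:meas_pres}; on the other hand, since $\mm_\tau = \PhiS\mm$ (with $\Phi = \Phi_\tau$) the signed measure $\bX_\nn - \mm$ is simply pushed forward, so the divergences match and \emph{no} additional $\mathrm{Err}$ term of the type found in \cite[(A.23)]{serfaty2020gaussian} appears — exactly as in the measure-preserving case. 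Moreover we are now after a purely configuration-wise estimate with the generous accuracy $\Cpsi^2\tau^2\,\ell^{-4}\log\ell$, so there is no need to chase a small parameter through long-range sums.

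First I would set up the energy comparison via the electric field. Since $\psi$ is supported on a disk of radius $\ell$ inside $\Lab$, where $\mm$ has constant density $1$, the measure $\mm_\tau$ has total mass $\nn$ and density $\det(\Dd\Phi_\tau)^{-1} = 1 + \O(\tau|\psi|_\1)$ near $\supp\psi$, so Lemma~\ref{lem:extendSecond} applies and, writing $\bY := \Phi_\tau\cdot\bX$, gives $\FL(\bY,\mm_\tau) - \FL(\bX_\nn,\mm) = \tfrac1{4\pi}\bigl(\textstyle\int|\nhetaY|^2 - \int|\nhetaX|^2\bigr)$, where I choose the truncation $\eta_i = s\,\rr_i$ with $s$ a fixed power of $\ell^{-1}$ (so that $|\log s|\asymp\log\ell$) and $\rr_i$ the nearest-neighbour distances of $\bX$ (the condition \eqref{condi_sur_tau} is what guarantees that $\eta_i\le\tfrac15\rr_i$ also after transport, together with $\tfrac12|x-x'|\le|\Phi_\tau x-\Phi_\tau x'|\le 2|x-x'|$). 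Introducing, as in \eqref{new_vf}, the transported field $\Elece := \PhiS\nhetaX$ and the near-gradient $\nhh$ obtained by spreading the charges along the images of the truncation circles, \eqref{transport_PhiS} shows $\Elece$ and $\nhh$ share the divergence $2\pi(\PhiS\bX - \mm_\tau)$, which is also that of $\nhetaY$; hence, exactly as in the proof of Proposition~\ref{prop:meas_pres}, $\FL(\bY,\mm_\tau) - \FL(\bX,\mm) = \Main + \Rem + \tfrac1{4\pi}\int|\Elece - \nhh|^2$. The last term is pointwise $\le\Cc\tau^2\int|\psi|^2_{\1,\star}|\nhetaX|^2\le\Cc\tau^2|\psi|_\1^2\int_{\supp\psi}|\nhetaX|^2$, and by \eqref{eq:EnersEner} one has $\int_{\supp\psi}|\nhetaX|^2 = \Ener_s(\bX,\supp\psi)\le\Cc(1+|\log s|)\,\EnerPoints(\bX,\supp\psi)$, which with $|\psi|_\1\le\Cpsi\ell^{-2}$ is $\preceq\Cpsi^2\tau^2\,\ell^{-4}\log\ell\,\EnerPoints(\bX,\supp\psi)$, of the stated order. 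Expanding $\Elece = \PhiS\nhetaX$ gives $\Main = \tfrac{\tau}{4\pi}\int\langle\nhetaX,2\Dd\psi\,\nhetaX\rangle + \O\bigl(\tau^2\int(|\psi|_{\1,\star}^2+|\psi|_{\1,\star}|\dive\psi|)|\nhetaX|^2\bigr)$, and — contrary to Proposition~\ref{prop:meas_pres} — we do \emph{not} need $\dive\psi = \O(|\psi|_\1^2)$: since $|\dive\psi|\le\Cc|\psi|_\1$ the correction is again $\preceq\Cpsi^2\tau^2\,\ell^{-4}\log\ell\,\EnerPoints(\bX,\supp\psi)$.

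Next I would treat $\Rem = \Rem_1+\Rem_2+\Rem_3$ by copying the corresponding substeps of the proof of Proposition~\ref{prop:meas_pres} (which follow \cite[(A.41)--(A.46)]{serfaty2020gaussian}), replacing everywhere $|\psi|_{\1,\star},|\psi|_{\2,\star}$ by $\Cpsi\ell^{-2},\Cpsi\ell^{-3}$, comparing sums over $x_i\in\supp\psi$ to integrals, and using $\eta_i = s\rr_i\le s$. The observation making this work even without measure-preservation is that the $|\psi|_{L^\infty}|\psi|_{C^2}$-type contributions, which were problematic in Proposition~\ref{prop:meas_pres}, are now harmless: $\supp\psi$ has radius $\ell$ and $|\psi|_\1\le\Cpsi\ell^{-2}$ force $|\psi|_{L^\infty}\le\Cpsi\ell^{-1}$, so $|\psi|_{L^\infty}|\psi|_{C^2}\le\Cpsi^2\ell^{-4}$, comparable to $|\psi|_\1^2$. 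For $\Rem_3$ one needs the analogues of Claims~\ref{claim:d1hci} and~\ref{claim:d2hci} (valid with $\psi_t$ replaced by $\tau\psi$, the ``slow variation'' properties being trivial since $\psi$ lives at the single scale $\ell$) together with the local elliptic bound \eqref{eq:hciXCk} on $|\hciX|_{C^2(\DD(x_i,2\eta_i))}$ from \cite[Lemma~A.2]{serfaty2020gaussian}: the factor $\eta_i^2/\rr_i^2 = s^2$ absorbs the $\rr_i^{-2}$ blow-up and $\sum_i\int_{\DD(x_i,\rr_i)}|\nabla\hciX|^2\le\Ener(\bX,\supp\psi)$, so after summation every error piece is $\preceq\Cpsi^2\tau^2\,\ell^{-4}\log\ell\,\EnerPoints(\bX,\supp\psi)$. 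Finally, collecting the linear-in-$\tau$ terms — from $\Main$, from the self-type term hidden in $\Rem_1$ (proportional to $\sum_i\dive\psi(x_i)$, now genuinely first order but controlled by $\Cc|\psi|_\1\,\Points(\bX,\supp\psi)$ only at the quadratic level after cancellations), and from the linear part of $\Rem_3$ — and using the integration-by-parts identity underlying \eqref{eq:control_AUN} (i.e.\ \cite[(4.14)]{serfaty2020gaussian}), one recognizes the first variation as $\tau\,\AUN(\bX,\mm,\psi)$ in the form \eqref{def:AUN}; equivalently, this first-order term is forced by directly computing $\tfrac{\dd}{\dd\tau}\big|_{\tau=0}\FL(\Phi_\tau\cdot\bX,\mm_\tau)$. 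This yields \eqref{eq:FsousPhit} with the bound \eqref{eq:ErrorEnergy}.

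The main obstacle is, as usual, the bookkeeping in $\Rem_3$: rewriting the self-type integrals $\dashint_{|u|=\eta_i}\nabla\tilde\h_i(y_i+u)\cdot(\psi(x_i+u)-\psi(x_i))\,\dd u$, which involve derivatives of the local potentials at the transported points $y_i=\Phi_\tau(x_i)$, back in terms of the original configuration while funneling all errors into a local energy-plus-points bound. Here this is considerably lighter than in Proposition~\ref{prop:meas_pres}, because (i) $\Phi_\tau=\Id+\tau\psi$ is affine in $\tau$, so there is no second-order correction $\gamma_t$ to carry; (ii) $\psi$ lives at a single scale $\ell$, so the multi-zone distance splittings ($|x_i-x_j|\lesssim\ell$, $\lesssim\tfrac12|x_i|$, larger) collapse to essentially one estimate; (iii) the target accuracy leaves ample room. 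The only genuinely delicate point is to check that the truncation $\eta_i=s\rr_i$ with $s$ polynomially small in $\ell^{-1}$ simultaneously kills the $\rr_i^{-2}$ singularities in the second-derivative bounds and contributes only the harmless factor $|\log s|\asymp\log\ell$ — this is precisely the use of the freedom in choosing $s$ already exploited in \eqref{eq:vetasvr}.
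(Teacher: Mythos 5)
Your route is genuinely different from the paper's: the paper disposes of this lemma in two sentences by citing \cite[Lemma~4.1, Prop.~4.2]{serfaty2020gaussian} directly (noting only that the assumption \eqref{eq:assumpsi} and the constancy of $\mm$ on $\Lab$ make the second-order coefficient in \cite[(4.11)]{serfaty2020gaussian} come out to the claimed form), whereas you re-derive the statement from scratch by adapting the machinery of Proposition~\ref{prop:meas_pres}. The structural observations you make are correct and do buy something: because here $\psi$ lives at a single scale $\ell$, you can trade $|\psi|_{L^\infty}\le\Cpsi\ell^{-1}$ and $|\psi|_{C^2}\le\Cpsi\ell^{-3}$ against each other to get $|\psi|_{L^\infty}|\psi|_{C^2}\preceq\Cpsi^2\ell^{-4}\asymp|\psi|_\1^2$, so the term that Proposition~\ref{prop:meas_pres} works so hard to avoid is harmless at the present accuracy; and the identification of the linear-in-$\tau$ coefficient as $\AUN$ in the form \eqref{def:AUN} is indeed forced by computing $\frac{\dd}{\dd\tau}|_{\tau=0}\FL(\Phi_\tau\cdot\bX,\Phi_\tau\#\mm)$ directly, which bypasses the delicate bookkeeping.

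There is, however, one genuine slip. You claim that since $\mm_\tau=\Phi_\tau\#\mm$ ``the divergences match and no additional $\mathrm{Err}$ term of the type found in \cite[(A.23)]{serfaty2020gaussian} appears --- exactly as in the measure-preserving case.'' This is not correct, and the reason given does not apply. In Proposition~\ref{prop:meas_pres} the footnote explains that $\mathrm{Err}=0$ because (in Serfaty's notation) $\nu=\mu$, which there holds because $\Phi_t$ is measure-preserving \emph{and} the background is kept fixed. Here you have $\nu=\mm_\tau\neq\mm=\mu$, so this is precisely Serfaty's general case and $\mathrm{Err}$ is present. The divergence identity $\dive(\PhiS\nhetaX)=\dive\nabla\h^{\bY,\mm_\tau}$ holds in both situations and is not what makes $\mathrm{Err}$ vanish; $\mathrm{Err}$ is the mismatch in the $\sum_i\int_{\DD(\cdot,\eta_i)}\ff_{\eta_i}\,\dd(\text{background})$ correction terms in the electric formulation \eqref{extendSecondFormulation}, which differs because $\mm_\tau$ has density $1+\O(\tau|\psi|_\1)$ rather than $1$. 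What actually saves you is that this term is proportional to $\sum_i\eta_i^2$, hence $\O(s^2)$, and so is killed by taking $s$ a sufficiently high power of $\ell^{-1}$, just as you do for the other truncation-sensitive errors. So the conclusion survives, but the justification you wrote is wrong and should be replaced by an $\O(s^2)$ estimate; otherwise a reader would be led to believe the $\mathrm{Err}$ term vanishes identically, which it does not.
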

\begin{proof}[Proof of Lemma \ref{lem:ComparisonTransport}]
This follows from the second-order expansion of the energy as found in \cite[Lemma 4.1, Prop 4.2]{serfaty2020gaussian}. There is some care required in order to check that \cite[(4.11)]{serfaty2020gaussian} does indeed yield the claimed second-order correction, but this is made simpler by our assumption \eqref{eq:assumpsi} and the fact that $\mm$ is constant on $\Lab$.
\end{proof}
Since $\supp \psi$ has volume $\O(\ell^2)$, in view of the local laws we expect $\ErrorEnergy$ to be $\O(\ell^{-2} \log \ell)$ (with a constant depending on $\Cpsi$). The anisotropy is thus the first-order contribution to the energy change induced by a transport which is a small perturbation of the identity map.

\paragraph{$\AUN$ versus $\Ani$.}
What is called the “anisotropy” in \cite{MR3788208} and \cite{serfaty2020gaussian} is not exactly the same term, in fact \cite{MR3788208} refers to $\Ani$ and \cite{serfaty2020gaussian} to $\AUN$, where the latter is as defined above and the former corresponds to:
\begin{equation}
\label{def:Ani}
\Ani(\bX, \mm, \psi) : = \AUN(\bX, \mm, \psi) - \frac{1}{4} \int \Div \psi(x) \dd \bX(x).
\end{equation}
So in fact “$\AUN$” contains a possibly non-vanishing contribution $\frac{1}{4} \int \Div \psi(x) \dd \bX(x) \approx \frac{1}{4} \int \Div \psi(x) \dd \mm(x)$ which we need to substract in order to obtain “$\Ani$” which is the term that will eventually be found to be negligible. The term appearing in the energy expansion \eqref{eq:FsousPhit} is $\AUN$. 

Both $\AUN$ and $\Ani$ satisfy the same control \eqref{eq:control_AUN}. In most relevant cases, $\psi$ happens to be such that $|\psi|_{\1} \times |\supp \psi| = \O(1)$. Since (by local laws) the electric energy is typically proportional to the volume, we deduce that (for such “usual” $\psi$'s) \emph{the anisotropy $\Ani$ is (at most) of order $1$}. Analytically speaking, i.e. as far as deterministic, function-wise bounds are concerned, it is very challenging to do better. However, a key result underlying some of the recent progress in the study of 2DOCP's is that $\Ani$ is, so to speak, \emph{often smaller than it seems}. This is a \emph{probabilistic} statement found in \cite[Corollary 4.4]{MR3788208}, \cite[Lemma 7.2.]{serfaty2020gaussian}, see also (with a different formalism) \cite[Section 8.5]{MR4063572}. Let us sketch the proof of this fact.

\subsubsection*{Smallness of the anisotropy via “Serfaty's trick”.}
Let $\KnLL(\mm_\tau)$ be the partition function associated to the background $\mm_\tau$ (we keep the same “effective confinement” $\zetaW$ for all $\tau$) namely:
\begin{equation*}
\KnLL(\mm_\tau) := \int_{\La^\nn} \exp\left( - \beta \left(\FL(\bXn, \mm_\tau)  +  \nn \sum_{i=1}^\nn \zetaW(x_i) \right) \right) \dd \X_\nn.
\end{equation*}
The key point is that there are two ways to evaluate the ratio $\frac{\KnLL(\mm_\tau)}{\KnLL(\mm)}$:
\begin{enumerate}
    \item By the transportation approach of \cite{MR3788208,serfaty2020gaussian}, involving a change of variables $(x_1, \dots, x_N) = \left(\Phi_\tau(x'_1), \dots, \Phi_\tau(x'_N)\right)$ in the very definition of $\KnLL(\mm_\tau)$ and an analysis of its effect on the energy. As seen in Lemma \ref{lem:ComparisonTransport}, the anisotropy of $\psi$ appears there as one of the contributions.
    \item By using “free energy expansions”, i.e. explicit expressions of (the logarithm of) the two partition functions up to some error term that has to be negligible. This was done in \cite{MR3788208} with a non-quantitative error term originating in the analysis of \cite{MR3735628}, in \cite{MR4063572} using their own expansion, and much improved in \cite{serfaty2020gaussian} using the error bounds of \cite{armstrong2019local}. 
\end{enumerate}
This gives two expressions for the same quantity, and since the anisotropy appears only (in exponential moments) in the first one, then it must be confounded with some error terms of the second one. This is fruitful for $\tau$ large (but not too large), and thus also for smaller values of $\tau$ by Hölder's inequality.

This “trick”, which yields a form of “smallness of the anisotropy”, is used in \cite{MR3788208,MR4063572,serfaty2020gaussian} as a tool to prove central limit theorems for fluctuations of smooth linear statistics. Unfortunately, it is hard to pinpoint a clear general statement in the literature, so we recall the proof in the next paragraphs. Recall that we take $\mm = \nn \muW$ as background measure, which has constant density $1$ in the bulk.

\newcommand{\muWz}{\muW^{(0)}}
\newcommand{\muWs}{\muW^{(s)}}
\newcommand{\tmuWs}{\tilde{\mu}^{(s)}_\WW}
\newcommand{\Cphi}{\Cc_\varphi}

\let\oldell\ell
\renewcommand{\ell}{\hat{\oldell}}

\paragraph{1. Comparison along a transport.}
Here for technical reasons we need to work with partition functions restricted to $\EE_\La$, and we write:
\begin{equation*}
\KnLL(\mm_\tau \big| \EE_\La) := \int_{\La^\nn} \ind_{\EE_\La}(\bX_\nn) \exp\left( - \beta \left(\FL(\bX_\nn, \mm_\tau)  +  \nn \sum_{i=1}^\nn \zetaW(x_i) \right) \right) \dd \X_\nn.
\end{equation*}
We have the following “comparison of partition functions”:
\begin{claim}
\label{claim:compa_transport}
\begin{multline}
\label{eq:partition_function_compa1}
\log \frac{\KnLL(\mm_\tau \big| \EE_\La)}{\KnLL(\mm \big| \EE_\La)} = \left(\frac{\beta}{4} - 1 \right) \left( \int \log \mm_\tau \dd \mm_\tau  - \int \log \mm \dd \mm  \right) \\ + \log \EnLL \left[\exp\left(\tau \Ani[\bX, \mm, \psi] + \left(\tau |\psi|_{\2} + \tau^2 |\psi|^2_\1 \right) \EnerPoints(\supp \psi) + \tau^2 \ErrorEnergy\right)  \big| \EE_\La \right].
\end{multline}
\end{claim}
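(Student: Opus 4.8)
The plan is to prove \eqref{eq:partition_function_compa1} by the change of variables $x_i = \Phi_\tau(y_i)$, $1\le i\le\nn$, in the integral defining $\KnLL(\mm_\tau\,|\,\EE_\La)$, and then to sort the resulting factors into a deterministic prefactor and a conditional expectation under $\PnLL=\PnL(\cdot,\mm,\zetaW)$. Before that I would record two invariances that make the substitution harmless. Writing $D$ for the closed disk of radius $\ell$ carrying $\supp\psi$ (so $D\subset\Lab$, $\mm=\nn\muW$ has constant density on $D$, and $\zetaW\equiv 0$ on $D$): the map $\Phi_\tau=\Id+\tau\psi$ is the identity on $\R^2\setminus D$, is a diffeomorphism under \eqref{condi_sur_tau}, hence maps $D$ onto $D$; in particular it is the identity on a $1$-neighbourhood of $\partial\La$, so $\ind_{\EE_\La}\circ\Phi_\tau=\ind_{\EE_\La}$ because $\EE_\La$ is $\La\setminus\Lab$-local. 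Moreover $\Phi_\tau$ keeps every point either fixed or inside $D\subset\Lab$, where $\zetaW$ vanishes, so $\sum_i\zetaW(\Phi_\tau(y_i))=\sum_i\zetaW(y_i)$. Thus only the interaction energy and the Lebesgue volume element are affected.

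For the energy, Lemma \ref{lem:ComparisonTransport} gives $\FL(\Phi_\tau\cdot\bY,\mm_\tau)=\FL(\bY,\mm)+\tau\,\AUN(\bY,\mm,\psi)+\tau^2\,\ErrorEnergy(\bY,\psi)$, using that $\mm$ is constant on $D$ and \eqref{condi_sur_tau}. For the volume, $\dd\X_\nn=\exp\bigl(\int\log\det\Dd\Phi_\tau\,\dd\bY\bigr)\,\dd\bY_\nn$, and under \eqref{condi_sur_tau} one has the pointwise expansion $\log\det\Dd\Phi_\tau=\tau\Div\psi+\mathrm{rem}_\tau$ with $|\mathrm{rem}_\tau|\preceq\tau^2|\psi|_{\1}^2$, supported in $D$. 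Substituting and factoring out $\KnLL(\mm\,|\,\EE_\La)$ yields
\[
\log\frac{\KnLL(\mm_\tau\,|\,\EE_\La)}{\KnLL(\mm\,|\,\EE_\La)}=\log\EnLL\Bigl[\exp\Bigl(-\beta\tau\,\AUN(\bX,\mm,\psi)-\beta\tau^2\ErrorEnergy+\textstyle\int\log\det\Dd\Phi_\tau\,\dd\bX\Bigr)\,\Big|\,\EE_\La\Bigr].
\]
It then remains to reorganize the exponent. Using \eqref{def:Ani} to substitute $\AUN=\Ani+\tfrac14\int\Div\psi\,\dd\bX$, and $\int\log\det\Dd\Phi_\tau\,\dd\bX=\tau\int\Div\psi\,\dd\bX+\int\mathrm{rem}_\tau\,\dd\bX$, I would split each of these $\bX$-integrals into its deterministic part (the integral against $\mm$) and its fluctuation. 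The deterministic parts are identified through the pushforward-density identity $\int\log\mm_\tau\,\dd\mm_\tau-\int\log\mm\,\dd\mm=-\int\log\det\Dd\Phi_\tau\,\dd\mm$ (valid since $\mm_\tau=\Phi_\tau\#\mm$, $\Phi_\tau(D)=D$ and $\mm$ is constant on $D$); combined with the $\tfrac14\Div\psi$-piece of $\AUN$ they assemble into the factor $(\tfrac\beta4-1)\bigl(\int\log\mm_\tau\,\dd\mm_\tau-\int\log\mm\,\dd\mm\bigr)$ of \eqref{eq:partition_function_compa1} (the standard 2DOCP ``free-energy/entropy'' coefficient), up to an $\O(\tau^2|\psi|_{\1}^2\ell^2)$ that is harmless. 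The fluctuation $\tau\int\Div\psi\,\dd(\bX-\mm)=\tau\,\Fluct[\Div\psi]$ is controlled via the a priori bound Lemma \ref{lem:apriori} applied to the $C^2$ function $\Div\psi$, which satisfies $|\Div\psi|_{\1}\preceq|\psi|_{\2}$ on a support of area $\O(\ell^2)$, giving $|\Fluct[\Div\psi]|\preceq|\psi|_{\2}\bigl(\EnerPoints(\supp\psi)+\ell^2\bigr)\preceq|\psi|_{\2}\EnerPoints(\supp\psi)$ since the electric energy of any region is already at least of order its area; likewise $\int\mathrm{rem}_\tau\,\dd(\bX-\mm)\preceq\tau^2|\psi|_{\1}^2\EnerPoints(\supp\psi)$. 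Together with the $-\beta\tau^2\ErrorEnergy$ term these give exactly the three error contributions in the exponent on the right-hand side of \eqref{eq:partition_function_compa1}.

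The step I expect to be the main obstacle is the quantitative bookkeeping of constants: one must verify that every deterministic quantity produced by the substitution either lands in $(\tfrac\beta4-1)\bigl(\int\log\mm_\tau\,\dd\mm_\tau-\int\log\mm\,\dd\mm\bigr)$ with the correct coefficient or is of size $\O(\tau^2|\psi|_{\1}^2\ell^2)$ and hence absorbable into $\tau^2|\psi|_{\1}^2\EnerPoints(\supp\psi)$, and that the a priori bound for $\Fluct[\Div\psi]$ really fits inside the claimed error rather than producing a spurious power of $\ell$ — this requires carrying along the precise $\ell$-dependence in \eqref{eq:assumpsi}–\eqref{condi_sur_tau} and the exact relation \eqref{def:Ani} between $\AUN$ and $\Ani$. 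A secondary, purely formal point is that the whole computation must be carried out with all integrals restricted to $\EE_\La$ from the outset, which is legitimate precisely because of the invariance $\ind_{\EE_\La}\circ\Phi_\tau=\ind_{\EE_\La}$ noted above.
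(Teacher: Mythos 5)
Your proposal follows the same route as the paper's own proof: change of variables by $\Phi_\tau$ inside the restricted partition function (legitimized by the $\La\setminus\Lab$-locality of $\EE_\La$ and the fact that $\zetaW$ is unaffected since $\Phi_\tau=\Id$ where $\zetaW\neq 0$), then the energy comparison of Lemma \ref{lem:ComparisonTransport}, the rewriting $\AUN=\Ani+\tfrac14\int\Div\psi\,\dd\bX$ via \eqref{def:Ani}, and the identification of the deterministic prefactor via the pushforward/entropy identity which is exactly the paper's \eqref{difference_entropies}. The only cosmetic divergence is that you expand $\log\det\Dd\Phi_\tau$ to first order and bound $\Fluct[\Div\psi]$, whereas the paper bounds $\Fluct[\log\det\Dd\Phi_\tau]$ directly — these are interchangeable to the order required, and you also correctly anticipate the constant-bookkeeping issues (including, rightly, that the $-\beta$ factor the change of variables produces in front of $\Ani$ and $\ErrorEnergy$ needs to be tracked, something the paper's formula \eqref{eq:partition_function_compa1} states somewhat loosely).
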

\begin{proof}[Proof of Claim \ref{claim:compa_transport}]
We follow the same steps as in \cite[Prop 4.3]{MR3788208}. First we write $\KnLL(\mm_\tau \big| \EE_\La)$ as:
\begin{equation*}
\KnLL(\mm_\tau \big| \EE_\La) = \int_{\La^\nn} \ind_{\EE_\La}(\bY_\nn) \exp\left( - \beta \left(\FL(\bY_\nn, \mm_\tau) + \nn \int \zetaW(x) \dd \bY_\nn(x) \right) \right) \dd \Y_\nn
\end{equation*}
and perform the change of variables $\bY_\nn = \Phi_\tau \cdot \bX_\nn$. By construction we have $\Phi_\tau = \Id$ outside $\Lab$ and in particular the term $\nn \int \zetaW(x) \dd \bY_\nn(x)$ (which only detects points outside the support of $\muW$) is not affected by this, nor is the indicator $\EE_\La$ which only cares about points very close to $\partial \La$. We obtain:
\begin{multline}
\label{apres_transport}
\KnLL(\mm_\tau \big| \EE_\La) \\
= \int_{\EE_\La}  \exp\left( - \beta \left(\FL\left(\Phi_\tau \cdot \bX_\nn, \Phi_\tau \# \mm \right)  + \nn \int \zetaW(x) \dd \bX_\nn(x) \right) + \int \log \det \Dd \Phi_\tau(x) \dd \bX_\nn(x) \right) \dd \X_\nn,
\end{multline}
where the last term in the integrand is the Jacobian of the tranformation. Using \eqref{eq:FsousPhit} we may compare the energy before and after the transport. 
\begin{equation*}
\FL\left(\Phi_\tau \cdot \bX_\nn, \Phi_\tau \# \mm \right) = \FL\left(\bX_\nn, \mm \right) + \tau \AUN[\bX, \mm, \psi] + \tau^2 \ErrorEnergy,
\end{equation*}
which we can re-write (using \eqref{def:Ani}) as:
\begin{equation*}
\FL\left(\Phi_\tau \cdot \bX_\nn, \Phi_\tau \# \mm \right) = \FL\left(\bX_\nn, \mm \right) + \frac{\tau}{4} \int \Div \psi(x) \dd \bX_\nn(x) + \tau \Ani[\bX, \mm, \psi] + \tau^2 \ErrorEnergy.
\end{equation*}
The terms $\int \log \det \Dd \Phi_\tau(x) \dd \bX_\nn(x)$ and $\tau \int \Div \psi(x) \dd \bX_\nn(x)$ are both equal to a deterministic quantity up to small error terms. On the one hand, we have, \cor{with the notation of \eqref{def:Fluct} for fluctuations of linear statistics}:
\begin{equation*}
\int \log \det \Dd \Phi_\tau(x) \dd \bX_\nn(x) = \int \log \det \Dd \Phi_\tau(x) \dd x + \Fluct[\log \det \Dd \Phi_\tau]
\end{equation*}
and on the other hand: $\tau \int \Div \psi(x) \dd \bX_\nn(x) = \int \log \det \Dd \Phi_\tau(x) \dd \bX_\nn(x) + \tau^2 \sum_{i=1}^\nn |\psi(x_i)|^2_{\1}$. 
The quantity $\int \log \det \Dd \Phi_s(x) \dd x$ coincides (see. \cite[(4.11)--(4.13)]{MR3788208}) with:
\begin{equation}
\label{difference_entropies}
\int \log \det \Dd \Phi_s(x) \dd x = \int \log \mm \dd \mm  - \int \log \mm_\tau \dd \mm_\tau ,
\end{equation}
hence we obtain:
\begin{multline*}
\FL\left(\Phi_\tau \cdot \bX_\nn, \Phi_\tau \# \mm \right) = \FL\left(\bX_\nn, \mm \right) + \frac{1}{4} \left(\int \log \mm \dd \mm  - \int \log \mm_\tau \dd \mm_\tau\right) \\
+ \Fluct[\log \det \Dd \Phi_\tau] + \tau^2 \sum_{i=1}^\nn |\psi(x_i)|^2_{\1} + \tau^2 \ErrorEnergy.
\end{multline*}
Using Lemma \ref{lem:apriori} we can control: $\Fluct[\log \det \Dd \Phi_\tau] \preceq \tau |\psi|_{\2} \EnerPoints(\supp \psi)$ and on the other hand we have: $\tau^2 \sum_{i=1}^\nn |\psi(x_i)|^2_{\1} \leq \tau^2 |\psi|^2_{\1} \EnerPoints(\supp \psi)$, thus we can write:
\begin{multline*}
\FL\left(\Phi_\tau \cdot \bX_\nn, \Phi_\tau \# \mm \right) = \FL\left(\bX_\nn, \mm \right) + \frac{1}{4} \left(\int \log \mm \dd \mm  - \int \log \mm_\tau \dd \mm_\tau \right) \\
+ \left(\tau |\psi|_{\2} + \tau^2 |\psi|^2_\1 \right) \EnerPoints(\supp \psi) + \tau^2 \ErrorEnergy,
\end{multline*}
and inserting this in \eqref{apres_transport} yields \eqref{eq:partition_function_compa1}.
\end{proof}

\begin{remark}
\label{rem:difference_entropies}
Each term in the right-hand side of \eqref{difference_entropies} might be infinite when taken separately (because e.g. $\mm$ may have a singularity on $\partial \La$ and hence infinite entropy) but the difference makes sense as the two measures coincide outside $\Lab$.
\end{remark}

\newcommand{\tmuW}{\tilde{\mu}_\WW}
\newcommand{\KNeu}{\mathrm{K}^{\beta, \mathrm{Neu}}}
\newcommand{\tmm}{\tilde{\mm}}
\subsubsection*{Free energy comparisons}
As we have seen above, one can compare two partition functions using a “transportation” approach. On the other hand, we have the following.
\begin{claim}[Free energy comparison, the “direct” approach]
\label{claim:comparison_two}
Assume that the support of $\tmm - \mm$ is contained in a square $\Omega$ of sidelength $\ell$ included in $\Lab$.
\begin{equation}
\label{eq:compa_two_free}
\left| \log \frac{\KnL(\tmm \big| \EE_\La)}{\KnL(\mm \big| \EE_\La)} \right| = \left(\frac{\beta}{4} - 1 \right) \left( \int \log \tmm \dd \tmm - \int \log \mm \dd \mm  \right) + \O\left( \ell \log \ell \right)
\end{equation}
\end{claim}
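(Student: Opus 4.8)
The plan is to obtain \eqref{eq:compa_two_free} from a \emph{free-energy expansion} for the generalized 2DOCP on $\La$, and then to cancel the bulk contributions using that $\mm$ and $\tmm$ agree outside $\Omega$. Precisely, it suffices to establish an expansion of the form
\begin{equation}
\log \KnL(\mm \big| \EE_\La) = Z_0 + \left(\tfrac{\beta}{4} - 1\right) \int \log \mm \, \dd \mm + \O(\ell \log \ell),
\end{equation}
where $Z_0$ depends only on $\nn$, on $\La$, and on the restriction of $\mm$ to $\La \setminus \Omega$, but \emph{not} on the profile of $\mm$ inside $\Omega$. Since $\mm$ and $\tmm$ coincide on $\La \setminus \Omega$ (so their densities equal $1$ there and $\log\mm$, $\log\tmm$ are supported in $\Omega$), applying this expansion to $\tmm$ and to $\mm$ and subtracting makes $Z_0$ disappear and gives \eqref{eq:compa_two_free}. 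Here one uses the tacit normalization, satisfied in all our applications, that $\tmm$ has a density bounded above and below on $\Omega$, so that $\int \log \tmm \, \dd\tmm$ and the self-interaction of $\tmm - \mm$ are finite.

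To prove the expansion I would run the usual two-sided scheme. For the upper bound, apply Jensen's inequality to a product reference measure on $\La^\nn$ adapted to $\mm$, conditioned to $\EE_\La$, exactly as in the proof of Proposition~\ref{prop:global_law_SUBSYS} and of Lemma~\ref{lem:virer_le_conditionnement}; the conditioning is harmless because, by the cluster bound \eqref{eq:cluster_bound}, the reference measure also satisfies $\EE_\La$ with probability $1 - \O(e^{-\log^2 T})$, as in \eqref{EE_LavsLa}. Computing the entropy of the reference and the expectation of $\FL(\bX, \mm) + \nn \sum_i \zetaW(x_i)$ via the splitting identity \eqref{extendSecondFormulation} (Lemma~\ref{lem:extendSecond}) — i.e. through the renormalized electric energy and the $\sum_i \log \eta(x_i)$ term — produces a leading, profile-independent part together with the sub-leading contribution $\left(\tfrac{\beta}{4} - 1\right) \int \log \mm \, \dd \mm + \O(\ell\log\ell)$: the coefficient $\tfrac\beta4$ comes from the dependence of the renormalized energy on the local density of the background (the mechanism of \cite{MR3353821, MR3788208}), and the $-1$ from the configurational entropy. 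For the matching lower bound I would use the screening construction of \cite{armstrong2019local}, adapted to sub-systems exactly as in the proof of Proposition~\ref{prop:local_law_SUBSYS}: decouple the configuration inside a square $\Omega'$ obtained by thickening $\Omega$ by a collar of width $\O(\log\ell)$ from the configuration in $\La \setminus \Omega'$ (summing over the number of points in $\Omega'$, which is $|\Omega'| + \O(\ell)$ by the local laws), so that the log-partition function splits, up to the screening cost, into a piece computed in $\La \setminus \Omega'$ — identical for $\mm$ and $\tmm$ — and a piece which is a free-energy expansion for a genuine 2DOCP of size $\ell$ with background $\mm\ind_{\Omega'}$, again of the form $Z_1 + \left(\tfrac\beta4 - 1\right)\int\log\mm\,\dd\mm + \O(\ell\log\ell)$ with $Z_1$ independent of the $\Omega$-profile.

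The main obstacle is to keep the error at surface order $\O(\ell\log\ell)$ rather than the trivial area order $\O(\ell^2)$. A naive decoupling collar of width $\O(1)$ costs only $\O(\ell)$ in additional electric energy, which is fine, but the mismatch between the screened and the true configurations generates further error terms that scale like an area unless one controls the electric energy \emph{down to microscopic scales}; this is precisely where the local laws of Proposition~\ref{prop:local_law_SUBSYS} (the Armstrong--Serfaty bootstrap, valid here because $\mm = \nn\muW$ is constant equal to $1$ on $\Lab \supset \Omega'$) are indispensable, and they permit the collar width to be taken as small as $\O(\log\ell)$, as in \cite{armstrong2019local}. A secondary point, already handled in the proof of Proposition~\ref{prop:global_law_SUBSYS}, is that the restriction to $\EE_\La$ — which constrains only the points within distance $1$ of $\partial\La$, a region far from $\Omega$ — costs at most $e^{-\log^2 T}$ and therefore does not affect the $\O(\ell\log\ell)$ bookkeeping. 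Finally, reconciling the two bounds at the level of the explicit coefficient $\tfrac\beta4 - 1$ requires the precise form of the free-energy expansion of \cite{serfaty2020gaussian} (built on the error estimates of \cite{armstrong2019local}), and not merely its existence.
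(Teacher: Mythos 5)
Your overall strategy matches the paper's: expand each restricted partition function, cancel the profile-independent part using that $\mm$ and $\tmm$ agree outside $\Omega$, decouple $\Omega$ from the exterior via screening, and import the explicit free-energy expansion (with its $\tfrac{\beta}{4}-1$ coefficient) for the small box from \cite{serfaty2020gaussian}. You also correctly pinpoint the two adaptations the paper insists on: (i) the whole argument lives in $\Lab$ where $\mm = \nn\muW$ has constant density $1$, so the possible singularity of $\muW$ near $\partial\La$ is invisible to the screening; (ii) the local laws entering the screening error estimates must be the sub-system local laws of Proposition~\ref{prop:local_law_SUBSYS}, which is precisely why the partition functions are restricted to $\EE_\La$. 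The paper's own proof is essentially a citation of \cite[Prop.~6.4]{serfaty2020gaussian} together with this discussion, so you are reconstructing its internal structure.

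The one place where your sketch does not track the paper's route is the ``easy direction'' of the decoupling. You propose Jensen with a product reference measure ``as in Proposition~\ref{prop:global_law_SUBSYS}''; but that argument produces errors of volume order (there it is $\O(T^2\log^5 T)$), and even after subtracting the expansions for $\mm$ and $\tmm$ it is not clear why the residual errors coming from the entropy and long-range interaction terms would localize to a surface-order quantity $\O(\ell\log\ell)$ rather than $\O(\ell^2)$. The easy inequality in \cite[Prop.~3.6]{serfaty2020gaussian}, which the paper explicitly invokes, is instead the superadditivity of Neumann free energies over the partition $\{\Omega',\, \La\setminus\Omega'\}$: this is a pointwise (configuration-wise) energy inequality with no volume price, which is exactly what is needed to match the screening error from the other side. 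Your closing remark that one ultimately needs the precise expansion from \cite{serfaty2020gaussian} ``and not merely its existence'' shows you sense this, but the Jensen paragraph as written is a detour that does not reach the claimed error bar. Replacing it with the Neumann subadditivity inequality, and treating the restriction to $\EE_\La$ the way the paper does (as a conditioning under which the local laws hold, rather than only as a small-probability complement), would bring your sketch fully in line with the proof.
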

\begin{proof}[Proof of Claim \ref{claim:comparison_two}]
This is essentially the result of \cite[Proposition 6.4]{serfaty2020gaussian}, except that our reference measure does not necessarily have a $C^1$ density near the edge of $\La$. This is in fact not a problem as long as we are doing comparisons inside $\Lab$ (i.e. as long as the other measure coincides with $\mm$ outside $\Lab$), but it requires an explanation. 

The proof of \cite[Proposition 6.4]{serfaty2020gaussian} relies on two ingredients:
\begin{enumerate}
       \item \cite[Proposition 6.3]{serfaty2020gaussian} (free energy expansion for general density in a rectangle). This we can import directly as it has nothing to do with our specific setup.
    \item  \cite[Proposition 3.6]{serfaty2020gaussian} (almost additivity of the free energy). It says that one can decompose $\KnL(\mm)$ into two parts: inside/outside $\Omega$, with a small error. This is proven by two inequalities: one is easy and corresponds to the sub-additivity of Neumann energies whereas the other one uses the screening procedure and the local laws on (a neighborhood of) $\Omega$ in order to control the screening error terms (see \cite[Prop. 3.6]{serfaty2020gaussian}). Since the screening procedure takes place in a neighborhood of $\Omega$, the possible singularities near $\partial \La$ are irrelevant. The only adaptation needed is to replace the local laws used in \cite{serfaty2020gaussian} by ours (which is the reason why we “condition” on the event $\EE_\La$).
\end{enumerate}
\end{proof}

\newcommand{\sstar}{s_{\star}}
\subsubsection*{Conclusion 1: smallness of $\Ani$} We may now apply “Serfaty's trick”. Comparing the statements of Claim \ref{claim:compa_transport} and Claim \ref{claim:comparison_two} we see that necessarily:
\begin{multline*}
\log \EnLL \left[\exp\left(\tau \Ani[\psi, \bX, \mm] + \left(\tau |\psi|_{\2} + \tau^2 |\psi|^2_\1 \right) \EnerPoints(\supp \psi) + \tau^2 \ErrorEnergy\right) \big| \EE_\La \right] \\ = \O\left( \ell \log \ell \right).
\end{multline*}
Using the local laws and our assumptions \eqref{eq:assumpsi} on $\psi$ we may control the exponential moments of the error terms:
\begin{multline*}
\log \EnLL \left[\exp\left( \left(\tau |\psi|_{\2} + \tau^2 |\psi|^2_\1 \right) \EnerPoints(\supp \psi) + \tau^2 \ErrorEnergy\right) \big| \EE_\La \right] \\ = \O\left( \tau \Cpsi \ell^{-1} + \tau^2 \Cpsi^2 \ell^{-2} + \Cpsi^2 \tau^2 \ell^{-2} \log \ell  \right).
\end{multline*}
This is valid for all $\tau$ smaller than $\frac{\ell^2}{\Cc \Cpsi \log \ell}$ so that \eqref{condi_sur_tau} are satisfied. Taking $\tau = \ell^{3/2}$, we obtain the following statement:

\begin{lemma}[“The anisotropy is small”]
\label{lemma:Ani_is_Small}
If $\psi$ is a $C^3$ vector field compactly supported on a disk of radius $\ell$ within $\Lab$, and satisfying \eqref{eq:assumpsi}, then we have, for all $\ell$ large enough (depending on the constant $\Cpsi$)
\begin{equation}
\label{eq:momentExpAni}
\EnLL \left[ \exp\left(\ell^{3/2} \Ani(\psi, \bX_\nn, \mm) \right) \big| \EE_\La \right] \leq e^{ \O(\ell \log \ell) +  \O\left(\Cpsi^2 \ell \log \ell\right)},
\end{equation}
with an implicit constant depending on $\beta$ and the “local laws” constant $\Cloc$. In particular, we have the following tail estimate on the distribution of $\Ani(\psi)$: for $\ell$ large enough,
\begin{equation}
\label{eq:tail_estimate_ani_psi}
\PnLL\left[  |\Ani(\psi, \bX_\nn, \mm) | \geq \frac{\log^2 \ell}{\ell^{\hal}} \big| \EE_\La \right] \leq \exp\left( - \ell \log^2 \ell \right).
\end{equation}
\end{lemma}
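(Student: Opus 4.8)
The plan is to run ``Serfaty's trick'': we have assembled two independent expansions of the ratio $\KnLL(\mm_\tau\Cond\EE_\La)/\KnLL(\mm\Cond\EE_\La)$ --- the transport expansion of Claim~\ref{claim:compa_transport}, in which the anisotropy $\Ani(\psi,\bX_\nn,\mm)$ sits inside an exponential moment, and the ``direct'' free-energy expansion of Claim~\ref{claim:comparison_two} applied with $\tmm:=\mm_\tau=\Phi_\tau\#\mm$, which knows nothing about $\Ani$ --- and both carry the \emph{same} entropic term $\left(\tfrac{\beta}{4}-1\right)\bigl(\int\log\mm_\tau\,\dd\mm_\tau-\int\log\mm\,\dd\mm\bigr)$. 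Since $\psi$ is supported in a disk of radius $\ell$ inside $\Lab$ and $|\Phi_\tau-\Id|_\0\leq|\tau|\,|\psi|_\0=\O(\Cpsi\ell^{-1}\tau)\ll\ell$ over the relevant range of $\tau$, the difference $\mm_\tau-\mm$ sits in a square of sidelength $\O(\ell)$ inside $\Lab$, so Claim~\ref{claim:comparison_two} does apply with error $\O(\ell\log\ell)$. Subtracting the two identities cancels the entropic term and leaves
\begin{equation*}
\log\EnLL\!\left[\exp\!\bigl(\tau\,\Ani(\psi,\bX_\nn,\mm)+E_\tau(\bX_\nn)\bigr)\Cond\EE_\La\right]=\O(\ell\log\ell),
\end{equation*}
where $E_\tau:=\bigl(\tau|\psi|_{\2}+\tau^2|\psi|_\1^2\bigr)\EnerPoints(\bX_\nn,\supp\psi)+\tau^2\ErrorEnergy(\bX_\nn,\psi)$, valid for every $\tau$ satisfying \eqref{condi_sur_tau}. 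Using \eqref{eq:assumpsi} one checks that \eqref{condi_sur_tau} is comfortably satisfied, for $\ell$ large in terms of $\Cpsi$, by $\tau=\ell^{3/2}$ and by $\tau=2\ell^{3/2}$, which is all we need.

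Next I would isolate $\Ani$ from $E_\tau$. Fix $\tau=\ell^{3/2}$ and write $\tau\,\Ani=\tfrac12\bigl(2\tau\,\Ani+E_{2\tau}\bigr)-\tfrac12 E_{2\tau}$; the displayed identity at parameter $2\tau$ controls $\EnLL[\exp(2\tau\Ani+E_{2\tau})\Cond\EE_\La]=e^{\O(\ell\log\ell)}$. For the other factor, \eqref{eq:assumpsi} and \eqref{eq:ErrorEnergy} give the pointwise bound $|E_{2\tau}|\leq C\,\Cpsi^2\,\ell^{-1}\log\ell\,\EnerPoints(\bX_\nn,\supp\psi)$ at $\tau=\ell^{3/2}$; covering $\supp\psi$ by $\O(1)$ squares of sidelength $\ell$ in $\Lab$, the local laws in the bulk (Proposition~\ref{prop:local_law_SUBSYS}) bound the exponential moments of $\EnerPoints(\bX_\nn,\supp\psi)$, and since the prefactor above is $o(1)$, Jensen's inequality then gives $\log\EnLL[\exp(\pm E_{2\tau})\Cond\EE_\La]=\O(\Cpsi^2\ell\log\ell)$ (with an implicit constant also depending on $\Cloc$). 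Cauchy--Schwarz yields
\begin{equation*}
\EnLL\!\left[\exp\!\bigl(\ell^{3/2}\,\Ani(\psi,\bX_\nn,\mm)\bigr)\Cond\EE_\La\right]\leq\EnLL\!\left[e^{2\tau\Ani+E_{2\tau}}\Cond\EE_\La\right]^{1/2}\EnLL\!\left[e^{-E_{2\tau}}\Cond\EE_\La\right]^{1/2}\leq e^{\O(\ell\log\ell)+\O(\Cpsi^2\ell\log\ell)},
\end{equation*}
which is \eqref{eq:momentExpAni}; running the same argument with $\tau$ replaced by $-\tau$ gives the matching bound for $\exp(-\ell^{3/2}\Ani)$. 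Finally, the tail estimate \eqref{eq:tail_estimate_ani_psi} follows from Markov's inequality in exponential form applied to $\pm\Ani$ at the threshold $\log^2\ell/\ell^{1/2}$: since $\ell^{3/2}\cdot\log^2\ell/\ell^{1/2}=\ell\log^2\ell$ outweighs the $\O(\Cpsi^2\ell\log\ell)$ appearing in the moment bound once $\ell$ is large, the resulting probability is bounded by $\exp(-\ell\log^2\ell)$.

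The conceptual content --- that the anisotropy is forced to coincide with the negligible error of a free-energy expansion --- is already packaged into Claims~\ref{claim:compa_transport} and \ref{claim:comparison_two}, so what remains is bookkeeping. I expect the only delicate point to be the final Cauchy--Schwarz step: it is essential to compare the partition functions at the \emph{doubled} parameter $2\tau$, so that the splitting $\tau\Ani=\tfrac12(2\tau\Ani+E_{2\tau})-\tfrac12 E_{2\tau}$ breaks $\tau\Ani$ into two pieces each of controlled exponential moment, and to have chosen $\tau=\ell^{3/2}$ well below the admissible ceiling $\sim\ell^2/\log\ell$ of \eqref{condi_sur_tau} so that every contribution to $E_{2\tau}$ stays of size $\O(\Cpsi^2\ell\log\ell)\ll\ell^{3/2}$. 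In particular the term $\tau^2\ErrorEnergy$, which \eqref{eq:ErrorEnergy} only bounds from above, must have its negative part absorbed into a local-laws moment bound, and one must keep track that the implicit constants depend on $\beta$, on $\Cloc$ (hence on the good-external-potential constant), and on $\Cpsi$, as recorded in the statement.
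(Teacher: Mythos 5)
Your proof is correct and follows the same route as the paper: compare the transport expansion of Claim~\ref{claim:compa_transport} against the direct free-energy expansion of Claim~\ref{claim:comparison_two} to make Serfaty's trick work, then peel off the error terms via the local laws and conclude with Markov's inequality at $\tau=\ell^{3/2}$. The one step you spell out more carefully than the paper is the doubling decomposition $\tau\,\Ani = \tfrac12\bigl(2\tau\,\Ani + E_{2\tau}\bigr) - \tfrac12 E_{2\tau}$ followed by Cauchy--Schwarz, which the paper leaves implicit (a naive subtraction of the two displayed $\log\EnLL$ bounds would \emph{not} yield the exponential moment of $\Ani$ alone), and your flag that \eqref{eq:ErrorEnergy} must in fact be a two-sided bound on $\ErrorEnergy$ in order to control $\EnLL[e^{-E_{2\tau}}\mid\EE_\La]$ is well placed -- that two-sided control is available from the underlying Taylor expansion in \cite[Prop.~4.2]{serfaty2020gaussian} even though the paper writes only $\leq$.
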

\begin{remark}
The same analysis holds for the full system, with no need for a conditioning on $\EE_\La$ and as soon as \eqref{condi:LL} is satisfied, so for a broader notion of “bulk”.
\end{remark}

\subsubsection*{Conclusion 2: proof of Proposition \ref{prop:effectPhiEnergy}}
\renewcommand{\ell}{\oldell}
We may now conclude the proof of Proposition \ref{prop:effectPhiEnergy}.

\begin{proof}[Proof of Proposition \ref{prop:effectPhiEnergy}]
Let us return to \eqref{avant_decontrolerAni} and use the fact that by definition (see \eqref{defrt}):
\begin{equation*}
\psi_t = t \Wel(x) + \rt(x), \quad \psi_{-t} = -t \Wel(x) + \rmt(x), 
\end{equation*}
we obtain (for $\bX$ in $\WS(\Lab, \ell, K)$ and thus with probability $1 - \exp(-\ell^2)$ up to choosing $K$ large enough as in Lemma \ref{lem:WS_is_frequent}):
\begin{equation}
\label{precontrolwithWS}
\hal \left(\FL(\Phit \bX, \mm) + \FL(\Phimt \bX, \mm) \right) - \FL(\bX, \mm) = \hal \Ani[\bX, \rt + \rmt] + \O(K^2 \epsilon \log \epsilon t^2).
\end{equation}
Let us introduce dyadic length scales $\ell_i := 2^{i}$ for $0 \leq i \leq \log(T/2)$ and associated cut-off functions $\chi_i$. We decompose $\rt$ as: $\rt = \sum_i \chi_i \rt$. Using Proposition \ref{prop:taut_prop} and in particular \eqref{Psit2X} we see that the vector field $\tpsii := \frac{1}{\epsilon t^2} \chi_i \rt$ satisfies $|\tpsii|_\kk \preceq  \ell_i^{-\kk-1}$. Using Lemma \ref{lemma:Ani_is_Small}, we know that for each $i$ large enough:
\begin{equation*}
\PnLL\left[  |\Ani(\tpsii) | \geq \frac{\log^2 \ell_i}{\ell_i^{\hal}} \big| \EE_\La  \right] \leq \exp\left( - \ell_i \log^2 \ell_i \right).
\end{equation*}
Since this bound is only interesting (probabilistically speaking) for large enough $i$, we use it for $i \geq \log^2 \ell$, in which case we have:
\begin{equation*}
\sum_{i = \log^2 \ell}^{\log(T/2)} \Ani(\tpsii) \leq \sum_{i = \log^2 \ell}^{\log(T/2)} \frac{\log^2 \ell_i}{\ell_i^{\hal}} = \O(\epsilon t^2)
\end{equation*}
with probability $\geq 1 - \sum_{i = \log^2 \ell}^{\log(T/2)} \exp\left( - \ell_i \log^2 \ell_i \right) \geq 1 - \exp\left( - \log^2 \ell \right)$. The first contributions (for $0 \leq i \leq \log^2 \ell$) are controlled using the “rougher” control \eqref{eq:control_AUN}, we get:
\begin{equation*}
\sum_{i=0}^{\log^2 \ell} \Ani(\tpsii) = \O \left( \log \ell^2 \right).
\end{equation*}
By \eqref{eq:ellrhoepsilon} we know that $\log \ell^2$ is comparable to $\log \epsilon$. In conclusion, we obtain:
\begin{equation}
\label{eq:controleanifinal}
\PnLL\left( |\Ani[\bX, \rt + \rmt]| \leq \Cc \epsilon \log \epsilon t^2  \big| \EE_\La  \right) \geq 1 - \exp\left( - \ell^2 \right), 
\end{equation}
with a constant $\Cc$ depending on $\beta$ and the constant $\Cloc$.

Combined with \eqref{precontrolwithWS} which is valid under an event of comparable probability (see Lemma \ref{lem:WS_is_frequent}), we conclude the proof of Proposition \ref{prop:effectPhiEnergy}.
\end{proof}

\subsection{Effect on expectations: proof of Proposition \ref{prop:Quantitative_invariance}}
\renewcommand{\EnLV}{\EnLL}
\renewcommand{\PnLV}{\PnLL}
\label{sec:proof_quanti}
\begin{proof}[Proof of Proposition \ref{prop:Quantitative_invariance}]
For $\tau \in (0, 1)$, let us introduce the event $\Fta$ as:
\begin{equation}
\label{def:Fta}
\Fta := \left\lbrace \ErrAv \leq \tau \right\rbrace
\end{equation}
(recall the definition \eqref{def:HL} of $\ErrAv$). We split the proof into several steps.

\subsubsection*{The case of non-negative functions}
In this paragraph, we make the additional assumption that $\G$ \emph{is non-negative}. We can obviously decompose $\EnLV[ \G(\bX) \big| \EE]$ as:
\begin{equation}
\label{indFtaFtabar}
\EnLV[ \G(\bX) \big| \EE] = \EnLV\left[ \G(\bX) \ind_{\Fta} \big| \EE\right] + \EnLV\left[ \G(\bX) \ind_{\overline{\Fta}} \big| \EE\right].
\end{equation}

\begin{claim}
\label{claim:indFta}
\begin{equation}
\label{bound_indFta}
\EnLV[ \G(\bX) \ind_{\Fta}(\bX) \Big| \EE ]  \leq \EnLV\left[ \hal \left( \G(\Phit \cdot \bX) + \G(\Phimt \cdot \bX) \right) \Big| \EE \right]  \times e^{\beta \tau}.
\end{equation}
\end{claim}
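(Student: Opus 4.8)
The plan is to derive \eqref{bound_indFta} from the energy-splitting identity \eqref{def:HL} together with the change of variables that $\Phi_t$ is measure-preserving (Lemma \ref{lemma_propPhi}). First I would start from the definition of the conditional expectation under $\PnLL$ and write
\begin{equation*}
\EnLV[ \G(\bX) \ind_{\Fta}(\bX) \Cond \EE ] = \frac{1}{Z} \int_{\La^\nn} \ind_{\EE}(\bX) \ind_{\Fta}(\bX) \G(\bX) \, e^{-\beta \left( \FL(\bX, \mm) + \nn \sum_i \zetaW(x_i) \right)} \dd \X_\nn,
\end{equation*}
where $Z = \KnLL(\mm \Cond \EE)$ is the conditional partition function. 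On the event $\Fta$ one has by \eqref{def:HL} that $\FL(\bX, \mm) = \hal\left(\FL(\Phi_t \bX, \mm) + \FL(\Phi_{-t}\bX, \mm)\right) + \ErrAv$ with $|\ErrAv| \le \tau$, so $e^{-\beta \FL(\bX,\mm)} \le e^{\beta\tau} e^{-\frac{\beta}{2}\FL(\Phi_t\bX,\mm)} e^{-\frac{\beta}{2}\FL(\Phi_{-t}\bX,\mm)}$.

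The key structural point is that $\Phi_t$ acts trivially outside $\Lab$ (it equals the translation by $t\vu$ only on $\DD(0,\ell/4)$ and the identity outside $\DD(0,2\ell e^{1/\epsilon}) \subset \La$), so neither the locality constraint $\EE$ (which is $\La\setminus\Lab$-local by hypothesis) nor the confinement term $\nn\sum_i\zetaW(x_i)$ (which only sees points outside the support of $\muW$, hence near $\partial\La$) is affected by $\bX \mapsto \Phi_t\bX$ or $\bX \mapsto \Phi_{-t}\bX$. Since $\Phi_t$ is area-preserving, the Jacobian of the change of variables $\bX = \Phi_t^{-1}\bX'$ is $1$. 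Applying Cauchy--Schwarz in the form $e^{-\frac{\beta}{2}a}e^{-\frac{\beta}{2}b} \cdot g \le \hal\left(g\, e^{-\beta a} + g\, e^{-\beta b}\right)$ valid for $g = \G(\bX) \ge 0$ (here using non-negativity of $\G$ crucially), and then performing the two changes of variables $\bX \mapsto \Phi_{\pm t}\bX$ separately, one moves the exponential weight back to $\FL(\bX,\mm)$ while the observable becomes $\G(\Phi_{\mp t}^{-1}\bX)$; since $\Phi_{\pm t}$ coincides with translation by $\pm t\vu$ on the disk $\DD(0,\ell/10) \supset \supp\G$ (recall $\G$ is $\DD(0,\ell/10)$-local), $\G(\Phi_{\pm t}^{-1}\bX) = \G(\bX \mp t\vu)$ there. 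Dropping $\ind_{\Fta}$ after bounding it by $1$, and absorbing the $e^{\beta\tau}$ factor, yields exactly \eqref{bound_indFta}.

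The main obstacle I expect is bookkeeping the invariance of the measure-theoretic ingredients under $\Phi_t$: one must verify carefully that (i) $\ind_\EE$ is genuinely unchanged because $\EE$ is $\La\setminus\Lab$-local and $\Phi_t = \Id$ there, (ii) the term $\nn\sum_i \zetaW(x_i)$ is unchanged because $\zetaW$ vanishes on the support of $\muW$, which contains $\DD(\omega, T - \bCc'\log T) \supset \{\Phi_t \ne \Id\}$ by the properties of $\muW$ recalled in Section \ref{sec:good_properties} — here one needs the range constraints \eqref{eq:ellrhoepsilon} relating $\ell, \epsilon, T$ to guarantee $2\ell e^{1/\epsilon} \le T$ with room to spare, and (iii) the Jacobian factor is exactly $1$ by Lemma \ref{lemma_propPhi}(1). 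Once these are in place the inequality is essentially the convexity/symmetrization argument above. The remaining term $\EnLV[\G(\bX)\ind_{\overline{\Fta}} \Cond \EE]$ from \eqref{indFtaFtabar} is then controlled by Cauchy--Schwarz against $\PnLL(\overline{\Fta} \Cond \EE)^{1/2}$, which is small by Proposition \ref{prop:effectPhiEnergy}, and the passage from non-negative $\G$ to general $\G$ (together with the truncation at level $\sigma$ producing the $\PnLL[|\G| \ge \sigma]^{1/2}$ terms in \eqref{def:ErrorQI}) is handled by writing a general $\G$ via its positive and negative parts and applying the non-negative case to each.
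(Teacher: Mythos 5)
Your argument reproduces the paper's proof of Claim~\ref{claim:indFta}: the pointwise bound on $\Fta$ via the definition of $\ErrAv$, the convexity/AM--GM inequality $e^{-\beta(a+b)/2}\leq\hal(e^{-\beta a}+e^{-\beta b})$ (which you label ``Cauchy--Schwarz'', but it is AM--GM/convexity of the exponential), the non-negativity of $\G$, the unit-Jacobian change of variables, and the locality of $\G$, $\EE$, and the $\zetaW$-term under $\Phi_{\pm t}$ — all exactly as in the paper's proof. The only substantive point worth flagging, and it lies outside the scope of this claim, is that in your closing paragraph you propose handling general $\G$ via $\G=\G_+-\G_-$, whereas the paper instead shifts $\G$ by $\sigma$ and truncates on $\Gsi=\{\G+\sigma\geq 0\}$; your variant would also work but would produce a slightly different error term from \eqref{def:ErrorQI}.
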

\begin{proof}[Proof of Claim \ref{claim:indFta}]
According to the definition of $\EnLV$ in Section \ref{sec:PnLV}, the expectation $\EnLV[ \G(\bX) \big| \EE ]$ admits the following expression:
\begin{equation*}
\frac{\int_{\La^\nn} \ind_{\EE}(\bX) \ind_{\Fta}(\bX) \G(\bX) e^{- \beta \left( \FL(\bX, \mm) + \nn \sum_{i=1}^\nn \zeta(x_i) \right) } \dd \X}{\int_{\La^\nn} \ind_{\EE}(\bX) e^{- \beta \left( \FL(\bX, \mm) + \nn \sum_{i=1}^\nn \zeta(x_i) \right) } \dd \X}.
\end{equation*}
Since the function $\zeta$ vanishes identically on $\Lab$ and since our localized translations act as the identity outside $\Lab$, the term $\sum_{i=1}^\nn \zeta(x_i)$ will not be affected by the operations below. For simplicity, we omit it altogether. Let us focus on the integral appearing in the numerator. By definition of $\ErrAv$ and of $\Fta$ as in \eqref{def:HL}, \eqref{def:Fta}, we may write:
\begin{equation*}
 \ind_{\Fta}(\bX)  e^{-\beta \FL(\bX, \mm)} =  \ind_{\Fta}(\bX)  e^{-\beta \hal \left(\FL(\Phit \cdot \bX, \mm) + \FL(\Phimt \cdot \bX, \mm) \right) + \beta \ErrAv} \leq  e^{\beta \tau} e^{-\beta \hal \left(\FL(\Phit \cdot \bX, \mm) + \FL(\Phimt \cdot \bX, \mm) \right)},
\end{equation*}
where we haved bounded the indicator function by $1$ in the right-hand side. Using the convexity of $x \mapsto \exp(-\beta x)$ and the fact that $\G$ is assumed to be non-negative, we deduce that:
\begin{multline*}
\int_{\La^\nn} \ind_{\EE}(\bX) \ind_{\Fta}(\bX) \G(\bX) e^{-\beta \FL(\bX, \mm)} \dd \X
\\
\leq \hal \left[ \int_{\La^\nn} \ind_{\EE}(\bX) \G(\bX) e^{-\beta \FL(\Phit \cdot \bX, \mm)} \dd \X + \int_{\La^\nn} \ind_{\EE}(\bX) \G(\bX) e^{-\beta \FL(\Phit \cdot \bX, \mm)} \dd \X \right] \times e^{\beta \tau}.
\end{multline*}
By construction, the change of variable $(x_1, \dots, x_\nn) \mapsto (\Phit(x_1), \dots, \Phit(x_\nn))$, which maps $\bX$ to $\Phit \bX$, has a Jacobian equal to $1$. We thus have, looking at the first integral on the right-hand side:
\begin{equation*}
\int_{\La^\nn} \ind_{\EE}(\bX) \G(\bX) e^{-\beta \FL(\Phit \cdot \bX, \mm)} \dd \X = \int_{\La^\nn} \ind_{\EE}(\Phimt \bX) \G(\Phimt \bX) e^{-\beta \FL(\bX, \mm)} \dd \X.
\end{equation*}
Moreover, since $\Phimt \equiv \Id - t \vu$ on the disk $\DD(0, \ell/4)$ (see Lemma \ref{lemma_propPhi}), since $\G$ is assumed to be $\DD(0, \ell/10)$-local, and since $|t|$ is taken smaller than $\ell/10$, we have $\Phimt\left(\DD(0, \ell/10) \right) \subset \DD(0, \ell/4)$ and thus:
\begin{equation*}
\G(\Phimt(\bX)) = \G(\bX - t \vu).
\end{equation*}
On the other hand since by construction $\Phit$ coincides with the identity map outside $\Lab$ and since the event $\EE$ is assumed to be $\La \setminus \Lab$-local, we have:
\begin{equation}
\label{indEE_ne_le_sent_pas}
\ind_{\EE}(\Phimt(\bX)) = \ind_{\EE}(\bX).
\end{equation}
Hence we can ensure that:
\begin{equation*}
\int_{\La^\nn} \ind_{\EE}(\Phimt \bX) \G(\Phimt \bX) e^{-\beta \FL(\bX, \mm)} \dd \X = \int_{\La^\nn} \ind_{\EE}(\bX) \G(\bX - t \vu) e^{-\beta \FL(\bX, \mm)} \dd \X,
\end{equation*}
and similarly for the other term (reversing the roles of $-t$ and $t$). In conclusion, we obtain:
\begin{equation*}
\int_{\La^\nn} \ind_{\EE}(\bX) \ind_{\Fta}(\bX) \G(\bX) e^{-\beta \FL(\bX, \mm)} \dd \X
\leq\left[ \int_{\La^\nn} \ind_{\EE}(\bX)  \hal \left(\G(\bX + t \vu) + \G(\bX - t \vu) \right)  e^{-\beta \FL(\bX,\mm)} \dd \X \right] \times e^{\beta \tau}.
\end{equation*}
Dividing back by the partition function, we obtain \eqref{bound_indFta}. On the other hand, we have by Cauchy-Schwarz's inequality:
\begin{equation*}
\EnLV[ \G(\bX) \ind_{\overline{\Fta}} (\bX) \Big| \EE] \leq \left(\EnLV\left[ \G^2(\bX) \Big| \EE \right]\right)^\hal  \times \left( \PnLV \left( \left\lbrace \ErrAv \geq \tau \right\rbrace \Big| \EE \right) \right)^\hal.
\end{equation*}
In summary, we have obtained \emph{under the extra assumption that $\G$ is non-negative}, and for all $\tau \in (0,1)$:
\begin{multline}
\label{eq:Ggeq0}
\EnLV[ \G(\bX) \Big| \EE] \leq \EnLV\left[ \hal \left( \G(\Phit \bX) + \G(\Phimt \bX) \right) \Big| \EE \right]  \times e^{\beta \tau} \\
 + \left(\EnLV\left[ \G^2(\bX) \Big| \EE \right]\right)^\hal \times \left( \PnLV \left( \left\lbrace \ErrAv \geq \tau \right\rbrace \Big| \EE \right) \right)^\hal.
\end{multline}
\end{proof}

\subsubsection{The general case.}
We no longer assume that $\G \geq 0$. For $\sigma > 0$, let us introduce the event $\Gsi := \left\lbrace \G + \sigma \geq 0 \right\rbrace$. We can write:
\begin{equation}
\label{general_case_av}
\EnLV[ \G(\bX) + \sigma  \big| \EE] = \EnLV[ \G(\bX) \big| \EE] + \sigma  
= \EnLV[ \left(\G(\bX) + \sigma\right) \ind_{\Gsi}(\bX) \big| \EE] + \EnLV[ \left(\G(\bX) + \sigma\right) \ind_{\overline{\Gsi}}(\bX) \big| \EE]. 
\end{equation}
Since $\bX \mapsto \left(\G(\bX) + \sigma\right) \ind_{\Gsi}(\bX)$ is non-negative by construction and has the same local character as $\G$, we may apply the conclusions of the previous paragraph and write:
\begin{multline}
\EnLV[ \left(\G(\bX) + \sigma\right) \ind_{\Gsi}(\bX) \big| \EE]  \\
\leq \EnLV\left[ \hal \left( \G(\bX + t\vu) + \sigma \right) \ind_{\Gsi}(\bX + t \vu) + \hal \left( \G(\bX - t\vu) + \sigma \right)  \ind_{\Gsi}(\bX - t\vu) \big| \EE \right] \times e^{\beta \tau} \\ 
 +\EnLV\left[ \left(\G(\bX) + \sigma\right)^2 \Big| \EE \right] \times \left( \PnLV \left( \left\lbrace \ErrAv \geq \tau \right\rbrace \Big| \EE \right) \right)^\hal,
\end{multline}
where in the last expectation term we have bounded an indicator function by $1$. We would like to get rid of the two remaining indicator functions in the right-hand side.

Using again Cauchy-Schwarz's inequality (and the definition of $\Gsi$) we see that:
\begin{equation*}
\left| \EnLV[ \left(\G(\bX + t \vu) + \sigma \right) \ind_{\overline{\Gsi}}(\bX + t \vu) \big| \EE] \right| \leq 2 \left( \EnLV\left[ \G^2(\bX + t\vu) \Big| \EE \right]  + \sigma^2 \right)^\hal \times \left(\PnLV \left[ |\G(\bX + t \vu)| \geq \sigma \right]\right)^\hal,
\end{equation*}
and similarly after replacing $t$ by $-t$. Returning to \eqref{general_case_av}, we thus obtain:
\begin{multline*}
\EnLV[ \G(\bX) \big| \EE] + \sigma \leq \left( \EnLV\left[  \hal \left( \G(\bX + t\vu) + \G(\bX - t\vu) \right)  \big| \EE \right] + \sigma \right) \times  e^{\beta \tau}
\\
 + 2 e^{\beta \tau} \left( \EnLV\left[ \G^2(\bX \pm \vu) \Big| \EE \right]  + \sigma^2 \right)^\hal \times \left(\PnLV \left[ |\G(\bX \pm t \vu)| \geq \sigma \right]\right)^\hal 
 \\
  + \EnLV\left[ \G^2(\bX) \Big| \EE \right]  \times  \left(\PnLV \left[ \ErrAv \geq \tau \big| \EE \right]\right)^\hal.
\end{multline*}
We may substract $\sigma$ on both sides, and observe that we have:
\begin{multline*}
\left( \EnLV\left[  \hal \left( \G(\bX + t\vu) + \G(\bX - t\vu) \right)  \big| \EE \right] + \sigma \right) \times  e^{\beta \tau} - \sigma =  \EnLV\left[  \hal \left( \G(\bX + t\vu) + \G(\bX - t\vu) \right)  \big| \EE \right] \\
+ \EnLV\left[  \hal \left( \G(\bX + t\vu) + \G(\bX - t\vu) \right)  + \sigma \big| \EE \right] \left(e^{\beta \tau} - 1 \right).
\end{multline*}
Using Cauchy-Schwarz's inequality one more time to bound the second line, we obtain:
\begin{multline*}
\EnLV[ \G(\bX) \big| \EE] \leq \EnLV\left[  \hal \left( \G(\bX + t\vu) + \G(\bX - t\vu) \right)  \big| \EE \right] + 2 \left( \EnLV\left[ \G^2(\bX \pm t \vu) \Big| \EE \right]  + \sigma^2 \right)^\hal \times \left(e^{\beta \tau} - 1 \right) \\
 + 2 e^{\beta \tau} \left( \EnLV\left[ \G^2(\bX \pm \vu) \Big| \EE \right]  + \sigma^2 \right)^\hal \times \left(\PnLV \left[ |\G(\bX \pm t \vu)| \geq \sigma \right]\right)^\hal 
 \\
  + \EnLV\left[ \G^2(\bX) \Big| \EE \right]  \times  \left(\PnLV \left[ \ErrAv \geq \tau \big| \EE \right]\right)^\hal.
\end{multline*}
Replacing $\G$ by $-\G$ (which is possible as there is no more a sign constraint on $\G$), we obtain the converse inequality, which yields \eqref{eq:quantitative_invariance}.
\end{proof}

\begin{remark}
\label{rem:nofreelunch}
The identity \eqref{indEE_ne_le_sent_pas} requires that $\EE$ be $\La \setminus \Lab$-local. One could try to “pass down” as much information as possible but if $\EE$ is “too rich” then it risks to be perturbed by our localized translation.
\end{remark}

\printbibliography

\end{document}